\DeclareMathOperator*{\argmax}{arg\,max}
\DeclareMathOperator*{\argmin}{arg\,min}
\definecolor{vpurple}{RGB}{161,0,255}
\renewcommand*{\mathcolor}{}
\def\mathcolor#1#{\mathcoloraux{#1}}
\newcommand*{\mathcoloraux}[3]{%
  \protect\leavevmode
  \begingroup
    \color#1{#2}#3%
  \endgroup
}
\theoremstyle{plain}
\newtheorem{theorem}{Theorem}
\newtheorem{lemma}[theorem]{Lemma}
\theoremstyle{definition}
\newtheorem{definition}[theorem]{Definition}
\theoremstyle{remark}
\theoremstyle{problem}
\newtheorem{problem}[theorem]{Problem}
\newcommand{\etal}{\emph{et al.}\xspace}
\begin{document}

\title{Corporate Needs You to Find the Difference: Revisiting Submodular and Supermodular Ratio Optimization Problems}

\author{%
  Elfarouk Harb \\
  University of Illinois at Urbana-Champaign\\
  \texttt{eyfmharb@gmail.com} \\
  \And 
  Yousef Yassin  \\
  Carleton University \\
  \texttt{yousef.yassin@carleton.ca }
  \And 
  Chandra Chekuri  \\
  University of Illinois at Urbana-Champaign\\
  \texttt{chekuri@illinois.edu}
}

\maketitle
\begin{abstract}
We consider the following question: given a submodular/supermodular set function $f:2^V \to \mathbb{R}$, how should one minimize/maximize its average value $f(S)/|S|$ over non-empty subsets $S\subseteq V$? This problem generalizes several well-known objectives, including Densest Subgraph (DSG), Densest Supermodular Set (DSS), and Submodular Function Minimization (SFM). Motivated by recent applications \cite{litkde,veldt}, we formalize two new broad problems: the Unrestricted Sparsest Submodular Set (USSS) and Unrestricted Densest Supermodular Set (UDSS), both of which allow negative and non-monotone functions.

Using classical results, we show that DSS, SFM, USSS, UDSS, and MNP are all equivalent under strongly polynomial-time reductions. This equivalence enables algorithmic cross-over: methods designed for one problem can be repurposed to solve others efficiently. In particular, we use the perspective of the minimum norm point in the base polyhedron of a sub/supermodular function, which, via Fujishige's results, yields the dense decomposition as a byproduct. Through this perspective, we show that a recent converging heuristic for DSS, \textsc{SuperGreedy++} \cite{chandra-soda,farouk-esa}, and Wolfe’s minimum norm point algorithm are both universal solvers for all of these problems. 

On the theoretical front, we explain the observation made in recent work \cite{litkde,veldt} that \textsc{SuperGreedy++} appears to work well even in settings beyond DSS. Surprisingly, we also show that this simple algorithm can be used for Submodular Function Minimization, including acting as a practical minimum $s$-$t$ cut algorithm. 

On the empirical front, we explore the utility of several algorithms for recent problems. We conduct over 400 experiments across seven problem types and large-scale synthetic and real-world datasets (up to $\approx 100$ million edges). Our results reveal that methods historically considered inefficient, such as convex-programming methods, flow-based solvers, and Fujishige-Wolfe’s algorithm, outperform state-of-the-art task-specific baselines by orders of magnitude on concrete problems like HNSN \cite{litkde}. These findings challenge prevailing assumptions and demonstrate that with the proper framing, general optimization algorithms can be both scalable and state-of-the-art for supermodular and submodular ratio problems.

\end{abstract}

%% information and builds the first part of the formatted document.
\maketitle
\vspace{-1em}
\section{Introduction and Background}
Submodular and supermodular functions play a fundamental role in combinatorial optimization and, thanks to their generality, capture a wide variety of highly relevant problems. 
For a finite ground set $V$, the real-valued set function $f:2^V \rightarrow \mathbb{R}$ is submodular iff $f(A) + f(B) \ge f(A \cap B) + f(A \cup B)$ for all $A, B \subseteq V$. A set function $f$ is supermodular iff $-f$ is submodular; $f$ is normalized if $f(\emptyset) = 0$ and monotone if $f(A) \le f(B)$ whenever $A \subseteq B$.
When working with these functions, we typically assume they are available through a value oracle that returns $f(S)$ given a set $S \subseteq V$. We make this assumption throughout the paper. A problem of central interest is the following:

\begin{problem}{(Submodular Function Minimization, \textsc{SFM}).}
Let \(f: 2^V \to \mathbb{R}\) be a submodular function given via a value oracle. Compute $\min_{\emptyset \neq S\subseteq V} f(S)$.
\end{problem}
A classical result in combinatorial optimization is that SFM can be solved in strongly polynomial-time \cite{gls-book}. There have been many theoretical developments since then \cite{jegelka2011fast, orlin2009faster, SCHRIJVER2000346, iwata2001combinatorial,chakrabarty2017subquadratic,jiang2022minimizing,axelrod2020near}.

In this paper, we are interested in \emph{ratio} problems involving submodular and supermodular functions, which have been very interesting in various applications. 
We start with a concrete and canonical problem of this form.

\begin{problem}{(Densest Subgraph, \textsc{DSG}).}
Given an undirected graph \(G=(V,E)\) find a subset \(S\subseteq V\) that maximizes the density $|E(S)|/|S|$ where $E(S)=\{(u,v)\in E : u,v\in S\}$. 
\end{problem}

DSG is a classical problem with wide-ranging applications in data mining, network analysis, and machine learning. Dense subgraphs often reveal crucial structural properties of networks and thus DSG has been a very active area of recent research; see, for example, \cite{torres, dsg-chenhao, flowless, frankwolfe, CharalamposWWW, bintao-kclist, reid-dsg, Charalampos-sigkdd, AlessandroWWW, andrewMc-math-foundations, Polina-PKDD, Balalau-WWW, Kuroki2020OnlineDS, dense-maentenance, Kijung-k-core, Li2020FlowScopeSM, Lanciano2023survey, alina, chandrasekaran2025deleting, billig2025eigenvalue}. A key feature of DSG is its polynomial-time solvability. There are several algorithms to solve it exactly: (i) via network flow \cite{goldberg,pq-82}, (ii) via reduction to SFM (folklore), and (iii) via an LP relaxation \cite{charikar}. Despite these exact algorithms, there has been considerable interest in fast approximation algorithms and heuristics to scale to the large networks that arise in practice. Hence, the Greedy peeling algorithm that yields a $1/2$-approximation \cite{charikar} has been popular in practice. Furthermore, there are several theoretical algorithms that obtain a $(1-\varepsilon)$-approximation in near-linear time for any fixed $\varepsilon$ via different techniques \cite{bahmaniepdgamp,bsw-19,chandra-soda}. An important recent algorithm, Greedy++, is an iterative version of Greedy that was proposed in \cite{flowless}. It is simple, combinatorial, and has strong empirical performance. Moreover, it was conjectured to converge to a $(1-\varepsilon)$-approximation in $O(1/\varepsilon^2)$-iterations, each of which takes (near) linear time like Greedy. \cite{chandra-soda} proved that Greedy++ converges to a $(1-\varepsilon)$-approximate solution in $O(\Delta(G) \log |V| /\varepsilon^2)$ iterations where $\Delta(G)$ is the maximum degree. Crucial to their proof was a perspective based on supermodularity.
% \cite{chandra-soda} took a supermodularity perspective on DSG which was crucial to their proof. 
In particular, they considered the following general ratio problem.
\begin{problem}{(Densest Supermodular Set, \textsc{DSS}).} 
Let \(f:2^V \to \mathbb{R}_{\geq 0}\) be a normalized, monotone supermodular function. Compute $\max_{\emptyset \neq S\subseteq V} f(S)/|S|$.
\end{problem}
DSG is a special case of DSS; for all graphs $G=(V, E)$, the function $f:2^V \rightarrow \mathbb{R}$ where $f(S) = |E(S)|$ is monotone supermodular. DSS can be solved exactly via SFM. \cite{chandra-soda} defined SuperGreedy and SuperGreedy++ for DSS and showed that SuperGreedy++ converges to a $(1-\varepsilon)$-approximate optimum solution in $O(\alpha_f \log |V|/\varepsilon^2)$ iterations where $\alpha_f = \max_v (f(V) - f(V-v))$. This is a useful result since several non-trivial problems in dense subgraph discovery can be modeled as a special case of DSS including hypergraph density problems, $p$-mean density \cite{vbk-21} and others --- for details, we refer the reader to the recent survey \cite{Lanciano2023survey}. Several other converging iterative algorithms for DSG have recently been developed via convex optimization methods such as Frank-Wolfe \cite{frankwolfe}, FISTA \cite{farouk-neurips}, and accelerated coordinate descent \cite{alina}. In another direction, flow-based exact algorithms have been revisited \cite{veldt,Hochbaum2024flow}. Many of these algorithms are based on structural properties of DSG inherited from supermodularity, which we describe in more detail following the discussion of the motivation for this work.
\paragraph{Motivation for this work:} Our initial motivation originates from two recent ratio problems. 
\cite{litkde} studied the following problem.
\begin{problem}{(Heavy Nodes in a Small Neighborhood, \textsc{HNSN}).}
Given a bipartite graph $G(L, R, E)$ and a weight function $w: R \to \mathbb{R}_{\geq 0}$, find a set $S\subseteq R$ of nodes such that $\sum_{v\in R}w(v)/|N(S)|$ is maximized where $N(S)$ is the set of neighbors of $S$. Equivalently the goal is to minimize $|N(S)|/(\sum_{v\in R}w(v))$. 
\end{problem}
In the preceding problem, the function $f:2^V \rightarrow \mathbb{R}$ defined as $f(S) = |N(S)|$ is a monotone submodular function; it is the coverage function of the set system induced by the bipartite graph $G$ (which can also be viewed as a hypergraph). This leads us to consider the following ratio problem.

\begin{problem}{(Unrestricted Sparsest Submodular Set, \textsc{USSS}).}
Given a normalized submodular function \(f:2^V \to \mathbb{R}\), find $\min_{\emptyset \neq S\subseteq V} f(S)/|S|$.
\end{problem}

A second motivating problem is the following, which is studied in \cite{veldt}.
\begin{problem}{(Anchored Densest Subgraph, \textsc{ADS}).}
Given a graph $G=(V, E)$ and a set $R\subseteq V$, find a vertex set $S\subseteq V$ maximizing $\left( 2|E(S)| - \sum_{v\in S\cap \overline{R}}\text{deg}_G(v)\right)/|S|$.
\end{problem}

The numerator in the preceding problem is supermodular but is no longer non-negative or monotone. This motivates us to further define an unrestricted version of DSS.
\begin{problem}{(Unrestricted Densest Supermodular Set, \textsc{UDSS}).}
Given a normalized supermodular function \(f:2^V \to \mathbb{R}\), find $\max_{\emptyset \neq S\subseteq V} f(S)/|S|$.
\end{problem}

\cite{litkde} showed that the HNSN problem can be reformulated as a special case of \textsc{DSS}, allowing iterative peeling algorithms (\textsc{SuperGreedy++}) to converge to the optimal solution. In \cite{veldt}, this perspective is extended by applying \textsc{SuperGreedy++} to the unrestricted version of \textsc{DSS} (\textsc{UDSS}), noting that non-negativity and monotonicity can be enforced if one first shifts the function by a large modular term, i.e., adding $C|S|$ for some sufficiently large constant $C$. This ensures non-negativity and monotonicity while retaining the exact optimal solution (since the density of all sets simply shifts by a constant $C$). However, the shift renders the relative approximation guarantee in \cite{chandra-soda} inapplicable in a direct way; indeed, \cite{veldt} state the following: ``we hypothesize that iterative peeling remains an effective practical heuristic''. 

More generally, while \textsc{USSS} and \textsc{UDSS} are equivalent in the exact optimization sense (via negation), approximation guarantees do not easily carry over, especially when relying on relative error measures. This highlights a key open question: can we formally prove that \textsc{SuperGreedy++} converges for all the above problem classes under appropriate approximation guarantees? Addressing this question, as we will in this paper, would provide a unified theoretical foundation for its empirical success observed across diverse submodular and supermodular ratio problems.

\medskip
\noindent
\textbf{Min-norm point and the Fujishige-Wolfe algorithm.} Another motivation for this work comes from some essential properties of the min-norm point problem relevant to submodular optimization. 

\begin{problem}{(Minimum Norm Point, \textsc{MNP}).}
Let \(f:2^V \to \mathbb{R}\) be a normalized submodular or supermodular function, and let  \(B(f)\) be the corresponding base polytope. Find $\argmin_{x \in B(f)} \|x\|_2^2.$

\end{problem}
It is known that SFM can be reduced to MNP \cite{fujishige-book,Schrijver-book}. Further, \cite{fujishige,fujishige-book} showed that the optimum value yields a lexicographically optimal and unique base, which reveals the entire dense decomposition of $f$. Recent work in the context of DSG has exploited \textsc{MNP} via convex optimization techniques \cite{frankwolfe,farouk-neurips,alina} to obtain fast iterative algorithms for DSG and the dense decomposition version of DSG. Wolfe defined the min-norm point problem in the more general context of convex optimization and developed an iterative algorithm which is well-known \cite{Wolfe76} --- we note that his algorithm is tailored to the quadratic norm objective and is quite different from the more general convex optimization algorithms. Fujishige tailored the algorithm to submodular polytopes, yielding one of the fastest practical SFM solvers \cite{fujishige2011submodular,wolfeapproximate}. Despite its well-known empirical performance for SFM, the Fujishige-Wolfe algorithm has not been empirically evaluated for any recent ratio problems. 

\textbf{Revisiting flow-based exact algorithms.} 
Flow-based approaches to DSG were deemed impractical for large graphs because they required an expensive binary search procedure. Recent independent work by Huang \etal~\cite{veldt} and Hochbaum~\cite{Hochbaum2024flow} introduced an iterative \emph{density-improvement} framework that overcomes this limitation. Letting $f(S)=|E(S)|$, and starting with \(S_0 = V\) and \(\lambda_0 = f(S_0)/|S_0|\), the method repeatedly computes $
S_{k+1} = \arg\max_{S \subseteq S_k} \left\{ f(S) - \lambda_k |S| \right\}$ via max-flow, updating \(\lambda_{k+1} = f(S_{k+1})/|S_{k+1}|\) until convergence (\(S_{k+1} = S_k\)). This approach, rooted in Dinkelbach's classical method from 1967~\cite{Dinkelbach, veldt}, guarantees optimality. Although this process may appear worse than binary search in the worst case, potentially requiring up to $|V|+1$ flow computations, empirical evidence shows that the number of iterations is typically minimal and often far fewer than those required by binary search. These insights have renewed interest in flow-based algorithms for solving DSG. One can reduce HNSN to a max-flow problem. This raises the question of the performance of flow-based algorithms for HNSN.

\textbf{Summary of Motivation}. We are motivated by the observation that several closely related submodular and supermodular ratio problems are treated disparately across the literature, often with distinct algorithms and analyses, despite underlying connections. We are interested in whether a unified perspective can bridge these gaps. For instance, while the simple combinatorial algorithm \textsc{SuperGreedy++} has provable guarantees for \textsc{DSS}, its empirical success in broader settings lacks a corresponding unified theoretical analysis. 

Similarly, the Fujishige-Wolfe algorithm, despite its established effectiveness for submodular function minimization, has surprisingly not been evaluated for ratio problems in \textsc{USSS} and \textsc{UDSS}. Moreover, with recent progress in flow-based exact algorithms for densest subgraph problems, it is natural to ask how these methods perform on related submodular ratio problems like \textsc{HNSN}. In essence, our goal is to investigate whether algorithmic successes in one domain can be transferred to others, leveraging tools that have been overlooked to improve empirical performance and deepen theoretical understanding across these problem classes.

\paragraph{Our Contributions.} We formalize connections between five problems via reductions: Minimum Norm Point (\textsc{MNP}), Submodular Function Minimization (\textsc{SFM}), Densest Supermodular Set (\textsc{DSS}), Unrestricted Sparsest Submodular Set (\textsc{USSS}), and Unrestricted Densest Supermodular Set (\textsc{UDSS}) by proving that they are \textbf{equivalent}. All the reductions run in either $O(n)$ or $O(n \log n)$ calls and are highly efficient.

While these reductions are rooted in classical concepts, several of these problems—such as \textsc{USSS} and \textsc{UDSS}—have not been formally defined in the literature. For instance, although \textsc{DSS} can be exactly solved via \textsc{SFM}, its formal definition in \cite{chandra-soda} was instrumental in advancing specialized algorithms like SuperGreedy and SuperGreedy++. By establishing these reductions, we enable \textbf{algorithmic cross-over}, where methods developed for one problem can be effectively applied to others. This is demonstrated in our experiments: Wolfe's Minimum Norm Point algorithm (\textsc{MNP}), originally for \textsc{MNP}, achieves up to $595\times$ speedups over prior state-of-the-art baselines on \textsc{HNSN} (a special case of \textsc{USSS}). Similarly, \textsc{SuperGreedy++}, designed for \textsc{DSS}, delivers scalable and competitive performance on the minimum $s$-$t$ cut problem (a special case of \textsc{SFM}). Interestingly, it is surprising that \textsc{SuperGreedy++} serves as an effective algorithm for minimum $s$-$t$ cut, despite being extremely simple, requiring no data structures, and working in the dual (cut) space. To the best of our knowledge, all existing algorithms for minimum $s$-$t$ cut in the literature operate in the primal space via max-flow formulations.

Additionally, we demonstrate that an approximate solution to the Minimum Norm Point (\textsc{MNP}) problem also provides approximate solutions to all the other problems.
We use the notion of \emph{additive} approximation rather than the relative approximation notion used in the DSG and DSS context. This is necessary since we are working with unrestricted functions that may be negative. We also prove that \textsc{SuperGreedy++}, the Frank-Wolfe algorithm, and the Wolfe \textsc{MNP} Algorithm are, in fact, all (under the hood) solving the Minimum Norm Point problem, which explains why these methods have seen so much success across seemingly unrelated problems. 

It is worth highlighting an interesting observation about \textsc{SuperGreedy++}. Although the algorithm was originally developed for the \textsc{DSS} problem--- a task that itself reduces to \textsc{SFM} ---our results show that \textsc{SuperGreedy++} can also be \textit{directly} interpreted as an algorithm for solving \textsc{SFM}. This "full-circle" insight, where an approximation algorithm for a derived problem effectively addresses a more general problem, is surprising and conceptually satisfying.

We summarize our contributions more concretely:
\begin{enumerate}[topsep=0pt, itemsep=0pt, partopsep=0pt, parsep=0pt, leftmargin=*]
    \item \textbf{Problem Equivalence.} We prove that the following problems can be reduced to one another in strongly polynomial time using efficient reductions when solved exactly: Submodular Function Minimization (SFM), Densest Supermodular Set (DSS), Minimum Norm Point (MNP), Unrestricted Sparsest Submodular Set (USSS), and Unrestricted Densest Supermodular Set (UDSS).
    \item \textbf{Approximation Transfer.} We prove that approximate solutions to MNP can be transformed into approximate solutions for all the above problems via approximation-preserving reductions.
    \item \textbf{SuperGreedy++ and Wolfe MNP Algorithm as Universal Solvers.} Despite being designed for specific settings, we prove both \textsc{SuperGreedy++} and Wolfe's MNP algorithm are universal solvers for this broader class of problems.
    \item \textbf{Efficient New Algorithms for \textsc{HNSN} and Minimum $s$-$t$ Cut.} As a direct consequence, our results imply that the Unrestricted Sparsest Submodular Set, a primary motivation of our work, can be efficiently approximated and solved. This concretely includes, for instance, new faster algorithms for the \textsc{HNSN} problem that are orders of magnitude faster than all 6 state-of-the-art baselines compared in \cite{litkde}. Additionally, we show that \textsc{SuperGreedy++} is, in fact, an efficient submodular function minimization algorithm and can thus be used to solve the classical minimum $s$-$t$ cut problem, often converging to the minimum cut within a few iterations. 
    \item \textbf{Empirical Validation at Scale.} We conduct an extensive experimental study involving over $400$ trials run across seven distinct problems, encompassing both synthetic and real datasets containing graphs with up to $\approx 100$ million edges or elements for set functions. Our experiments systematically compare all major algorithmic paradigms, including flow-based methods, LP solvers, convex optimization techniques (e.g., Wolfe's MNP algorithm, Frank-Wolfe), and combinatorial baselines (e.g., \textsc{SuperGreedy++}), for each problem class: DSG, DSS, USSS, UDSS, SFM, and MNP. Notably, many algorithms that had never been previously explored in certain problem settings (e.g., Wolfe's MNP algorithm on HNSN or flow-based methods on USSS instances) outperform problem-specific state-of-the-art algorithms from prior work by orders of magnitude. 
\end{enumerate}
Our results demonstrate that methods often written off as inefficient, such as Wolfe's algorithm and max-flow solvers, are not only competitive but frequently faster and more accurate than tailored heuristics \textit{when applied correctly with the right lens}. 

\textbf{Remark on Scope and Related Work}. Throughout this paper, we discuss several optimization problems such as SFM, \textsc{HNSN}, DSS, DSG, and MNP, each with a rich history and extensive literature. Due to space constraints, we cannot provide an exhaustive survey or explore all technical nuances and historical developments of each problem. Instead, we have focused on presenting the key connections and ideas necessary for the unified perspective proposed in this work. An expanded version of this paper will include a more comprehensive discussion, including detailed related work and contextual background. Due to space constraints, we include several proofs and experimental results in the Appendix.

\section{Preliminaries and Main Results}

Recall that we already defined submodularity, supermodularity, and monotonicity in the introduction. Throughout this paper, we assume that all set functions are normalized without loss of generality

\begin{definition}[Base Polymatroid]
For a submodular function \(f:2^V\to\mathbb{R}\), the \emph{base polymatroid} is defined as $B(f)=\{x\in\mathbb{R}^{|V|}: x(S)\leq f(S) \quad \forall S\subseteq V,\quad x(V)=f(V)\}.$
\end{definition}

\begin{definition}[Base Contrapolymatroid]
For a supermodular function \(f:2^V\to\mathbb{R}\), the \emph{base contrapolymatroid} is defined as $B(f)=\{x\in\mathbb{R}^{|V|}: x(S)\geq f(S) \quad \forall S\subseteq V,\quad x(V)=f(V)\}.$
\end{definition}

For functions that are either submodular or supermodular, we refer collectively to the base polymatroid and the base contrapolymatroid as the \emph{base polytope}.

\paragraph*{Importance of the Minimum Norm Point in the Base Polytope.}  
A key approach in submodular optimization connects continuous minimization over the base polytope with discrete problems. Specifically, the \textsc{Minimum-Norm-Point} (\textsc{MNP}) problem minimizes $\|x\|_2^2$ over \( x\in B(f)\), yielding a unique minimizer \(x^\ast\). Thresholding \(x^\ast\) often recovers combinatorial minimizers for penalized objectives. While the following connection is classical in submodular function minimization literature, its utility for submodular or supermodular ratio problems has been underexplored. Its significance became clear to us only after recognizing how it simplifies several prior results.

\begin{lemma}
    Let \(f\) be a normalized submodular set function and let \(x^\ast\) be the optimal solution of $\min_{x\in B(f)} \|x\|^2_2$. For any \(\lambda\in\mathbb{R}\), define the set $ S_\lambda=\{v\in V: x^\ast_v\leq \lambda\}$. Then, \(S_\lambda\) is a minimizer of the function \(f(S)-\lambda |S|\).
\label{lemlambda}
\end{lemma}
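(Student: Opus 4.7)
The plan is to reduce the lemma to a single \emph{tightness} property of the minimum norm point: \(x^*(S_\lambda)=f(S_\lambda)\) for every \(\lambda\in\mathbb{R}\). Once this is established, the rest is a one-line calculation. Since \(x^*\in B(f)\), we have \(x^*(T)\le f(T)\) for every \(T\subseteq V\), so
\[
f(T)-\lambda|T| \;\ge\; x^*(T)-\lambda|T| \;=\; \sum_{v\in T}(x^*_v-\lambda).
\]
The rightmost expression is minimized over \(T\subseteq V\) precisely by collecting those \(v\) with \(x^*_v-\lambda\le 0\), namely \(T=S_\lambda\). Combined with the tightness equality \(f(S_\lambda)-\lambda|S_\lambda|=x^*(S_\lambda)-\lambda|S_\lambda|=\sum_{v\in S_\lambda}(x^*_v-\lambda)\), this yields \(f(T)-\lambda|T|\ge f(S_\lambda)-\lambda|S_\lambda|\) for every \(T\).

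To prove the tightness, I would combine first-order optimality of \(x^*\) with Edmonds' greedy algorithm on \(B(f)\). Optimality gives \(\langle x^*,y-x^*\rangle\ge 0\) for all \(y\in B(f)\), so \(x^*\) is itself a minimizer of the linear function \(y\mapsto\langle x^*,y\rangle\) over \(B(f)\). By Edmonds' greedy, another minimizer \(y^*\) is produced by choosing an ordering \(v_1,\ldots,v_n\) with \(x^*_{v_1}\le\cdots\le x^*_{v_n}\) and setting \(y^*_{v_j}=f(S_j)-f(S_{j-1})\), where \(S_j=\{v_1,\ldots,v_j\}\); by construction \(y^*(S_j)=f(S_j)\) for every \(j\). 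Applying Abel summation to both \(\langle x^*,y^*\rangle\) and \(\|x^*\|_2^2=\langle x^*,x^*\rangle\) (using \(x^*(S_j)-x^*(S_{j-1})=x^*_{v_j}\)) produces the identity
\[
\langle x^*,y^*\rangle-\|x^*\|_2^2 \;=\; \sum_{j=1}^{n-1}(x^*_{v_j}-x^*_{v_{j+1}})\bigl(f(S_j)-x^*(S_j)\bigr).
\]
Each factor \((x^*_{v_j}-x^*_{v_{j+1}})\le 0\) by the ordering and \((f(S_j)-x^*(S_j))\ge 0\) because \(x^*\in B(f)\), so the entire sum is \(\le 0\); but optimality gives \(\langle x^*,y^*\rangle\ge \|x^*\|_2^2\), forcing equality, hence every summand is zero. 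Therefore \(f(S_j)=x^*(S_j)\) at every strict jump \(x^*_{v_j}<x^*_{v_{j+1}}\). The set \(S_\lambda\) is precisely such a prefix: taking \(j\) largest with \(x^*_{v_j}\le\lambda\) gives \(x^*_{v_{j+1}}>\lambda\ge x^*_{v_j}\), so the gap is strict and \(f(S_\lambda)=x^*(S_\lambda)\).

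The main obstacle is the tightness step and its underlying Abel-summation identity: one has to apply summation by parts to both inner products, cancel the endpoint contributions using \(x^*(V)=f(V)\) and \(f(\emptyset)=0\), and then combine the monotone ordering of \(x^*\) with \(x^*\in B(f)\) to conclude that every summand has the same sign. The easy reduction argument and the passage from ``tightness at a strict jump'' to ``tightness at \(S_\lambda\)'' are routine once this identity is in hand.
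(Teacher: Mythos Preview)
Your proof is correct and follows the same high-level structure as the paper's: both first reduce the claim to the tightness property \(x^*(S_\lambda)=f(S_\lambda)\) via the inequality \(f(T)-\lambda|T|\ge\sum_{v\in T}(x^*_v-\lambda)\), and this part of your argument matches the paper's equations line for line. The difference is in how tightness is established. The paper orders \(V\) by \(x^*\) and then simply cites Fujishige's Lemma~7.4 to assert \(x^*(S_i)=f(S_i)\) for every prefix \(S_i\); you instead give a fully self-contained proof via first-order optimality of \(x^*\), Edmonds' greedy, and the Abel-summation identity, concluding tightness precisely at the strict jumps of \(x^*\). Your route is essentially a proof of (the relevant part of) the cited lemma, so the two approaches coincide in spirit; yours has the advantage of being self-contained and of claiming only what is actually needed---tightness at level-set boundaries---rather than at every prefix. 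One minor omission: you should explicitly dispatch the edge cases \(S_\lambda=V\) (tight because \(x^*(V)=f(V)\)) and \(S_\lambda=\emptyset\) (trivially tight), since your strict-jump argument does not cover them.
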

\begin{proof:e}{Lemma \ref{lemlambda}}{Lemma:lemlambda:proof}
Fix $\lambda$ and $S \subseteq V$. Since $x^\ast \in B(f)$, we have
\begin{equation}
    f(S) - \lambda |S| \geq \sum_{v \in S} (x^\ast_v - \lambda). \label{eq1}
\end{equation}
The right-hand side of \eqref{eq1} is minimized by taking $S_\lambda = \{v \in V \mid x^\ast_v \leq \lambda\}$, so
\begin{equation}
    \sum_{v \in S} (x^\ast_v - \lambda) \geq \sum_{v \in S_\lambda} (x^\ast_v - \lambda) = x^\ast(S_\lambda) - \lambda |S_\lambda|. \label{eq2}
\end{equation}

We claim that $x^\ast(S_\lambda) = f(S_\lambda)$. Combining this with \eqref{eq1} and \eqref{eq2} gives
\[
f(S_\lambda) - \lambda |S_\lambda| \leq f(S) - \lambda |S|,
\]
as desired.

To prove the claim, order $V$ so that $x^\ast_{v_1} \leq x^\ast_{v_2} \leq \cdots \leq x^\ast_{v_n}$, and define $S_i = \{v_1, \ldots, v_i\}$. For some $i$, we have $S_\lambda = S_i$, so it suffices to show $x^\ast(S_i) = f(S_i)$ for all $i$. This follows from \cite{fujishige-book} Lemma 7.4 (pp. 218--219, equations 7.12--7.15).

\end{proof:e}
% \RefProofInAppendix{Lemma:lemlambda:proof}. 
\paragraph{Discussion.}  
Lemma~\ref{lemlambda} has strong implications, helping establish the equivalence of several problems through efficient, strongly polynomial reductions. Specifically, it shows (together with other ideas) that that all the following problems are equivalent, with very efficient strongly-polynomial time reductions. \RefProofInAppendix{Lemma:allthesame:proof}

\begin{theorem}
The following problems are all equivalent \textbf{when solved exactly}, with efficient (strongly polynomial) near-linear time reductions between them: (1) \textsc{Minimum Norm Point (MNP)} (2) \textsc{Submodular Function Minimization (SFM)} (3) \textsc{Densest Supermodular Set (DSS)} (4) \textsc{Unrestricted Sparsest Submodular Set (USSS)} (5) \textsc{Unrestricted Densest Supermodular Set (UDSS)}
\label{allthesame}
\end{theorem}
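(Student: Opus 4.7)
The plan is to place MNP at the center of a star of reductions, exploiting Lemma~\ref{lemlambda} for the forward direction and the classical parametric--SFM machinery for the reverse. For the forward direction, given an instance of any of the four target problems I would solve MNP on the appropriate base polytope once to obtain $x^{\ast}$, sort its coordinates in $O(n\log n)$ time, and then read off the desired solution from the level sets $S_\lambda=\{v: x^{\ast}_v \le \lambda\}$. Concretely: for SFM the minimizer is $S_0$; for USSS the optimum density equals $\min_v x^{\ast}_v$ with the corresponding level set optimal; for UDSS on the base contrapolymatroid the optimum density is $\max_v x^{\ast}_v$ with $\{v: x^{\ast}_v \ge \lambda^{\ast}\}$ optimal; and DSS is just the monotone non-negative restriction of UDSS. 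In each case, correctness follows from Lemma~\ref{lemlambda} together with the identity $x^{\ast}(S_\lambda)=f(S_\lambda)$ established inside its proof.

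For the reverse direction, I would first collapse the three ratio problems into one. USSS and UDSS are equivalent under the negation $f\leftrightarrow -f$ (since $\min f(S)/|S|=-\max(-f)(S)/|S|$), and DSS is a restriction of UDSS; conversely, any UDSS instance becomes a DSS instance by adding a sufficiently large modular term $C|S|$, with $C$ computable from $O(n)$ oracle calls, which shifts every density by $C$ and preserves the argmax while ensuring monotonicity and non-negativity. Next, I would invoke the classical equivalence SFM $\Leftrightarrow$ MNP: the MNP-to-SFM direction is the special case above, and the reverse uses Fujishige's parametric construction of the dense decomposition via $O(n\log n)$ SFM calls. Finally, to close the loop, I would reduce one of the ratio problems, say USSS, back to SFM by parametric means, observing that a USSS oracle applied to iteratively contracted instances recovers the dense decomposition block by block in $O(n)$ calls.

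The main obstacle will be ensuring strong polynomiality throughout, independent of the numerical magnitudes of $f$. A naive Dinkelbach-style application of a ratio solver to emulate SFM at a fixed parameter $\lambda$ need not terminate in a number of iterations bounded solely by $n$, and the constant $C$ used in the UDSS-to-DSS shift must itself be bounded in terms of $n$ and $O(n)$ oracle values rather than in terms of $\max_S|f(S)|$. I would handle both issues with Radzik-style arguments: the concave envelope $\lambda\mapsto \min_S\bigl(f(S)-\lambda |S|\bigr)$ has at most $n+1$ breakpoints, one per block of the Fujishige decomposition, so tracing these breakpoints with a ratio oracle or a single-parameter SFM oracle uses $O(n)$ calls; and $C$ can be chosen as a polynomial in $n$ and $\max_v |f(\{v\})|$, computed with $n$ oracle queries. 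Subject to this, the remaining steps --- the $O(n\log n)$ sorting, the single-call forward reductions from MNP, and the negation and shift-by-modular transformations --- are manifestly strongly polynomial, so stitching them together yields the claimed near-linear equivalence among all five problems.
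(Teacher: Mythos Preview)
Your plan is correct and uses the same core ingredients as the paper --- Lemma~\ref{lemlambda} for reading off solutions from level sets of $x^\ast$, iterative contraction to turn a USSS oracle into the dense decomposition, negation for USSS $\Leftrightarrow$ UDSS, and the modular shift for UDSS $\Rightarrow$ DSS --- but the architecture differs. The paper proves a single cycle $\text{MNP}\le\text{USSS}\Leftrightarrow\text{UDSS}\le\text{DSS}\le\text{SFM}\le\text{MNP}$ (five reductions), whereas you place MNP at the center of a star: each of the four target problems reduces to MNP directly via a single sort of $x^\ast$, and you close the spokes separately, in particular invoking Fujishige's parametric construction (MNP from $O(n\log n)$ SFM calls), which the paper avoids by routing through the ratio problems. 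Your star is slightly redundant but has the virtue that the forward reductions are one-shot rather than composed. One phrasing slip to fix: where you write ``reduce USSS back to SFM'' you then describe the opposite direction --- a USSS oracle recovering the dense decomposition block by block, i.e., MNP $\le$ USSS --- and it is the description, not the label, that you actually need; this is exactly the paper's $(1)\Rightarrow(4)$. Finally, your care about the shift constant $C$ is warranted and in fact cleaner than the paper's: taking $C=\max\{0,-\min_v f(\{v\})\}$ suffices because supermodularity gives $f(v\mid A)\ge f(\{v\})$, so monotonicity (and hence non-negativity) of $f(S)+C|S|$ follows, and this $C$ needs only $n$ oracle calls; the paper's stated $C=\max\{0,\max_S(-f(S))\}$ is not obviously computable in strongly polynomial time as written.
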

\begin{proof:e}{Theorem \ref{allthesame}}{Lemma:allthesame:proof}
\textbf{(1) $\Rightarrow$ (4)/(5):} Let $f: 2^V \to \mathbb{R}$ be a normalized submodular or supermodular function. We focus on submodular functions for \textsc{USSS}; the supermodular case (\textsc{UDSS}) is analogous via negation.

Our goal is to compute the minimum norm point $x^\ast \in B(f)$ using an oracle for \textsc{USSS}. We iteratively build $x^\ast$ by decomposing $f$ into a sequence of sparsest sets.

Initialize $f_0 = f$. At iteration $i \geq 1$, define $S_i$ as the minimizer of $f_i(S)/|S|$ over non-empty $S \subseteq V \setminus (S_1 \cup \cdots \cup S_{i-1})$. Set $x_v = f_i(S_i)/|S_i|$ for all $v \in S_i$.

After assigning $x_v$ for $v \in S_i$, contract $S_i$ by defining:
\[
f_{i+1}(S) = f_i(S \cup S_i) - f_i(S_i), \quad \forall S \subseteq V \setminus (S_1 \cup \cdots \cup S_i).
\]
Repeat until $S_1 \cup \cdots \cup S_k = V$. This process requires at most $n$ iterations since the $S_i$ are disjoint.

\textbf{Claim:} The resulting vector $x$ is the minimum norm point in $B(f)$.

\textit{Proof of Claim:} Let $x$ be the vector constructed by the above process, where for each $v \in S_i$, we set $x_v = \lambda_i = f_i(S_i)/|S_i|$.

We now prove that $x$ is the minimum norm point in $B(f)$ by showing that $x$ is the lexicographically maximum base of $B(f)$. Recall that the lex-maximal vector corresponds to the minimum norm point in a polymatroid. 

We proceed by induction on $i$ to show that for each $i$, any lex-minimal base $x^\ast \in B(f)$ must satisfy $x^\ast_v = \lambda_i$ for all $v \in S_i$.

Base case ($i = 1$): Since $x^\ast \in B(f)$, we have
\[
x^\ast(S_1) \leq f(S_1) = \lambda_1 |S_1|.
\]
But by construction, $x_v = \lambda_1$ for all $v \in S_1$, so $x(S_1) = \lambda_1 |S_1|$.

Therefore, $x^\ast(S_1) \leq x(S_1)$. In particular, there exists at least one $v \in S_1$ with $x^\ast_v \leq \lambda_1$.

Since $x^\ast$ is lexicographically maximal, we must have $x^\ast_v = \lambda_1$ for all $v \in S_1$; otherwise, if any $x^\ast_v < \lambda_1$ for $v \in S_1$, lex-maximality would be violated. 

Induction step: Assume for $j = 1, \ldots, i$, we have $x^\ast_v = \lambda_j$ for all $v \in S_j$.

Consider $S_{i+1}$. Since $x^\ast \in B(f)$, we have
\[
x^\ast(S_1 \cup \cdots \cup S_{i+1}) \leq f(S_1 \cup \cdots \cup S_{i+1}).
\]
Using telescoping sums, this becomes:
\[
f(S_1 \cup \cdots \cup S_{i+1}) = \sum_{h=1}^{i+1} \left[ f(S_1 \cup \cdots \cup S_h) - f(S_1 \cup \cdots \cup S_{h-1}) \right] = \sum_{h=1}^{i+1} \lambda_h |S_h|.
\]
Thus,
\[
x^\ast(S_{i+1}) = x^\ast(S_1 \cup \cdots \cup S_{i+1}) - x^\ast(S_1 \cup \cdots \cup S_i) \leq \lambda_{i+1} |S_{i+1}|,
\]
where the induction hypothesis gives $x^\ast(S_1 \cup \cdots \cup S_i) = \sum_{h=1}^i \lambda_h |S_h|$.

Hence, the average value of $x^\ast_v$ over $v \in S_{i+1}$ is at most $\lambda_{i+1}$. Therefore, there must exist at least one $v \in S_{i+1}$ with $x^\ast_v \leq \lambda_{i+1}$.

Since $x^\ast$ is lexicographically maximal, we must have $x^\ast_v = \lambda_{i+1}$ for all $v \in S_{i+1}$; otherwise, if any $x^\ast_v < \lambda_{i+1}$, lex-maximality would be contradicted.

By induction, $x^\ast_v = \lambda_i$ for all $v \in S_i$ and all $i$. Hence, $x^\ast = x$. Thus, the constructed vector $x$ is the lex-maximal base of $B(f)$, which is also the minimum norm point.

\textbf{(4) $\Leftrightarrow$ (5):} Immediate from duality. If $f$ is supermodular, $-f$ is submodular. The sparsest submodular set problem becomes densest supermodular set under negation. Hence, solving \textsc{USSS} is equivalent to solving \textsc{UDSS} on $-f$.

\textbf{(5) $\Rightarrow$ (3):} Given a normalized supermodular function $f$, define:
\[
C=\max\{0,\max_{S\subseteq V}(-f(S)\},
\]
and set $g(S) = f(S) + C|S|$. Then $g$ is normalized, monotone, non-negative, and supermodular. Furthermore,
\[
\arg\max_{S \subseteq V} \frac{g(S)}{|S|} = \arg\max_{S \subseteq V} \frac{f(S)}{|S|},
\]
as adding a constant $C$ to all sets gains affects neither the maximizer (as it just shifts the sparsity ratio). Thus, \textsc{UDSS} reduces to \textsc{DSS}.

\textbf{(3) $\Rightarrow$ (2):} This reduction is classical (e.g., \cite{Dinkelbach,veldt, Hochbaum2024flow}). Given monotone, normalized, non-negative supermodular $f$, we iteratively refine a maximizer of $f(S)/|S|$ using SFM. Initialize $S_0 = V$, $\lambda_1 = f(S_0)/|S_0|$. At iteration $i$, solve:
\[
S_i = \argmin_{S \subseteq V, S \neq \emptyset} (\lambda_i |S| - f(S)),
\]
using SFM. Update $\lambda_{i+1} = f(S_i)/|S_i|$. Iterate until $\lambda_k |S_k| - f(S_k) = 0$. Since the sequence of $\lambda_i$ decreases and $f$ is normalized, the process terminates in at most $n$ steps. The last $S_{k-1}$ maximizes $f(S)/|S|$, solving \textsc{DSS} via SFM.

\textbf{(2) $\Rightarrow$ (1):} Given $f$, compute the minimum-norm point $x^\ast \in B(f)$. By Lemma~\ref{lemlambda}, the set
\[
S_0 = \{ v \in V \mid x^\ast_v \leq 0 \}
\]
minimizes $f(S)$. Thus, SFM can be reduced to computing $x^\ast$, completing the reduction.

\end{proof:e}

% \RefProofInAppendix{Lemma:allthesame:proof}

\paragraph{Approximation equivalence.}  
Although Theorem~\ref{allthesame} shows that the aforementioned problems are efficiently reducible to one another when solved exactly, this does not imply that the reductions are approximation-preserving. 
%That is, a \((1\pm\varepsilon)\) approximation for one problem does not necessarily yield a similar approximation for every other problem.
However, as the following Theorem demonstrates, it suffices to focus on approximating the \textsc{Minimum Norm Point} problem (\textsc{MNP}).

\begin{theorem}
Let \(f:2^V \to \mathbb{R}\) be a normalized submodular or supermodular function, and suppose we can compute \(\hat{x} \in B(f)\) satisfying
\[
\|\hat{x}\|_2^2 \leq \langle q, \hat{x} \rangle + \varepsilon^2 \quad \text{for all } q \in B(f),
\]
i.e., an upper bound on the duality gap. Let $n=|V|$. Then, \(\hat{x}\) can be used to efficiently obtain approximate solutions:
\begin{enumerate}[leftmargin=*, itemsep=0pt]
    \item \textbf{(Submodular Function Minimization)}: A set \(\hat{S}_{\mathrm{sfm}}\) with $f(\hat{S}_{\mathrm{sfm}}) \leq f(S^\ast_{\mathrm{sfm}}) + \textcolor{red}{2n\varepsilon}$. 
    \item \textbf{(Unrestricted Sparsest Submodular Set)}: A set \(\hat{S}_{\mathrm{sparse}}\) with $\frac{f(\hat{S}_{\mathrm{sparse}})}{|\hat{S}_{\mathrm{sparse}}|} \leq \frac{f(S^\ast_{\mathrm{sparse}})}{|S^\ast_{\mathrm{sparse}}|} + \textcolor{red}{2\varepsilon}.$
    \item \textbf{(Unrestricted Densest Supermodular Set)}: A set \(\hat{S}_{\mathrm{dense}}\) with $    \frac{f(\hat{S}_{\mathrm{dense}})}{|\hat{S}_{\mathrm{dense}}|} \geq \frac{f(S^\ast_{\mathrm{dense}})}{|S^\ast_{\mathrm{dense}}|} - \textcolor{red}{2\varepsilon}.$
\end{enumerate}
\label{approximationreductions}
\end{theorem}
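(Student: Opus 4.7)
The plan is to first convert the duality-gap hypothesis into a concrete closeness bound between $\hat{x}$ and the exact minimum-norm point $x^*\in B(f)$, and then apply (an approximate version of) the algorithmic reductions used in the proof of Theorem~\ref{allthesame}, extracting candidate sets from $\hat{x}$ while tracking the induced perturbation in each objective.

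For the distance bound, I would instantiate the hypothesis at $q=x^*$ to obtain $\|\hat{x}\|_2^2 - \langle x^*,\hat{x}\rangle \leq \varepsilon^2$, and combine it with the first-order optimality of $x^*$ for the min-norm problem, namely $\langle x^*, q-x^*\rangle \geq 0$ for every $q\in B(f)$, which at $q=\hat{x}$ reads $\langle x^*,\hat{x}\rangle \geq \|x^*\|_2^2$. Expanding
\[
\|\hat{x}-x^*\|_2^2 \;=\; \|\hat{x}\|_2^2 - 2\langle \hat{x},x^*\rangle + \|x^*\|_2^2
\]
and substituting the two bounds gives $\|\hat{x}-x^*\|_2 \leq \varepsilon$, and hence $\|\hat{x}-x^*\|_\infty \leq \varepsilon$ together with $|\hat{x}(S)-x^*(S)| \leq |S|\,\varepsilon$ for every $S\subseteq V$.

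The extraction procedure for each of the three problems shares a common template: sort $V$ as $v_1,\dots,v_n$ by increasing $\hat{x}$-value (decreasing in the supermodular UDSS setting), form the $n{+}1$ prefix sweep sets $T_k=\{v_1,\dots,v_k\}$, and output the prefix optimizing the relevant objective, i.e.\ $\argmin_k f(T_k)$ for SFM, $\argmin_{k\geq 1} f(T_k)/k$ for USSS, and $\argmax_{k\geq 1} f(T_k)/k$ for UDSS. The analysis invokes the canonical partition $V=S_1\cup\cdots\cup S_r$ induced by the level sets of $x^*$ at values $\lambda_1<\cdots<\lambda_r$, together with the tight-union property $f(S_1\cup\cdots\cup S_j)=\sum_{i\leq j}\lambda_i|S_i|$ established in the proof of Theorem~\ref{allthesame}. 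The $\ell_\infty$ closeness controls the mis-alignment between the $\hat{x}$- and $x^*$-orderings: any pair $v,w$ swapped between the two sortings must satisfy $|x^*_v-x^*_w|\leq 2\varepsilon$, so prefix sweep sets of $\hat{x}$ coincide with unions $S_1\cup\cdots\cup S_j$ whenever the adjacent level-gap exceeds $2\varepsilon$. For USSS/UDSS this converts, through the weighted-average formula for $f(S_1\cup\cdots\cup S_j)/|S_1\cup\cdots\cup S_j|$, into a $\pm 2\varepsilon$ additive density bound; for SFM, Fujishige's identity $\min_S f(S)=\sum_v \min(0,x^*_v)$ combined with $\|\hat{x}-x^*\|_1\leq n\varepsilon$ upgrades to the stated $2n\varepsilon$ bound.

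The main technical obstacle is bounding $f(T_k)$ from \emph{above} on the approximate sweep prefix, since $\hat{x}\in B(f)$ only directly yields the lower bound $f(T_k)\geq \hat{x}(T_k)$. The key auxiliary tool I plan to exploit is the greedy duality identity
\[
\|\hat{x}\|_2^2 \;-\; \min_{q\in B(f)}\langle q,\hat{x}\rangle \;=\; \sum_{i=1}^{n-1}\bigl(\hat{x}_{v_{i+1}}-\hat{x}_{v_i}\bigr)\bigl[f(T_i)-\hat{x}(T_i)\bigr] \;\leq\; \varepsilon^2,
\]
which displays the duality gap as a weighted sum of nonnegative prefix slacks and shows that the $\hat{x}$-prefixes are approximately tight on average. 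Combining this tightness-on-average with the $\ell_\infty$ alignment between the $\hat{x}$- and $x^*$-orderings will identify a specific prefix $T_{k^\star}$ that is simultaneously close to an $x^*$-level union and approximately tight, which will close the upper bound on $f(T_{k^\star})$ and deliver the three additive guarantees in the statement.
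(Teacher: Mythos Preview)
Your candidate algorithm (sort by $\hat x$, sweep prefixes, return the best) and your key tool (the Abel-summation identity $\|\hat x\|_2^2-\min_{q\in B(f)}\langle q,\hat x\rangle=\sum_{i}(\hat x_{v_{i+1}}-\hat x_{v_i})(f(T_i)-\hat x(T_i))\le\varepsilon^2$) are exactly what the paper uses. The distance bound $\|\hat x-x^*\|_2\le\varepsilon$ you derive is correct and elegant, but the argument you build on top of it has a concrete gap.

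The broken step is the sentence ``this converts, through the weighted-average formula for $f(S_1\cup\cdots\cup S_j)/|S_1\cup\cdots\cup S_j|$, into a $\pm 2\varepsilon$ additive density bound.'' Knowing that the $\hat x$-prefix agrees with $U_j:=S_1\cup\cdots\cup S_j$ whenever $\lambda_{j+1}-\lambda_j>2\varepsilon$ only tells you that the \emph{first} such $j$ gives a recoverable prefix; but for that $j$ you have no control on $\lambda_j-\lambda_1$, only on the single gap $\lambda_{j+1}-\lambda_j$. If all consecutive gaps up to $j$ are of size roughly $2\varepsilon$, then $\lambda_j-\lambda_1\approx 2(j-1)\varepsilon$, and the density of $U_j$ (a weighted average of $\lambda_1,\dots,\lambda_j$) can sit at $\lambda_1+\Theta(j\varepsilon)$, not $\lambda_1+2\varepsilon$. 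The same obstruction hits your SFM sketch: you still need an upper bound on $f(T_k)$ at some specific $k$, which the $\ell_\infty$ closeness alone does not supply.

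The paper's proof never touches $x^*$ or its level sets. Working entirely inside the $\hat x$-ordering, it picks the smallest $k$ satisfying simultaneously $\hat x_{k+1}>\lambda^*$ and $\hat x_{k+1}-\hat x_k\ge \varepsilon/k$. The second condition plugged into the identity gives the slack bound $f([k])-\hat x([k])\le k\varepsilon$; minimality of $k$ forces every earlier index with $\hat x_i>\lambda^*$ to violate the gap condition, which telescopes to $\hat x([k])\le k\lambda^*+k\varepsilon$. Summing gives $f([k])/k\le\lambda^*+2\varepsilon$. Your final paragraph is groping toward exactly this, and it should be the \emph{main} line of the argument rather than a patch; the detour through $x^*$ can be dropped entirely.
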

% \RefProofInAppendix{Lemma:approximationreductions:proof}.  
Notably, this theorem shows that approximating \textsc{SFM}, \textsc{DSS}, \textsc{UDSS}, or \textsc{USSS} reduces to approximating the minimum norm point in \(B(f)\).

\begin{proof:e}{Theorem \ref{approximationreductions}}{Lemma:approximationreductions:proof}
\textbf{(1)} This result follows directly from the analysis in \cite{wolfeapproximate}.

\smallskip

\textbf{(2)} We generalize the argument of \cite{wolfeapproximate}. Without loss of generality, reorder indices so that \( \hat{x}_1 \leq \hat{x}_2 \leq \cdots \leq \hat{x}_n \). Consider the set \( \hat{S} = [i] := \{1, \ldots, i\} \) that minimizes \( f([i])/i \) over \( i = 1, \ldots, n \).

Let \( \lambda^\ast \) denote the density of the sparsest submodular set, i.e., \( \lambda^\ast = \min_{S \neq \emptyset} f(S)/|S| \). Define \( k \) as the smallest index satisfying:
\begin{enumerate}[label=(C\arabic*)]
    \item \( \hat{x}_{k+1} > \lambda^\ast \),
    \item \( \hat{x}_{k+1} - \hat{x}_k \geq \frac{\varepsilon}{k} \).
\end{enumerate}

From \cite{wolfeapproximate}, we have the key inequality:
\[
\sum_{i=1}^{n-1} (\hat{x}_{i+1} - \hat{x}_i) (f([i]) - \hat{x}([i])) \leq \varepsilon^2.
\]

Let \( t = \bigl| \{ i \in [k] : \hat{x}_i > \lambda^\ast \} \bigr| \). Observe:
\begin{align*}
    \sum_{i \in [k] : \hat{x}_i > \lambda^\ast} \hat{x}_i 
    &= \lambda^\ast t + \sum_{i \in [k] : \hat{x}_i > \lambda^\ast} (\hat{x}_i - \lambda^\ast) \\
    &\leq \lambda^\ast t + \sum_{i \in [k]} \frac{i \cdot \varepsilon}{k} \\
    &\leq \lambda^\ast t + k \varepsilon,
\end{align*}
where the second inequality follows since (C2) fails for all \( i < k \) with \( \hat{x}_i > \lambda^\ast \).

Since \( \hat{x}_{k+1} - \hat{x}_k \geq \frac{\varepsilon}{k} \), it follows that
\[
f([k]) - \hat{x}([k]) \leq k\varepsilon,
\]
and therefore,
\begin{align*}
    \frac{f([k])}{k} 
    &\leq \frac{\hat{x}([k]) + k\varepsilon}{k} \\
    &= \frac{\sum_{i \in [k] : \hat{x}_i \leq \lambda^\ast} \hat{x}_i + \sum_{i \in [k] : \hat{x}_i > \lambda^\ast} \hat{x}_i + k\varepsilon}{k} \\
    &\leq \frac{\lambda^\ast (k - t) + \lambda^\ast t + 2k\varepsilon}{k} = \lambda^\ast + 2\varepsilon.
\end{align*}

\smallskip

\textbf{(3)} The argument parallels that of (2). We now reorder indices so that \( \hat{x}_1 \geq \hat{x}_2 \geq \cdots \geq \hat{x}_n \), and select the set \( \hat{S} = [i] := \{1, \ldots, i\} \) that maximizes \( f([i])/i \).

Let \( \lambda^\ast = \max_{S \neq \emptyset} f(S)/|S| \). Define \( k \) as the smallest index satisfying:
\begin{enumerate}[label=(C\arabic*)]
    \item \( \hat{x}_{k+1} < \lambda^\ast \),
    \item \( \hat{x}_k - \hat{x}_{k+1} \geq \frac{\varepsilon}{k} \).
\end{enumerate}

Using the analogous inequality from \cite{wolfeapproximate}:
\[
\sum_{i=1}^{n-1} (\hat{x}_i - \hat{x}_{i+1}) (\hat{x}([i]) - f([i])) \leq \varepsilon^2.
\]

Let \( t = \bigl| \{ i \in [k] : \hat{x}_i < \lambda^\ast \} \bigr| \). We have:
\begin{align*}
    \sum_{i \in [k] : \hat{x}_i < \lambda^\ast} \hat{x}_i
    &= \lambda^\ast t + \sum_{i \in [k] : \hat{x}_i < \lambda^\ast} (\hat{x}_i - \lambda^\ast) \\
    &\geq \lambda^\ast t - \sum_{i \in [k]} \frac{i \cdot \varepsilon}{k} \\
    &\geq \lambda^\ast t - k\varepsilon,
\end{align*}
where the second inequality holds since (C2) fails for all \( i < k \) with \( \hat{x}_i < \lambda^\ast \).

Since \( \hat{x}_{k-1} - \hat{x}_k \geq \frac{\varepsilon}{k} \), we deduce:
\[
\hat{x}([k]) - f([k]) \leq k\varepsilon,
\]
and consequently,
\begin{align*}
    \frac{f([k])}{k}
    &\geq \frac{\hat{x}([k]) - k\varepsilon}{k} \\
    &= \frac{\sum_{i \in [k] : \hat{x}_i \geq \lambda^\ast} \hat{x}_i + \sum_{i \in [k] : \hat{x}_i < \lambda^\ast} \hat{x}_i - k\varepsilon}{k} \\
    &\geq \frac{\lambda^\ast (k - t) + \lambda^\ast t - 2k\varepsilon}{k} = \lambda^\ast - 2\varepsilon.
\end{align*}

\end{proof:e}

\section{\textsc{SuperGreedy++} and Wolfe's MNP Algorithm as Universal Solvers}

Let \(f:2^V \to \mathbb{R}\) be a normalized submodular or supermodular function. As indicated in Theorem~\ref{approximationreductions}, the main challenge in obtaining approximations for \textsc{USSS}, \textsc{DSS}, \textsc{UDSS}, and \textsc{SFM} is to compute an approximate minimum-norm-point \(\hat{x}\in B(f)\). We will discuss three different methods, which we will refer to collectively as \textit{Universal Solvers}.

\paragraph*{Frank-Wolfe Algorithm.}  
A classical method for solving \textsc{MNP} is the Frank-Wolfe algorithm, an iterative approach for minimizing a convex function \(h:\mathcal{D} \to \mathbb{R}\) over a compact convex set \(\mathcal{D}\). Each iteration computes the \textit{linear minimization oracle} (LMO) $d^{(k)} = \argmin_{s \in \mathcal{D}} \langle s, \nabla h(x^{(k-1)}) \rangle,$ and updates $x^{(k)} = (1 - \alpha_k) x^{(k-1)} + \alpha_k d^{(k)}$ for $\alpha_k = \frac{2}{k+2}$. For \textsc{MNP}, we set \(\mathcal{D} = B(f)\) and \(h(x) = \|x\|_2^2\), reducing the LMO to $d^{(k)} = \argmin_{d \in B(f)} \langle d, x^{(k-1)} \rangle.$ This oracle is efficiently realized via Edmonds' greedy algorithm for (super)submodular functions\cite{edmonds}. 

\paragraph*{Fujishige-Wolfe Minimum Norm Point Algorithm.}  
Another standard approach is the Fujishige-Wolfe algorithm~\cite{Fujishige2006TheMA}, which computes \(\hat{x} \in B(f)\) satisfying $\|\hat{x}\|_2^2 \le \langle \hat{x}, q \rangle + \varepsilon^2 \quad \text{for all } q \in B(f).$
The algorithm requires \(O(nQ^2/\varepsilon^2)\) iterations, where $Q = \max_{q \in B(f)} \|q\|_2, \quad n = |V|.$

Like Frank-Wolfe, the Fujishige-Wolfe MNP algorithm repeatedly calls the same linear minimization oracle over \(B(f)\) but additionally requires an oracle that minimizes \(\|x\|_2^2\) over the affine hull of a set \(S\) of selected extreme points of \(B(f)\). This affine minimization oracle can be implemented efficiently in \(O(|S|^3 + n|S|^2)\) time by computing the inverse of \(B^\top B\), where \(B\) is the \(n \times |S|\) matrix whose columns are the points in \(S\).

\paragraph*{\textsc{SuperGreedy++}.} Next, we demonstrate that \textsc{SuperGreedy++}, originally designed for solving the \textsc{DSS} problem, can be used to approximate \textsc{MNP}. Consequently, it is a universal solver for approximating the problems mentioned above. 

\begin{theorem}
Given a normalized submodular or supermodular function \(f:2^V\to\mathbb{R}\), \textsc{SuperGreedy++} returns a vector \(\hat{x}\) satisfying $\|\hat{x}\|_2^2\leq\langle q,\hat{x}\rangle+\varepsilon^2\quad\text{for all }q\in B(f),$ after
\[
T=\tilde{O}\!\Biggl(\frac{\max_{s,d\in B(f)}\|s-x\|_2^2+n\sum_{u\in V}f(u\mid V-u)^2}{\varepsilon^2}\Biggr)
\]
iterations, where \(\tilde{O}\) hides polylogarithmic factors. Consequently, all approximation guarantees from Theorem~\ref{approximationreductions} follow after the same number of iterations.
\label{themaintheoremgreedypp}
\end{theorem}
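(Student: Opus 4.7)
The plan is to reinterpret \textsc{SuperGreedy++} as an averaged linear-minimization-oracle (LMO) algorithm for minimizing $\tfrac{1}{2}\|x\|_2^2$ over $B(f)$, and then apply a standard Frank-Wolfe / regret-style convergence argument. The first structural step is to show that at iteration $t$, the permutation induced by \textsc{SuperGreedy++}'s cumulative load vector $L^{(t-1)} = \sum_{s < t} b^{(s)}$ produces, via Edmonds' greedy algorithm, exactly the extreme point
\[
b^{(t)} \;=\; \argmin_{b \in B(f)} \langle b, L^{(t-1)}\rangle.
\]
This identifies each iteration of the algorithm with one call to the LMO of $B(f)$ at the current (weighted) average $\bar{x}^{(t-1)}$, a viewpoint that has been implicit in prior analyses of \textsc{SuperGreedy++} but will be made explicit here.

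Next, I would take the candidate min-norm point to be an appropriate (possibly weighted) average of the $b^{(t)}$'s, which lies in $B(f)$ by convexity. To bound the duality gap $\|\hat{x}\|_2^2 - \langle q, \hat{x}\rangle$ for an arbitrary $q \in B(f)$, I would use a telescoping potential argument on $\Phi^{(t)} = \|L^{(t)}\|_2^2$. Expanding
\[
\Phi^{(t)} \;=\; \Phi^{(t-1)} + 2\langle L^{(t-1)}, b^{(t)}\rangle + \|b^{(t)}\|_2^2
\]
and invoking the LMO inequality $\langle L^{(t-1)}, b^{(t)}\rangle \le \langle L^{(t-1)}, q\rangle$ term-by-term, the inner products telescope into a quantity controlled by $T \langle \hat{x}, q\rangle$ after normalization, while $\sum_t \|b^{(t)} - \hat{x}\|_2^2$ becomes the error term governing the convergence rate.

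Finally, I would translate the raw regret bound into the parameters stated in the theorem. The diameter term $\max_{s,d \in B(f)} \|s - d\|_2^2$ arises from the tight Frank-Wolfe analysis that measures squared successive distances rather than absolute norms of iterates; the quantity $n \sum_{u \in V} f(u \mid V-u)^2$ emerges as an a priori bound on $\|b^{(t)}\|_2^2$, since every coordinate produced by Edmonds' greedy is a marginal $f(u \mid S)$, which by sub/supermodularity is bounded in absolute value by $f(u \mid V - u)$. Setting the duality gap to $\varepsilon^2$ and solving for $T$ yields the claimed $\tilde{O}(\cdot/\varepsilon^2)$ iteration count, and the approximation guarantees for \textsc{USSS}, \textsc{UDSS}, and \textsc{SFM} then follow immediately by feeding $\hat{x}$ into Theorem~\ref{approximationreductions}.

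The main obstacle I anticipate is the first step: carefully matching \textsc{SuperGreedy++}'s load-based peeling rule to Edmonds' greedy on the cumulative load $L^{(t-1)}$, since \textsc{SuperGreedy++} was originally defined and analyzed purely in terms of density peeling rather than as an LMO-based method. Sign conventions distinguishing the submodular polymatroid from the supermodular contrapolymatroid, tie-breaking in the peeling order, and verifying that the cumulative ``+marginal'' updates correspond precisely to the greedy assignment of extreme-point coordinates all require careful bookkeeping. Once the LMO correspondence is cleanly established, the regret / potential argument follows a familiar pattern, and the parameter matching in the final step is essentially mechanical.
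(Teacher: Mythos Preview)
Your overall framing---viewing \textsc{SuperGreedy++} as an LMO-based algorithm for $\min_{x\in B(f)}\|x\|_2^2$ and then running a Frank-Wolfe/regret argument---matches the paper's approach. However, there is a genuine gap in your first structural step: the claim that \textsc{PeelWeighted} produces the \emph{exact} LMO $b^{(t)}=\argmin_{b\in B(f)}\langle b,L^{(t-1)}\rangle$ is false. Edmonds' greedy sorts purely by the weight vector $w$ and then assigns marginals; \textsc{PeelWeighted} instead selects $v_j=\argmax_{v\in S_j}\{w(v)+f(v\mid S_j-v)\}$, so the peeling order depends on the \emph{current marginals} as well as the loads. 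These two orderings do not coincide in general, and this is not a bookkeeping or tie-breaking issue---it is exactly the gap between the peeling heuristic and the exact linear oracle.

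The paper handles this by invoking a lemma of Harb \etal\ showing that \textsc{PeelWeighted} is a $\frac{n\sum_{u}f(u\mid V-u)^2}{t}$-\emph{approximate} LMO at iteration $t$, and then plugging this decaying error into Jaggi's approximate Frank-Wolfe framework (which tolerates LMO error $\frac{\delta C_h}{t+2}$). That is precisely where the term $n\sum_{u}f(u\mid V-u)^2$ in the iteration bound comes from---not, as you suggest, from an a priori bound on $\|b^{(t)}\|_2^2$ (that would give $\sum_u f(u\mid V-u)^2$ with no factor of $n$, and would enter differently in a telescoping argument). Your potential argument can be salvaged, but you must add the per-round LMO error term $\langle L^{(t-1)},b^{(t)}\rangle\le\langle L^{(t-1)},q\rangle+O\bigl(\tfrac{1}{t}\cdot n\sum_u f(u\mid V-u)^2\bigr)$ and then sum the harmonic series, which is also where the hidden polylog in $\tilde O$ appears. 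The diameter term $\max_{s,d}\|s-d\|_2^2$ corresponds to the curvature constant $C_h$ in Jaggi's bound, as you correctly anticipate.
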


\RefProofInAppendix{Lemma:themaintheoremgreedypp:proof}.
\begin{proof:e}{Theorem \ref{themaintheoremgreedypp}}{Lemma:themaintheoremgreedypp:proof}

Our analysis follows the approach of Harb \etal ~\cite{farouk-esa}, based on the Frank-Wolfe framework of Jaggi~\cite{pmlr-v28-jaggi13}.

\paragraph{Jaggi's Frank-Wolfe Framework.}
The Frank-Wolfe algorithm is a classical method for minimizing a convex function over a convex set. While its convergence has been known since~\cite{FW-56}, Jaggi's analysis~\cite{pmlr-v28-jaggi13} offers a modern, simplified treatment that expresses convergence rates in terms of the so-called \emph{curvature constant} of the objective function.

\begin{definition}[Curvature constant]
Let \( \mathcal{D} \subseteq \mathbb{R}^d \) be a compact convex set, and \( h: \mathcal{D} \to \mathbb{R} \) a convex, differentiable function. The curvature constant \( C_h \) is defined as
\[
C_h = \sup_{\substack{x, s \in \mathcal{D},\, \gamma \in [0,1] \\ y = x + \gamma (s - x)}} \frac{2}{\gamma^2} \left( h(y) - h(x) - \langle y - x, \nabla h(x) \rangle \right).
\]
\end{definition}

One advantage of Jaggi's formulation is that it seamlessly extends to \emph{approximate} versions of Frank-Wolfe, where the linear minimization oracle (LMO) may only be computed approximately.

\begin{definition}[Approximate LMO]
Given \( h: \mathcal{D} \to \mathbb{R} \), an \(\varepsilon\)-approximate linear minimization oracle returns \( \hat{s} \in \mathcal{D} \) satisfying
\[
\langle \hat{s}, \nabla h(w) \rangle \leq \langle s^\ast, \nabla h(w) \rangle + \varepsilon,
\]
where \( s^\ast = \argmin_{s \in \mathcal{D}} \langle s, \nabla h(w) \rangle \) is the exact LMO.
\end{definition}

Jaggi's main result shows that if, at iteration \( k \), we compute an approximate LMO with error at most \( \frac{\delta C_h}{k+2} \), then the Frank-Wolfe algorithm converges with only a constant-factor slowdown.

\begin{lemma}[\cite{pmlr-v28-jaggi13}]
Define the \emph{duality gap} at \( x \in \mathcal{D} \) as
\[
g(x) = \max_{q \in \mathcal{D}} \langle x - q, \nabla h(x) \rangle.
\]
If we use a \( \frac{\delta C_h}{k+2} \)-approximate LMO at iteration \( k \), then after \( K \) iterations there exists an iterate \( x^{(k)} \), with \( 1 \leq k \leq K \), satisfying
\[
g(x^{(k)}) \leq \frac{7 C_h}{K+2} (1+\delta).
\]
\end{lemma}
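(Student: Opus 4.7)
The plan is to prove Jaggi's duality gap bound in two stages: first establish a per-iteration progress inequality together with a primal suboptimality bound, and then extract an iterate with small duality gap by a contradiction argument over a window of iterations. I would write $\phi_k := h(x^{(k)}) - h(x^\ast)$ for any minimizer $x^\ast \in \mathcal{D}$, denote the step size $\gamma_k = \frac{2}{k+2}$, and let $\varepsilon_k = \frac{\delta C_h}{k+2}$ be the tolerance of the approximate LMO at iteration $k$.

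First I would derive the per-iteration progress inequality. Applying the definition of $C_h$ with $y = x^{(k+1)} = x^{(k)} + \gamma_k(\hat{s}^{(k)} - x^{(k)})$ gives $h(x^{(k+1)}) \leq h(x^{(k)}) + \gamma_k \langle \hat{s}^{(k)} - x^{(k)}, \nabla h(x^{(k)}) \rangle + \frac{\gamma_k^2}{2} C_h$. The $\varepsilon_k$-approximate LMO condition then yields $\langle \hat{s}^{(k)} - x^{(k)}, \nabla h(x^{(k)}) \rangle \leq -g(x^{(k)}) + \varepsilon_k$ (using the exact LMO characterization $g(x^{(k)}) = \langle x^{(k)} - s^\ast, \nabla h(x^{(k)}) \rangle$ at the true minimizer $s^\ast$ of the linear subproblem), so substituting the chosen $\gamma_k, \varepsilon_k$ gives
$$\phi_{k+1} \leq \phi_k - \gamma_k \, g(x^{(k)}) + \frac{2 C_h(1+\delta)}{(k+2)^2}.$$
Combining this with the weak-duality inequality $g(x^{(k)}) \geq \phi_k$ (immediate from convexity of $h$ and taking $q = x^\ast$ in the definition of $g$), I would prove by induction on $k$ a primal bound of the form $\phi_k \leq \frac{c\, C_h(1+\delta)}{k+2}$ for a suitable constant $c$, handling the base case directly and reducing the inductive step to a routine algebraic identity involving $\gamma_k$.

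For the duality-gap conclusion I would argue by contradiction: suppose every iterate $x^{(k)}$ with $\lceil K/2 \rceil \leq k \leq K$ satisfied $g(x^{(k)}) > \tau$ for $\tau := \frac{7 C_h (1+\delta)}{K+2}$. Summing the progress inequality over this window yields
$$\phi_{K+1} \leq \phi_{\lceil K/2 \rceil} - \tau \sum_{k=\lceil K/2 \rceil}^{K} \gamma_k + \sum_{k=\lceil K/2 \rceil}^{K} \frac{2 C_h(1+\delta)}{(k+2)^2}.$$
Plugging in the primal bound on $\phi_{\lceil K/2 \rceil}$, using $\sum_{k=\lceil K/2 \rceil}^{K} \gamma_k \geq 2\ln 2 - o(1)$, and bounding the quadratic tail by $O(C_h(1+\delta)/K)$, I would drive the right-hand side strictly below $0$, contradicting $\phi_{K+1} \geq 0$. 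Hence at least one iterate in the window must satisfy $g(x^{(k)}) \leq \tau$.

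The main obstacle will be the constant bookkeeping in the third step: the three competing $O(C_h(1+\delta)/K)$ terms (the carried-over primal gap $\phi_{\lceil K/2 \rceil}$, the harmonic sum of the $\gamma_k$, and the quadratic tail from $\gamma_k^2$) must be balanced so that $7$ is the correct universal constant. A secondary subtlety is that the inductive primal bound in step two must be proved with a constant strictly smaller than the target constant in the duality-gap bound, so that the contradiction in step three has genuine slack; otherwise one would only recover a duality-gap statement of the same order as the primal bound, losing the qualitative distinction between $g(x^{(k)})$ and $\phi_k$ that makes the theorem useful.
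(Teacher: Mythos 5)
The paper does not actually prove this lemma---it is imported as a black box from \cite{pmlr-v28-jaggi13}---and your outline is a faithful reconstruction of Jaggi's original argument: the curvature-based progress inequality $\phi_{k+1}\le\phi_k-\gamma_k g(x^{(k)})+\tfrac{\gamma_k^2}{2}C_h(1+\delta)$ with the approximate-LMO error absorbed into the $(1+\delta)$ factor, the inductive primal bound $\phi_k\le 2C_h(1+\delta)/(k+2)$ via weak duality $g(x^{(k)})\ge\phi_k$, and the contradiction obtained by summing over a tail window of iterations. The approach is correct and complete in structure; the only remaining work is the constant bookkeeping you already flag (Jaggi's window choice gives $2\beta=27/4\le 7$, and your $\mu=1/2$ window with careful handling of the ceilings and the $+2$ shifts also lands below $7$), so nothing essential is missing.
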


\paragraph{Applying to \textsc{SuperGreedy++} and MNP.}
Our goal is to analyze \textsc{SuperGreedy++} for the Minimum Norm Point (MNP) problem on the base polytope of a normalized submodular or supermodular function \( f: 2^V \to \mathbb{R} \). Specifically, we consider
\[
h(x) = \|x\|_2^2, \quad \mathcal{D} = B(f).
\]
For this quadratic objective, the curvature constant admits a simple form:

\begin{lemma}
\[
C_h = 2 \max_{s, d \in B(f)} \|s - d\|_2^2.
\]
\end{lemma}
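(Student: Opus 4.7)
The plan is to carry out a direct calculation from the definition of the curvature constant, exploiting the fact that $h(x) = \|x\|_2^2$ is a simple quadratic. First I would note that $\nabla h(x) = 2x$, so that the inner-product term in the definition becomes $\langle y - x, \nabla h(x) \rangle = 2 \langle y - x, x \rangle$. This is the only gradient information we need.

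Next I would expand $h(y)$ for $y = x + \gamma(s - x)$:
\[
h(y) = \|x + \gamma(s - x)\|_2^2 = \|x\|_2^2 + 2\gamma \langle x, s - x \rangle + \gamma^2 \|s - x\|_2^2.
\]
Subtracting $h(x) = \|x\|_2^2$ and $\langle y - x, \nabla h(x) \rangle = 2\gamma \langle x, s - x \rangle$ leaves exactly the quadratic remainder $\gamma^2 \|s - x\|_2^2$. Multiplying by $2/\gamma^2$ gives $2 \|s - x\|_2^2$, independently of $\gamma$. Taking the supremum over $x, s \in \mathcal{D} = B(f)$ then yields
\[
C_h = 2 \sup_{x, s \in B(f)} \|s - x\|_2^2 = 2 \max_{s, d \in B(f)} \|s - d\|_2^2,
\]
where the supremum is attained because $B(f)$ is compact (it is a bounded polytope).

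Since the second-order Taylor remainder of a pure quadratic collapses to a single term, there is really no obstacle here; the only thing to be mildly careful about is the $\gamma$-dependence cancelling cleanly so that the supremum reduces to a diameter-type quantity over $B(f)$, rather than a more awkward expression involving the chord $y = x + \gamma(s - x)$.
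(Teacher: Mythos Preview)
Your proposal is correct and follows exactly the same approach as the paper: substitute $h(x)=\|x\|_2^2$ into the definition of $C_h$, observe that the Taylor remainder collapses to $\gamma^2\|s-x\|_2^2$, and take the supremum over $B(f)$. In fact your write-up is more detailed than the paper's, which simply asserts that the substitution yields the claimed expression.
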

\begin{proof}
This follows by substituting \( h(x) = \|x\|_2^2 \) into the definition of \( C_h \). The supremum is attained when \( x \), \( s \), and \( d \) are chosen to maximize \( \|s - d\|_2^2 \), yielding the claimed expression.
\end{proof}

Next, note that we can, without loss of generality, assume \( f \) is a supermodular function. This is because submodular and supermodular MNP problems are duals of each other under negation:

\begin{lemma}
Given a normalized submodular function \( g \), define \( f = -g \). If we can solve the MNP problem for \( f \), i.e., find \( \hat{x} \in B(f) \) such that
\[
\|\hat{x}\|_2^2 \leq \langle q, \hat{x} \rangle + \varepsilon^2 \quad \text{for all } q \in B(f),
\]
then we can obtain a corresponding solution \( x' = -\hat{x} \in B(g) \) with the same guarantee:
\[
\|x'\|_2^2 \leq \langle q, x' \rangle + \varepsilon^2 \quad \text{for all } q \in B(g).
\]

\end{lemma}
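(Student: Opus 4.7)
The plan is to prove this by identifying $B(f) = -B(g)$ and then verifying the duality gap condition transfers directly under the map $x \mapsto -x$, which preserves the squared norm.

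First, I would show $B(f) = -B(g)$. By definition, $B(g) = \{y \in \mathbb{R}^{|V|} : y(S) \leq g(S) \text{ for all } S \subseteq V,\ y(V) = g(V)\}$, while $B(f)$ is the base contrapolymatroid $\{x : x(S) \geq f(S) \text{ for all } S,\ x(V) = f(V)\}$. Substituting $f = -g$ and $x = -y$ into the defining inequalities gives $-y(S) \geq -g(S)$, equivalently $y(S) \leq g(S)$, along with $y(V) = g(V)$. Hence $x \in B(f)$ if and only if $-x \in B(g)$, so $B(f) = -B(g)$. In particular, $\hat{x} \in B(f)$ implies $x' = -\hat{x} \in B(g)$.

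Next, I would transfer the duality gap bound. Note immediately that $\|x'\|_2^2 = \|-\hat{x}\|_2^2 = \|\hat{x}\|_2^2$. Given any $q \in B(g)$, the set identity above yields $-q \in B(f)$, so the hypothesis applied at the point $-q \in B(f)$ gives
\[
\|\hat{x}\|_2^2 \;\leq\; \langle -q, \hat{x}\rangle + \varepsilon^2 \;=\; \langle q, -\hat{x}\rangle + \varepsilon^2 \;=\; \langle q, x'\rangle + \varepsilon^2.
\]
Combining with $\|x'\|_2^2 = \|\hat{x}\|_2^2$ gives $\|x'\|_2^2 \leq \langle q, x'\rangle + \varepsilon^2$ for all $q \in B(g)$, which is exactly the required guarantee.

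There is no real obstacle here: the only conceptual content is recognizing that the base polymatroid of a submodular function and the base contrapolymatroid of its negation are reflections of each other through the origin, after which the squared-norm invariance under negation and the bilinearity of the inner product make the duality gap condition transfer automatically. I would therefore keep the proof to these two short paragraphs.
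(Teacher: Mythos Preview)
Your proposal is correct and follows essentially the same approach as the paper: both arguments establish the bijection $B(f)=-B(g)$ by checking the defining inequalities under $f=-g$, then transfer the duality-gap inequality using invariance of the squared norm under negation and bilinearity of the inner product. The only difference is cosmetic phrasing.
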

\begin{proof}
    Given \(g\), let \(f=-g\). Since \(f\) is submodular, compute \(\hat{x}\in B(f)\) such that 
\[
\|\hat{x}\|_2^2\leq\langle q,\hat{x}\rangle+\varepsilon^2 \quad\text{for any }q\in B(f).
\]
Define \(x'=-\hat{x}\). Observe that \(q\in B(f)\) if and only if \(-q\in B(g)\). Specifically, \(q(V)=f(V)\) implies \(-q(V)=-f(V)=g(V)\), and \(q(S)\geq f(S)\) if and only if \(-q(S)\leq -f(S)=g(S)\). Thus, \(x'\in B(g)\) and, for any \(q\in B(g)\), since \(-q, -x'\in B(f)\),
\[
\|x'\|_2^2=\|{-x'}\|_2^2\leq\langle -q,-x'\rangle+\varepsilon^2=\langle q,x'\rangle+\varepsilon^2.
\]
\end{proof}

\paragraph{Approximate LMO in \textsc{SuperGreedy++}.} Algorithm~\ref{supergreedypp} reframes \textsc{SuperGreedy++} as a ``noisy'' Frank-Wolfe algorithm. At each iteration, \textsc{SuperGreedy++} computes a descent direction using the \textsc{PeelWeighted} subroutine. While this is not an exact LMO, Harb \etal ~\cite{farouk-esa} showed that it approximates the LMO to within an error term that decays with \( t \):

\begin{lemma}[\cite{farouk-esa}]
Let \( s_t \) denote the exact LMO direction at iteration \( t \), and \( \hat{d}_t \) be the direction returned by \textsc{PeelWeighted}. Then,
\[
\langle s_t, b^{(t-1)} \rangle \leq \langle \hat{d}_t, b^{(t-1)} \rangle + \frac{n \sum_{u \in V} f(u \mid V \setminus \{u\})^2}{t}.
\]

\end{lemma}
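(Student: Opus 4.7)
The statement quantifies how closely the ordering produced by \textsc{PeelWeighted} mimics the Edmonds' greedy ordering used by the exact LMO. My plan is to represent both $\hat{d}_t$ and $s_t$ as extreme points of $B(f)$ arising from Edmonds' greedy under two specific orderings: $\pi$ sorts $V$ by $b^{(t-1)}$ (ascending or descending depending on the sub/supermodular convention) and produces $s_t$, while $\sigma$ is the removal order computed by \textsc{PeelWeighted} and produces $\hat{d}_t$. In this representation, both inner products become sums of the form $\sum_{i=1}^n f(v_i \mid \{v_1,\ldots,v_{i-1}\})\, b^{(t-1)}_{v_i}$, so the gap $\langle \hat{d}_t - s_t, b^{(t-1)}\rangle$ reduces to the cost of reordering $\sigma$ into $\pi$.

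Next, I would carry out an exchange argument: transform $\sigma$ into $\pi$ by a sequence of adjacent transpositions, and track the change in the inner product contributed by each swap. For an adjacent inversion $(u,v)$ in $\sigma$---where $u$ is peeled immediately before $v$ but $b^{(t-1)}_u$ and $b^{(t-1)}_v$ are in the ``wrong'' relative order for the LMO---the change in $\langle d, b^{(t-1)}\rangle$ from the swap is a product of a load-difference factor $(b^{(t-1)}_u - b^{(t-1)}_v)$ and a marginal factor. The key structural lemma I would establish uses \textsc{PeelWeighted}'s local greedy rule: at the step when $u$ was selected, $u$ minimized $\ell^{(t-1)}_w + f(w \mid S)$ over the remaining set, so any inversion against the load-based order $\pi$ must satisfy $|\ell^{(t-1)}_u - \ell^{(t-1)}_v| \leq |f(u \mid S) - f(v \mid S)|$, which is in turn bounded by $f(u \mid V\setminus u) + f(v\mid V\setminus v)$. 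Dividing by $t-1$ (since $b^{(t-1)} = \ell^{(t-1)}/(t-1)$) produces the per-swap bound of order $f(u\mid V\setminus u)^2/t$ after pairing load-differences with marginal-differences.

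Summing across all adjacent swaps needed to convert $\sigma$ into $\pi$, each vertex $u$ participates in at most $n-1$ such swaps, and its per-swap contribution is bounded by $f(u\mid V\setminus u)^2 / (t-1)$. The total is therefore at most $n \sum_{u\in V} f(u\mid V\setminus u)^2 / t$ (absorbing the $1/(t-1)$ vs. $1/t$ difference into the constant), which matches the claimed bound. Both the submodular and supermodular cases are handled symmetrically by the duality established earlier in the excerpt, so I may restrict attention to one case without loss of generality.

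The main obstacle is the inversion lemma: rigorously extracting the quantitative relationship between \textsc{PeelWeighted}'s ``load $+$ marginal'' score and the pure load-based comparison used by the exact LMO, and ensuring that after the exchange the marginal terms either telescope or are uniformly dominated by $f(u \mid V\setminus u)$. A related technical subtlety is that marginals along the peeling depend on the current residual set, so monotonicity of marginals (following from sub/supermodularity) must be invoked to bound them by the extremal marginal $f(u\mid V\setminus u)$ cleanly.
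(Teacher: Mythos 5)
This lemma is stated in the paper as a citation to \cite{farouk-esa} and is not proved here, so the only meaningful comparison is with that source: your plan—representing $s_t$ and $\hat d_t$ as Edmonds-greedy extreme points for the $b^{(t-1)}$-sorted order and the peel order, then bounding the gap by an adjacent-transposition exchange argument in which every inversion forces $|b^{(t-1)}_u-b^{(t-1)}_v|\le (f(u\mid S{-}u)-f(v\mid S{-}v))/(t-1)$ via the \textsc{PeelWeighted} selection rule, with marginals along the peel dominated by $f(u\mid V{-}u)$ through supermodular monotonicity—is essentially the argument given there. The only points needing care are ones you already flag, plus one you don't: the selection-rule inequality holds for \emph{every} pair with $u$ peeled before $v$ (since $v$ is still present when $u$ is chosen), not only adjacent pairs, which is what lets the bubble-sort accounting charge each of the at most $n-1$ inversions per vertex by $O\bigl(f(u\mid V{-}u)^2+f(v\mid V{-}v)^2\bigr)/(t-1)$.
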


This shows that at iteration \( t \), \textsc{SuperGreedy++} uses a
\[
\frac{n \sum_{u \in V} f(u \mid V \setminus \{u\})^2}{t}
\]
approximate LMO error. Normalizing by the curvature constant \( C_h \), this corresponds to
\[
\delta = \frac{n \sum_{u \in V} f(u \mid V \setminus \{u\})^2}{C_h}.
\]

\paragraph{Convergence of \textsc{SuperGreedy++}.}
Substituting this approximation quality into Jaggi's convergence lemma, we find that after \( T \) iterations, there exists an iterate \( x^{(k)} \) with
\[
g(x^{(k)}) \leq \frac{7 C_h}{T+2} \left( 1 + \delta \right) = \frac{7 C_h}{T+2} \left( 1 + \frac{n \sum_{u \in V} f(u \mid V \setminus \{u\})^2}{C_h} \right).
\]
Simplifying, this gives
\[
g(x^{(k)}) \leq \frac{7}{T+2} \left( C_h + n \sum_{u \in V} f(u \mid V \setminus \{u\})^2 \right).
\]
To ensure \( g(x^{(k)}) \leq \varepsilon^2 \), it suffices to choose
\[
T = O \left( \frac{C_h + n \sum_{u \in V} f(u \mid V \setminus \{u\})^2}{\varepsilon^2} \right) = O \left( \frac{\max_{s,d \in B(f)} \|s - d\|_2^2 + n \sum_{u \in V} f(u \mid V \setminus \{u\})^2}{\varepsilon^2} \right),
\]
The $\tilde{O}$ polylog term comes from the fact that we use a learning rate (in \textsc{SuperGreedy++}) of $1/(t+1)$ instead of $2/(t+2)$.

\begin{algorithm}
    \caption{PeelingWeighted++}
    \label{peeling-weighted}
    \begin{algorithmic}[1]
        \STATE \textbf{function} \textsc{PeelWeighted}\(\bigl(f:2^V\to\mathbb{R},\,w\in\mathbb{R}\bigr)\)
        \STATE Initialize: \(\hat{d}(u)=0\) for all \(u\in V\)
        \STATE \(S_1\leftarrow V\)
        \FOR{\(j=1\) to \(n\)}
            \STATE \(v_j\leftarrow\argmax_{v\in S_j}\{w(v)+f(v\mid S_j-v)\}\)
            \STATE \(\hat{d}(v_j)\leftarrow f(v_j\mid S_j-v_j)\)
            \STATE \(S_{j+1}\leftarrow S_j\setminus\{v_j\}\)
        \ENDFOR
        \STATE \textbf{return} \(\hat{d}\)
    \end{algorithmic}
\end{algorithm}

\begin{algorithm}
    \caption{SuperGreedy++ For Minimum Norm Point}
    \label{supergreedypp}
    \begin{algorithmic}[1]
        \STATE \textbf{function} \textsc{SuperGreedy++}\(\bigl(f:2^V\to\mathbb{R},\,T\bigr)\)
        \STATE Initialize: \(x^{(0)}(u)=0\) for all \(u\in V\)
        \FOR{\(t=1\) to \(T\)}
            \STATE \(\hat{d}_t\leftarrow \textsc{PeelWeighted}(f,(t-1)x^{(t-1)})\)
            \STATE \(x^{(t)}\leftarrow\Bigl(1-\frac{1}{t+1}\Bigr)x^{(t-1)}+\frac{1}{t+1}\hat{d}_t\)
        \ENDFOR
        \STATE \textbf{return} \(x^{(k)}\), for \(0\leq k\leq T\), that minimizes \(\|x^{(k)}\|_2^2-\max_{q\in B(f)}\langle q,x^{(k)}\rangle\)
    \end{algorithmic}
\end{algorithm}

\end{proof:e}

\section{Experiments}

We evaluate our algorithms and baselines on a diverse set of datasets and problem settings, running over 400 experiments across $\textsc{SFM}$, $\textsc{USSS}$, and $\textsc{UDSS}$. Our experiments aim to answer three key questions:
\begin{itemize}[leftmargin=*, itemsep=0pt, parsep=0pt, topsep=0pt, partopsep=0pt]
\item Which algorithms perform best for which problems, especially when applied beyond their original setting? Are there cases where general-purpose methods (e.g., \textsc{MNP} for \textsc{USSS}) outperform problem-specific heuristics?
\item How do convex optimization methods (e.g., Frank-Wolfe, Fujishige-Wolfe MNP) compare to combinatorial (e.g. \textsc{SuperGreedy++}), LP-based, and flow-based approaches in runtime and solution quality?
\item Can \textsc{SuperGreedy++}, Frank-Wolfe, and \textsc{MNP} serve as effective general-purpose solvers across multiple problem classes?
\end{itemize}

Our empirical study is organized \emph{per problem class}, covering Unrestricted Sparsest Submodular Set (USSS), Submodular Function Minimization (SFM), and Unrestricted Densest Supermodular Set (USSS). We specify problems, datasets, algorithms, and evaluation metrics for each. Experiments were run in parallel on a Slurm-managed cluster (AMD EPYC 7763, 128 cores, 256GB RAM). All methods are implemented in \texttt{C++20}\footnote{Our code and datasets are available at https://github.com/FaroukY/CorporateEquivalence}, primarily by the authors, except for specialized baselines (e.g., \textsc{Greedy++}). \textbf{Due to space constraints, we only discuss two experiments here; most of our experimental results are provided in the appendix.}
\paragraph{(1) Heavy Nodes in a Small Neighborhood (HNSN).} Recall the \textsc{HNSN} problem from the introduction. Since the function \( f(S) = |N(S)| \) is submodular, HNSN is a special case of the weighted \textsc{USSS} problem. Ling \etal\ further reformulate the problem over the left vertex set \( L \), defining the function \( \overline{N} : 2^L \to \mathbb{R} \) by $\overline{N}(S) = \{ v \in R \mid \delta(v) \subseteq S \}$ where \( \delta(v) \) denotes the set of neighbors of \( v \). Under this formulation, the problem becomes that of maximizing \( w(\overline{N}(S))/|S| \) over non-empty subsets \( S \subseteq L \). We adopt this viewpoint in our implementation.
\paragraph{Algorithms.} We compare against the six baselines of Ling \etal \cite{litkde}, and introduce a novel flow-based algorithm (detailed in  Appendix \ref{flowhnsnsection}). In total, we evaluate \textbf{ten algorithms}: \textsc{IP} (\textsc{SuperGreedy++}), \textsc{FW} (Frank-Wolfe), \textsc{MNP} (Fujishige-Wolfe Minimum Norm Point Algorithm), \textsc{FLOW} (our new flow-based method), \textsc{CD} (ContractDecompose, \cite{litkde}), \textsc{LP} (Linear Programming, solved via Gurobi with academic license), \textsc{GAR} (Greedy Approximation, \cite{litkde}), \textsc{GR} (Greedy, \cite{litkde}), \textsc{FGR} (Fast Greedy, \cite{litkde}), and \textsc{GRR} (GreedRatio, \cite{bai2016algorithms}). The first four are our implementations; the remaining six follow Ling \etal's code \cite{litkde}.
\paragraph{Datasets.}  Table~\ref{datasetshnsn} summarizes the datasets used for the HNSN problem, following the benchmarks introduced by Ling \etal~\cite{litkde}. Each dataset is represented as a weighted bipartite graph with left vertices \(L\), right vertices \(R\), weights on the right vertices $w:R \to \mathbb{R}_{\geq 0}$, and edges \(E\). The graphs are derived from diverse real-world domains, including recommendation systems (e.g., YooChoose, Kosarak), citation networks (ACM), social networks (NotreDame, IMDB, Digg), and financial transactions (e.g., E-commerce, Liquor). The datasets exhibit a wide range of sizes, with up to \(\sim 10^6\) vertices and \(\sim 2 \times 10^7\) edges.

\begin{figure}[t]
    \centering

    % Subfigure (a) with caption above
    \begin{subfigure}[t]{0.45\textwidth}
        \caption{\textsc{HNSN} on \texttt{YooChoose} (GRR Suboptimal)}
        \centering
        \includegraphics[width=\linewidth, trim=0 0 0 300, clip]{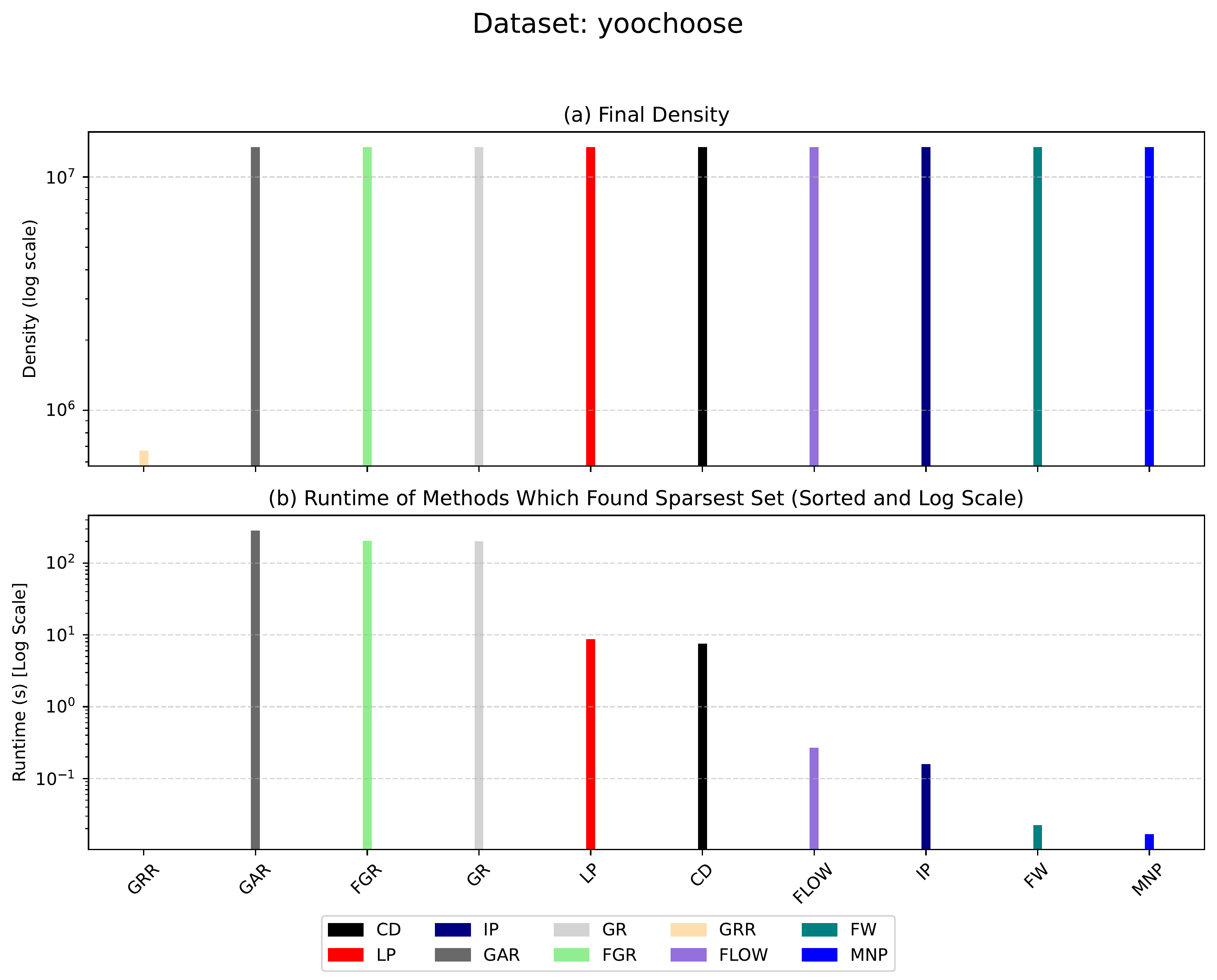}
        \label{fig:subfig1}
    \end{subfigure}
    \hfill
    % Subfigure (b) with caption above
    \begin{subfigure}[t]{0.45\textwidth}
        \caption{\textsc{HNSN} on \texttt{Ecommerce}}
        \centering
        \includegraphics[width=\linewidth, trim=0 0 0 300, clip]{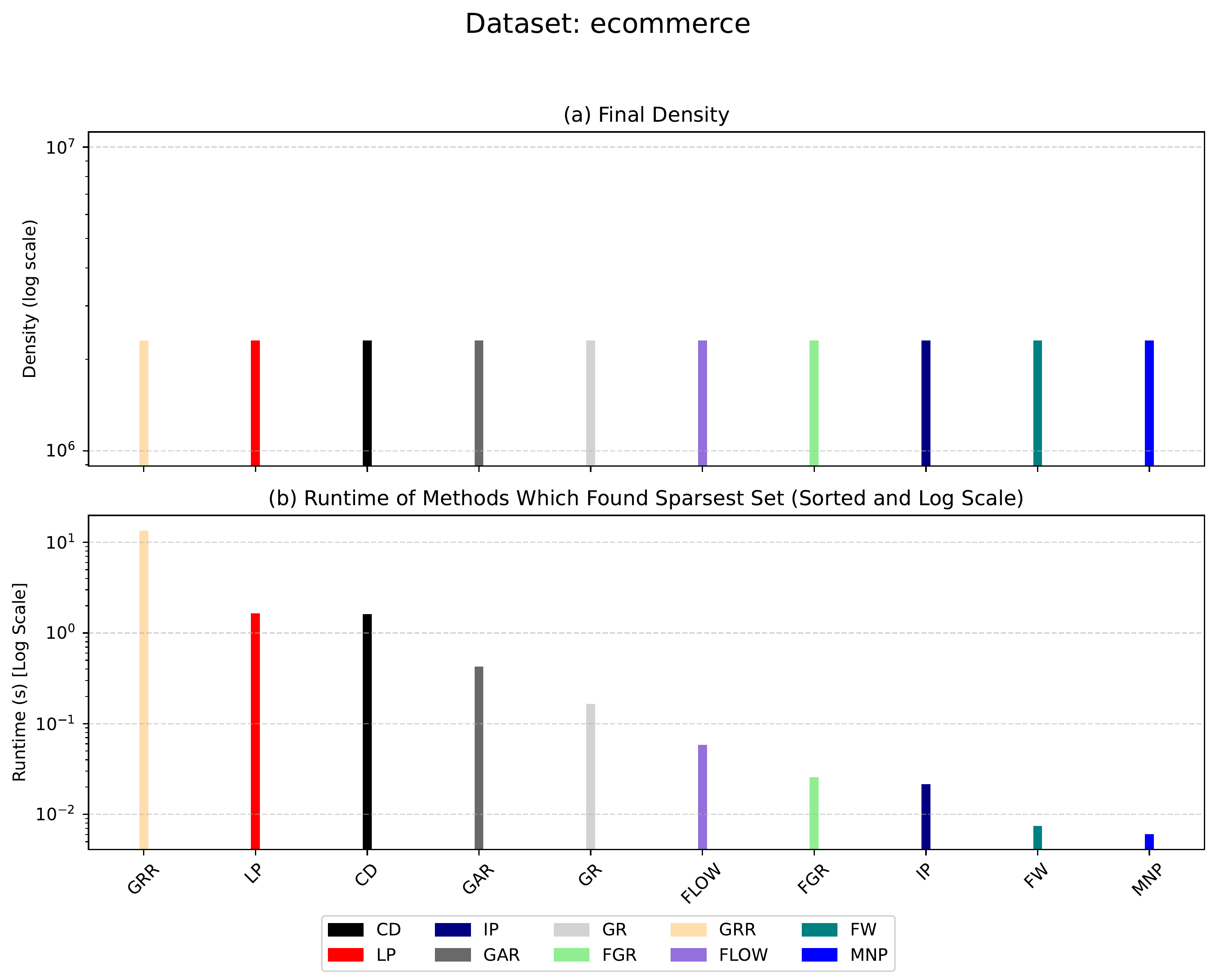}
        \label{fig:subfig3}
    \end{subfigure}
    
    \vspace{-0.3em}

    % Subfigure (c) with caption below
    \begin{subfigure}[t]{0.45\textwidth}
        \centering
        \includegraphics[width=\linewidth, trim=0 0 0 20, clip]{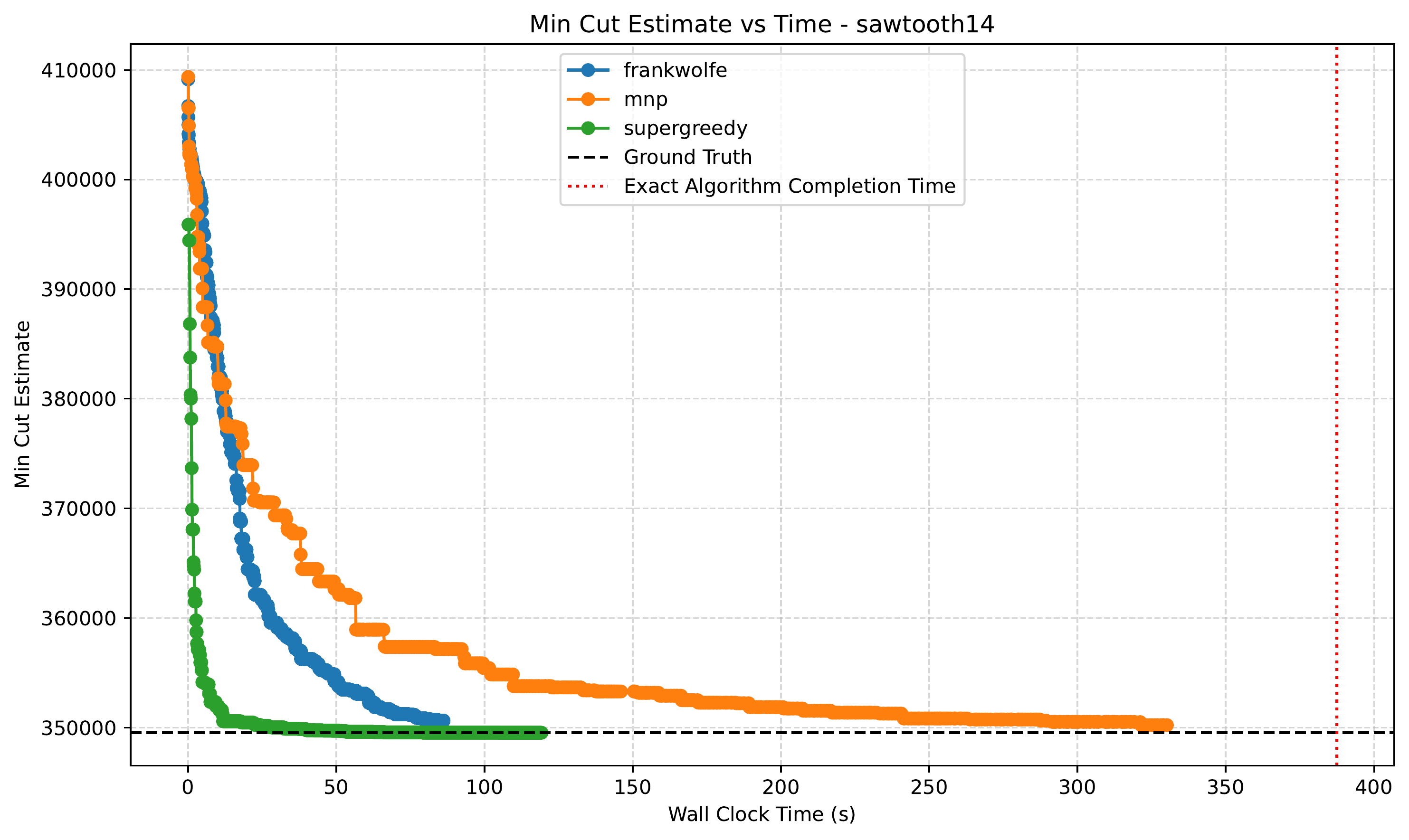}
        \caption{Min $s$-$t$ Cut on \texttt{BVZ-sawtooth14}}
        \label{fig:subfig2}
    \end{subfigure}
    \hfill
    % Subfigure (d) with caption below
    \begin{subfigure}[t]{0.45\textwidth}
        \centering
        \includegraphics[width=\linewidth, trim=0 0 0 20, clip]{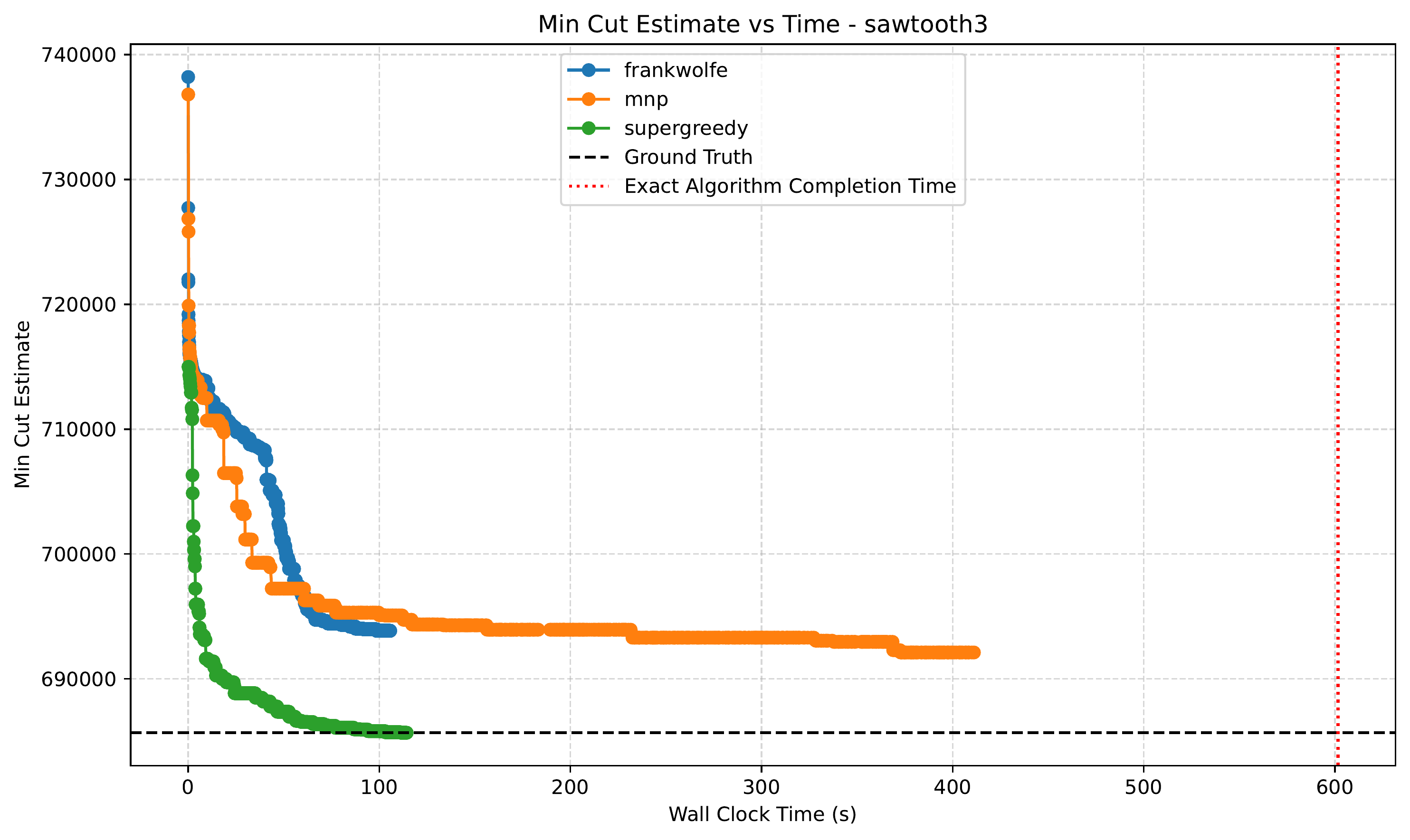}
        \caption{Min $s$-$t$ Cut on \texttt{BVZ-sawtooth3}}
        \label{fig:subfig4}
    \end{subfigure}

    \caption{Performance of algorithms across selected HNSN and Minimum $s$-$t$ Cut instances.}
    \label{fig:combined}
\end{figure}

\paragraph{Discussion of \textsc{HNSN} Results.} All algorithms were given a 30-minute time limit per dataset. The full results are provided in Appendix \ref{hnsnresultsexperiment}; we plot two example runs in Figures~\ref{fig:subfig1} and \ref{fig:subfig3}. Across all datasets, \textsc{FW} and \textsc{MNP} consistently emerged as the fastest to converge. In nearly every case, \textsc{MNP} outperformed \textsc{FW}, achieving the best solution quality in the shortest time, with only a single dataset where \textsc{FW} was slightly faster. Relative to the six baselines from Ling \etal \cite{litkde}, these methods achieved speedups ranging from $5\times$ to $595\times$.

Our new flow-based algorithm and \textsc{SuperGreedy++} (\textsc{IP}) typically followed, ranking third and fourth in performance across datasets. Figure~\ref{fig:subfig1} shows a particularly striking example where, on the YooChoose dataset, \textsc{MNP} achieved a $595\times$ speedup over the best previously published baseline. In this instance, all algorithms converged to the correct final density, except for \textsc{GRR}, which terminated at a suboptimal solution and was therefore excluded from the runtime comparison.

These results underscore a recurring insight consistently observed across our experiments: universal solvers such as \textsc{MNP} and \textsc{FW}, originally developed for broader optimization tasks, can significantly outperform specialized heuristics when applied to specific problems like \textsc{HNSN}, provided they are used within the proper conceptual framework that we have outlined.

\paragraph{(2) Minimum \boldmath{$s$-$t$} Cut.} We next evaluate our algorithms on the classical \emph{Minimum $s$-$t$ Cut} problem, a fundamental task in combinatorial optimization with widespread applications. Specifically, the Minimum $s$-$t$ Cut problem can be framed as minimizing the normalized submodular function $g(S) = |\delta(S \cup \{s\})|-|\delta(\{s\})|$ over subsets $S \subseteq V \setminus \{s,t\}$, ensuring that any feasible solution corresponds to a valid $s$-$t$ cut. Within this formulation, the Minimum $s$-$t$ Cut problem becomes an instance of \emph{Submodular Function Minimization (SFM)}, and thus naturally fits within the unified framework established in this paper. This connection allows us to apply universal solvers, such as \textsc{SuperGreedy++}, Frank-Wolfe, and Wolfe’s Minimum Norm Point algorithm, to this classical combinatorial problem.

\paragraph{Algorithms.} We evaluate three iterative algorithms—Frank-Wolfe (\textsc{frankwolfe}), Fujishige-Wolfe Minimum Norm Point (\textsc{mnp}), and \textsc{SuperGreedy++} (\textsc{supergreedy})—on the Minimum $s$-$t$ Cut problem. We also compute the exact minimum cut value for each instance using standard combinatorial flow-based methods (Edmonds Karp algorithm), serving as ground truth. This allows us to assess both the solution quality and runtime efficiency of the optimization-based approaches relative to the exact combinatorial solution.

\paragraph{Datasets.} Our evaluation spans two groups of datasets. First, we consider four classical benchmark instances from the first DIMACS Implementation Challenge on minimum cut \cite{cucaveSolvingMaximum}, known for their small size but challenging cut structures. Second, to assess scalability and robustness, we include a large-scale dataset family, specifically the \textsc{BVZ-sawtooth} instances \cite{Jensen2022}, which consist of 20 distinct minimum $s$-$t$ cut problems. Each instance contains approximately 500,000 vertices and 800,000 edges, testing algorithmic scalability. 

\paragraph{Discussion of Minimum $s$-$t$ Cut Results.} Full results are provided in Appendix \ref{mincutexperiments}; we plot two select examples in Figures~\ref{fig:subfig2} and \ref{fig:subfig4}. Across all datasets, \textsc{SuperGreedy++} was consistently the fastest to approximate the minimum cut, often by large margins over exact flow-based methods, \textsc{MNP}, and \textsc{FW}. In 16 of 24 instances, it found the exact min-cut rapidly. In the remaining cases, its solution was within $1.000023\times$ the optimum after 500 iterations at most, always converging before the exact flow solver. A notable example is the \textsc{BVZ-sawtooth14} instance, where \textsc{SuperGreedy++} converges orders of magnitude faster (Figure ~\ref{fig:subfig2}). This strong performance parallels its success in dense subgraph problems, where it efficiently finds high-quality solutions but can slow down near the optimum. Combining \textsc{SuperGreedy++} with flow-based refinement remains an interesting direction for future work. \textit{Interestingly, while \textsc{MNP} and \textsc{FW} excel on \textsc{HNSN}, they both underperform here, underscoring how problem structure affects solver efficiency}.

\noindent\textbf{Additional Experiments.} Beyond the two highlighted experiments, we conduct a comprehensive evaluation across additional problem instances corresponding to each of the general problem classes studied. For \textsc{SFM}, we include the classical Contrapolymatroid Membership problem (Appendix \ref{contraexperiments}), which tests whether a given vector belongs to the base polytope. For \textsc{DSS}, we evaluate both the Densest Subgraph problem (Appendix \ref{densestsubgraphexperiments}) and the generalized $p$-mean Densest Subgraph problem \cite{veldt-dsg-p} (Appendix \ref{generalized-p-means}). For \textsc{UDSS}, we consider the Anchored Densest Subgraph problem (Appendix \ref{anchoredexperiments}). We also report experiments on the \textsc{MNP} problem (Appendix \ref{minnormpoint}). We refer readers to the appendix for the complete set of experiments across all problem classes.

\paragraph{Flow-Based Methods: Strengths and Limitations.}
For the classical Densest Subgraph problem, our experiments reaffirm that flow-based methods remain among the fastest approaches when combined with the density-improvement framework introduced by Veldt \etal \cite{veldt} and Hochbaum \cite{Hochbaum2024flow}. While Hochbaum uses the PseudoFlow algorithm for this task, our experiments demonstrate that comparable performance can be achieved using more standard push-relabel max-flow solvers within the same density-improvement framework. Essentially, the primary source of speedup comes from the iterative density-improvement strategy rather than the specific choice of flow solver. This holds because for a fixed density threshold $\lambda$, finding a subset $S$ minimizing $\lambda |S| - f(S) = \lambda |S| - |E(S)|$ naturally reduces to a standard min-cut instance. A similar phenomenon occurs in the HNSN problem: as we outline in the appendix, there exists a flow reduction that efficiently solves subproblems of the form $\lambda |S| - f(S)$ for HNSN, making flow-based methods highly effective here as well.

However, flow-based methods are not universally applicable. For more general problems, such as the generalized $p$-mean Densest Subgraph problem, no known linear flow network formulations can minimize objectives like $\lambda |S| - f(S)$ for arbitrary supermodular functions. In such cases, flow-based methods cannot be used, and one must resort to algorithms like \textsc{SuperGreedy++}, Frank-Wolfe, or \textsc{MNP}, which only require access to a value oracle. These methods thus offer broader applicability across diverse problem settings where flow reductions are unavailable.

\noindent\textbf{Conclusion and Limitations.}  The overarching conclusion of our study is clear: \textsc{SuperGreedy++}, Frank-Wolfe, and the Fujishige-Wolfe MNP algorithms consistently achieve state-of-the-art performance across a wide range of submodular and supermodular ratio problems. Given the broad applicability of DSS, USSS, SFM, and related formulations, we advocate for these three methods to be included as essential baselines in future empirical evaluations of such problems.

A limitation of our current work is that while one of these algorithms consistently outperforms problem-specific heuristics, the identity of the best-performing method varies with the problem instance. Developing a deeper understanding of when and why each algorithm excels remains an open question. Moreover, while our results provide general additive approximation guarantees via reductions to the Minimum Norm Point problem, it remains an interesting open question to develop a direct, problem-specific analysis of algorithms like \textsc{SuperGreedy++} in settings such as minimum $s$-$t$ cut. Finally, while we discussed related work on coordinate descent methods, such as recennt work by Nguyen \etal \cite{alina}, we did not dive deep into the broader family of decomposable submodular function minimization (DSFM) techniques in detail. A systematic exploration of general DSFM methods and their applicability remains an important direction for future research.
\renewcommand{\arraystretch}{0.9} % Default is 1.0, lower values tighten it
\begin{table}[t]
\centering
\caption{Summary of HNSN Datasets. All graphs are weighted bipartite graphs.}
\small
\begin{tabular}{lrrrr}
\toprule
\textbf{Dataset} & $|L|$ & $|R|$ & $|E|$ & $k$ (Connected components) \\
\midrule
Foodmart (FM)          & 1,559     & 4,141      & 18,319       & 1 \\
E-commerce (EC)        & 3,468     & 14,975     & 174,354      & 12 \\
Liquor (LI)            & 4,026     & 52,131     & 410,609      & 165 \\
Fruithut (FR)          & 1,265     & 181,970    & 652,773      & 4 \\
YooChoose (YC)         & 107,276   & 234,300    & 507,266      & 22,033 \\
Kosarak (KS)           & 41,270    & 990,002    & 8,019,015    & 271 \\
Connectious (CN)       & 458       & 394,707    & 1,127,525    & 117 \\
Digg (DI)              & 12,471    & 872,622    & 22,624,727   & 13 \\
NotreDame (ND)         & 127,823   & 383,640    & 1,470,404    & 3,142 \\
IMDB (IM)              & 303,617   & 896,302    & 3,782,463    & 7,885 \\
NBA Shot (NBA)         & 129       & 1,603      & 13,726       & 1 \\
ACM Citation (ACM)     & 751,407   & 739,969    & 2,265,837    & 1 \\
\bottomrule
\end{tabular}
\label{datasetshnsn}
\end{table}

\clearpage 

\section{Acknowledgment}
This work used Delta CPU at NCSA through allocation CIS250079 from the ACCESS \cite{access} program, which is supported by U.S. National Science Foundation grants \#2138259, \#2138286, \#2138307, \#2137603, and \#2138296.
\medskip 

Chandra Chekuri and Elfarouk Harb are supported in part by NSF grant CCF-2402667. 
\medskip

Yousef Yassin gratefully acknowledges support from the Ontario Graduate Scholarship, and the Vector Scholarship in AI from the Vector Institute.
% \clearpage 

\bibliographystyle{plain}

\bibliography{section_reference}
\clearpage 

\appendix
\InsertAppendixOfProofs

\clearpage

\section{A New Flow-Based Algorithm for HNSN}
\label{flowhnsnsection}
Recall the \textsc{HNSN} problem from the introduction. Since the function \( f(S) = |N(S)| \) is submodular, HNSN is a special case of the weighted \textsc{USSS} problem. Ling \etal\ further reformulate the problem over the left vertex set \( L \), defining the function \( \overline{N} : 2^L \to \mathbb{R} \) by $\overline{N}(S) = \{ v \in R \mid \delta(v) \subseteq S \}$ where \( \delta(v) \) denotes the set of neighbors of \( v \). Under this formulation, the problem becomes that of maximizing \( w(\overline{N}(S))/|S| \) over non-empty subsets \( S \subseteq L \). We adopt this viewpoint in our implementation.

We adopt the density improvement framework from the proof of Theorem~\ref{allthesame}. We begin by setting \( S_0 = L \) and define the function \( f : 2^L \to \mathbb{R}_{\geq 0} \) by \( f(S) = w(\overline{N}(S)) \), where \( \overline{N}(S) = \{ v \in R \mid \delta(v) \subseteq S \} \). We initialize the density parameter as \( \lambda_0 = f(S_0)/|S_0| \), and then minimize the function \( \lambda_0 |S| - f(S) \) over subsets \( S \subseteq L \). Let \( S_1 \) be the minimizer of this objective, and set \( \lambda_1 = f(S_1)/|S_1| \). This process is repeated iteratively. We will show how each minimization step can be performed using a single min-cut (or, equivalently, max-flow) computation. Algorithm \ref{alghnsnflow} summarizes this iterative process. 

\begin{algorithm}[H]
\caption{Density Improvement via Min-Cut}\label{alghnsnflow}
\begin{algorithmic}[1]
\REQUIRE Left vertex set \( L \), right vertex set \( R \), weight function \( w: R \to \mathbb{R}_{\geq 0} \), neighborhood function \( \delta: R \to 2^L \)
\STATE Define \( \overline{N}(S) \gets \{ v \in R \mid \delta(v) \subseteq S \} \)
\STATE Define \( f(S) \gets \sum_{v \in \overline{N}(S)} w(v) \)
\STATE Initialize \( S_0 \gets L \), \( t \gets 1 \)
\STATE Compute \( \lambda_0 \gets f(S_0) / |S_0| \)

\REPEAT
    \STATE Define \( \Phi(S) \gets  \lambda_{t-1} \cdot |S| - f(S) \)
    \STATE Find \( S_t \subseteq L \) minimizing \( \Phi(S) \) via a min-cut or max-flow computation
    \STATE Compute \( \lambda_t \gets f(S_t) / |S_t| \)
    \STATE \( t \gets t + 1 \)
\UNTIL{ \( S_t = S_{t-1} \) }

\RETURN \( S_t \)
\end{algorithmic}
\end{algorithm}

We now show how to minimize $\Phi(S) = \lambda |S| - f(S)$ for a fixed $\lambda$. 

\begin{theorem}
    We can minimize $\Phi(S) = \lambda |S| - f(S)$ for a fixed $\lambda$ using a single minimum $s$-$t$ cut. 
\end{theorem}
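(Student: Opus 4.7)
The plan is to recognize $\Phi(S) = \lambda|S| - f(S)$ as (the negative of) a classical project-selection/closure-style objective and reduce it to a single min $s$-$t$ cut computation in a bipartite-like network. Minimizing $\Phi(S)$ over $S\subseteq L$ is equivalent to maximizing $w(\overline{N}(S)) - \lambda|S|$, which has the natural interpretation: each $v\in R$ pays a reward $w(v)\ge 0$ provided every $u\in\delta(v)$ is ``bought'' into $S$, and each $u\in L$ put into $S$ costs $\lambda$. Note $\lambda = f(S_{t-1})/|S_{t-1}|\ge 0$ by construction, so the costs are nonnegative and a standard flow network applies.

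The network I would construct is the following. Create a source $s$, a sink $t$, and nodes for every element of $L\cup R$. For each $v\in R$, add an arc $(s,v)$ of capacity $w(v)$. For each $u\in L$, add an arc $(u,t)$ of capacity $\lambda$. For each $v\in R$ and each $u\in \delta(v)$, add an arc $(v,u)$ of capacity $+\infty$ (any value strictly larger than $w(R)$ suffices). I would then argue that any finite-value $s$-$t$ cut $(A,B)$ with $s\in A$ cannot cross an infinite arc, so whenever $v\in R\cap A$, the entire set $\delta(v)$ lies in $A\cap L$. Writing $S := A\cap L$ and $T := R\cap A$, this forces $T\subseteq \overline{N}(S)$.

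Computing the cut value is the main bookkeeping step: the crossing arcs are $(s,v)$ for $v\in R\setminus T$ and $(u,t)$ for $u\in S$, contributing
\[
w(R\setminus T) + \lambda|S| \;=\; w(R) - w(T) + \lambda|S|.
\]
For a minimum cut, one may assume $T = \overline{N}(S)$, since replacing $T$ by $\overline{N}(S)$ remains feasible (no infinite arc becomes cut) and only removes source-to-$R$ arcs from the cut, strictly decreasing its value unless $T=\overline{N}(S)$ already. Substituting this, the cut value becomes $w(R) - w(\overline{N}(S)) + \lambda|S| = w(R) + \Phi(S)$. Since $w(R)$ is a constant independent of $S$, the $L$-part of the source side of the min $s$-$t$ cut is exactly a minimizer of $\Phi$.

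The only real obstacle is the uncrossing argument that the optimal $R$-side equals $\overline{N}(S)$ rather than merely a subset of it, but this is immediate because all capacities are nonnegative and infinite arcs forbid only one direction of reassignment. Everything else is a direct calculation, and the construction has size $O(|L|+|R|+\sum_v |\delta(v)|) = O(|L|+|R|+|E|)$, so each density-improvement iteration of Algorithm~\ref{alghnsnflow} reduces to one max-flow computation on a graph of size linear in the input, as claimed.
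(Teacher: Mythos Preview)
Your proof is correct and is essentially the same closure/project-selection reduction as the paper's; the only difference is orientation. The paper builds the network $s\to L\to R\to t$ with $\lambda$-capacities on the $s\to L$ arcs and $w$-capacities on the $R\to t$ arcs and reads $S$ off the sink side, whereas you build the mirror image $s\to R\to L\to t$ and read $S$ off the source side---these are equivalent under the standard $s\leftrightarrow t$ symmetry of min-cut, and your explicit project-selection framing and uncrossing argument for $T=\overline{N}(S)$ are in fact a bit cleaner than the paper's exposition.
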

\begin{proof}
    Construct a directed graph \( G = (\{s, t\} \cup L \cup R, E) \) as follows:

    \begin{itemize}
        \item Add an edge from \( s \) to each \( u \in L \) with capacity \( \lambda \).
        \item For each edge \( (u, v) \) with \( u \in L, v \in R \), if \( v \in \delta(u) \), add an edge \( u \to v \) with capacity \( \infty \).
        \item Add an edge from each \( v \in R \) to \( t \) with capacity \( w(v) \).
    \end{itemize}

Consider an \( s \)-\( t \) cut \( (A, B) \), with \( s \in A \), \( t \in B \). Define the subset \( S = L \cap B \), i.e., the left-hand vertices that are \textbf{cut off from} \( s \) by the cut. We now compute the total capacity of the cut.

The cut capacity consists of:
\begin{enumerate}
    \item Edges from \( s \) to \( S \), each with capacity \( \lambda \), totaling \( \lambda |S| \).
    \item For each \( v \in R \), the edge \( v \to t \) is only cut if there exists any of its neighbors in \( S \), i.e., \( \delta(v) \cap S \neq \emptyset \), meaning \( v \in \overline{N}(S) \) are not cut to $t$. The edge \( v \to t \) contributes \( w(R) - w(\overline{N}(S)) \) to the cut.
\end{enumerate}

Thus, the total cut capacity is:
\[
\lambda |S| +  w(R) - w(\overline{N}(S))
\]

To minimize \( \Phi(S) = \lambda |S| - f(S) \), we observe that this is equivalent (up to constants) to minimizing \(\lambda |S| +  w(R) - w(\overline{N}(S))\) since adding \( w(R) \) from both sides doesn't change the argmin. Hence, the minimum \( s \)-\( t \) cut yields a subset that minimizes \( \Phi(S) \), as claimed.
\end{proof}

\clearpage
\section{HNSN Experiments Results}
\label{hnsnresultsexperiment}

\textbf{Problem Definition.} Recall the \textsc{HNSN} problem from the introduction. Since the function \( f(S) = |N(S)| \) is submodular, HNSN is a special case of the weighted \textsc{USSS} problem. Ling \etal\ further reformulate the problem over the left vertex set \( L \), defining the function \( \overline{N} : 2^L \to \mathbb{R} \) by $\overline{N}(S) = \{ v \in R \mid \delta(v) \subseteq S \}$ where \( \delta(v) \) denotes the set of neighbors of \( v \). Under this formulation, the problem becomes that of maximizing \( w(\overline{N}(S))/|S| \) over non-empty subsets \( S \subseteq L \). We adopt this viewpoint in our implementation.

\paragraph{Algorithms.} We compare against the six baselines of Ling \etal \cite{litkde}, and introduce a new flow-based algorithm (detailed in  Appendix \ref{flowhnsnsection}). In total, we evaluate \textbf{ten algorithms}: \textsc{IP} (\textsc{SuperGreedy++}), \textsc{FW} (Frank-Wolfe), \textsc{MNP} (Fujishige-Wolfe Minimum Norm Point Algorithm), \textsc{FLOW} (our new flow-based method), \textsc{CD} (ContractDecompose, \cite{litkde}), \textsc{LP} (Linear Programming, solved via Gurobi with academic license), \textsc{GAR} (Greedy Approximation, \cite{litkde}), \textsc{GR} (Greedy, \cite{litkde}), \textsc{FGR} (Fast Greedy, \cite{litkde}), and \textsc{GRR} (GreedRatio, \cite{bai2016algorithms}). The first four are our implementations; the remaining six follow Ling \etal's code \cite{litkde}.

\textbf{Approach.} For the three iterative algorithms—\textsc{IP}, \textsc{FW}, and \textsc{MNP}—we run each for 100 iterations and plot the following: (1) the time taken per iteration, (2) the best density found up to that point. We report only the final runtime and the density achieved at termination for the remaining algorithms. In the runtime comparison, we include only those algorithms that (1) terminated within the time limit and (2) successfully found the optimal solution.

Our implementation of \textsc{IP} is custom and includes an optimized method for computing marginals. It maintains a heap over a set \( S \subseteq L \) of vertices to peel from and a set \( S' \subseteq R \) where \( \delta(v) \subseteq S \) for all \( v \in S' \). Once a vertex \( \ell \in S \) is peeled, all its neighbors in \( R \) are removed from the data structure \( S' \). This allows each iteration to be implemented in \( O(m \log n) \) time, where \( m \) and \( n \) are the number of edges and vertices in the bipartite graph, respectively.

\paragraph{Datasets.}  Table~\ref{datasetshnsn} summarizes the datasets used for the HNSN problem, following the benchmarks introduced by Ling \etal~\cite{litkde}. Each dataset is represented as a weighted bipartite graph with left vertices \(L\), right vertices \(R\), weights on the right vertices $w:R \to \mathbb{R}_{\geq 0}$, and edges \(E\). The graphs are derived from diverse real-world domains, including recommendation systems (e.g., YooChoose, Kosarak), citation networks (ACM), social networks (NotreDame, IMDB, Digg), and financial transactions (e.g., E-commerce, Liquor). The datasets exhibit a wide range of sizes, with up to \(\sim 10^6\) vertices and \(\sim 2 \times 10^7\) edges. 

\textbf{Resources}: In total, for all algorithms and all datasets, we request 4 CPUs per node and 35G of memory per experiment. All algorithms were given a 30-minute time limit per dataset, after which they were marked as Time Limit Exceeded and given blank values in the plots below. 

\paragraph{Discussion of \textsc{HNSN} Results.} See Figure \ref{fig:hnsn} for all plots. For each dataset, the top plot shows the final density found by each algorithm (the higher, the better), and the bottom plot shows the time each algorithm takes on a \textbf{log-scale} (the lower, the better). We exclude algorithms with a time limit (more than 30 minutes) or if they give a suboptimal answer. Across all datasets, the Frank-Wolfe (\textsc{FW}) algorithm and Fujishige-Wolfe Minimum Norm Point (\textsc{MNP}) algorithm consistently emerged as the fastest to converge. In nearly every case, \textsc{MNP} outperformed \textsc{FW}, achieving the best solution quality in the shortest time, with only a single dataset where \textsc{FW} was slightly faster. Relative to the six baselines from Ling \etal \cite{litkde}, these methods achieved speedups ranging from $5\times$ to $595\times$. 

Our new flow-based algorithm and \textsc{SuperGreedy++} (\textsc{IP}) typically followed, ranking third and fourth in performance across datasets. Figure~\ref{fig:subfig1} shows a particularly striking example, where on the YooChoose dataset, \textsc{MNP} achieved a $595\times$ speedup over the best previously published baseline. In this instance, all algorithms converged to the correct final density, except for \textsc{GRR}, which terminated at a suboptimal solution and was therefore excluded from the runtime comparison.

In particular, the algorithms \textsc{GRR}, \textsc{GR}, \textsc{GAR}, and \textsc{LP} frequently encountered difficulties, either due to slow performance (time limit exceeding) or convergence to incorrect solutions. These results underscore a recurring insight consistently observed across our experiments: universal solvers such as \textsc{MNP} and \textsc{FW}, originally developed for broader optimization tasks, can significantly outperform specialized heuristics when applied to specific problems like \textsc{HNSN}, provided they are used within the proper conceptual framework that we have outlined.

\begin{figure}[p]
    \centering

    % Highlighted figures in larger size
    \begin{subfigure}[t]{0.48\textwidth}
        \centering
        \includegraphics[width=\linewidth]{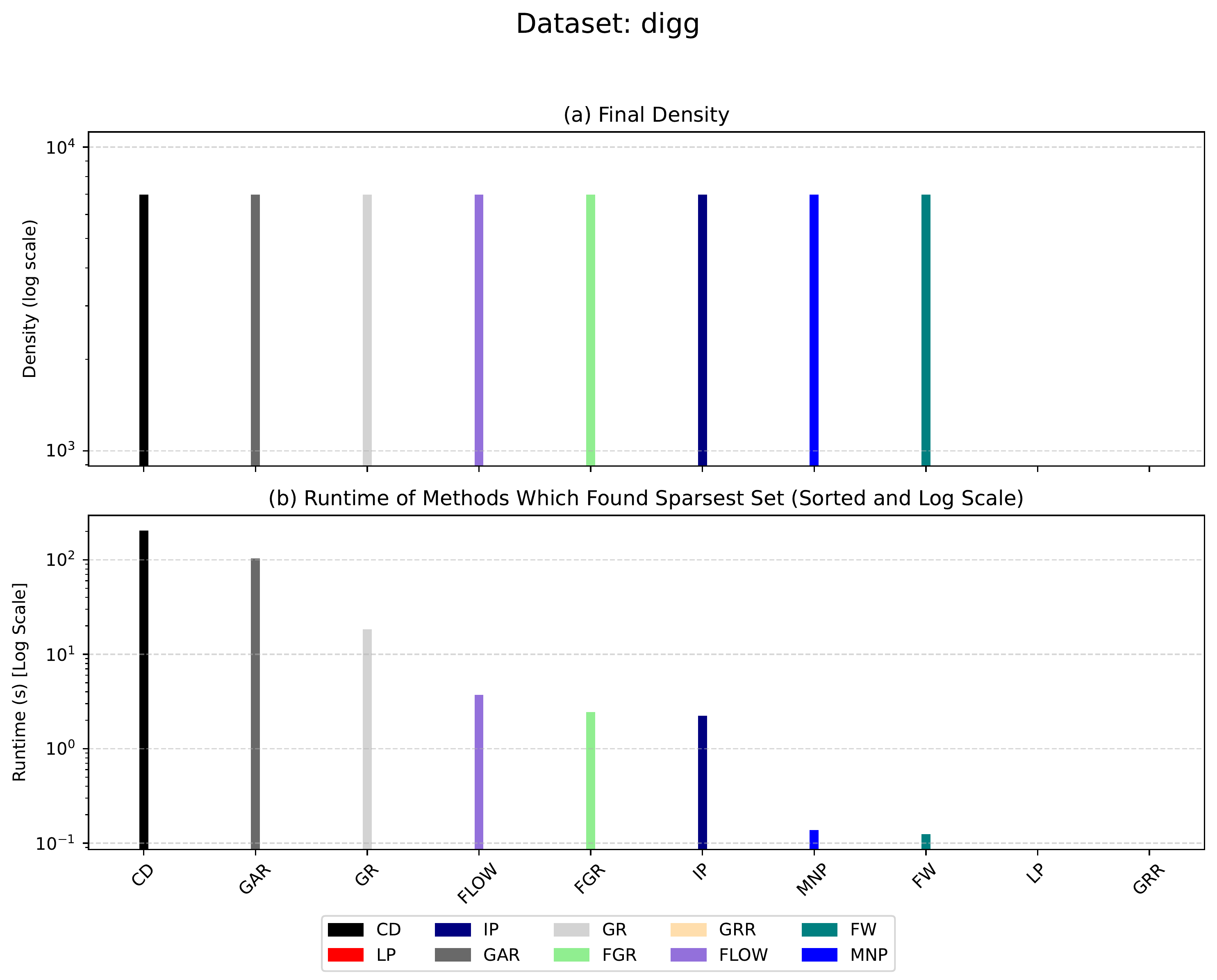}
        \caption{\texttt{digg}. LP \& GRR Time Limit Exceeded.}
        \label{fig:hnsn-digg}
    \end{subfigure}
    \hfill
    \begin{subfigure}[t]{0.48\textwidth}
        \centering
        \includegraphics[width=\linewidth]{png_output/figures/hnsn/plot_ecommerce.png}
        \caption{\texttt{ecommerce}}
        \label{fig:hnsn-ecommerce}
    \end{subfigure}

    \begin{subfigure}[t]{0.48\textwidth}
        \centering
        \includegraphics[width=\linewidth]{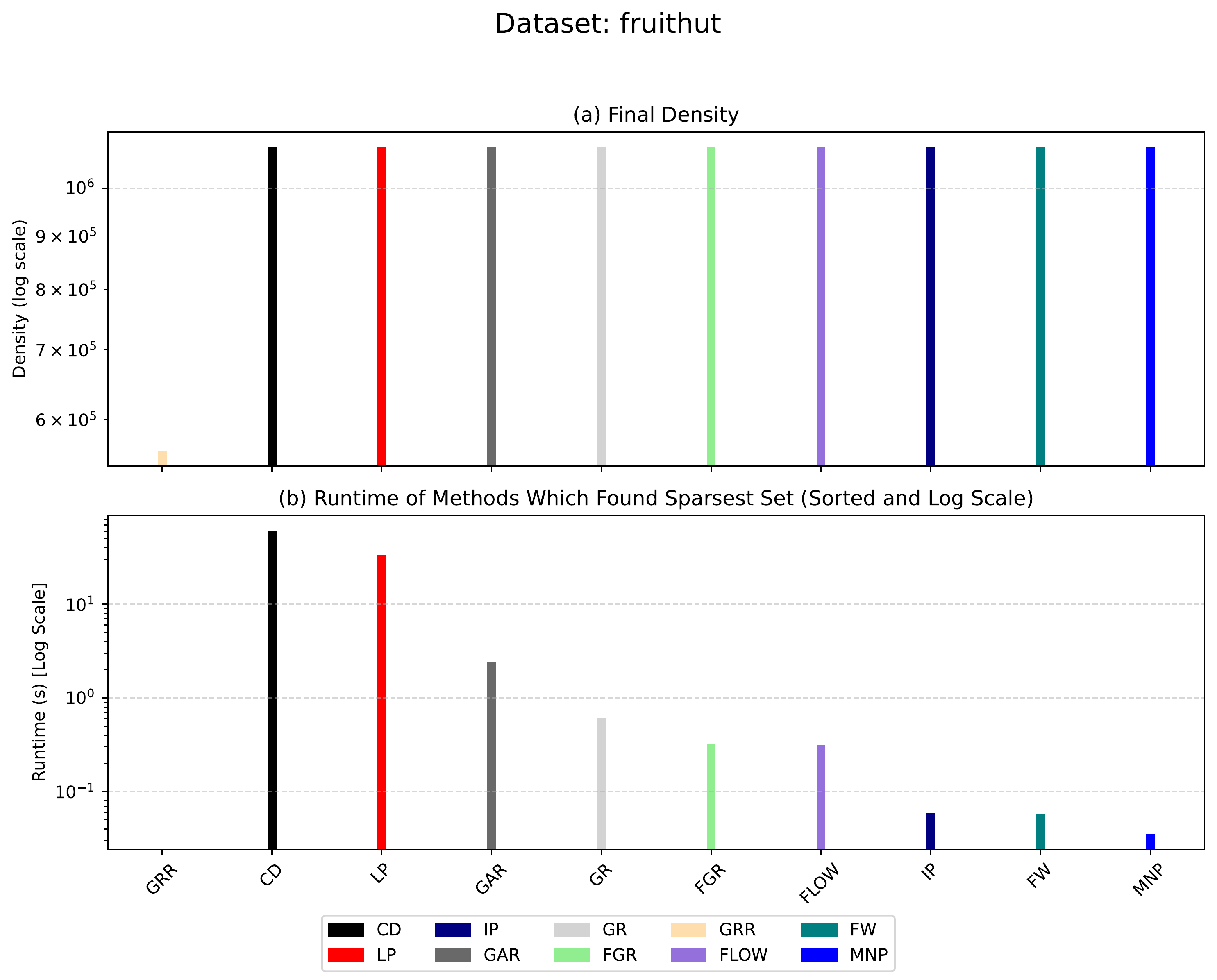}
        \caption{\texttt{fruithut}. GRR Suboptimal Solution.}
        \label{fig:hnsn-fruithut}
    \end{subfigure}
    \hfill
    \begin{subfigure}[t]{0.48\textwidth}
        \centering
        \includegraphics[width=\linewidth]{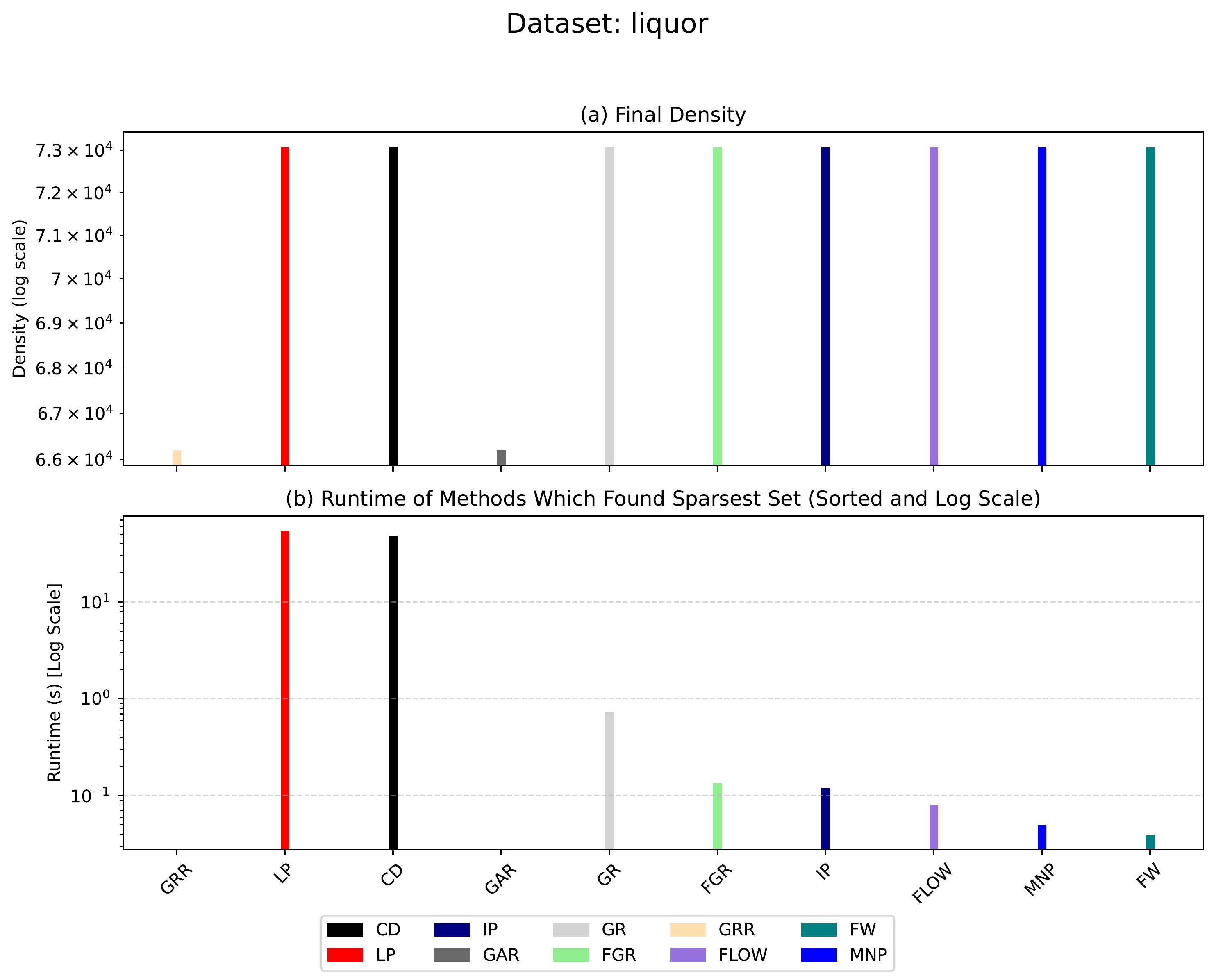}
        \caption{\texttt{liquor}. GRR \& GAR Suboptimal Solution.}
        \label{fig:hnsn-liquor}
    \end{subfigure}

    % Remaining figures in smaller size
    \begin{subfigure}[t]{0.32\textwidth}
        \centering
        \includegraphics[width=\linewidth]{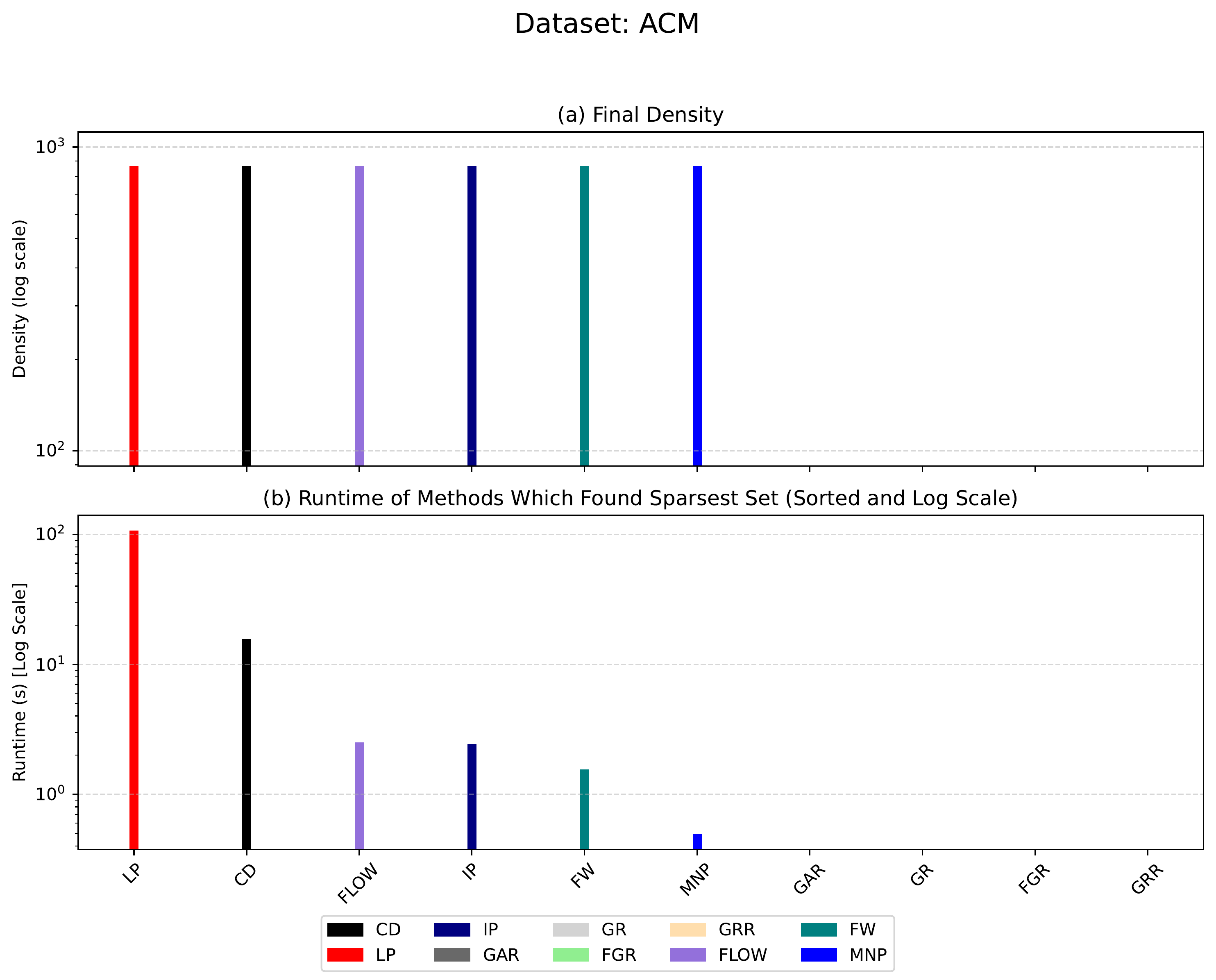}
        \caption{\texttt{ACM}}
        \label{fig:hnsn-acm}
    \end{subfigure}
    \hfill
    \begin{subfigure}[t]{0.32\textwidth}
        \centering
        \includegraphics[width=\linewidth]{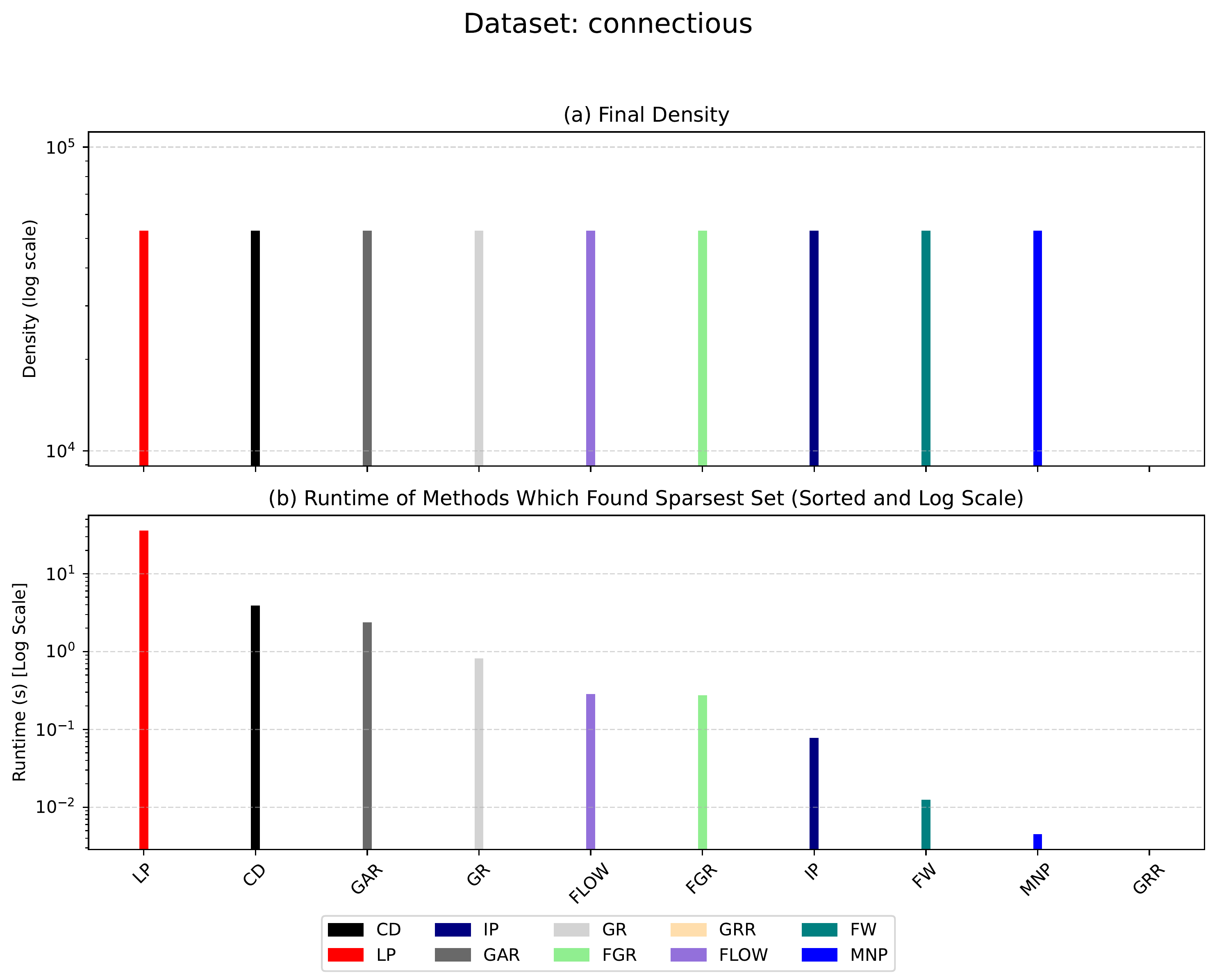}
        \caption{\texttt{connectious}}
        \label{fig:hnsn-connectious}
    \end{subfigure}
    \hfill
    \begin{subfigure}[t]{0.32\textwidth}
        \centering
        \includegraphics[width=\linewidth]{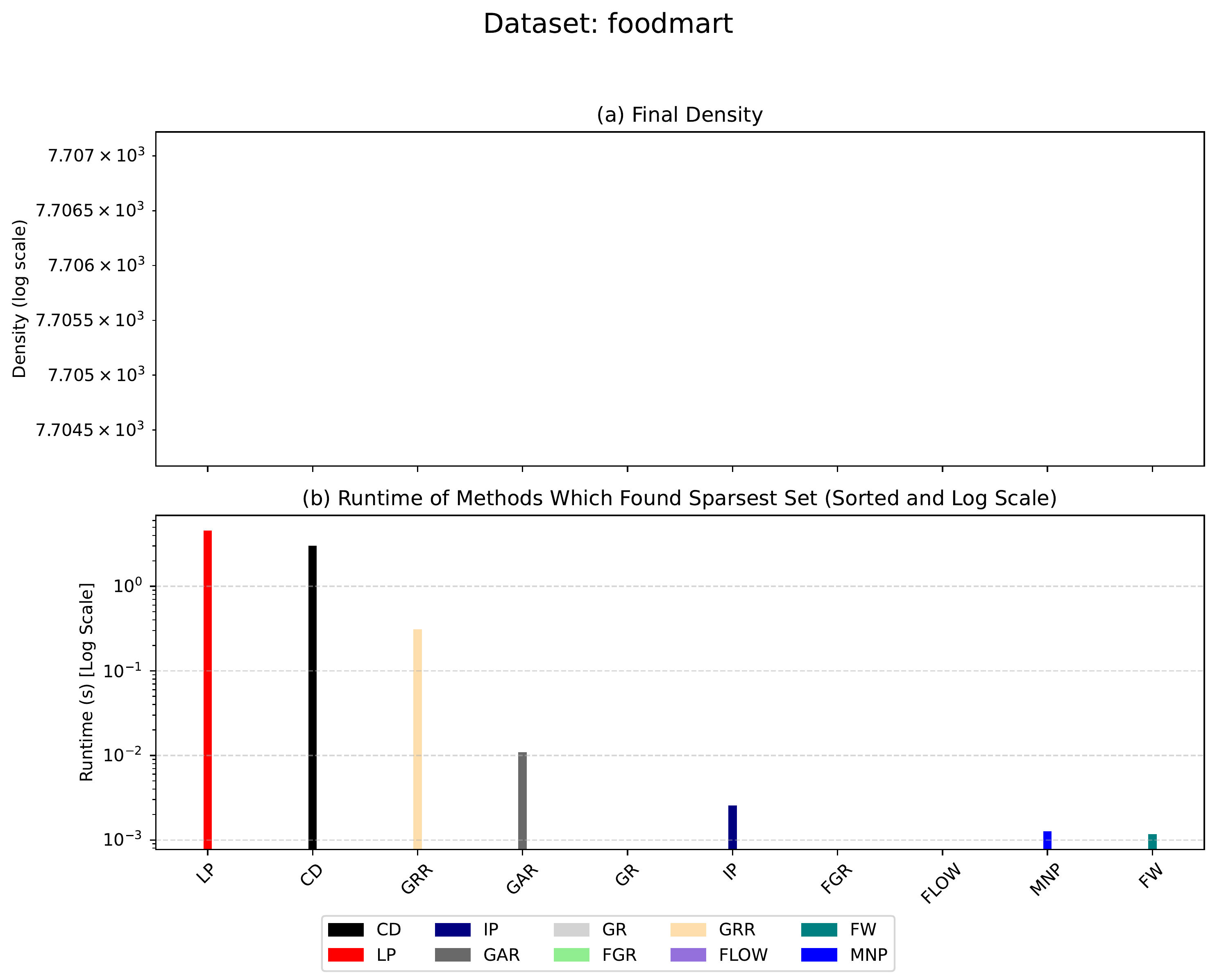}
        \caption{\texttt{foodmart}}
        \label{fig:hnsn-foodmart}
    \end{subfigure}

    \begin{subfigure}[t]{0.32\textwidth}
        \centering
        \includegraphics[width=\linewidth]{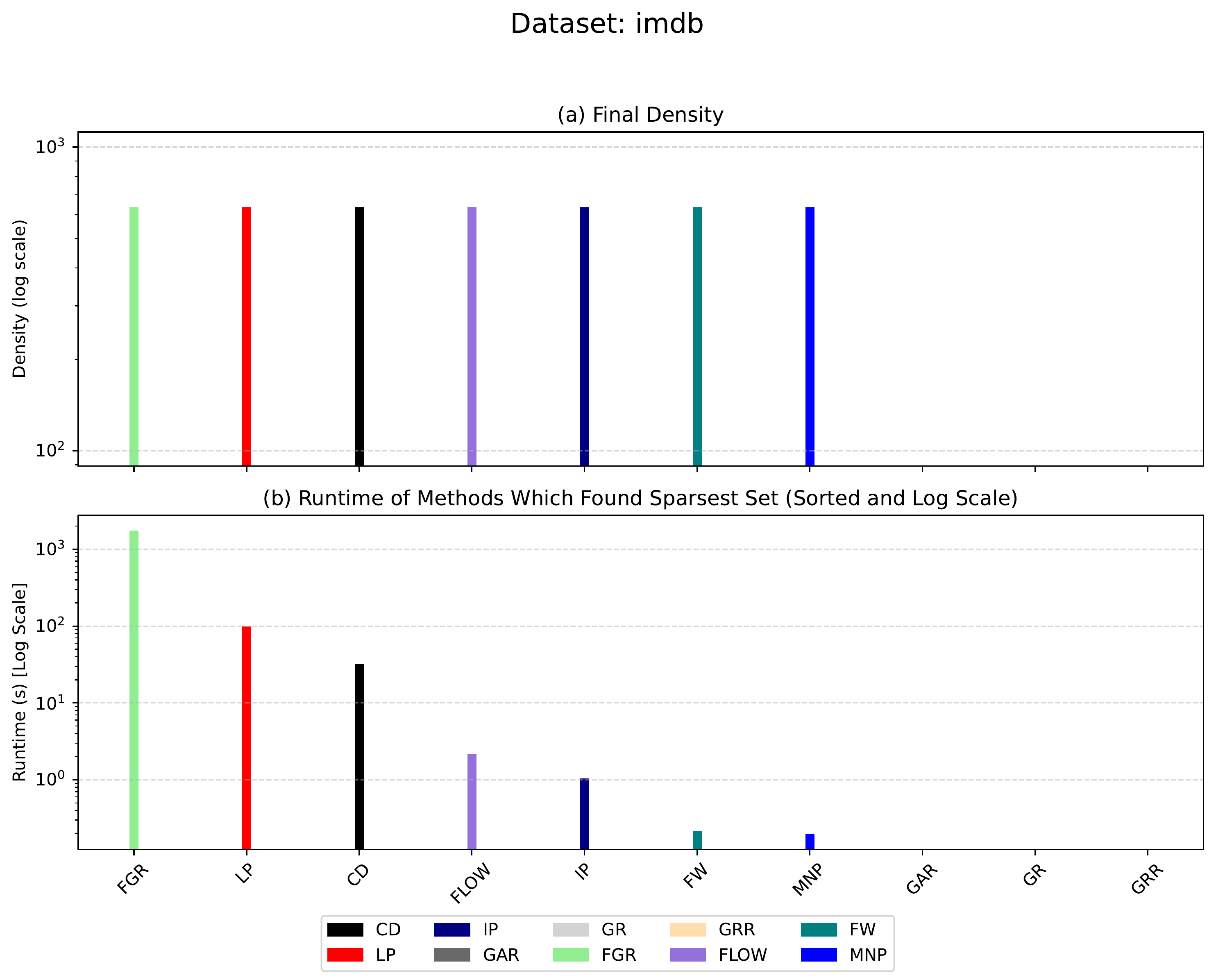}
        \caption{\texttt{imdb}}
        \label{fig:hnsn-imdb}
    \end{subfigure}
    \hfill
    \begin{subfigure}[t]{0.32\textwidth}
        \centering
        \includegraphics[width=\linewidth]{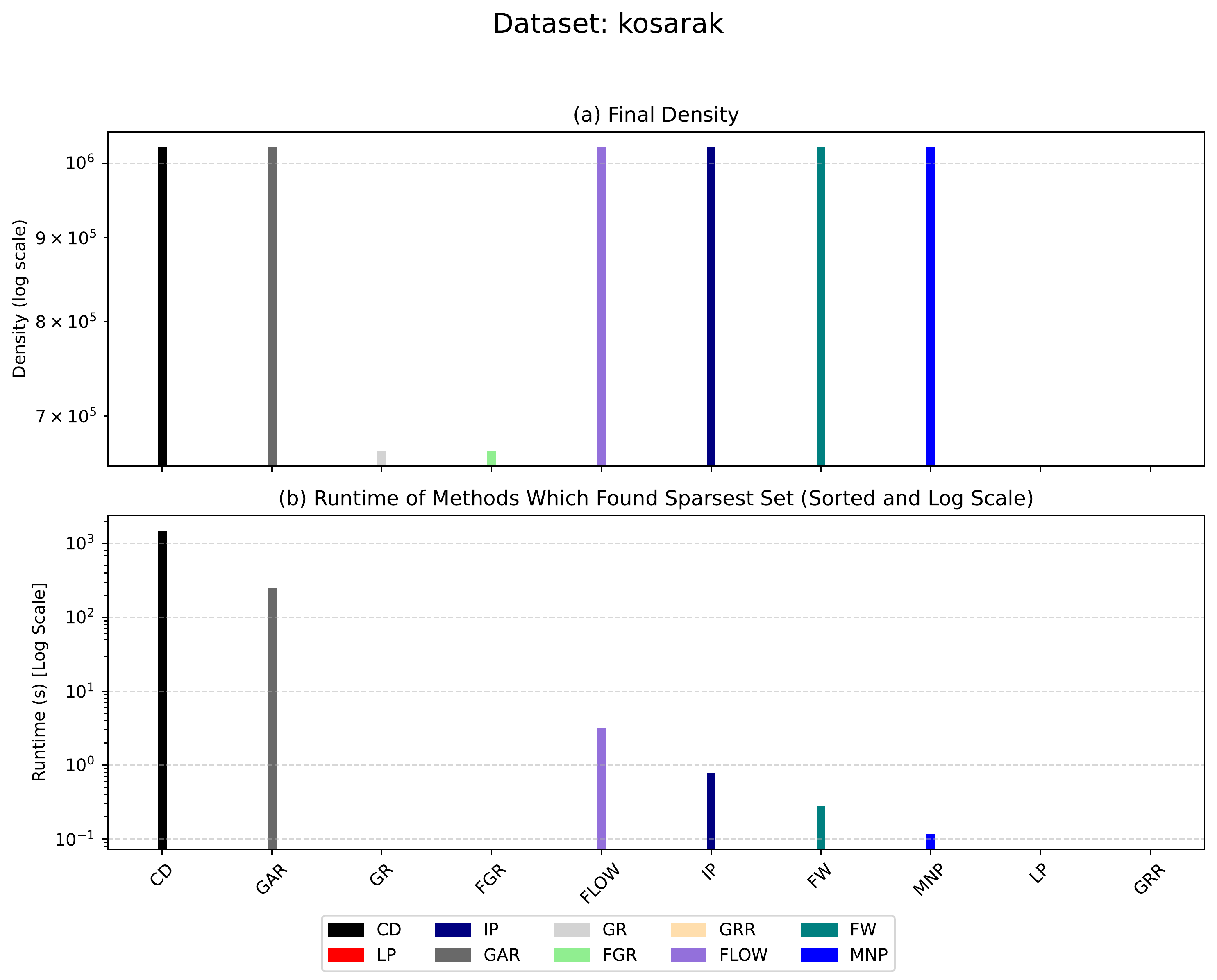}
        \caption{\texttt{kosarak}}
        \label{fig:hnsn-kosarak}
    \end{subfigure}
    \hfill
    \begin{subfigure}[t]{0.32\textwidth}
        \centering
        \includegraphics[width=\linewidth]{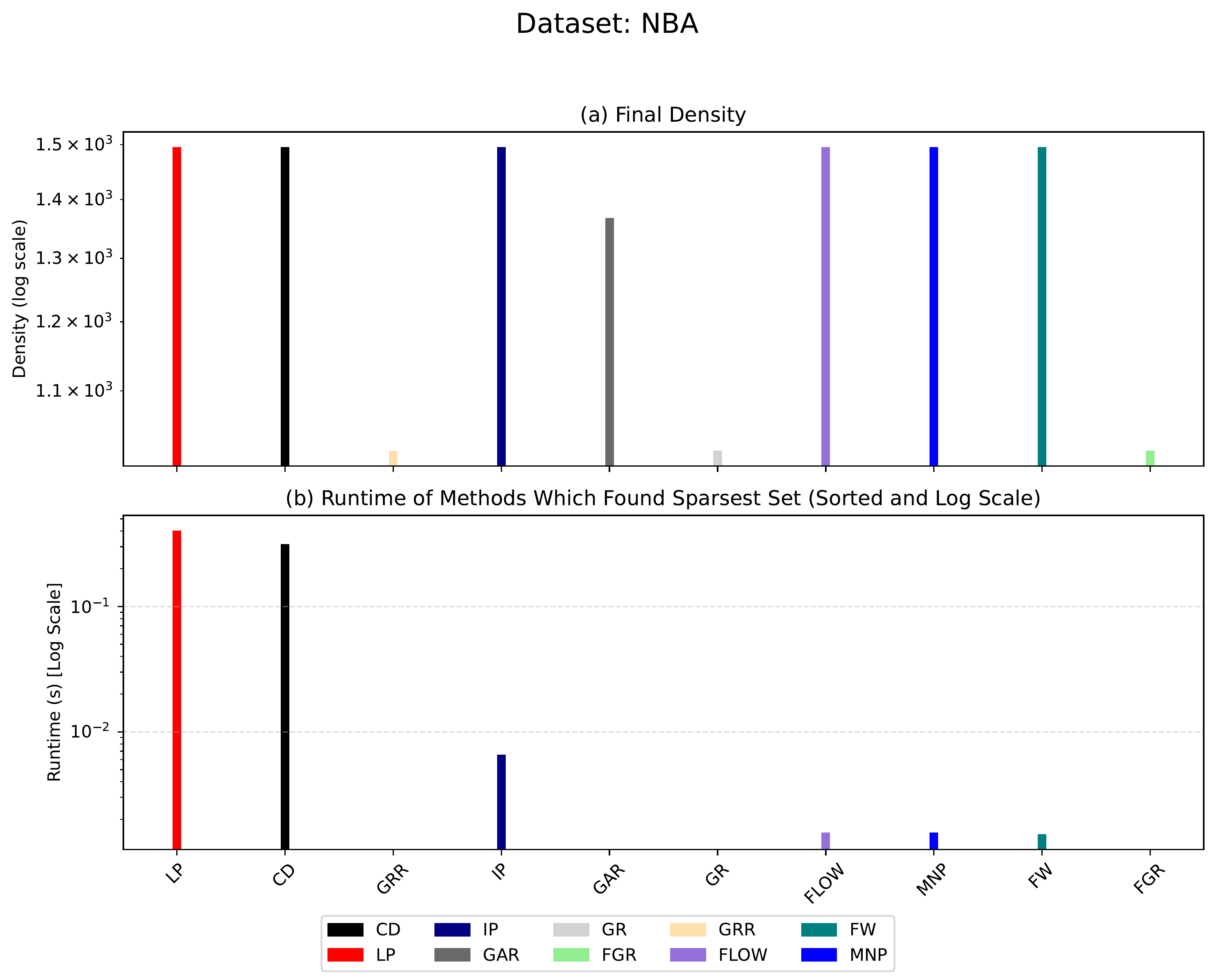}
        \caption{\texttt{NBA}}
        \label{fig:hnsn-nba}
    \end{subfigure}

    \begin{subfigure}[t]{0.32\textwidth}
        \centering
        \includegraphics[width=\linewidth]{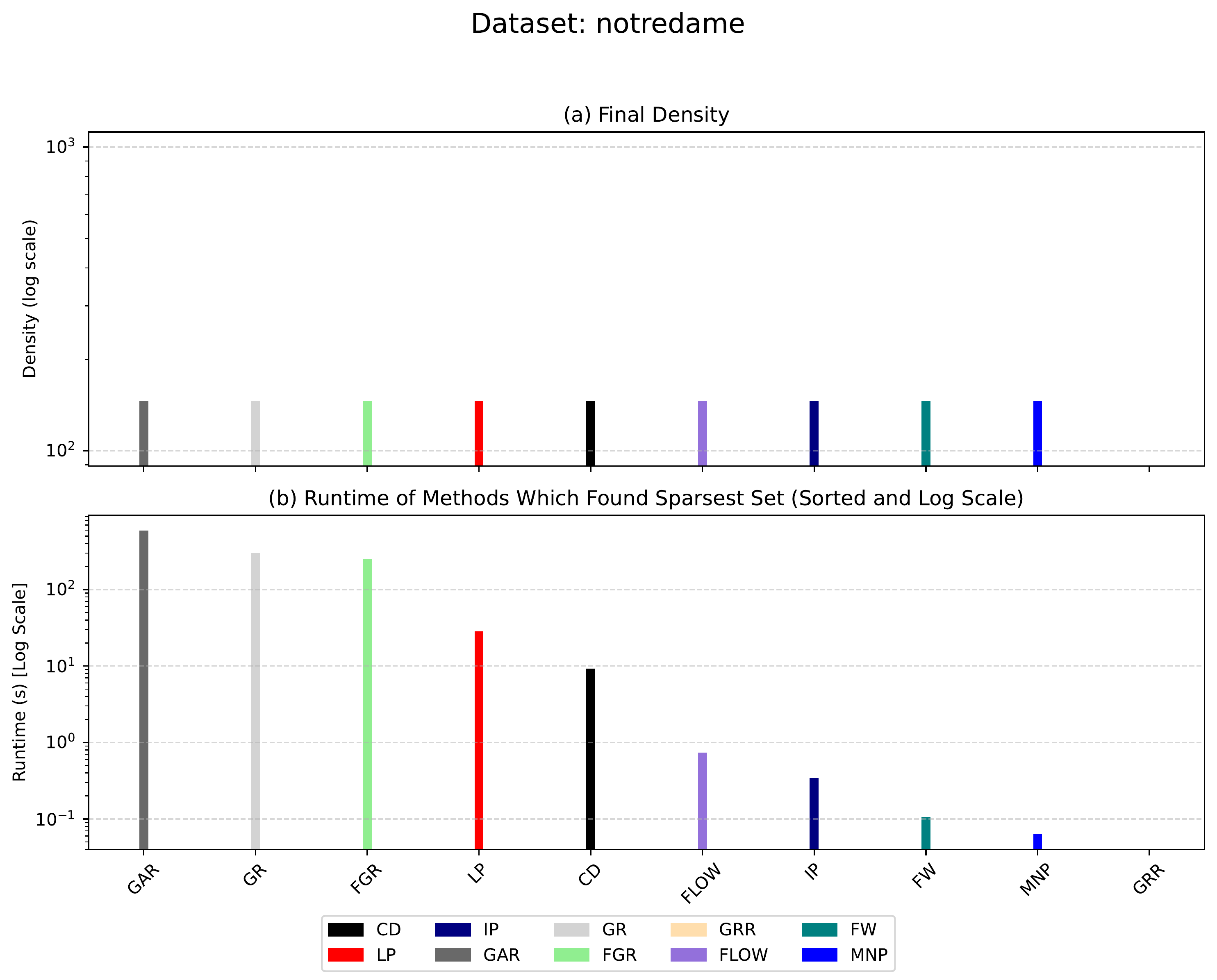}
        \caption{\texttt{notredame}}
        \label{fig:hnsn-notredame}
    \end{subfigure}
    \hfill
    \begin{subfigure}[t]{0.32\textwidth}
        \centering
        \includegraphics[width=\linewidth]{png_output/figures/hnsn/plot_yoochoose.png}
        \caption{\texttt{yoochoose}}
        \label{fig:hnsn-yoochoose}
    \end{subfigure}

    \caption{HNSN results across 12 datasets.}
    \label{fig:hnsn}
\end{figure}

\clearpage
\section{Min $s$-$t$ Cut Experiments}
\label{mincutexperiments}

\paragraph{Problem Definition.} We next evaluate our algorithms on the classical \emph{Minimum $s$-$t$ Cut} problem, a fundamental task in combinatorial optimization with widespread applications. Specifically, the Minimum $s$-$t$ Cut problem can be framed as minimizing the normalized submodular function $g(S) = |\delta(S \cup \{s\})|-|\delta(\{s\})|$ over subsets $S \subseteq V \setminus \{s,t\}$, ensuring that any feasible solution corresponds to a valid $s$-$t$ cut. Within this formulation, the Minimum $s$-$t$ Cut problem becomes an instance of \emph{Submodular Function Minimization (SFM)}, and thus naturally fits within the unified framework established in this paper. This connection allows us to apply universal solvers, such as \textsc{SuperGreedy++}, Frank-Wolfe, and Wolfe’s Minimum Norm Point algorithm, to this classical combinatorial problem.

\paragraph{Algorithms.} We evaluate three iterative algorithms—Frank-Wolfe (\textsc{frankwolfe}), Fujishige-Wolfe Minimum Norm Point (\textsc{mnp}), and \textsc{SuperGreedy++} (\textsc{supergreedy})—on the Minimum $s$-$t$ Cut problem. For each instance, we also compute the exact minimum cut value using standard combinatorial flow-based methods (\textit{Edmonds Karp algorithm}), serving as ground truth. This allows us to assess both the solution quality and runtime efficiency of the optimization-based approaches relative to the exact combinatorial solution. We run \textsc{SuperGreedy++} for 1000 iterations on each dataset, 10,000 iterations for \textsc{frankwolfe}, and 500 for \textsc{mnp}. 

\paragraph{Marginals.} The iterative algorithms leverage the marginal $g(v|S - v) = g(S) - g(S - v)$; the difference in the cut value when moving $v$ from the source partition to the sink partition. It follows that $g(v|S - v)$ is realized by $\deg^+(v) - \deg^-(v)$ where $\deg^-(v)$ is the (weighted) degree of $v$ to all vertices $u \in S \cup \{s\}$ and $\deg^+(v)$ is the (weighted) degree of $v$ to all vertices $w \in V \setminus (S \cup \{s\})$.

\paragraph{Datasets.} Our evaluation spans two groups of datasets. First, we consider four classical benchmark instances from the first DIMACS Implementation Challenge on minimum cut \cite{cucaveSolvingMaximum}, known for their small size but challenging cut structures. Second, to assess scalability and robustness, we include a large-scale dataset family, specifically the \textsc{BVZ-sawtooth} instances \cite{Jensen2022}, which consist of 20 distinct minimum $s$-$t$ cut problems. Each instance contains approximately 500,000 vertices and 800,000 edges, providing a test of algorithmic scalability. 

\textbf{Resources}: In total, for all algorithms and all datasets, we request 4 cpus per node, and 40G of memory per experiment. All algorithms were given a 25-minute time limit per dataset (all algorithms finished within this time, and there was no time limit exceeded). We mark with a vertical dotted line the time for the flow algorithm to find the correct minimum cut.

\paragraph{Discussion of Minimum $s$-$t$ Cut Results.} Full results are provided in Figure \ref{fig:mincut-part1} and \ref{fig:mincut-part2}. Across all datasets, \textsc{SuperGreedy++} was consistently the fastest to approximate the minimum cut, often by large margins over exact flow-based methods, \textsc{MNP}, and \textsc{FW}. In 16 of 24 instances, it found the exact min-cut rapidly. In the remaining cases, its solution was within $1.000023\times$ the optimum after at most 500 iterations, always converging before the exact flow solver. A notable example is the \textsc{BVZ-sawtooth14} instance, where \textsc{SuperGreedy++} converges orders of magnitude faster (Figure ~\ref{fig:subfig2}). This strong performance parallels its success in dense subgraph problems, where it efficiently finds high-quality solutions but can slow down near the optimum. Combining \textsc{SuperGreedy++} with flow-based refinement remains an interesting direction for future work. \textit{Interestingly, while \textsc{MNP} and \textsc{FW} excel on \textsc{HNSN}, they both underperform here, underscoring how problem structure affects solver efficiency}.
\clearpage

\begin{figure}[p]
    \centering

    % Highlighted
    \begin{subfigure}[t]{0.48\textwidth}
        \centering
        \includegraphics[width=\linewidth]{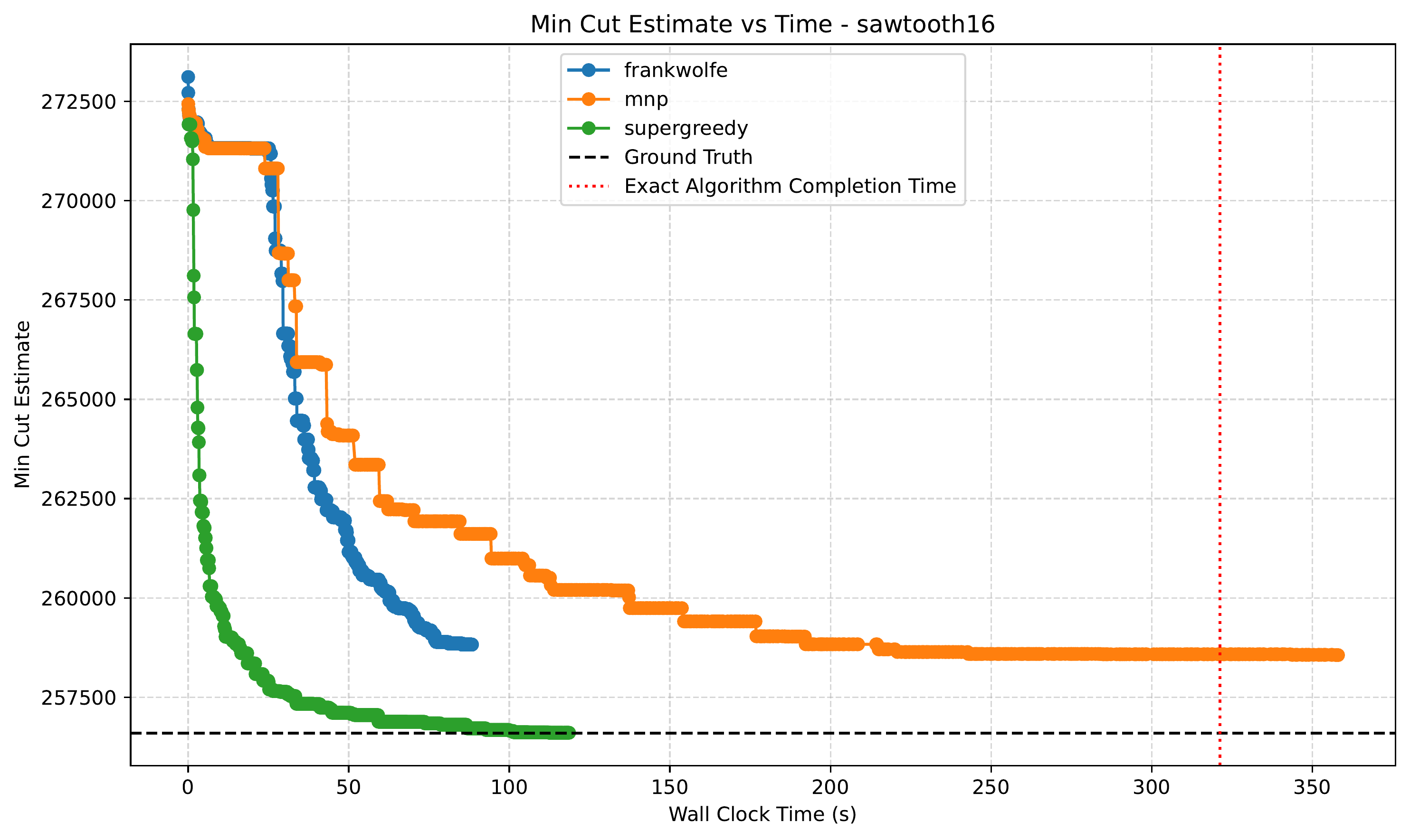}
        \caption{\texttt{sawtooth16}}
    \end{subfigure}
    \hfill
    \begin{subfigure}[t]{0.48\textwidth}
        \centering
        \includegraphics[width=\linewidth]{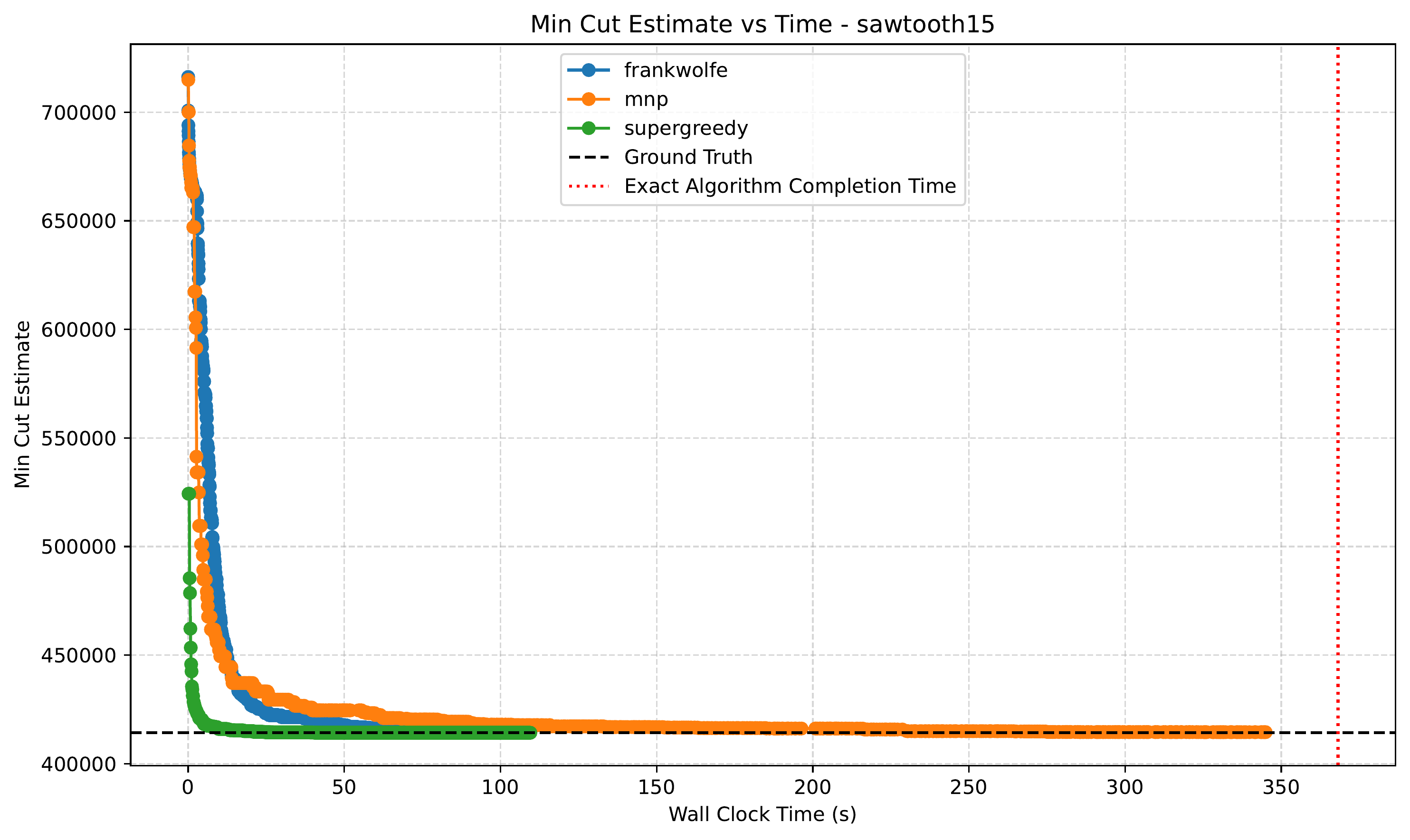}
        \caption{\texttt{sawtooth15}}
    \end{subfigure}

    \begin{subfigure}[t]{0.48\textwidth}
        \centering
        \includegraphics[width=\linewidth]{png_output/figures/mincut/mincut_sawtooth3.png}
        \caption{\texttt{sawtooth3}}
    \end{subfigure}
    \hfill
    \begin{subfigure}[t]{0.48\textwidth}
        \centering
        \includegraphics[width=\linewidth]{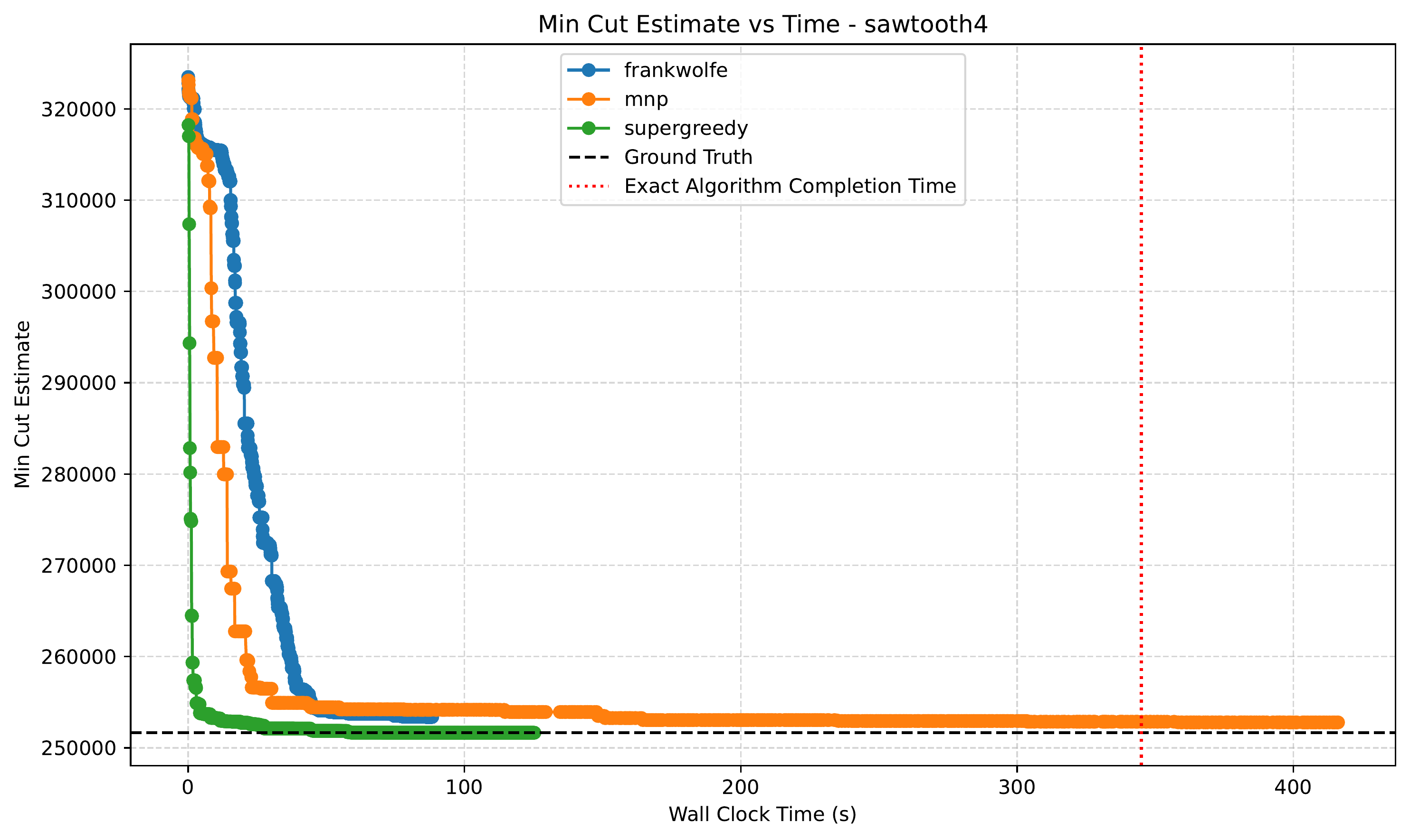}
        \caption{\texttt{sawtooth4}}
    \end{subfigure}

    % Smaller
    \begin{subfigure}[t]{0.24\textwidth}
        \centering
        \includegraphics[width=\linewidth]{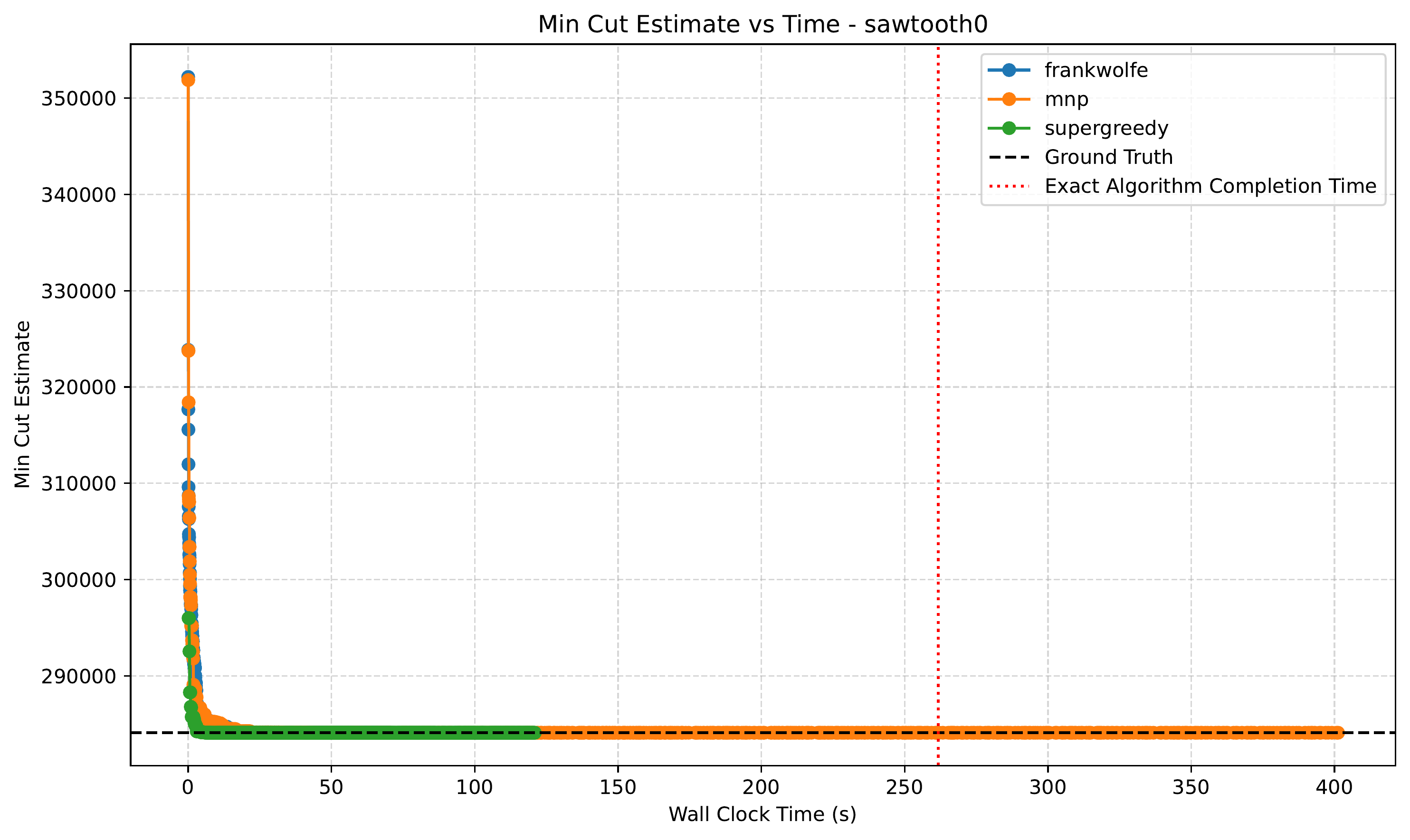}
        \caption{\texttt{sawtooth0}}
    \end{subfigure}
    \hfill
    \begin{subfigure}[t]{0.24\textwidth}
        \centering
        \includegraphics[width=\linewidth]{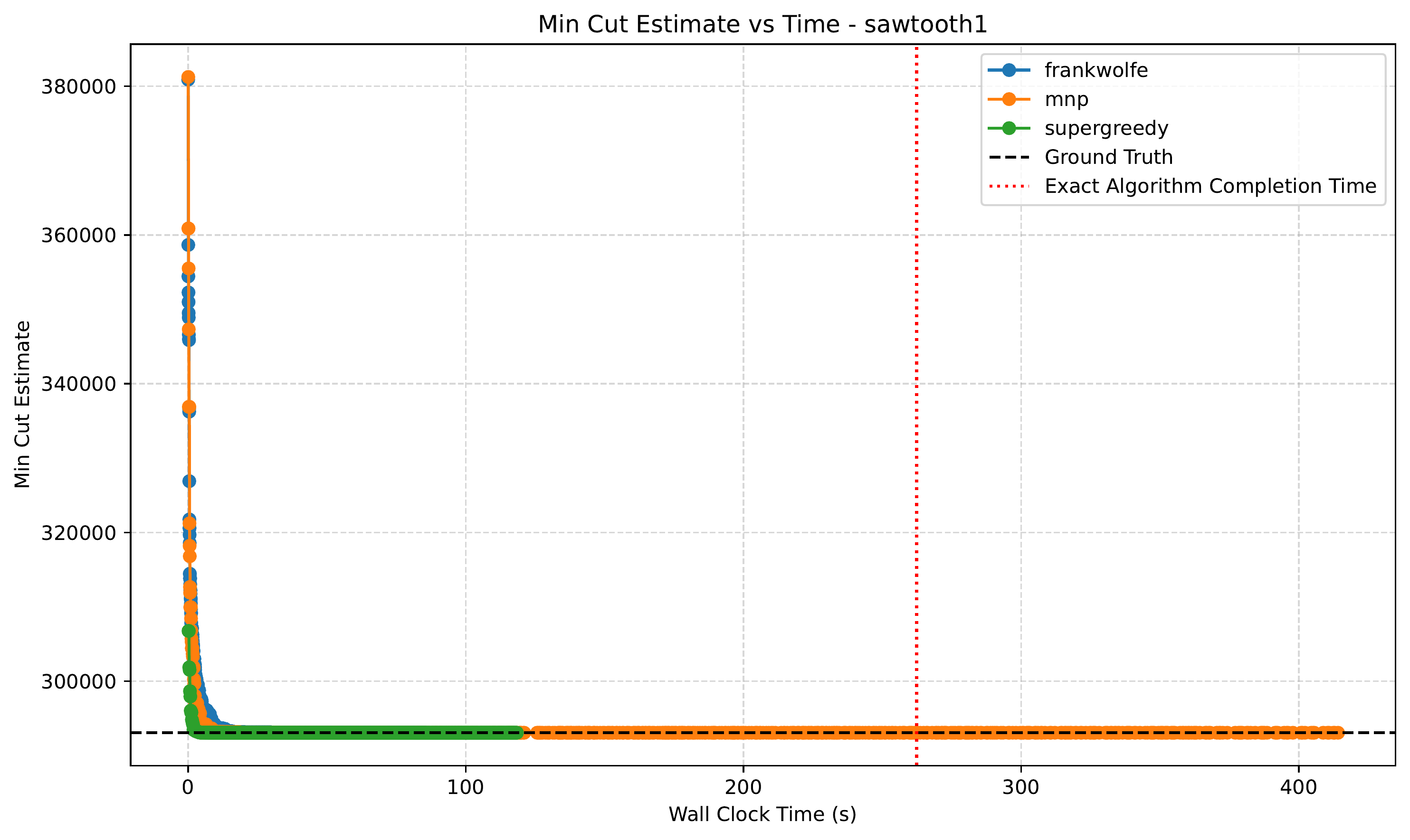}
        \caption{\texttt{sawtooth1}}
    \end{subfigure}
    \hfill
    \begin{subfigure}[t]{0.24\textwidth}
        \centering
        \includegraphics[width=\linewidth]{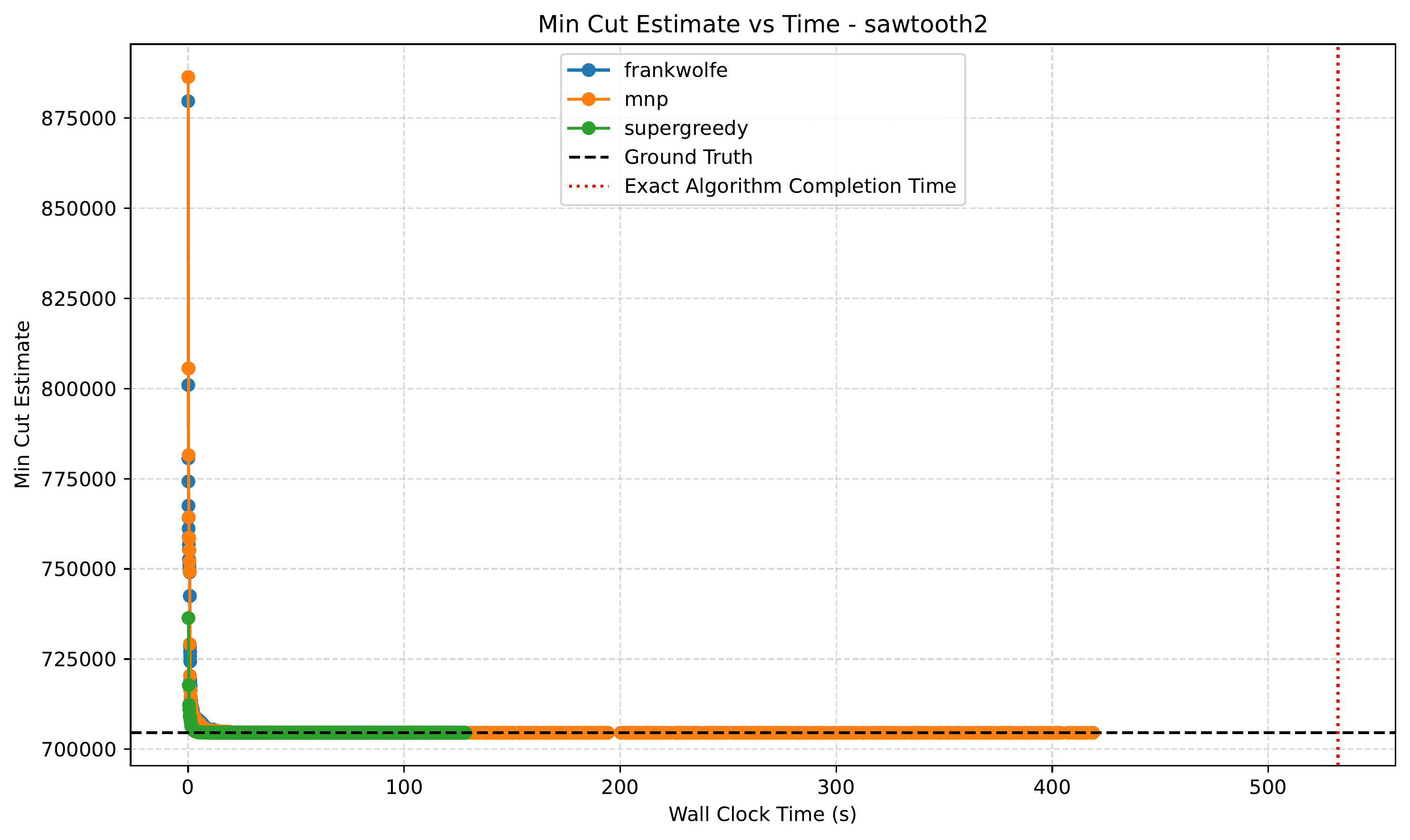}
        \caption{\texttt{sawtooth2}}
    \end{subfigure}
    \hfill
    \begin{subfigure}[t]{0.24\textwidth}
        \centering
        \includegraphics[width=\linewidth]{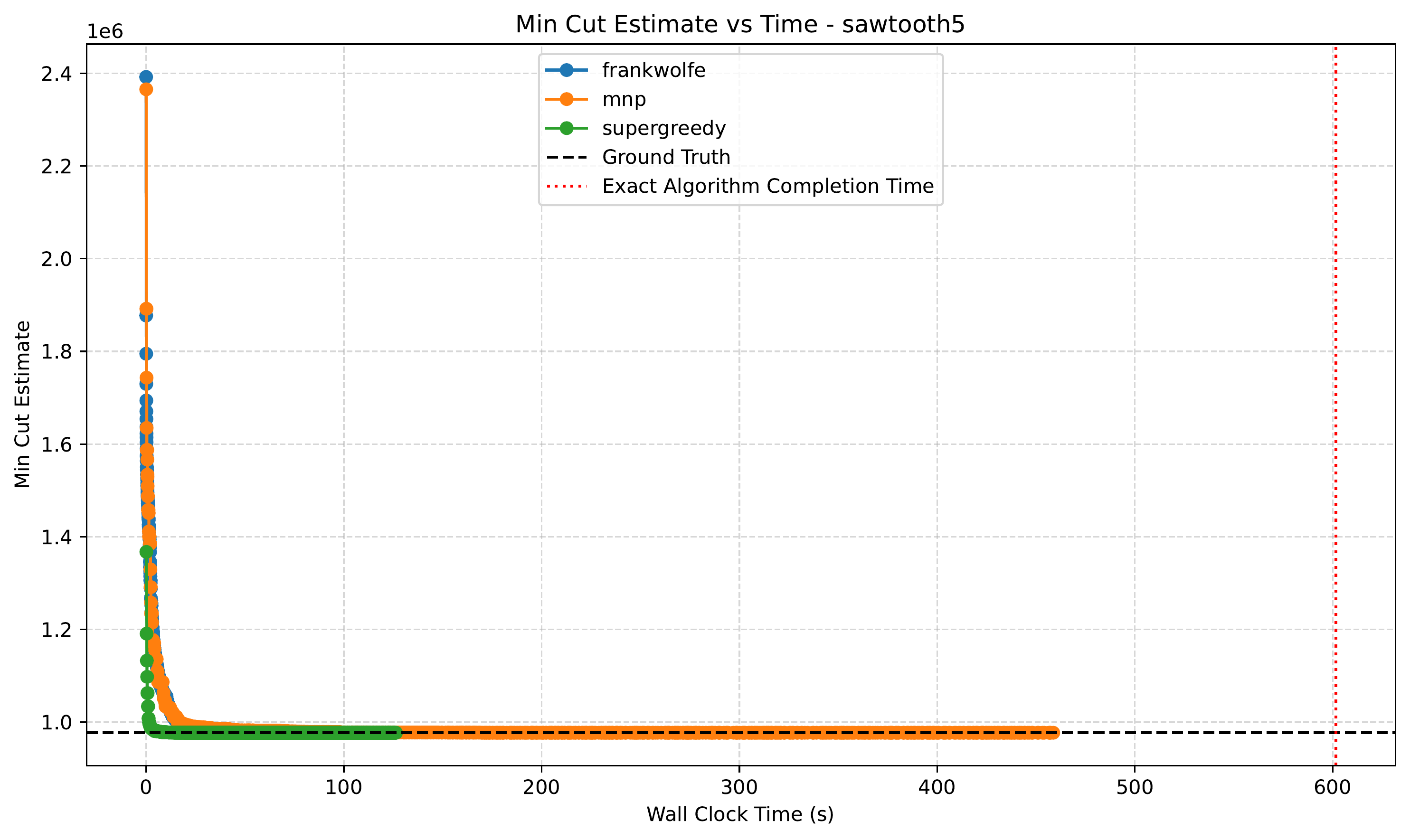}
        \caption{\texttt{sawtooth5}}
    \end{subfigure}

    \begin{subfigure}[t]{0.24\textwidth}
        \centering
        \includegraphics[width=\linewidth]{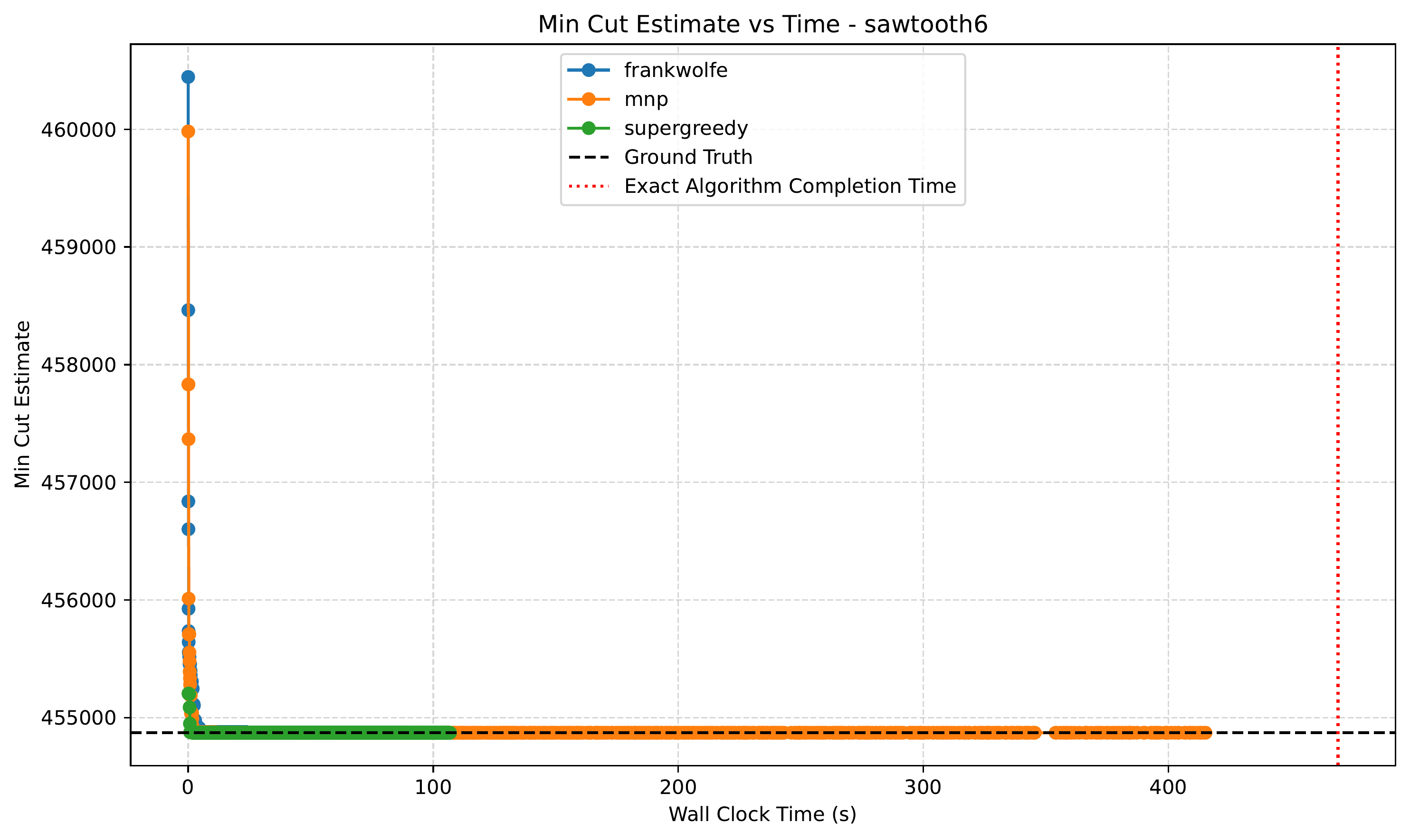}
        \caption{\texttt{sawtooth6}}
    \end{subfigure}
    \hfill
    \begin{subfigure}[t]{0.24\textwidth}
        \centering
        \includegraphics[width=\linewidth]{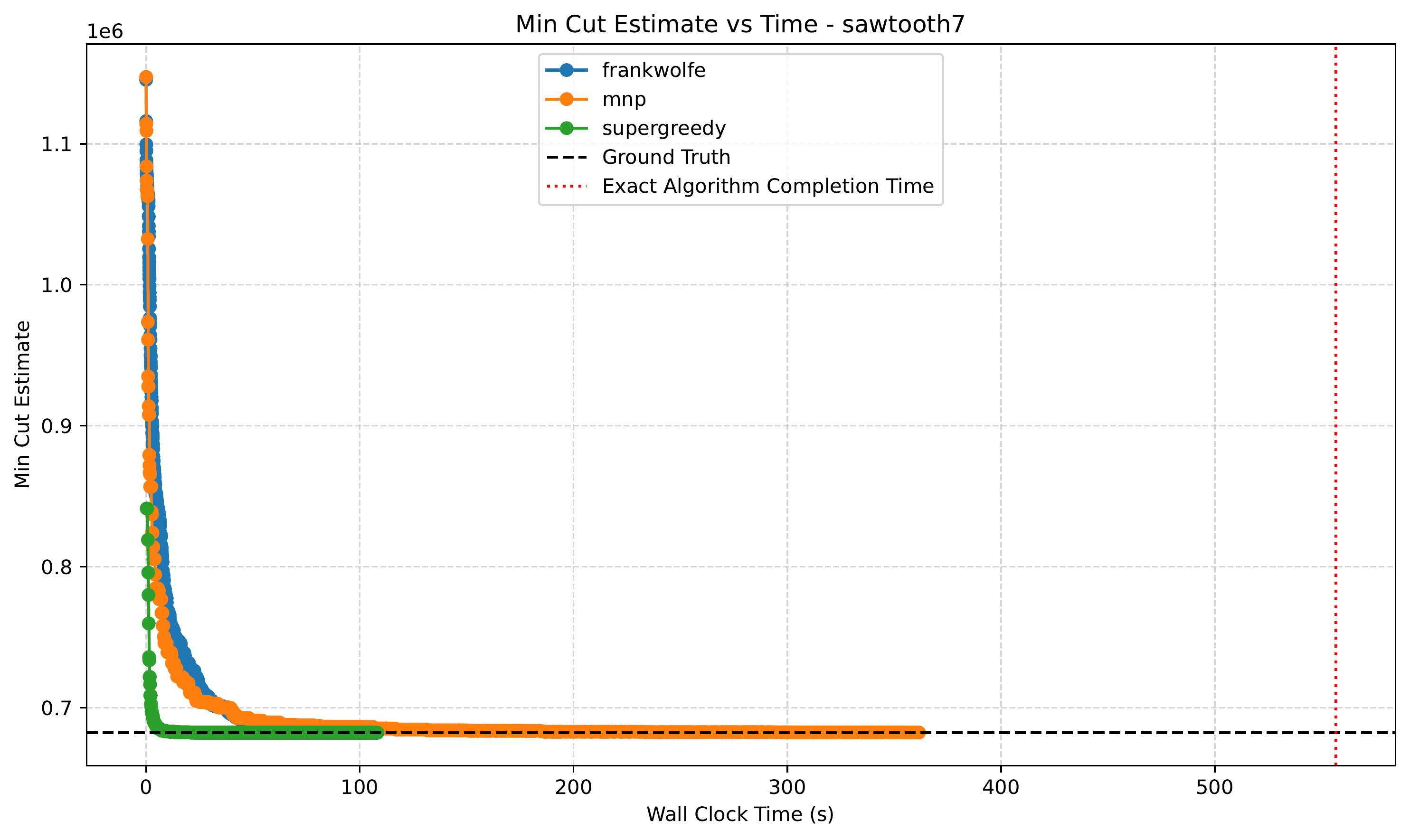}
        \caption{\texttt{sawtooth7}}
    \end{subfigure}
    \hfill
    \begin{subfigure}[t]{0.24\textwidth}
        \centering
        \includegraphics[width=\linewidth]{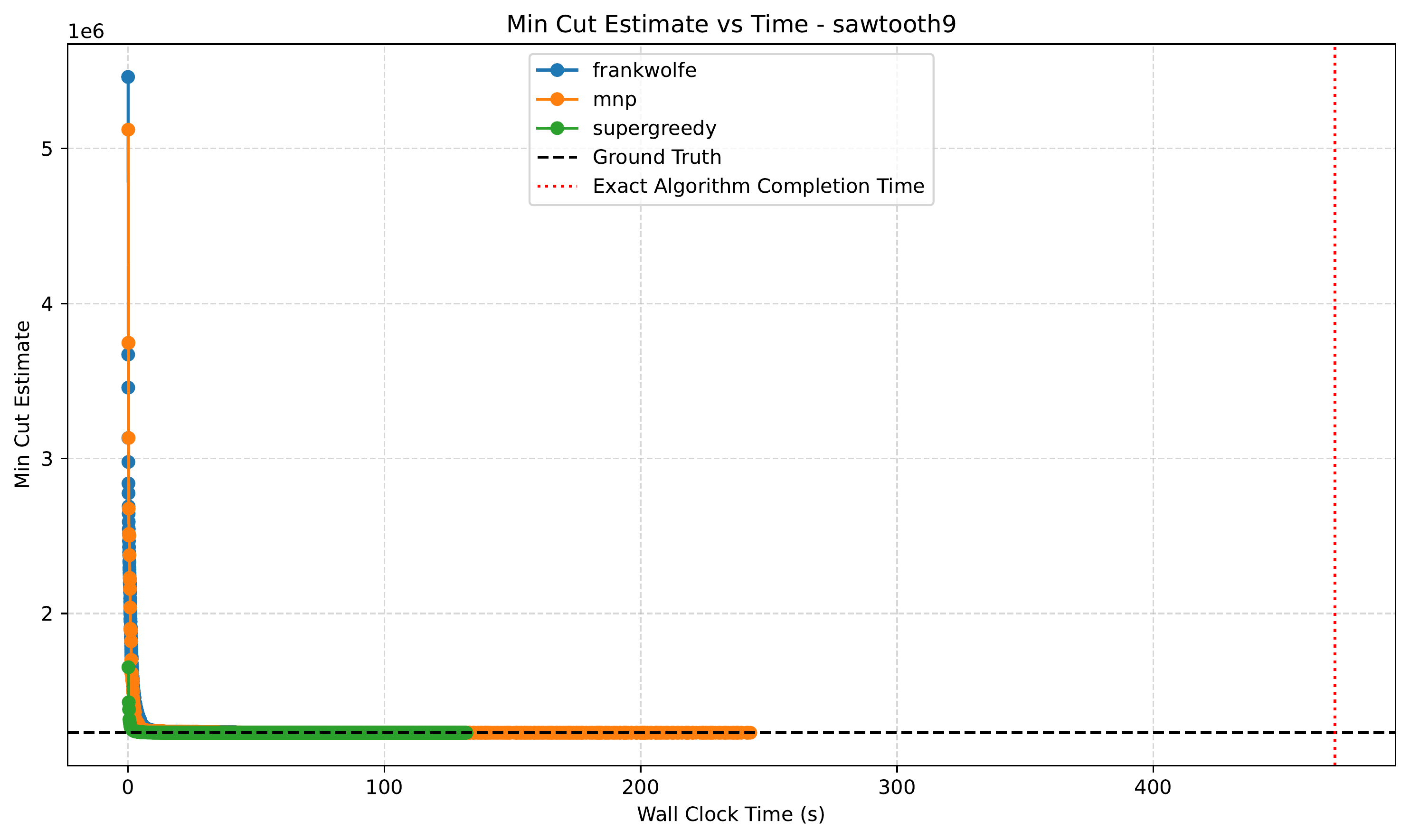}
        \caption{\texttt{sawtooth9}}
    \end{subfigure}
    \hfill
    \begin{subfigure}[t]{0.24\textwidth}
        \centering
        \includegraphics[width=\linewidth]{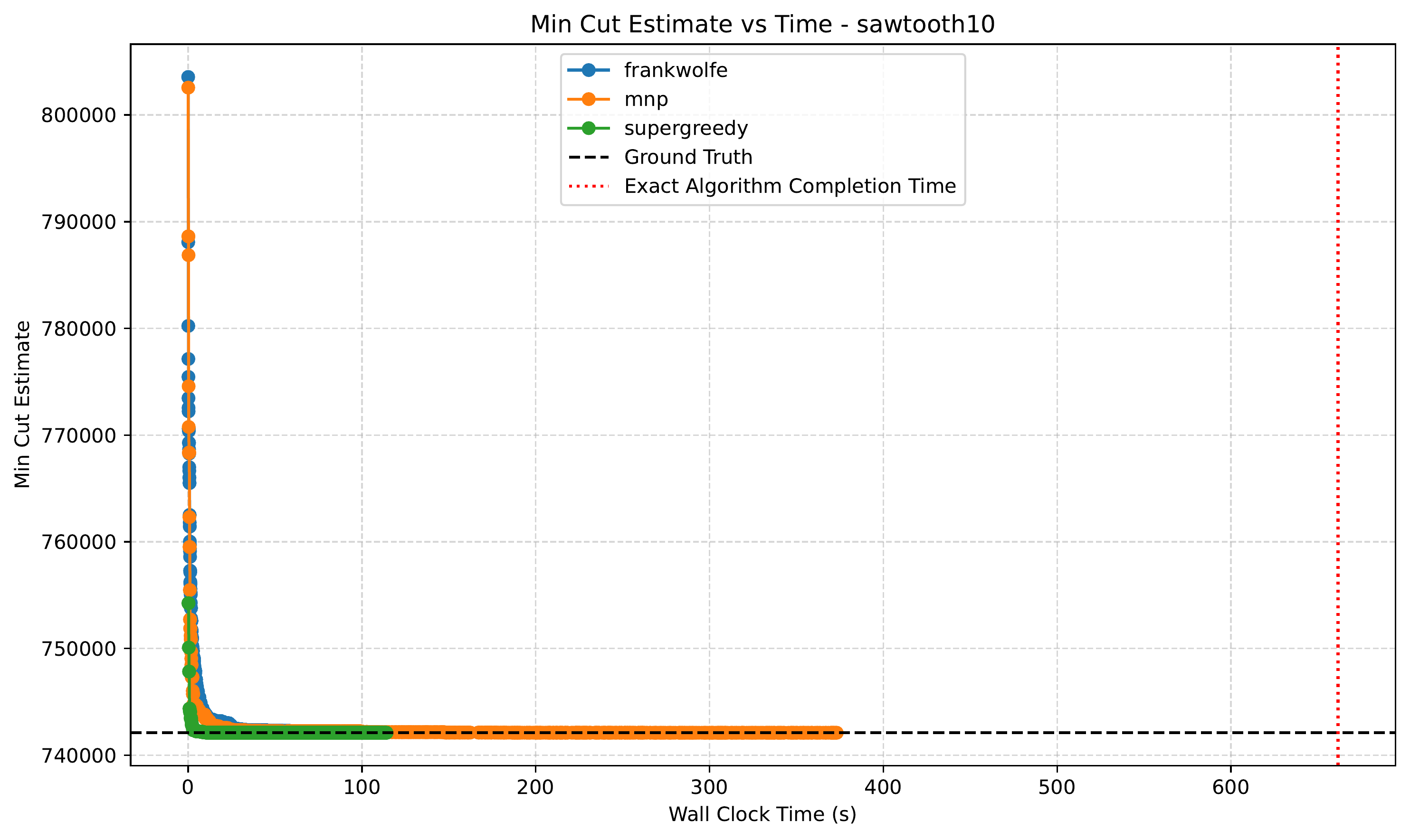}
        \caption{\texttt{sawtooth10}}
    \end{subfigure}

    \caption{Min-Cut vs Wall Clock Time (Part 1).}
    \label{fig:mincut-part1}
\end{figure}

\clearpage

\begin{figure}[p]
    \centering

    % Highlighted
    \begin{subfigure}[t]{0.48\textwidth}
        \centering
        \includegraphics[width=\linewidth]{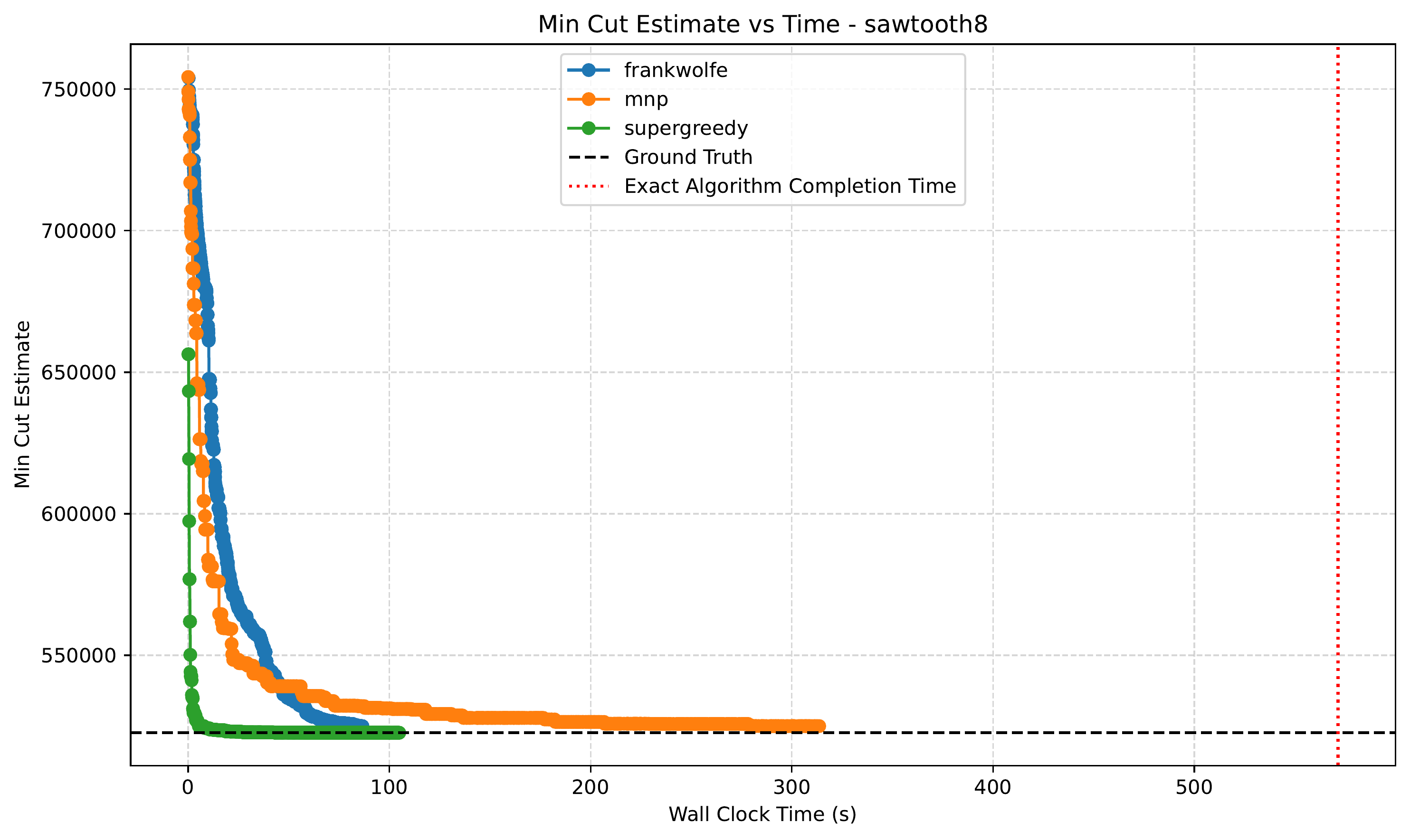}
        \caption{\texttt{sawtooth8}}
    \end{subfigure}
    \hfill
    \begin{subfigure}[t]{0.48\textwidth}
        \centering
        \includegraphics[width=\linewidth]{png_output/figures/mincut/mincut_sawtooth14.png}
        \caption{\texttt{sawtooth14}}
    \end{subfigure}

    \begin{subfigure}[t]{0.48\textwidth}
        \centering
        \includegraphics[width=\linewidth]{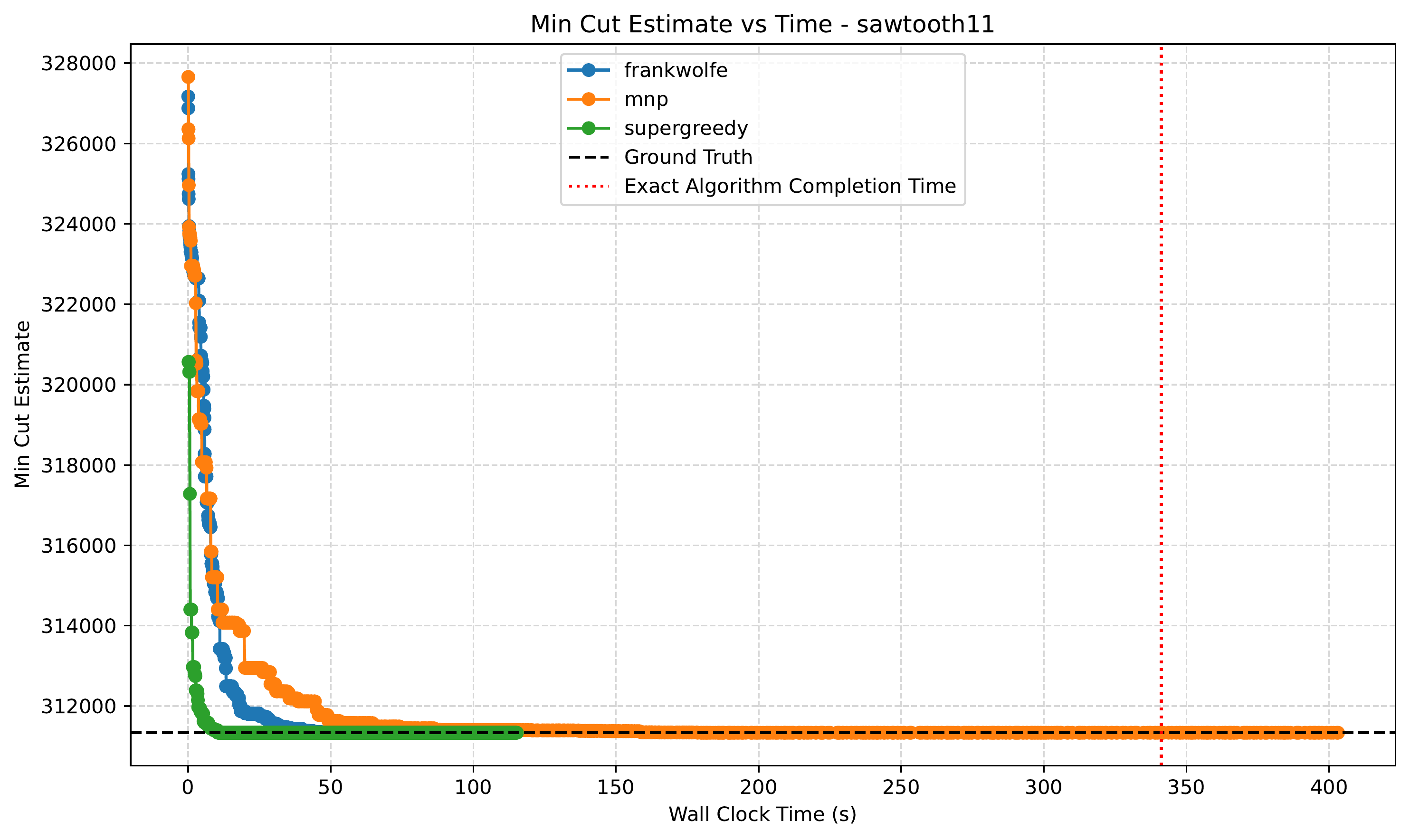}
        \caption{\texttt{sawtooth11}}
    \end{subfigure}
    \hfill
    \begin{subfigure}[t]{0.48\textwidth}
        \centering
        \includegraphics[width=\linewidth]{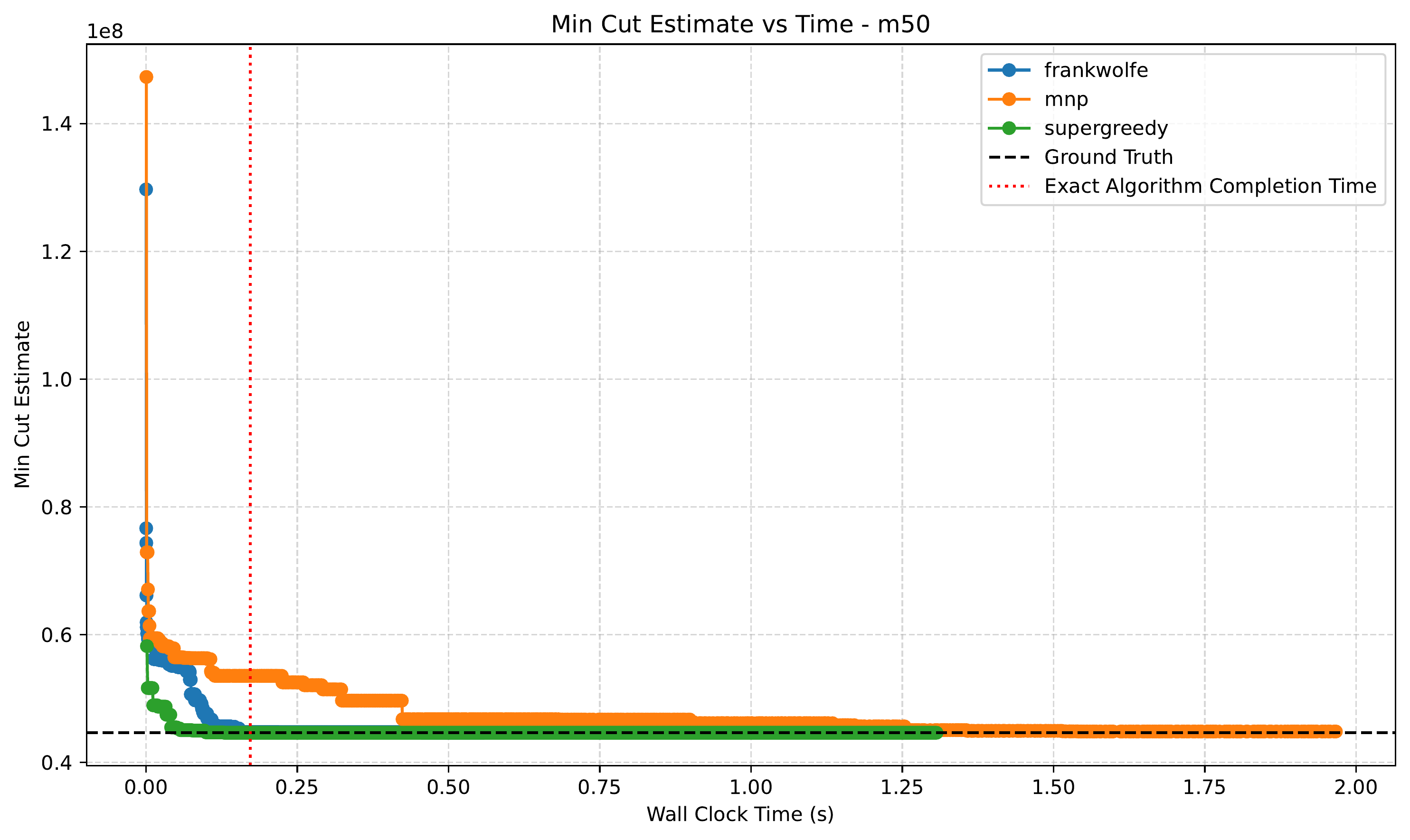}
        \caption{\texttt{m50}}
    \end{subfigure}

    % Smaller
    \begin{subfigure}[t]{0.24\textwidth}
        \centering
        \includegraphics[width=\linewidth]{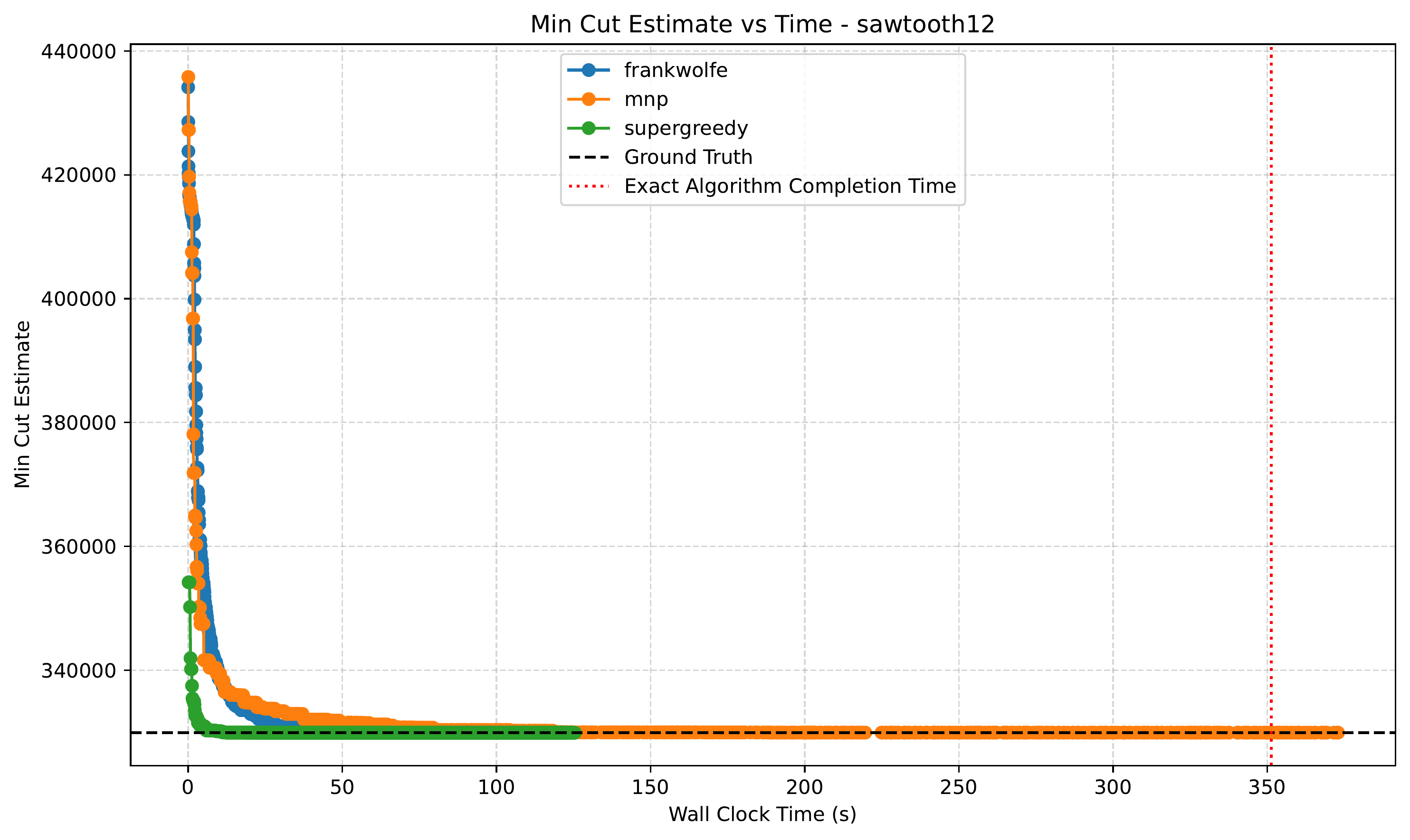}
        \caption{\texttt{sawtooth12}}
    \end{subfigure}
    \hfill
    \begin{subfigure}[t]{0.24\textwidth}
        \centering
        \includegraphics[width=\linewidth]{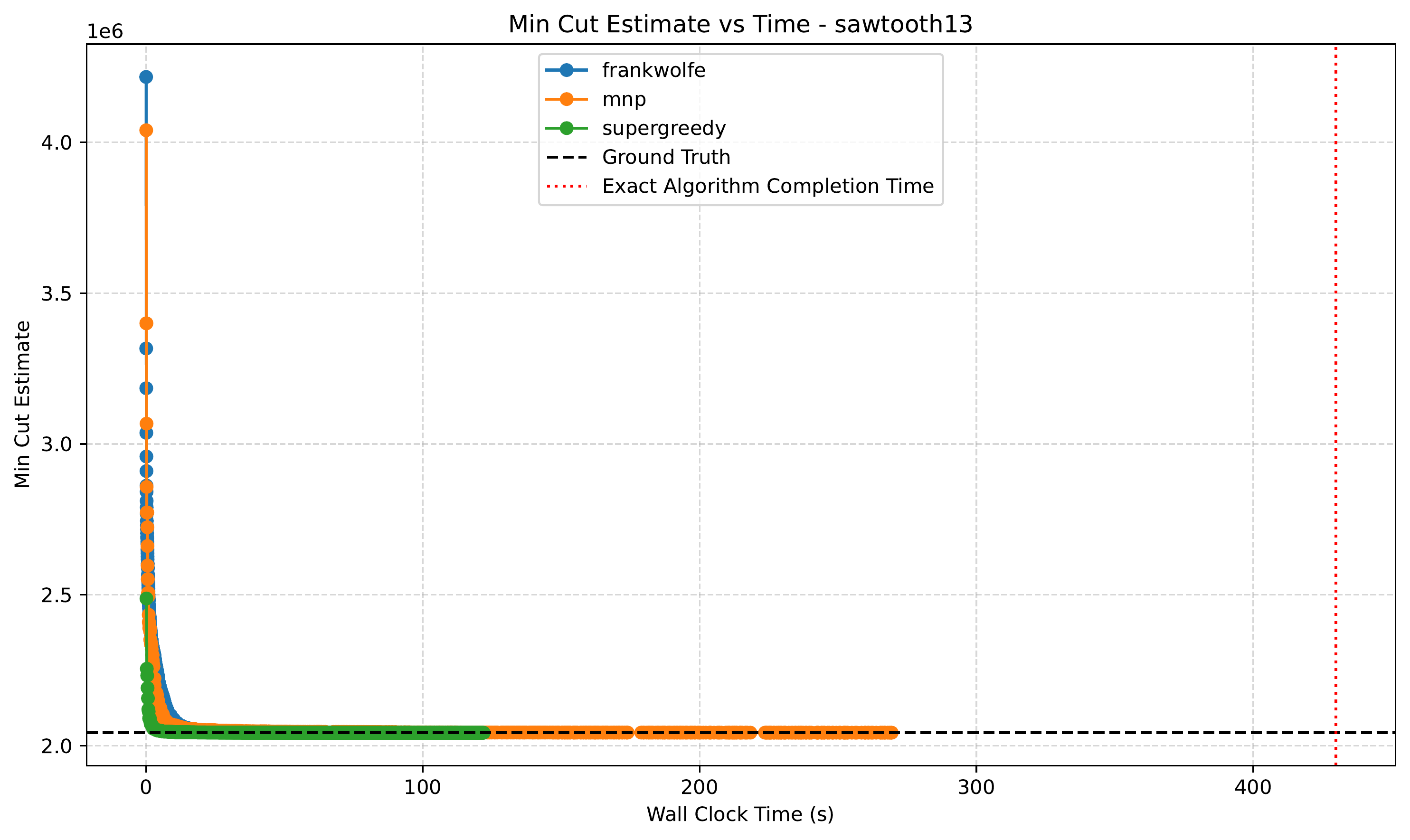}
        \caption{\texttt{sawtooth13}}
    \end{subfigure}
    \hfill
    \begin{subfigure}[t]{0.24\textwidth}
        \centering
        \includegraphics[width=\linewidth]{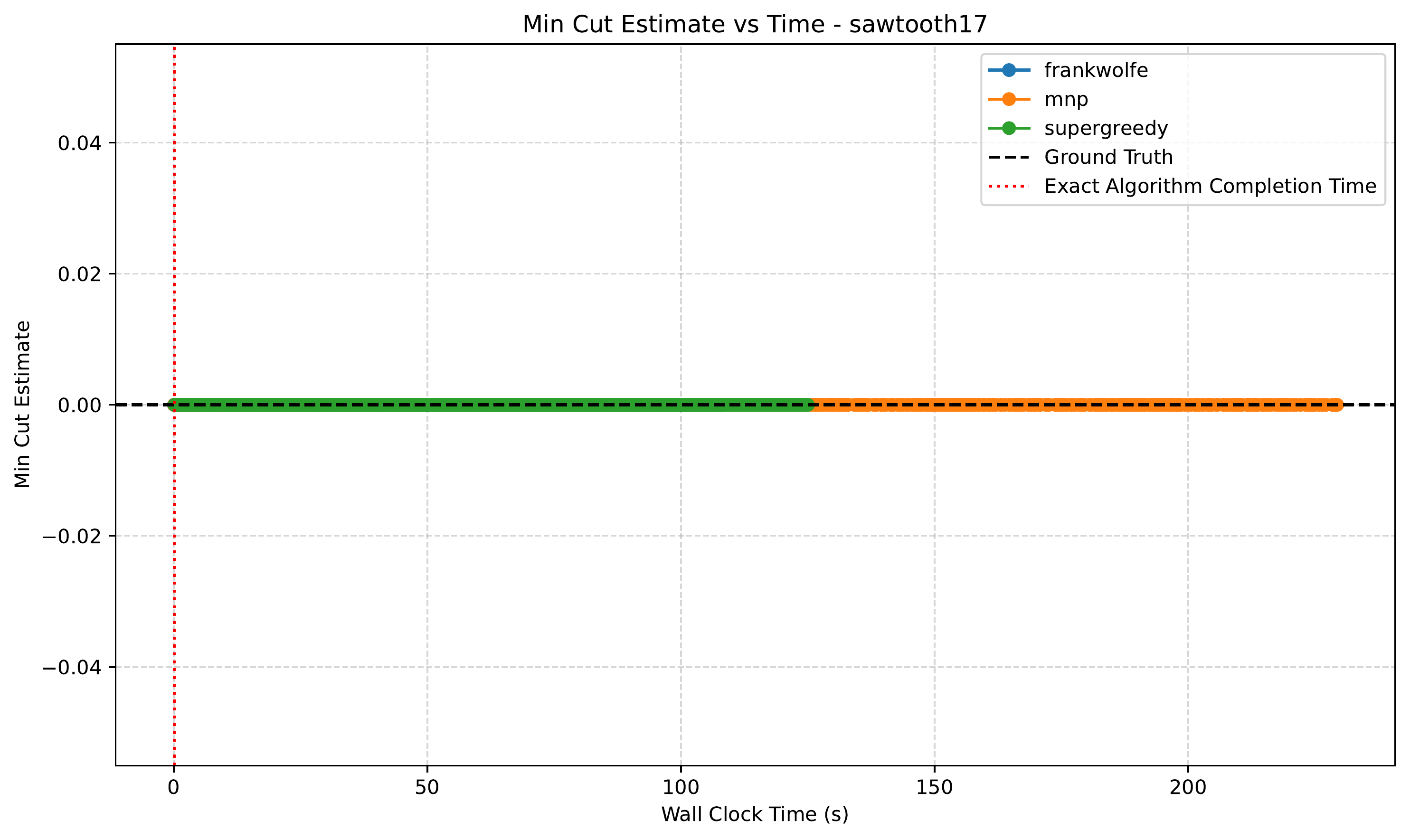}
        \caption{\texttt{sawtooth17}}
    \end{subfigure}
    \hfill
    \begin{subfigure}[t]{0.24\textwidth}
        \centering
        \includegraphics[width=\linewidth]{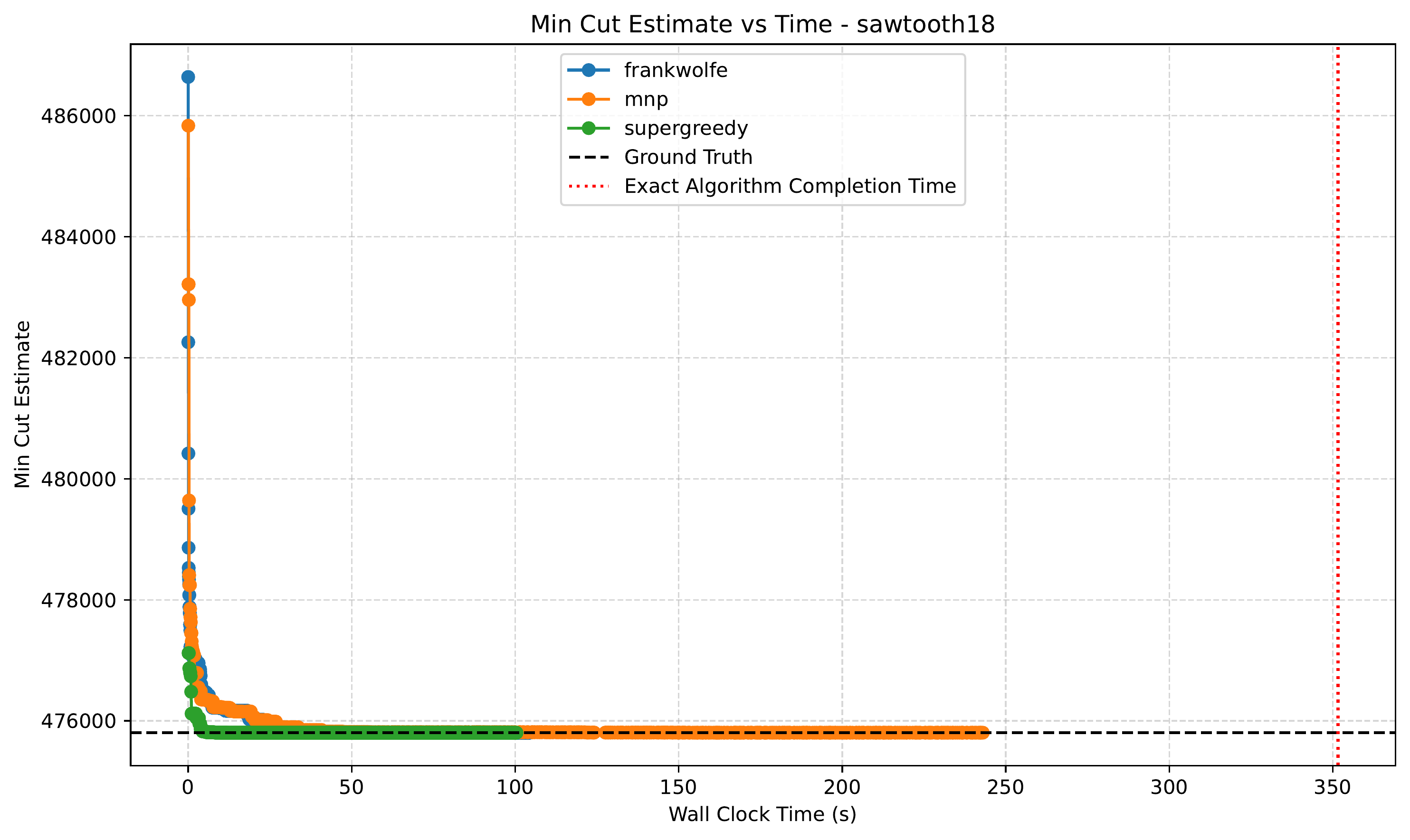}
        \caption{\texttt{sawtooth18}}
    \end{subfigure}

    \begin{subfigure}[t]{0.24\textwidth}
        \centering
        \includegraphics[width=\linewidth]{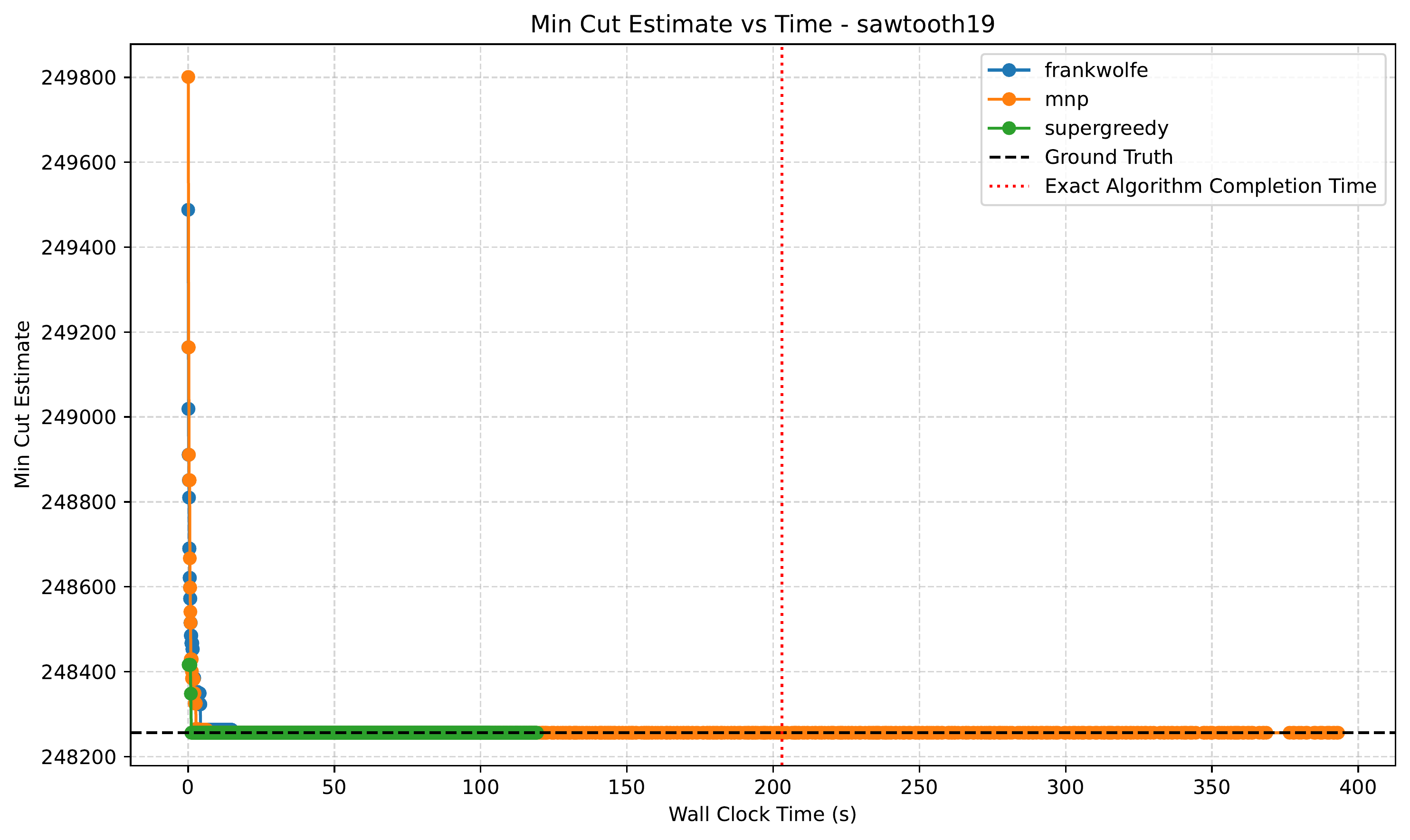}
        \caption{\texttt{sawtooth19}}
    \end{subfigure}
    \hfill
    \begin{subfigure}[t]{0.24\textwidth}
        \centering
        \includegraphics[width=\linewidth]{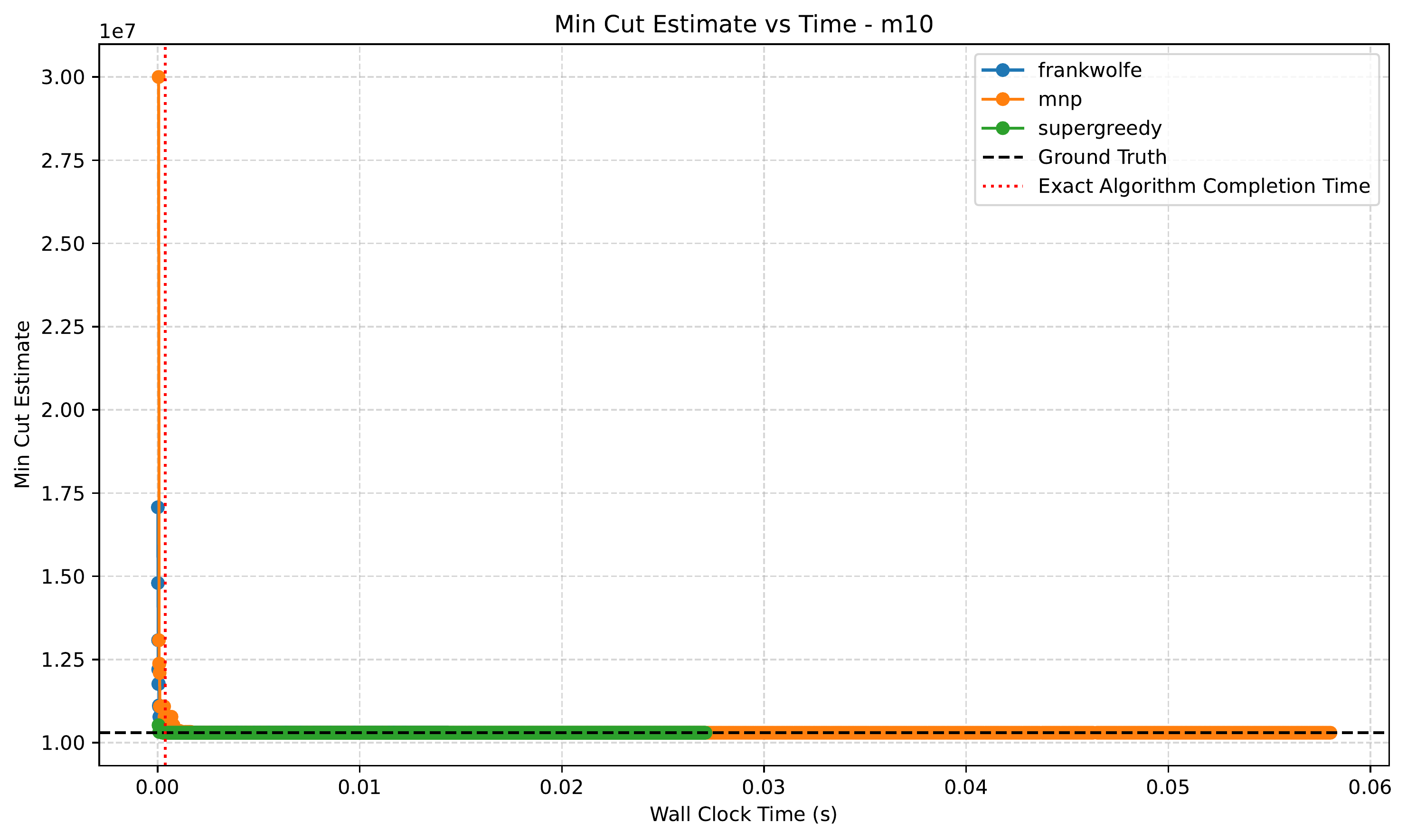}
        \caption{\texttt{m10}}
    \end{subfigure}
    \hfill
    \begin{subfigure}[t]{0.24\textwidth}
        \centering
        \includegraphics[width=\linewidth]{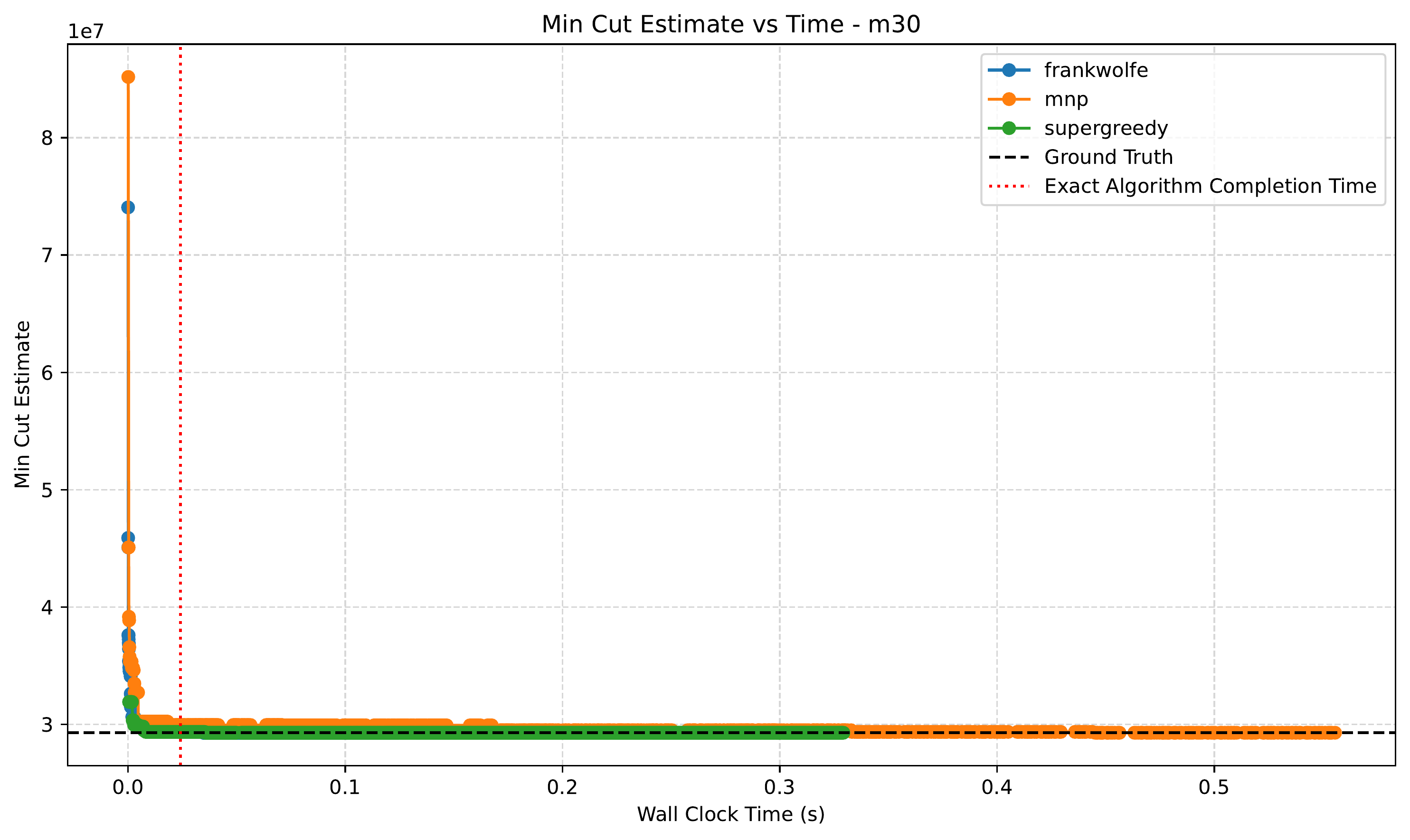}
        \caption{\texttt{m30}}
    \end{subfigure}
    \hfill
    \begin{subfigure}[t]{0.24\textwidth}
        \centering
        \includegraphics[width=\linewidth]{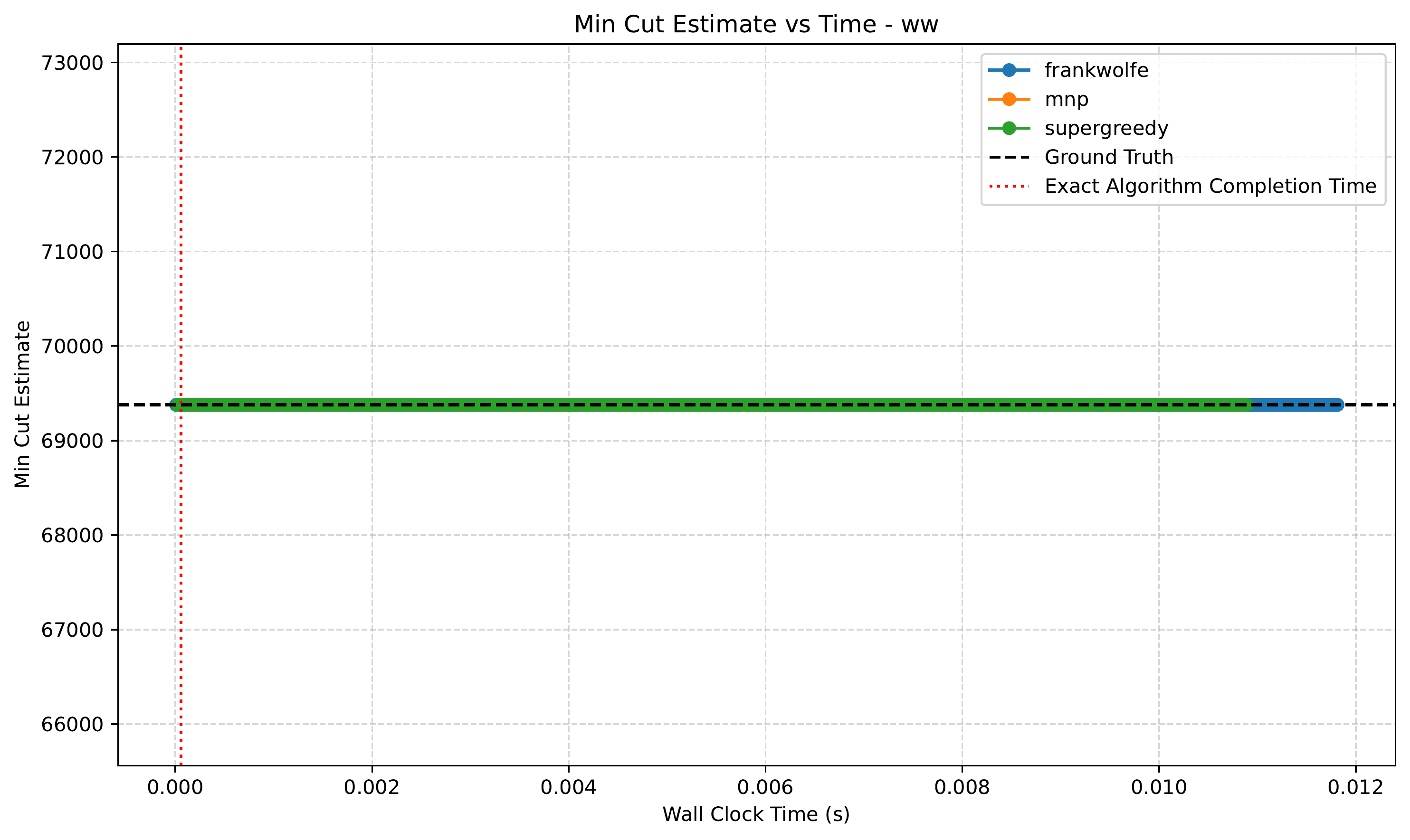}
        \caption{\texttt{ww}}
    \end{subfigure}

    \caption{Min-Cut vs Wall Clock Time (Part 2).}
    \label{fig:mincut-part2}
\end{figure}

\clearpage
\section{Densest Subgraph Problem Experiments}
\label{densestsubgraphexperiments}

\textbf{Problem Definition.} Given a graph $G=(V, E)$, find a vertex set $\emptyset \neq S\subseteq V$ maximizing $|E(S)|/|S|$.

\paragraph{Algorithms.} We evaluate \textit{seven} algorithms—Frank-Wolfe (\textsc{frankwolfe}, \cite{frankwolfe}), Fujishige-Wolfe Minimum Norm Point (\textsc{mnp}), and \textsc{SuperGreedy++} (\textsc{supergreedy}, \cite{flowless}), FISTA \cite{farouk-neurips}, RCDM \cite{alina}, the incremental algorithm \cite{Hochbaum2024flow}, and a new push-relabel flow-based algorithm based on the density improvement mechanism, but using push-relabel flow. 

\renewcommand{\arraystretch}{0.9}
\begin{table}[h]
\centering
\caption{Summary of Densest Subgraph Datasets.}
\small
\begin{tabular}{lrr}
\toprule
\textbf{Dataset} & \textbf{\# Nodes} & \textbf{\# Edges} \\
\midrule
close-cliques       &      3,230 &     95,400 \\
com-amazon          &    334,863 &    925,872 \\
com-dblp            &    317,080 &  1,049,866 \\
com-orkut           &  3,072,441 & 117,185,083 \\
disjoint\_union\_Ka &     35,900 & 16,158,800 \\
roadNet-PA          &  1,088,092 &  3,083,796 \\
roadNet-CA          &  1,965,206 &  5,533,214 \\
\bottomrule
\end{tabular}
\label{tab:dsg_datasets}
\end{table}

\paragraph{Datasets.} Table \ref{tab:dsg_datasets} summarizes the 7 datasets we used for the densest subgraph experiments, two of which are synthetic from \cite{farouk-neurips}. 

\textbf{Resources}: In total, for all algorithms and all datasets, we request 4 CPUs per node and 40G of memory per experiment. The only exception is for the dataset \texttt{com\_orkut}, where we requested 100G of memory. All algorithms were given a 30-minute time limit per dataset; otherwise, they were marked as having exceeded the time limit.

\paragraph{Discussion of Densest Subgraph Results.} Figure \ref{fig:dsg} presents the evolution of subgraph density over time for all datasets in the classic densest subgraph problem. Each plot includes the full runtime and a zoomed-in view of the first 20\% execution time to highlight early algorithm behavior. 

Some literature on densest subgraph algorithms has lacked consistent evaluations, with many works often reaching conflicting conclusions about which algorithms are the most efficient. We aim to contribute a more nuanced and systematic perspective on performance evaluation. We hope future work adopts similar care when assessing and comparing algorithms for the densest subgraph problem.

\textbf{Memory Comparison:} Flow-based algorithms consistently delivered strong performance, particularly when paired with the iterative density-improvement framework of Veldt et al. \cite{veldt} and Hochbaum \cite{Hochbaum2024flow}. However, a significant practical drawback is their memory usage: in our experiments, flow-based methods often consumed 2--3$\times$ more memory than other approaches. This overhead stems from the substantial bookkeeping in constructing and maintaining the flow network. 

Moreover, flow-based methods incurred noticeable delays during the initial stages due to the cost of constructing flow networks on large graphs. While Hochbaum employs the Pseudoflow algorithm, we observed comparable performance using standard Push-Relabel solvers within the same density-improvement framework, reaffirming that the primary source of speedup lies in the iterative framework itself rather than the specific max-flow implementation.

That said, flow-based methods are not universally applicable. In problems such as the generalized $p$-mean Densest Subgraph, no known linear flow formulations minimize objectives like $\lambda|S| - f(S)$ for arbitrary supermodular functions. In such cases, algorithms like \textsc{SuperGreedy++}, Frank-Wolfe, and the Fujishige-Wolfe Minimum Norm Point algorithm (\textsc{MNP}) remain the only viable choices, as they operate purely via oracle access and require no specialized structure.

\textbf{Convergence to a $(1-\varepsilon)$ approximation:} For fast convergence to near-optimal solutions, convex programming-based methods were clear standouts. Across nearly all datasets, RCDM consistently reached $(1-\varepsilon)$-approximate dense subgraphs in just a few iterations. It was closely followed by FISTA and \textsc{SuperGreedy++}, which also showed rapid convergence and high-quality intermediate solutions.

However, it is essential to note that RCDM and FISTA rely on a projection oracle from~\cite{farouk-neurips}, which may not be available for generalized objectives (e.g., $p$-mean Densest Subgraph). In contrast, \textsc{SuperGreedy++} maintains its effectiveness without requiring such oracles, making it more broadly applicable.

On the other hand, Frank-Wolfe and Fujishige-Wolfe's \textsc{MNP} Algorithm often required thousands of iterations to achieve a reasonable approximation and were generally outperformed by other methods. While their theoretical underpinnings are strong, their practical convergence to high-quality solutions is often prohibitively slow.

\textbf{Convergence to the optimal solution: } When the goal is \textbf{exact optimality}, flow-based methods paired with the density-improvement framework outperformed all other algorithms by a wide margin. Even though convex programming methods like RCDM, FISTA, and \textsc{SuperGreedy++} rapidly approached near-optimal values, they often stalled and required hundreds of iterations to fully converge to the exact solution \textit{on some datasets}. 

In contrast, the flow-based approaches often converged to the optimal solution orders of magnitude faster, especially on mid-sized and large graphs. Notably, this speed is not a consequence of the specific flow algorithm used but instead of the density-improvement strategy, which efficiently reduces the problem to a sequence of min-cut instances.

Thus, flow-based methods remain the most effective choice for applications where exact solutions are necessary, provided memory usage is manageable and a suitable flow reduction exists.

\textbf{General Takeaways: } Our experiments highlight a few key insights:

\begin{enumerate}
    \item Flow-based algorithms remain among the most potent tools for the exact densest subgraph discovery, especially when integrated with iterative frameworks, but they require significant memory and do not apply to generalized objectives.
    \item Convex optimization methods like RCDM and FISTA converge rapidly to high-quality approximate solutions, making them ideal for problems with projection oracles and when approximate results suffice.
    \item \textsc{SuperGreedy++} offers a strong balance of speed, quality, and generality, consistently performing well across both real and synthetic datasets and requiring no specialized problem structure.
    \item Finally, algorithm selection should be guided by the problem setting: whether exactness is required, whether a flow reduction exists and whether projection oracles are available.
\end{enumerate}

\clearpage

\begin{figure}[p]
    \centering
    \begin{subfigure}[t]{0.48\textwidth}
        \centering
        \includegraphics[width=\textwidth]{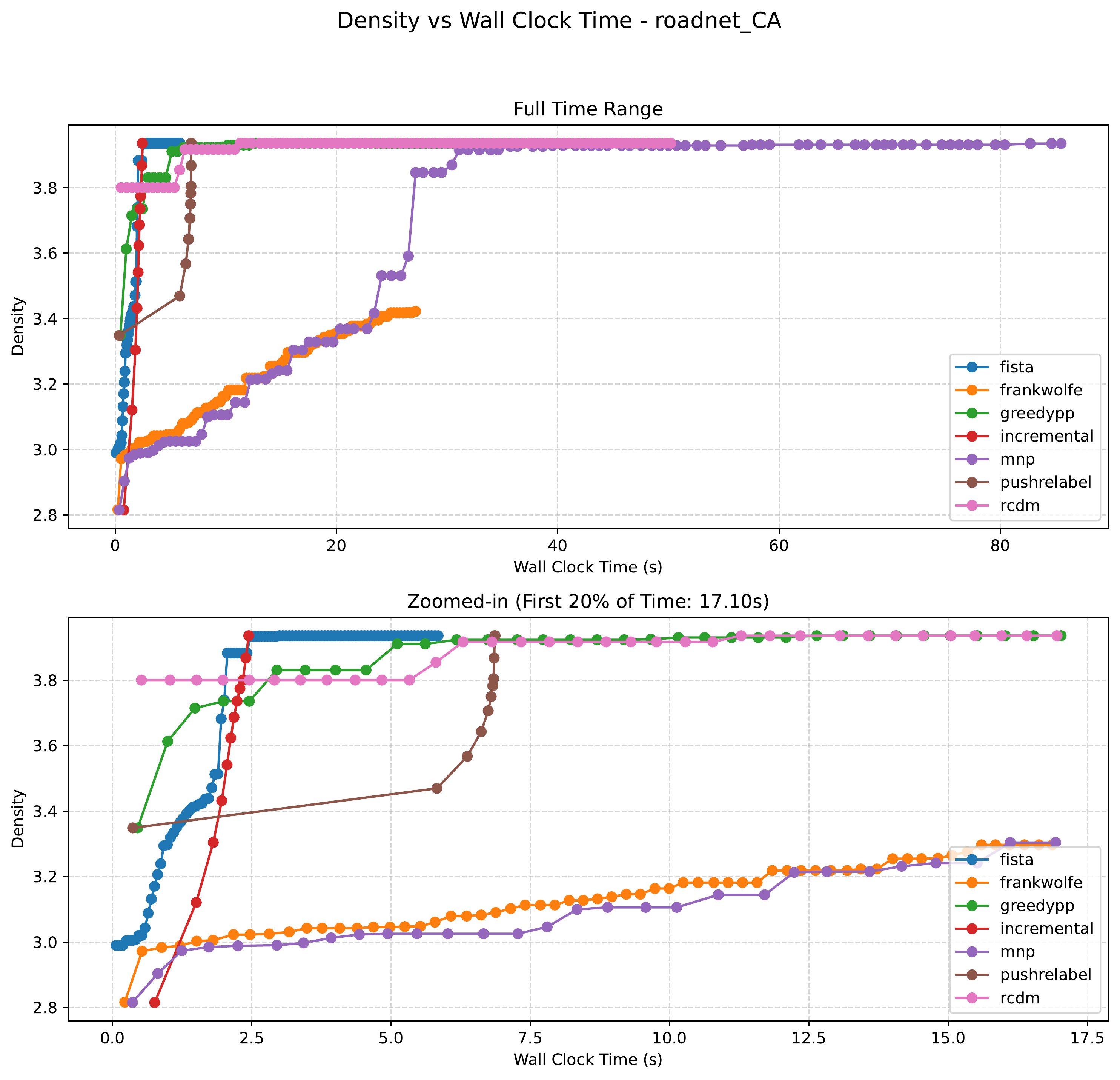}
        \caption{\texttt{roadnet\_CA}}
        \label{fig:dsg-roadnet-ca}
    \end{subfigure}
    \hfill
    \begin{subfigure}[t]{0.48\textwidth}
        \centering
        \includegraphics[width=\textwidth]{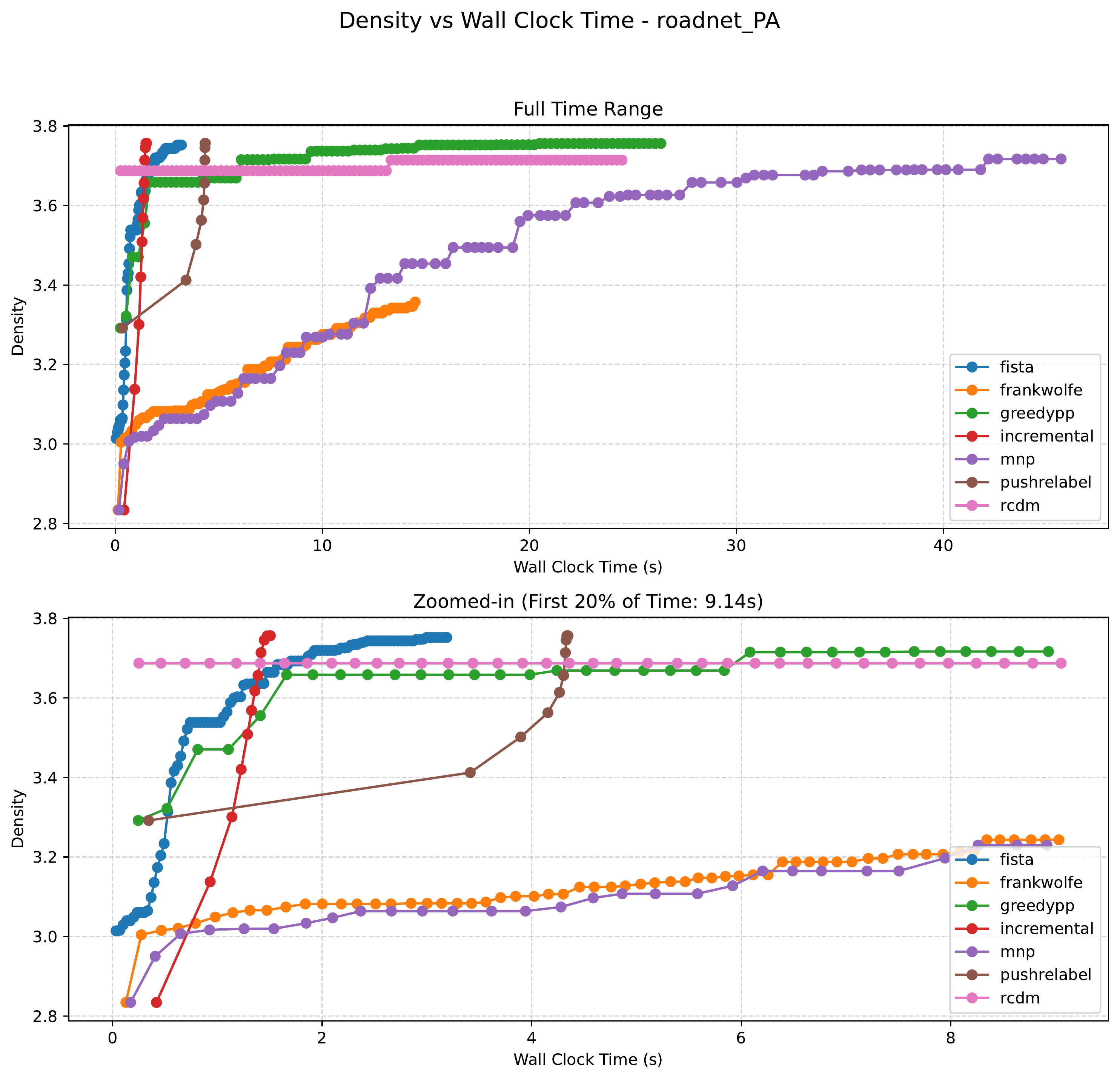}
        \caption{\texttt{roadnet\_PA}}
        \label{fig:dsg-roadnet-pa}
    \end{subfigure}

    \begin{subfigure}[t]{0.48\textwidth}
        \centering
        \includegraphics[width=\textwidth]{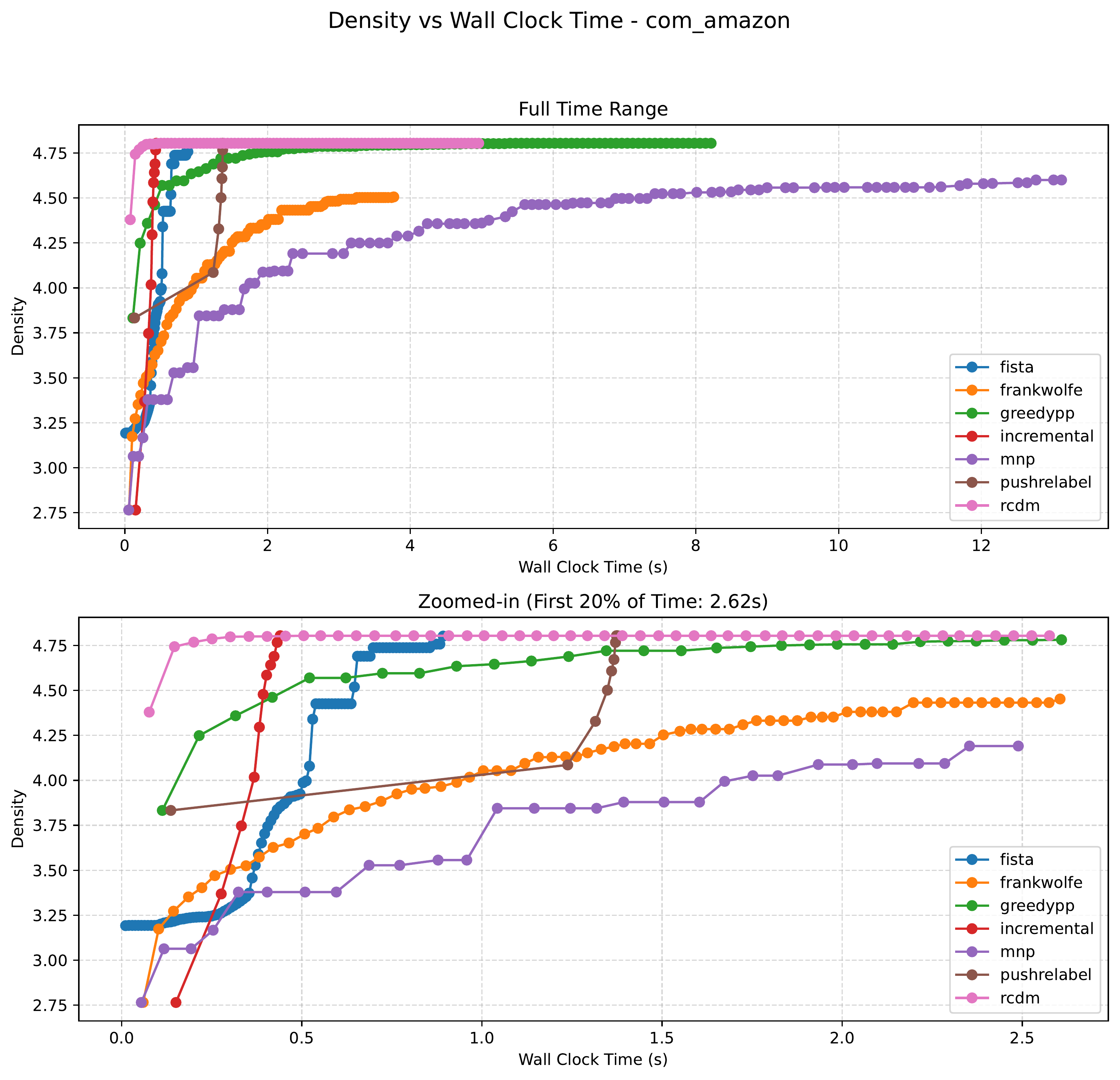}
        \caption{\texttt{com\_amazon}}
        \label{fig:dsg-com-amazon}
    \end{subfigure}
    \hfill
    \begin{subfigure}[t]{0.48\textwidth}
        \centering
        \includegraphics[width=\textwidth]{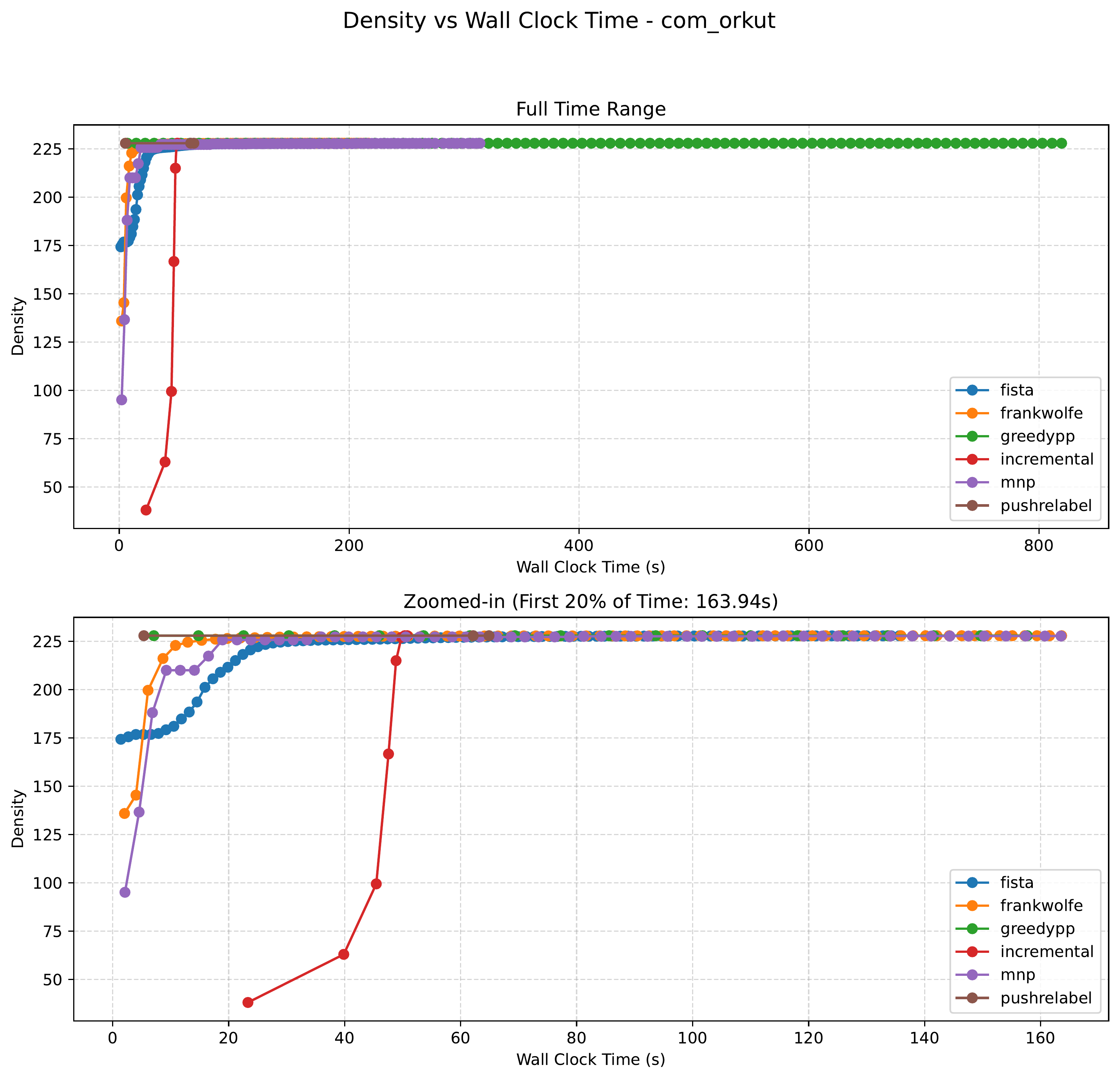}
        \caption{\texttt{com\_orkut}}
        \label{fig:dsg-com-orkut}
    \end{subfigure}

    \begin{subfigure}[t]{0.32\textwidth}
        \centering
        \includegraphics[width=\textwidth]{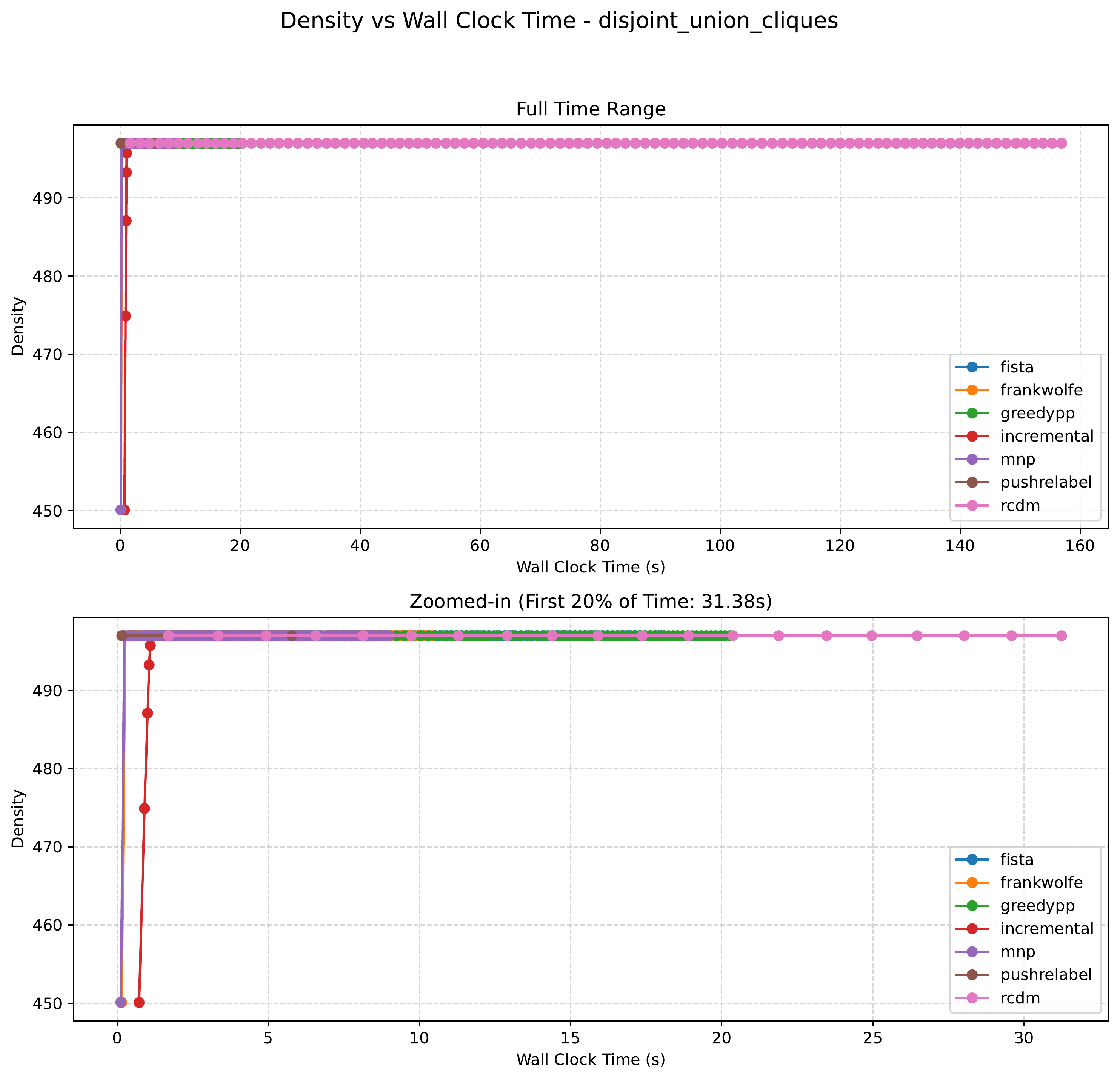}
        \caption{\texttt{disjoint\_union\_cliques}}
        \label{fig:dsg-disjoint-union}
    \end{subfigure}
    \hfill
    \begin{subfigure}[t]{0.32\textwidth}
        \centering
        \includegraphics[width=\textwidth]{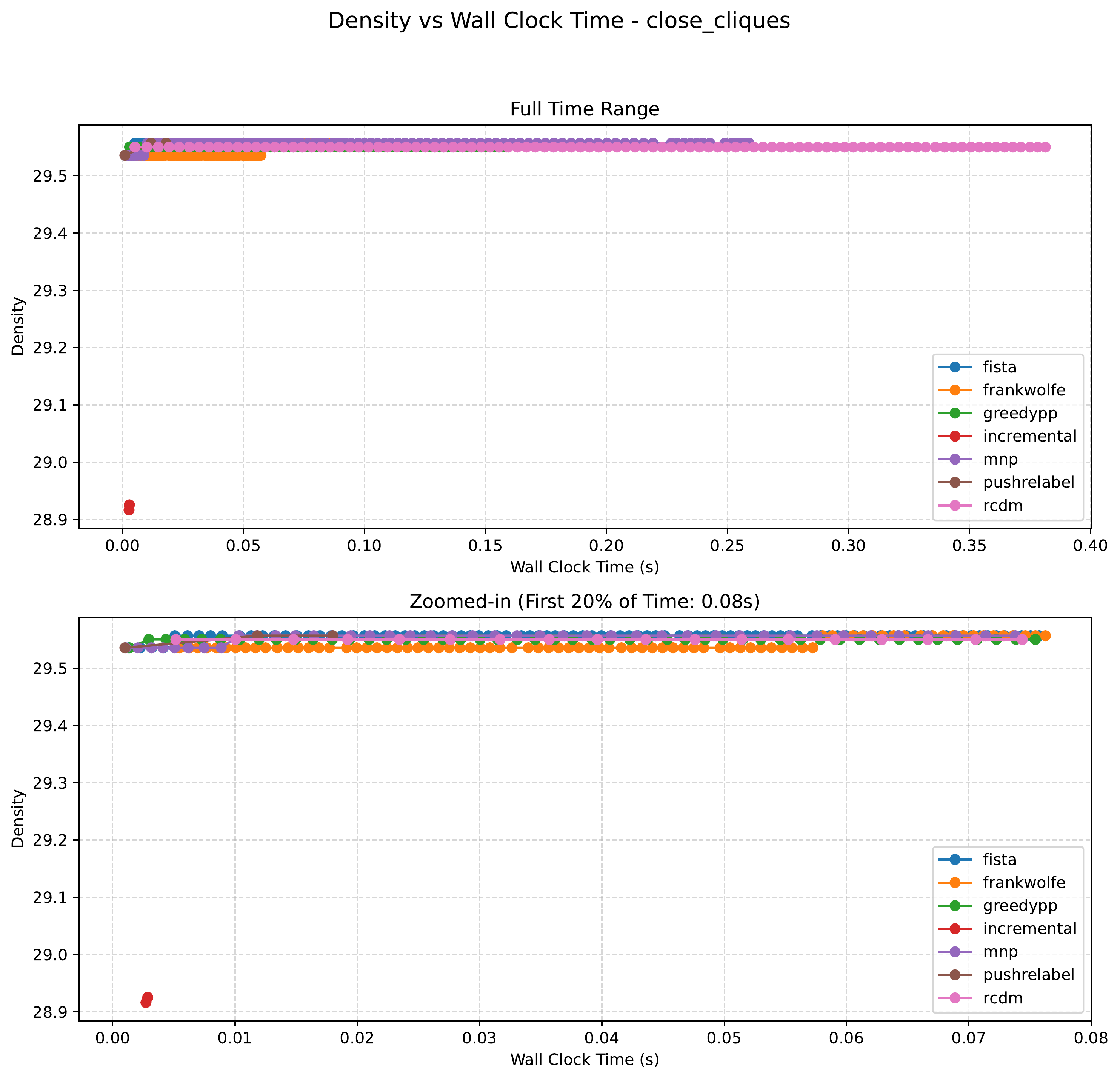}
        \caption{\texttt{close\_cliques}}
        \label{fig:dsg-close-cliques}
    \end{subfigure}
    \hfill
    \begin{subfigure}[t]{0.32\textwidth}
        \centering
        \includegraphics[width=\textwidth]{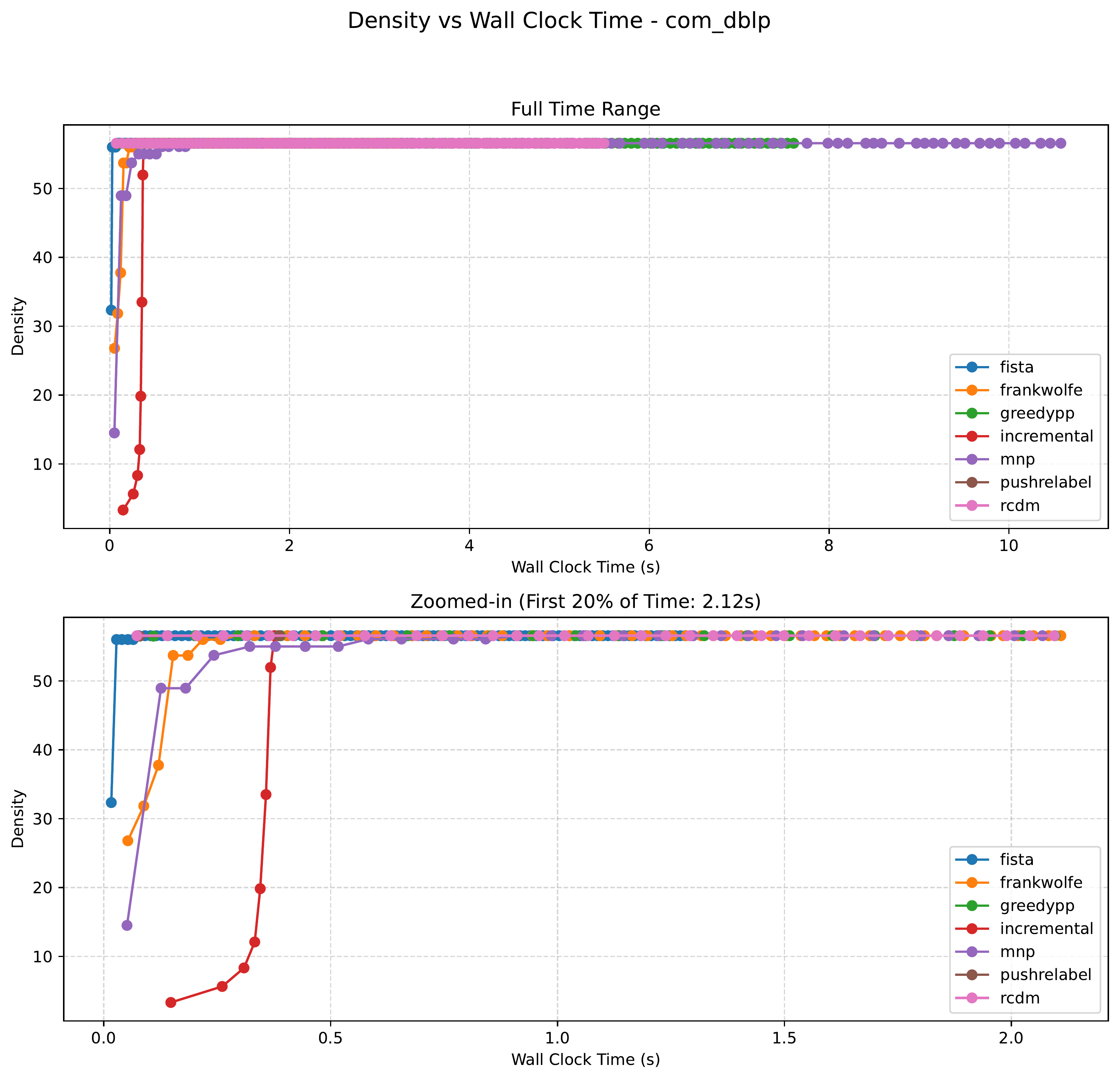}
        \caption{\texttt{com\_dblp}}
        \label{fig:dsg-com-dblp}
    \end{subfigure}
    \caption{DSG density over time}
    \label{fig:dsg}
\end{figure}

\clearpage
\section{Generalized $p$-mean Densest Subgraph Experiments}
\label{generalized-p-means}

\textbf{Problem Definition.} Given a graph $G = (V, E)$, find a vertex set $\emptyset \neq S \subseteq V$ maximizing the generalized density $\left( \frac{1}{|S|}\sum_{v \in S}\deg^p_S(v) \right)^{1/p}$.

\textbf{Algorithms.} We evaluate three algorithms—Frank-Wolfe (\textsc{frankwolfe}), Fujishige-Wolfe Minimum Norm Point (\textsc{mnp}), and \textsc{SuperGreedy++} (\textsc{supergreedy}). All algorithms are implemented in \texttt{C++}.

\textbf{Marginals.} We adopt the approach from \cite{veldt-dsg-p}, noting that maximizing the objective is equivalent to maximizing $f(S)/|S|$, where $f(S) = \sum_{v \in S}\deg^p_S(v)$ which is supermodular. Peeling of any vertex affects the denominator equivalently; hence, we act greedily with respect to the numerator. The marginal cost of peeling a vertex $v$ from $S$ is then given by 
\[
    f(v|S - v) = f(S) - f(S - v) = \deg^p_S(v) + \sum_{u \in \delta_S(v)} \left[ \deg^p_S(u) - (\deg_S(u) - 1)^p \right]
\]
where $\delta_S(v)$ denotes $v$'s neighbours in $S$.

\renewcommand{\arraystretch}{0.9} % Default is 1.0, lower values tighten it
\begin{table}[h]
\centering
\caption{Summary of Generalized $p$-mean Densest Subgraph Datasets.}
\small
\begin{tabular}{lrr}
\toprule
\textbf{Dataset} & \textbf{\# Nodes} & \textbf{\# Edges} \\
\midrule
close-cliques     & 3,230     & 95,400     \\
com-amazon        & 334,863   & 925,872    \\
com-dblp          & 317,080   & 1,049,866  \\
roadNet-PA        & 1,088,092 & 3,083,796  \\
roadNet-CA        & 1,965,206 & 5,533,214  \\
\bottomrule
\end{tabular}
\label{tab:dss_datasets}
\end{table}

\textbf{Datasets.} Table \ref{tab:dss_datasets} summarizes the 5 datasets we used for the generalized $p$-mean Densest Subgraph experiments. Note that these are a subset of the datasets we used for DSG. We consider four values for $p \in \{1.1, 1.25, 1.5, 1.75\}$.

\textbf{Resources.} We request 4 CPUs per node and 40G of memory per experiment across all algorithms and datasets. Each trial is given a 30-minute time limit; none exceeded this limit.

\textbf{Discussion of generalized $p$-mean Densest Subgraph Results.} Results for all datasets are presented in Figures~\ref{dss-close-cliques} to~\ref{dss-roadnet-ca}. We plot the best density found against wall-clock time, and, as in previous sections, each plot includes both the full runtime and a zoomed-in view covering the first 20\% of execution time. Overall, we find that \textsc{SuperGreedy++} generally achieves the highest density within the allotted time and tends to converge more quickly than the other algorithms.

On the \texttt{close\_cliques} dataset, however, \textsc{FW} and \textsc{MNP} demonstrate stronger early-stage performance, reaching the maximum density faster than \textsc{SuperGreedy++}, though the latter eventually converges to the same value. By contrast, \textsc{SuperGreedy++} often converges within the first few iterations on real-world instances, while the other two algorithms trail behind. Relative performance between \textsc{FW} and \textsc{MNP} varies, with each occasionally outperforming the other.

We also observe that increasing the value of $p$ in the generalized objective improves the performance of both \textsc{FW} and \textsc{MNP}, whereas \textsc{SuperGreedy++} remains largely unaffected. In fact, for all values of $p$ considered, both \textsc{FW} and \textsc{MNP} show improved performance compared to the classical Densest Subgraph case ($p = 1$).

\begin{figure}[H]
    \centering
    \begin{subfigure}[t]{0.24\textwidth}
        \centering
        \includegraphics[width=\textwidth]{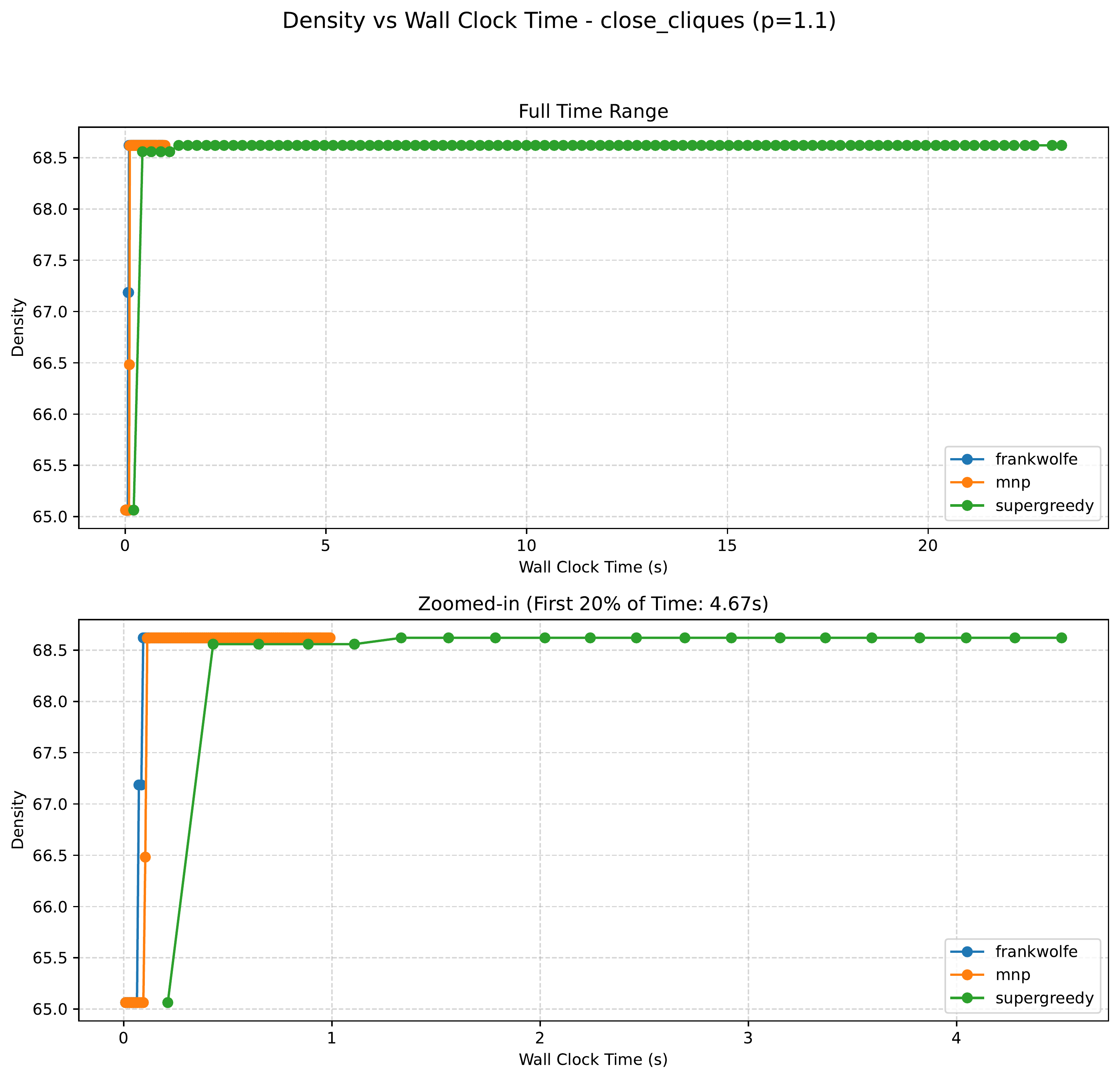}
        \caption{$(p{=}1.1)$}
        \label{fig:dss-close-cliques-1.1}
    \end{subfigure}
    \hfill
    \begin{subfigure}[t]{0.24\textwidth}
        \centering
        \includegraphics[width=\textwidth]{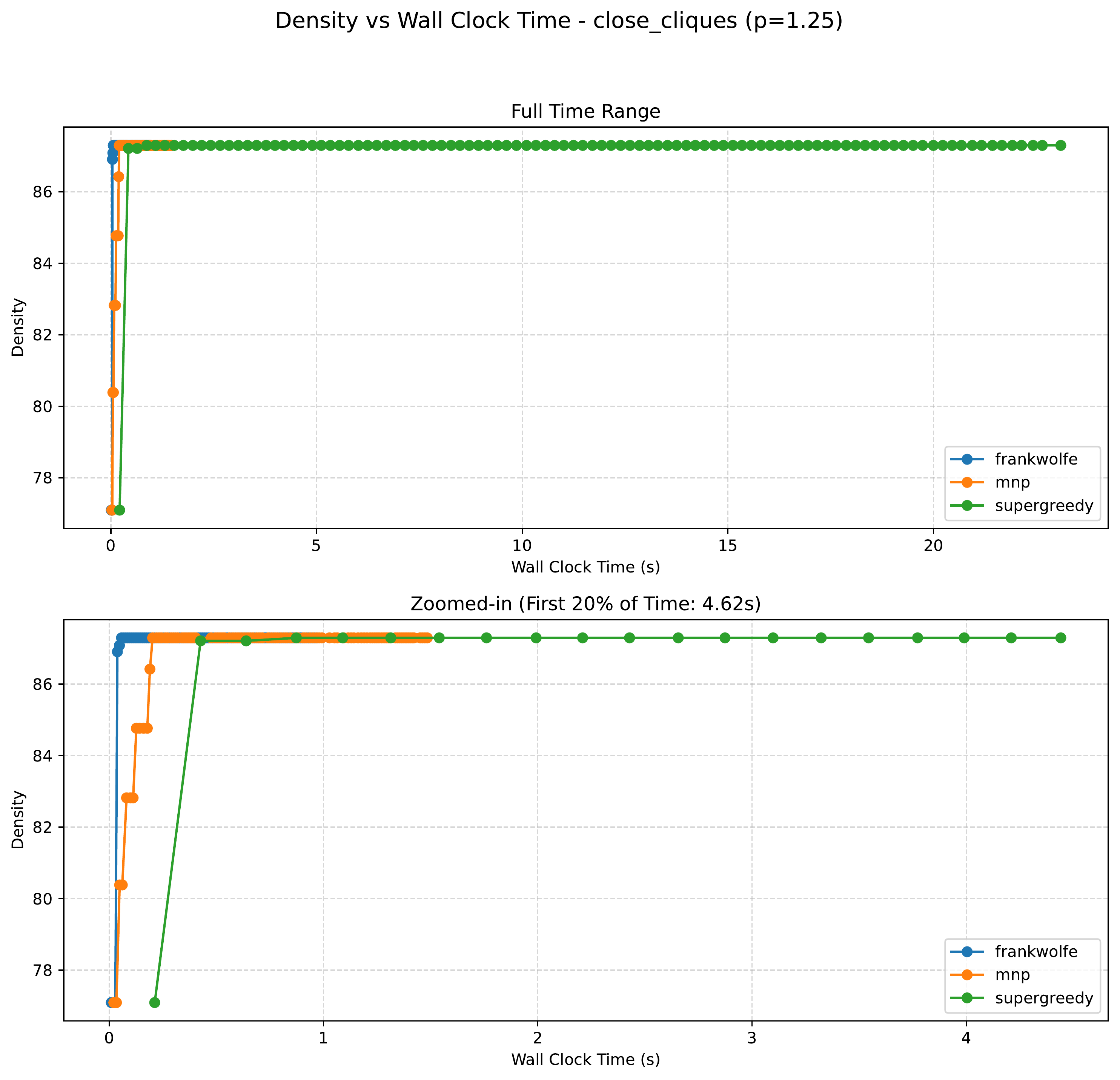}
        \caption{$(p{=}1.25)$}
        \label{fig:dss-close-cliques-1.25}
    \end{subfigure}
    \hfill
    \begin{subfigure}[t]{0.24\textwidth}
        \centering
        \includegraphics[width=\textwidth]{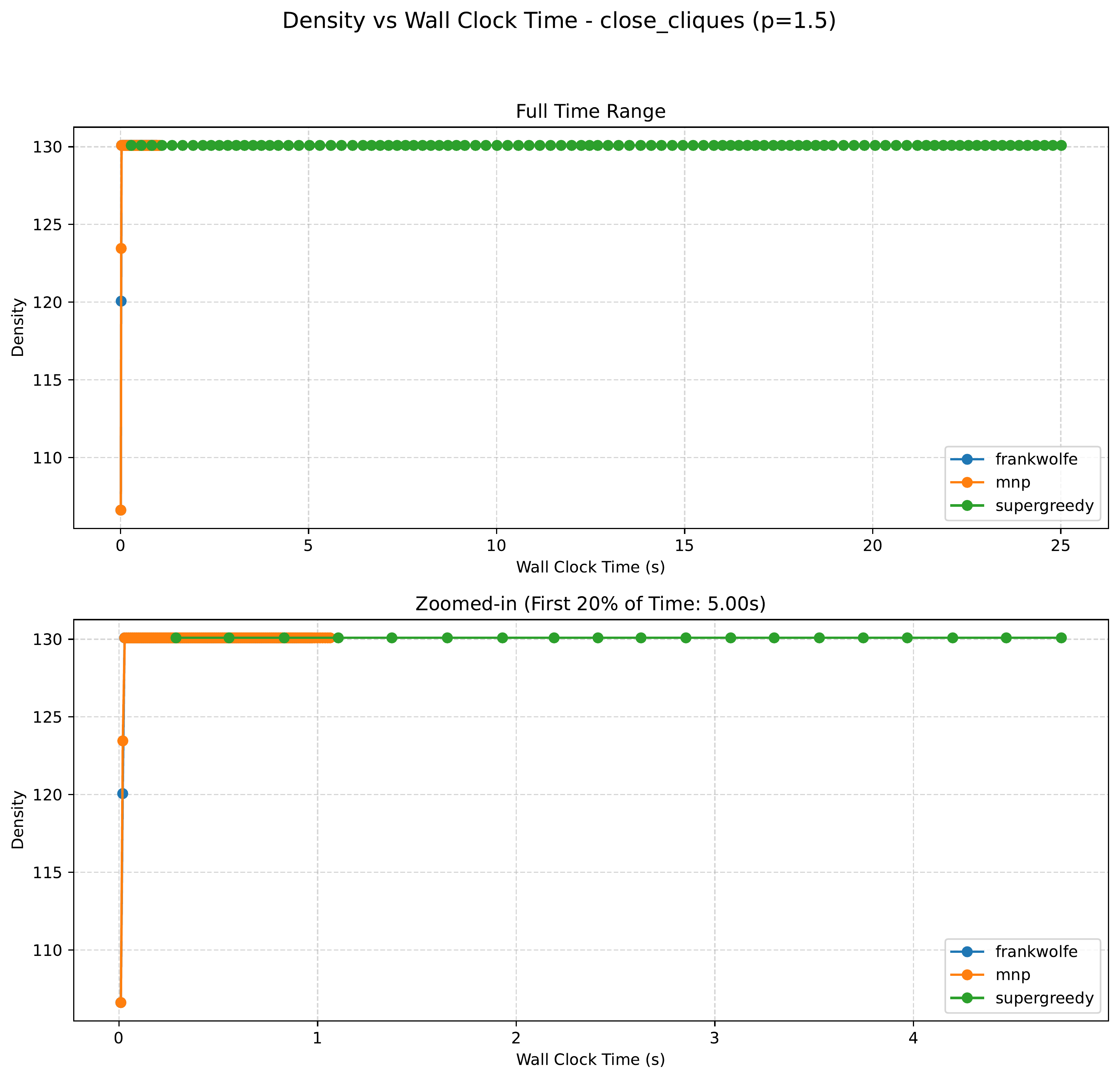}
        \caption{$(p{=}1.5)$}
        \label{fig:dss-close-cliques-1.5}
    \end{subfigure}
    \hfill
    \begin{subfigure}[t]{0.24\textwidth}
        \centering
        \includegraphics[width=\textwidth]{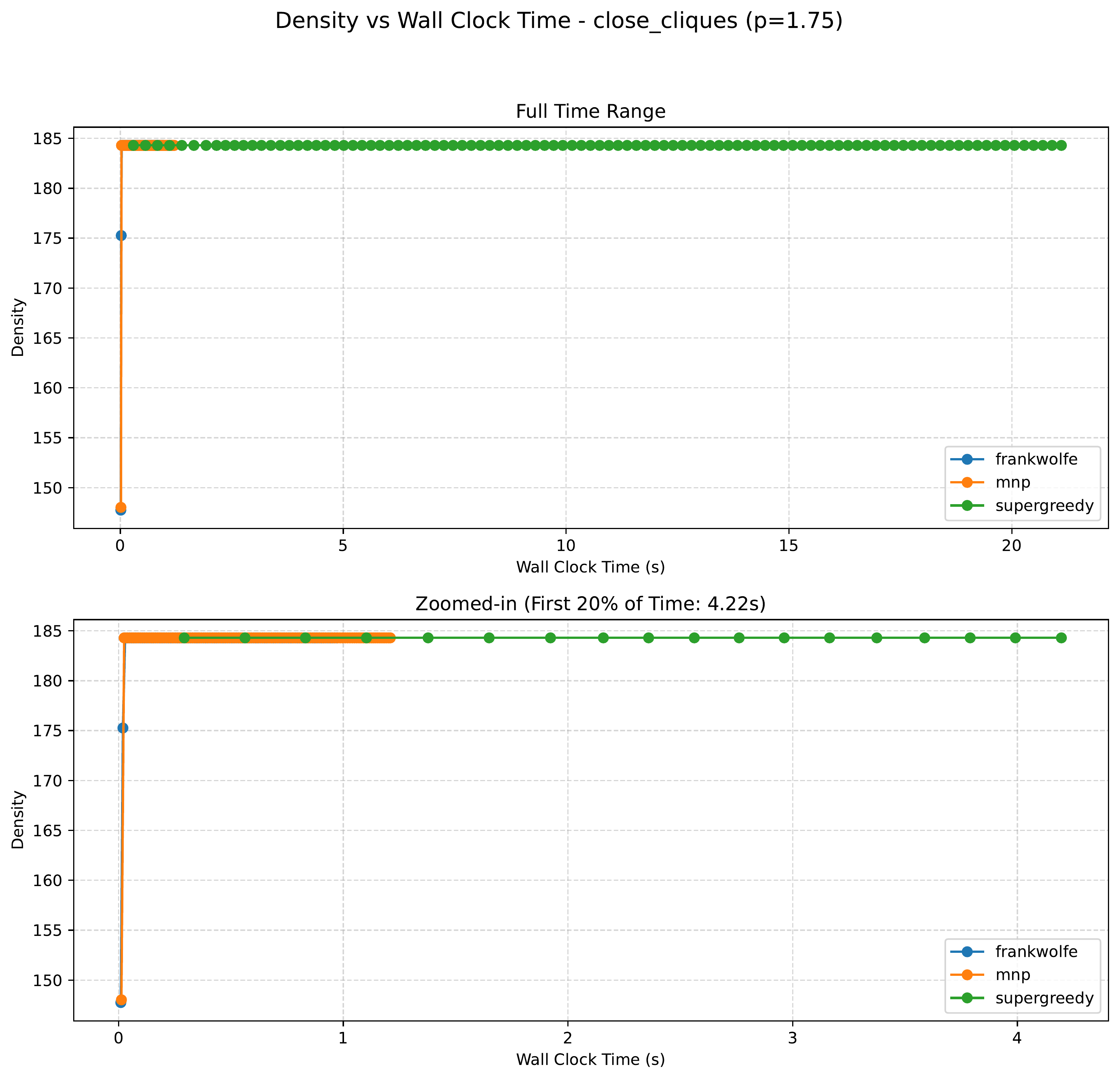}
        \caption{$(p{=}1.75)$}
        \label{fig:dss-close-cliques-1.75}
    \end{subfigure}
    \caption{DSS density over time - \texttt{close\_cliques}}
    \label{dss-close-cliques}
\end{figure}

\begin{figure}[H]
    \centering
    \begin{subfigure}[t]{0.24\textwidth}
        \centering
        \includegraphics[width=\textwidth]{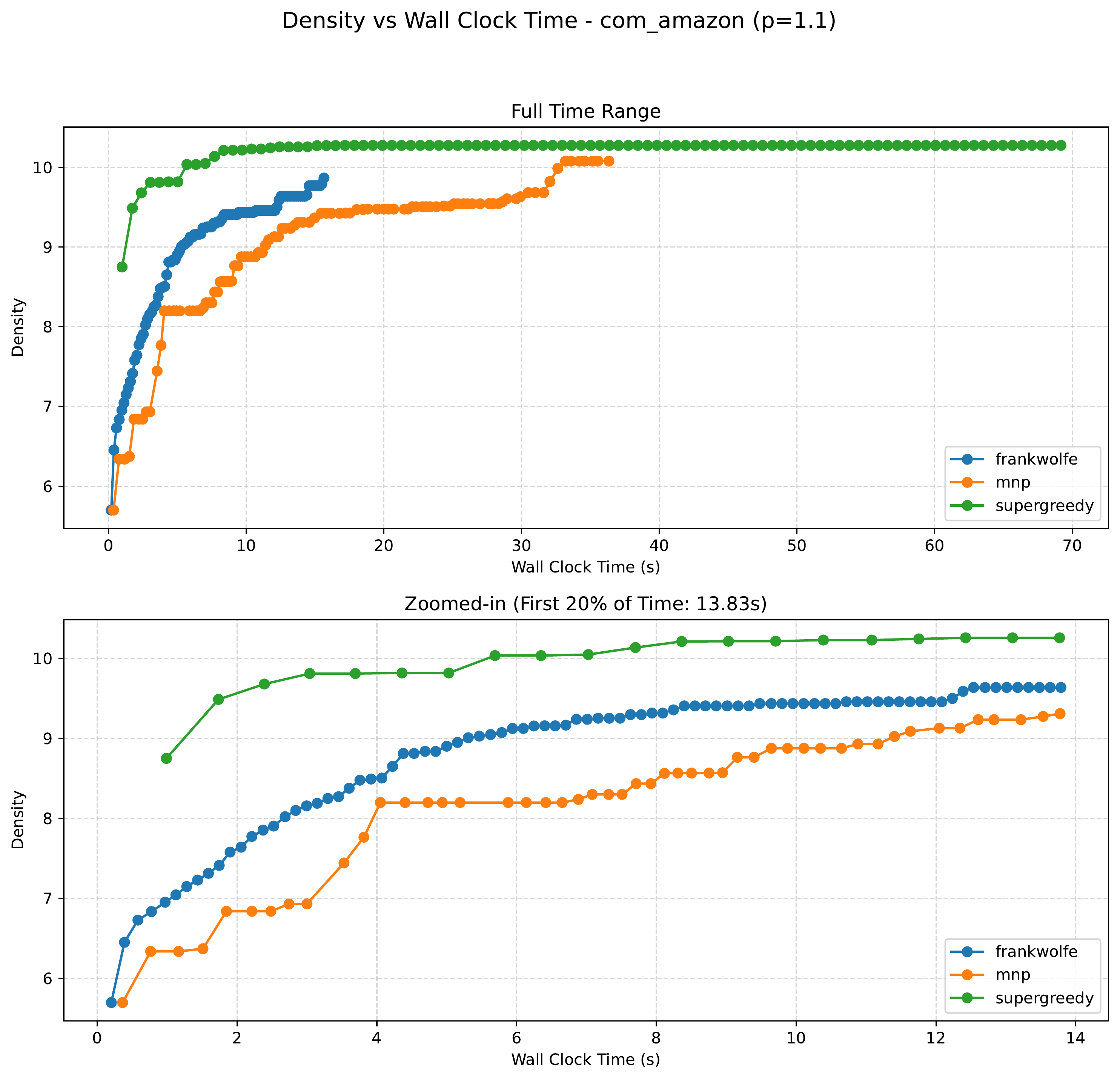}
        \caption{$(p{=}1.1)$}
        \label{fig:dss-com-amazon-1.1}
    \end{subfigure}
    \hfill
    \begin{subfigure}[t]{0.24\textwidth}
        \centering
        \includegraphics[width=\textwidth]{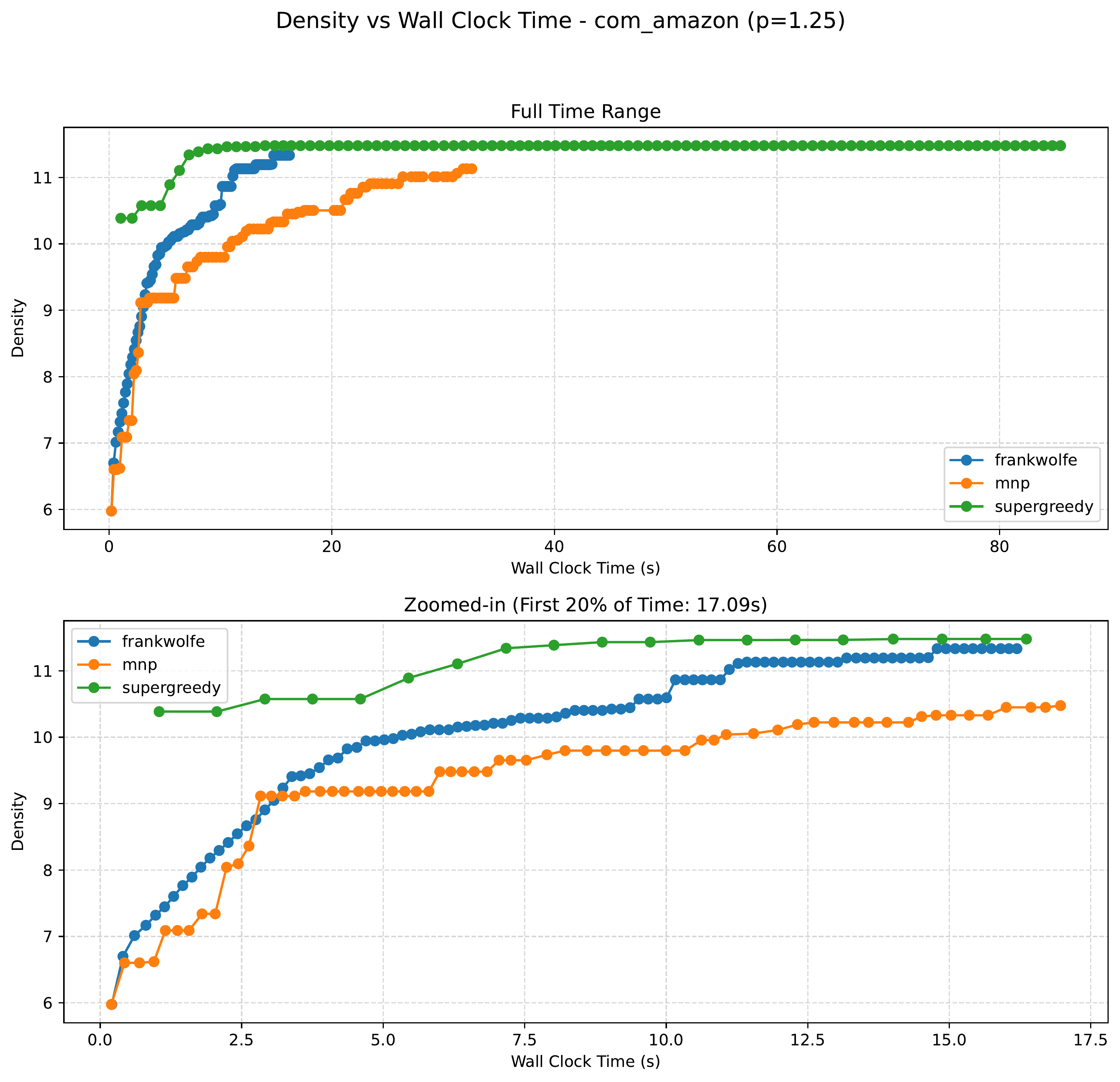}
        \caption{$(p{=}1.25)$}
        \label{fig:dss-com-amazon-1.25}
    \end{subfigure}
    \hfill
    \begin{subfigure}[t]{0.24\textwidth}
        \centering
        \includegraphics[width=\textwidth]{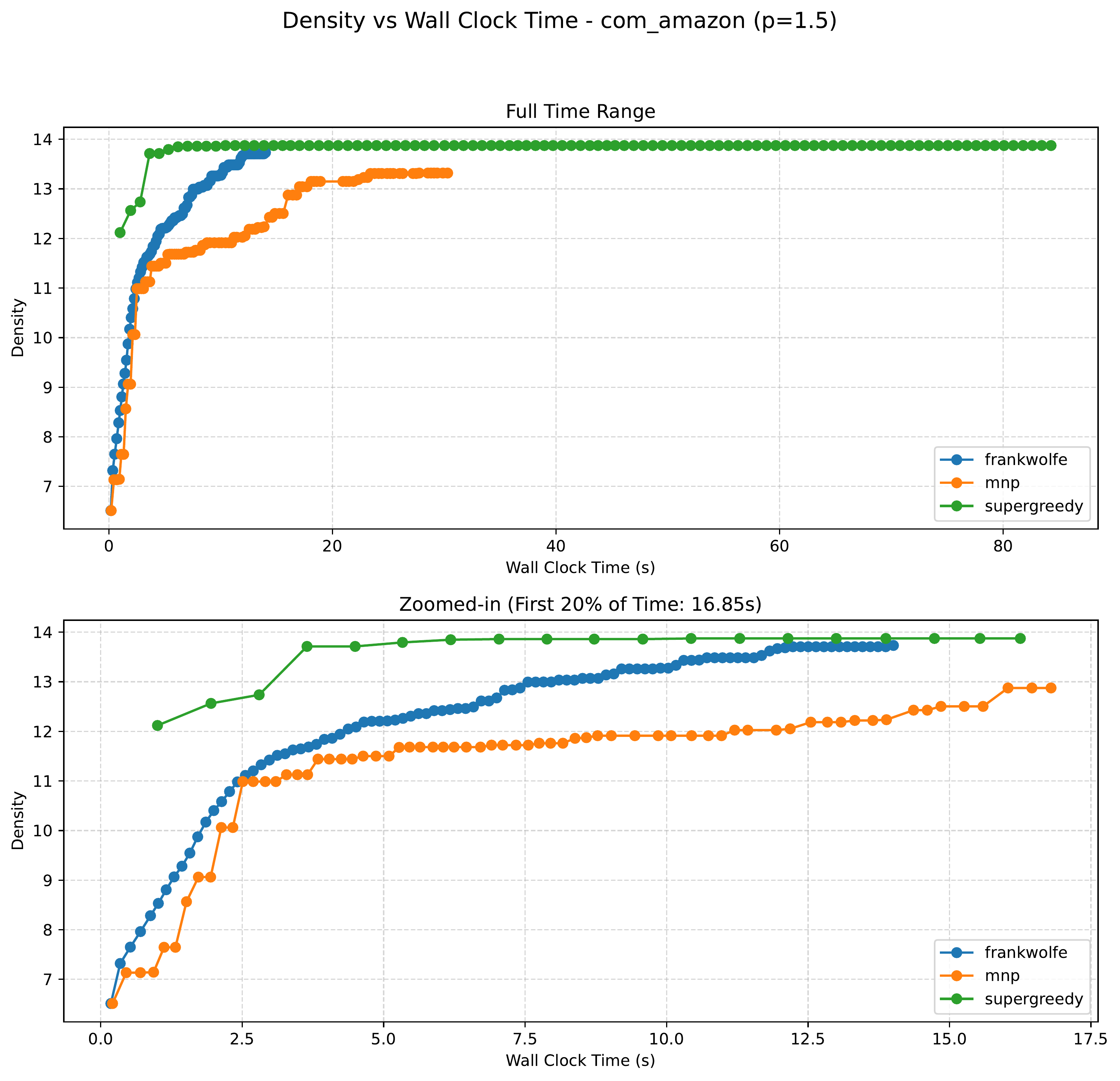}
        \caption{$(p{=}1.5)$}
        \label{fig:dss-com-amazon-1.5}
    \end{subfigure}
    \hfill
    \begin{subfigure}[t]{0.24\textwidth}
        \centering
        \includegraphics[width=\textwidth]{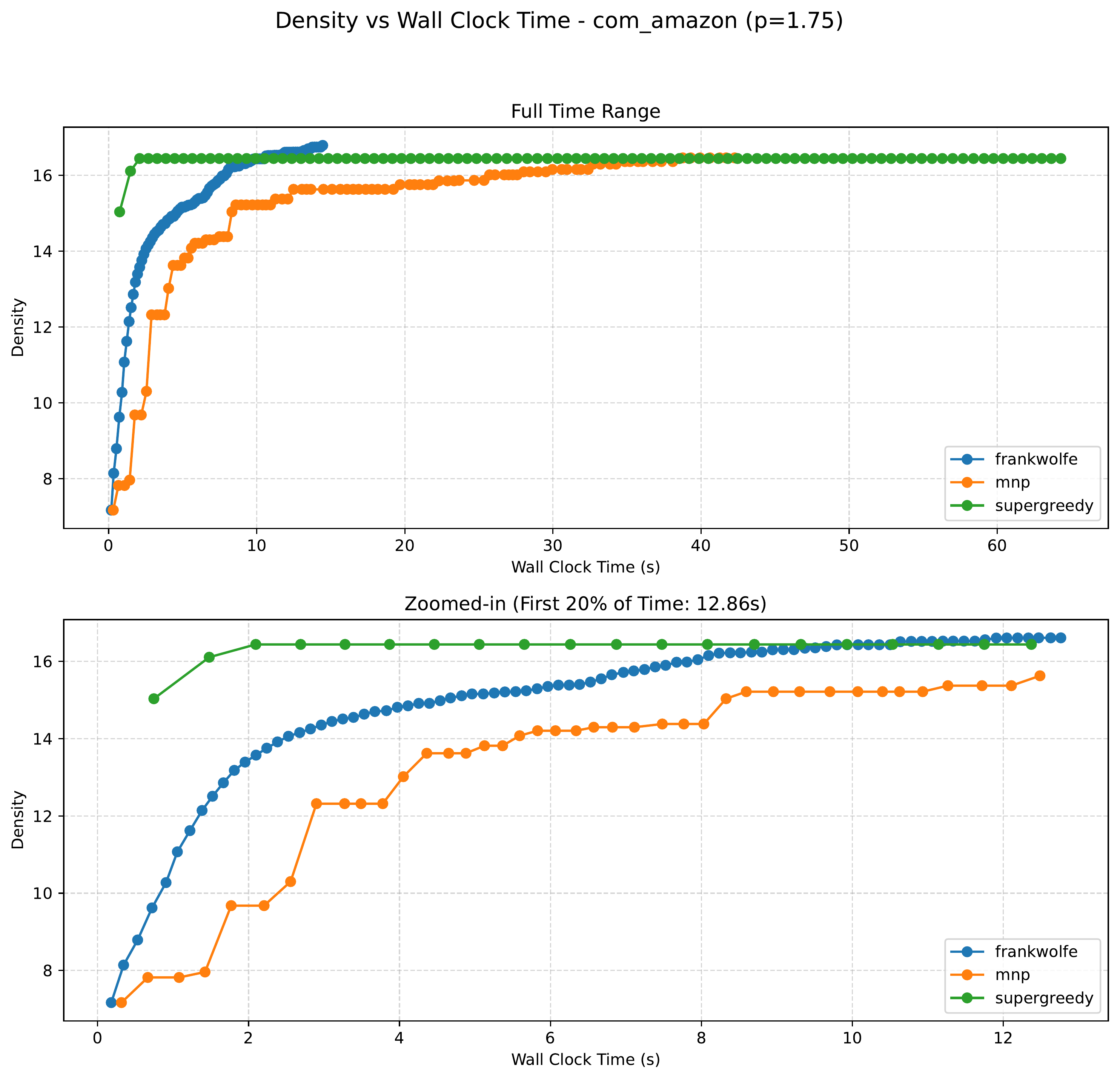}
        \caption{$(p{=}1.75)$}
        \label{fig:dss-com-amazon-1.75}
    \end{subfigure}
    \caption{DSS density over time - \texttt{com\_amazon}}
    \label{dss-com-amazon}
\end{figure}

\begin{figure}[H]
    \centering
    \begin{subfigure}[t]{0.24\textwidth}
        \centering
        \includegraphics[width=\textwidth]{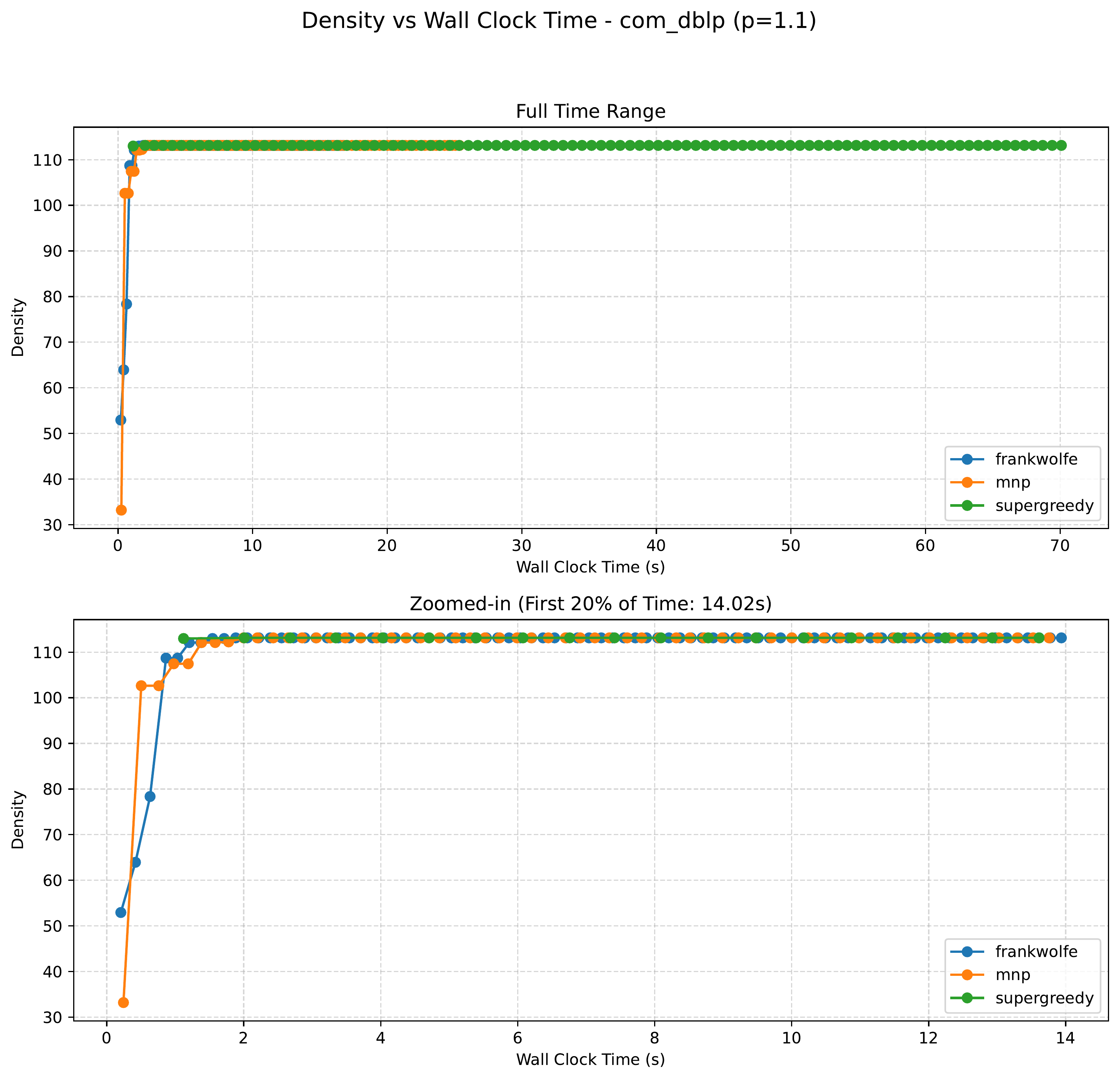}
        \caption{$(p{=}1.1)$}
        \label{fig:dss-com-dblp-1.1}
    \end{subfigure}
    \hfill
    \begin{subfigure}[t]{0.24\textwidth}
        \centering
        \includegraphics[width=\textwidth]{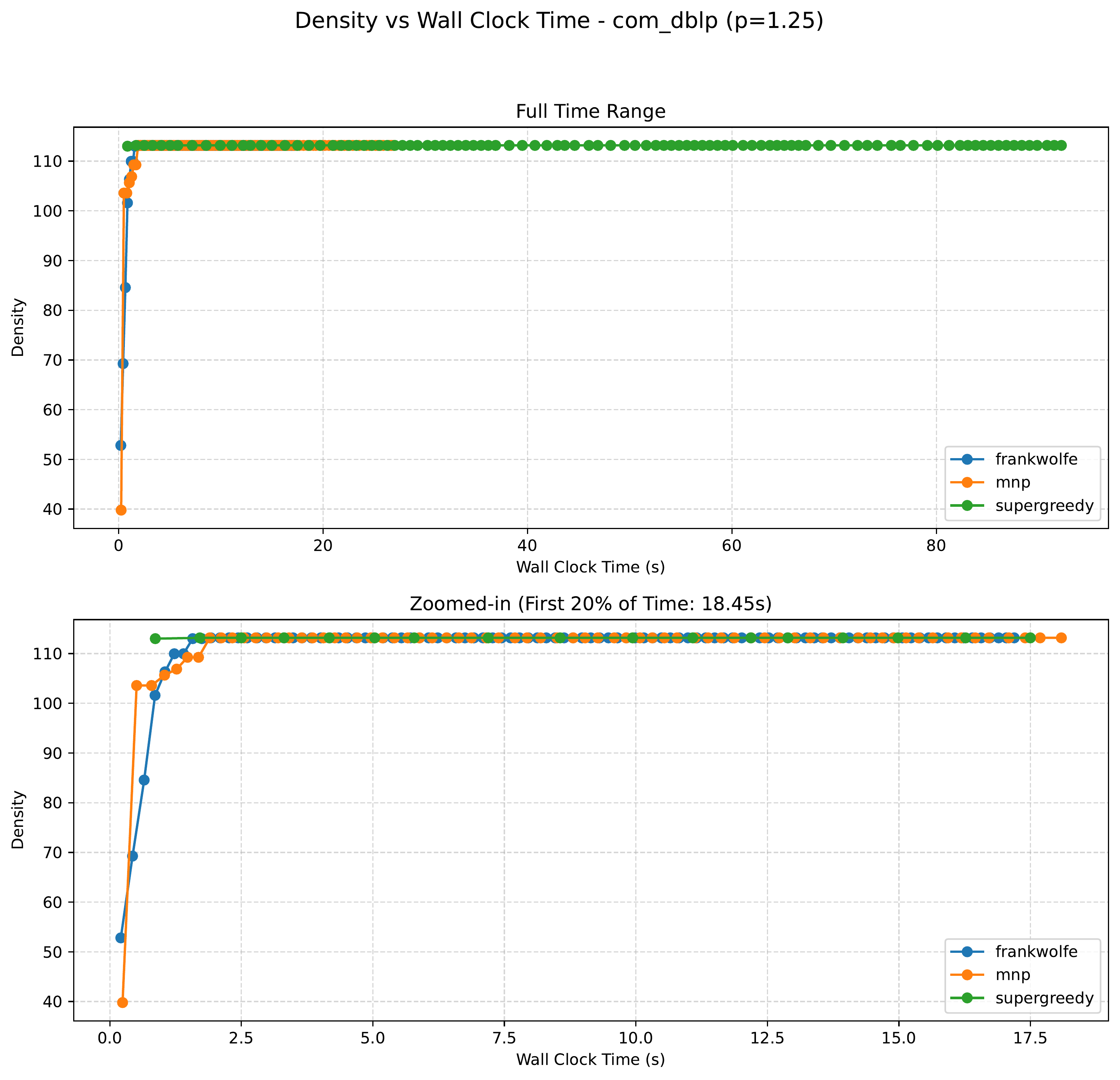}
        \caption{$(p{=}1.25)$}
        \label{fig:dss-com-dblp-1.25}
    \end{subfigure}
    \hfill
    \begin{subfigure}[t]{0.24\textwidth}
        \centering
        \includegraphics[width=\textwidth]{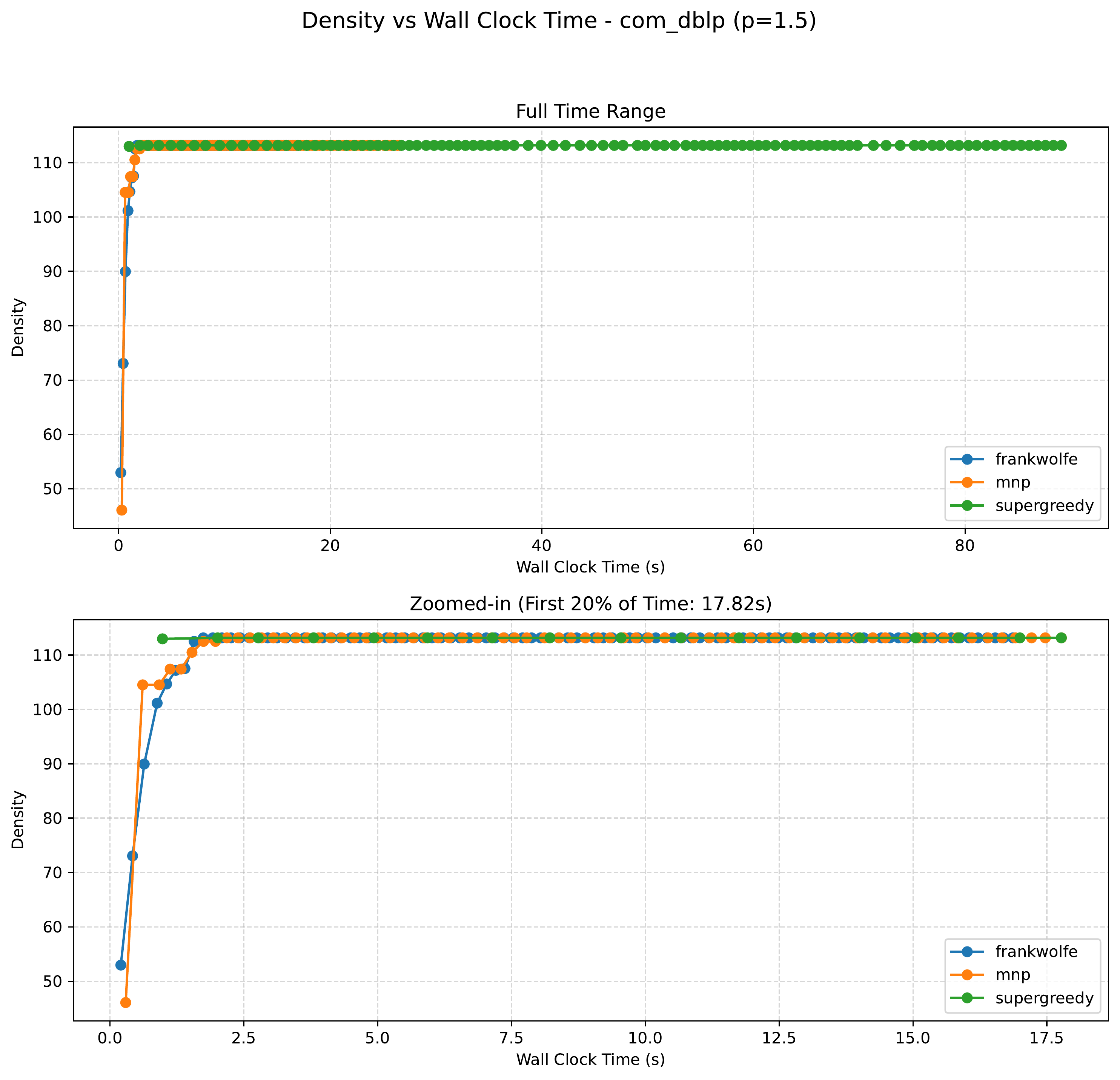}
        \caption{$(p{=}1.5)$}
        \label{fig:dss-com-dblp-1.5}
    \end{subfigure}
    \hfill
    \begin{subfigure}[t]{0.24\textwidth}
        \centering
        \includegraphics[width=\textwidth]{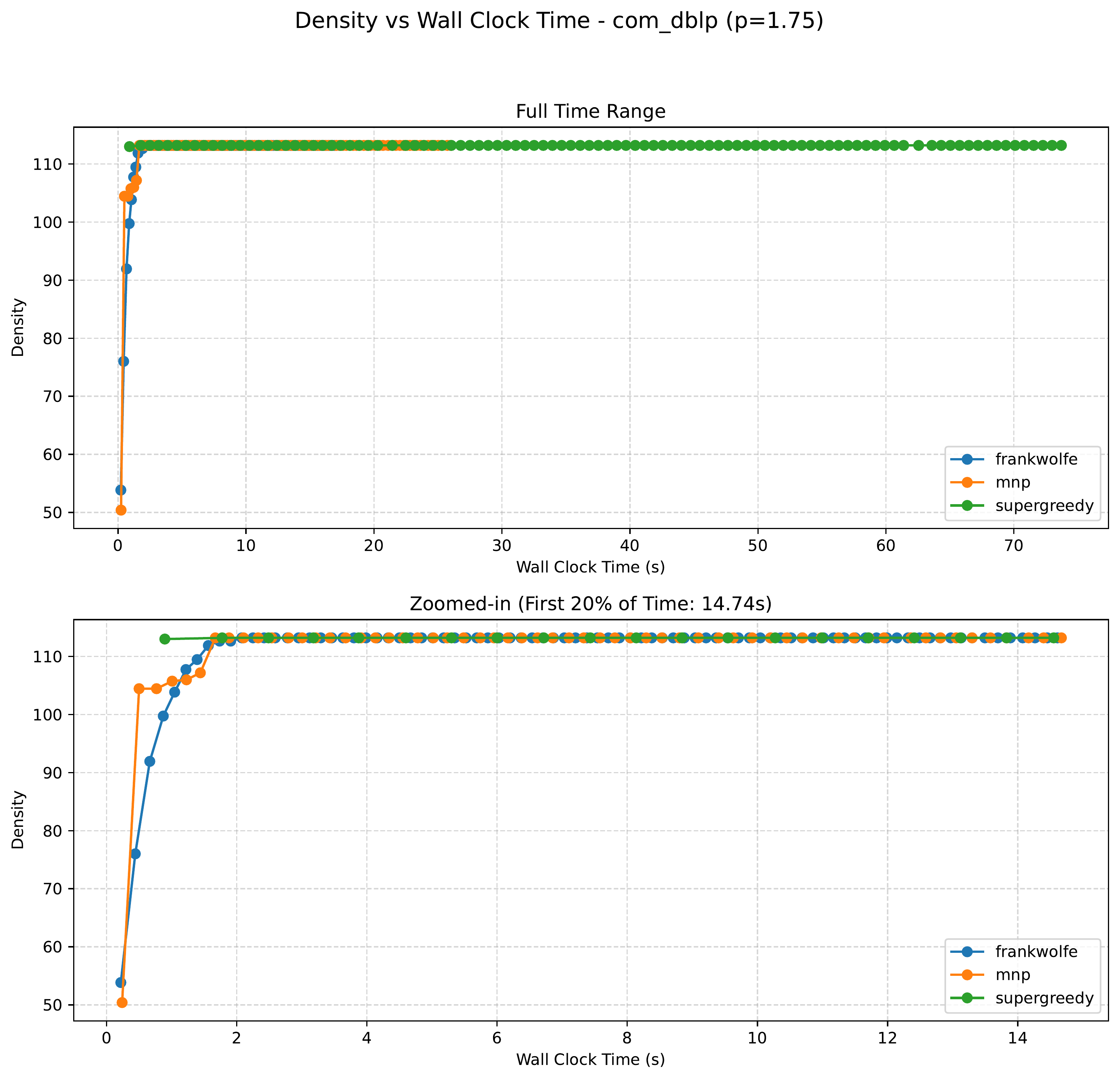}
        \caption{$(p{=}1.75)$}
        \label{fig:dss-com-dblp-1.75}
    \end{subfigure}
    \caption{DSS density over time - \texttt{com\_dblp}}
    \label{dss-com-dblp}
\end{figure}

\begin{figure}[H]
    \centering
    \begin{subfigure}[t]{0.24\textwidth}
        \centering
        \includegraphics[width=\textwidth]{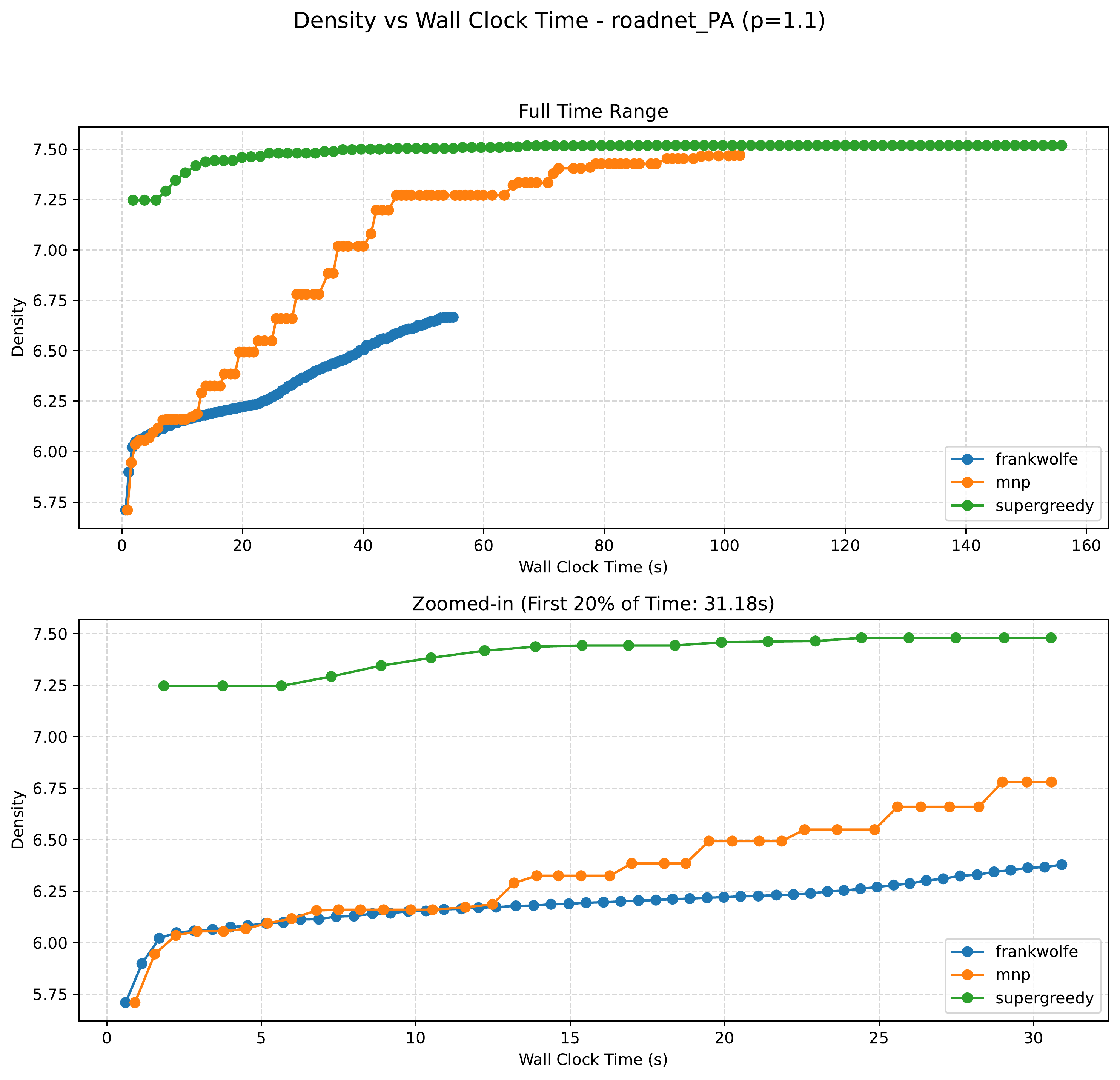}
        \caption{$(p{=}1.1)$}
        \label{fig:dss-roadnet-pa-1.1}
    \end{subfigure}
    \hfill
    \begin{subfigure}[t]{0.24\textwidth}
        \centering
        \includegraphics[width=\textwidth]{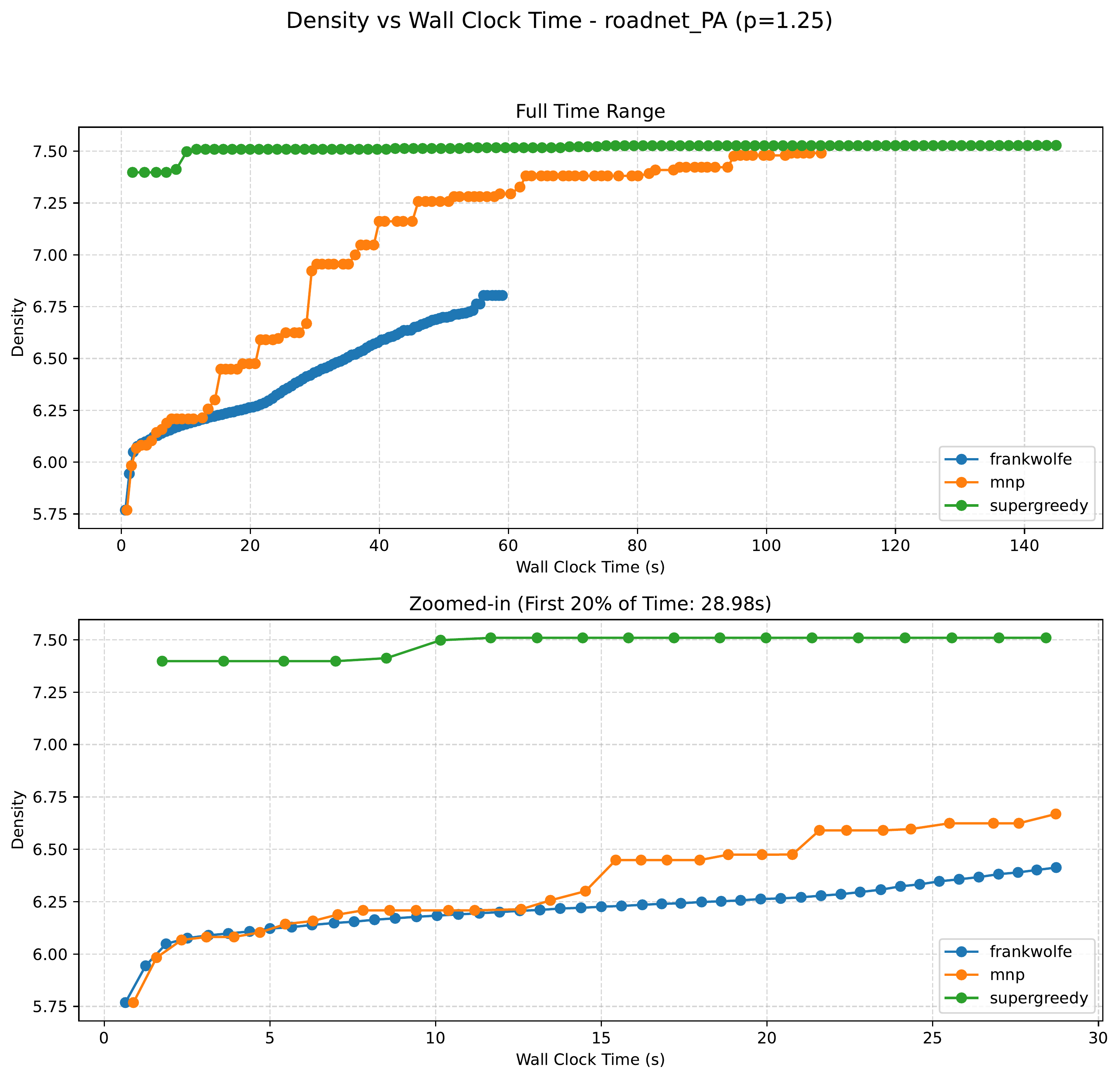}
        \caption{$(p{=}1.25)$}
        \label{fig:dss-roadnet-pa-1.25}
    \end{subfigure}
    \hfill
    \begin{subfigure}[t]{0.24\textwidth}
        \centering
        \includegraphics[width=\textwidth]{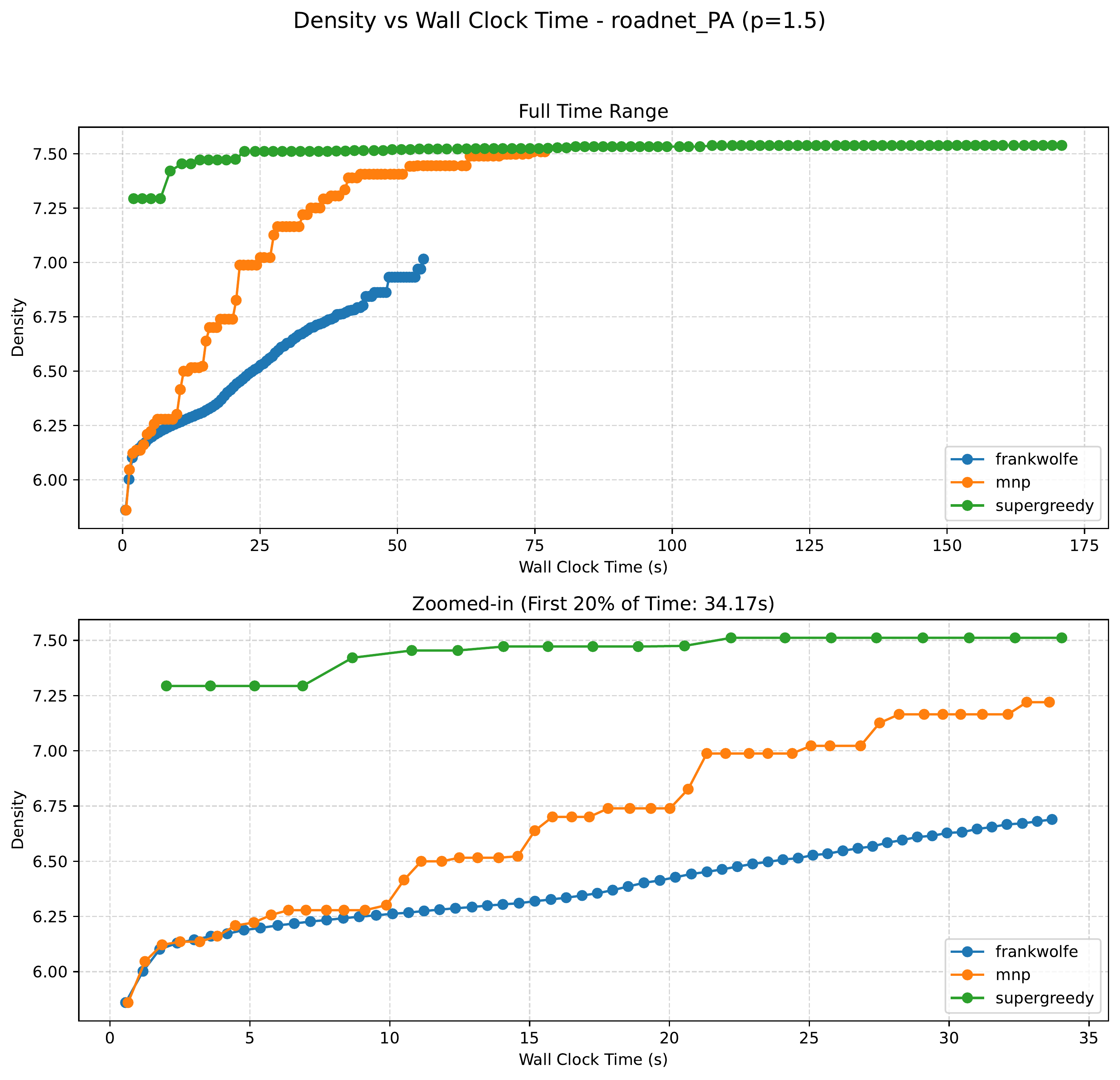}
        \caption{$(p{=}1.5)$}
        \label{fig:dss-roadnet-pa-1.5}
    \end{subfigure}
    \hfill
    \begin{subfigure}[t]{0.24\textwidth}
        \centering
        \includegraphics[width=\textwidth]{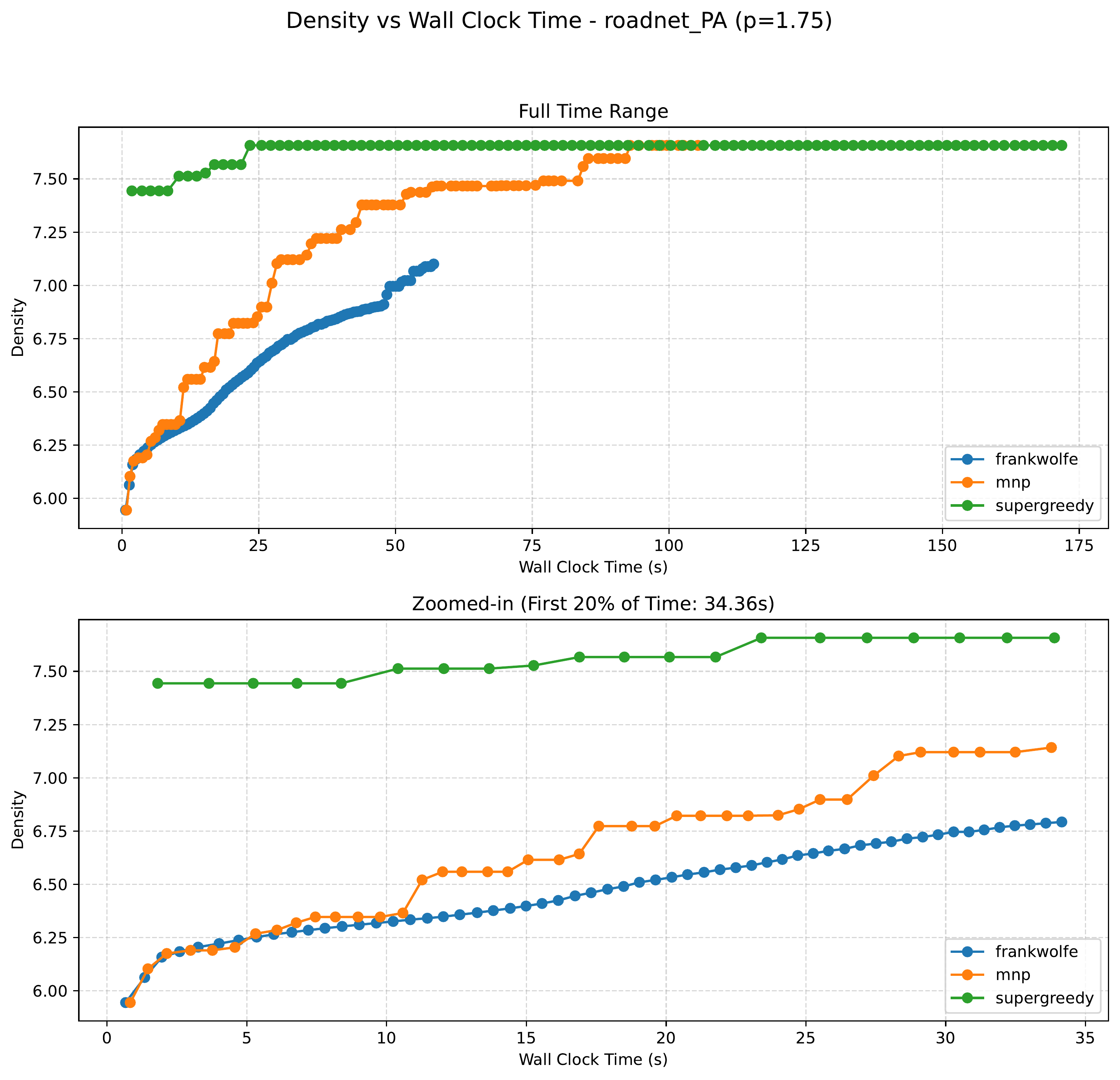}
        \caption{$(p{=}1.75)$}
        \label{fig:dss-roadnet-pa-1.75}
    \end{subfigure}
    \caption{DSS density over time - \texttt{roadnet\_pa}}
    \label{dss-roadnet-pa}
\end{figure}

\begin{figure}[H]
    \centering
    \begin{subfigure}[t]{0.24\textwidth}
        \centering
        \includegraphics[width=\textwidth]{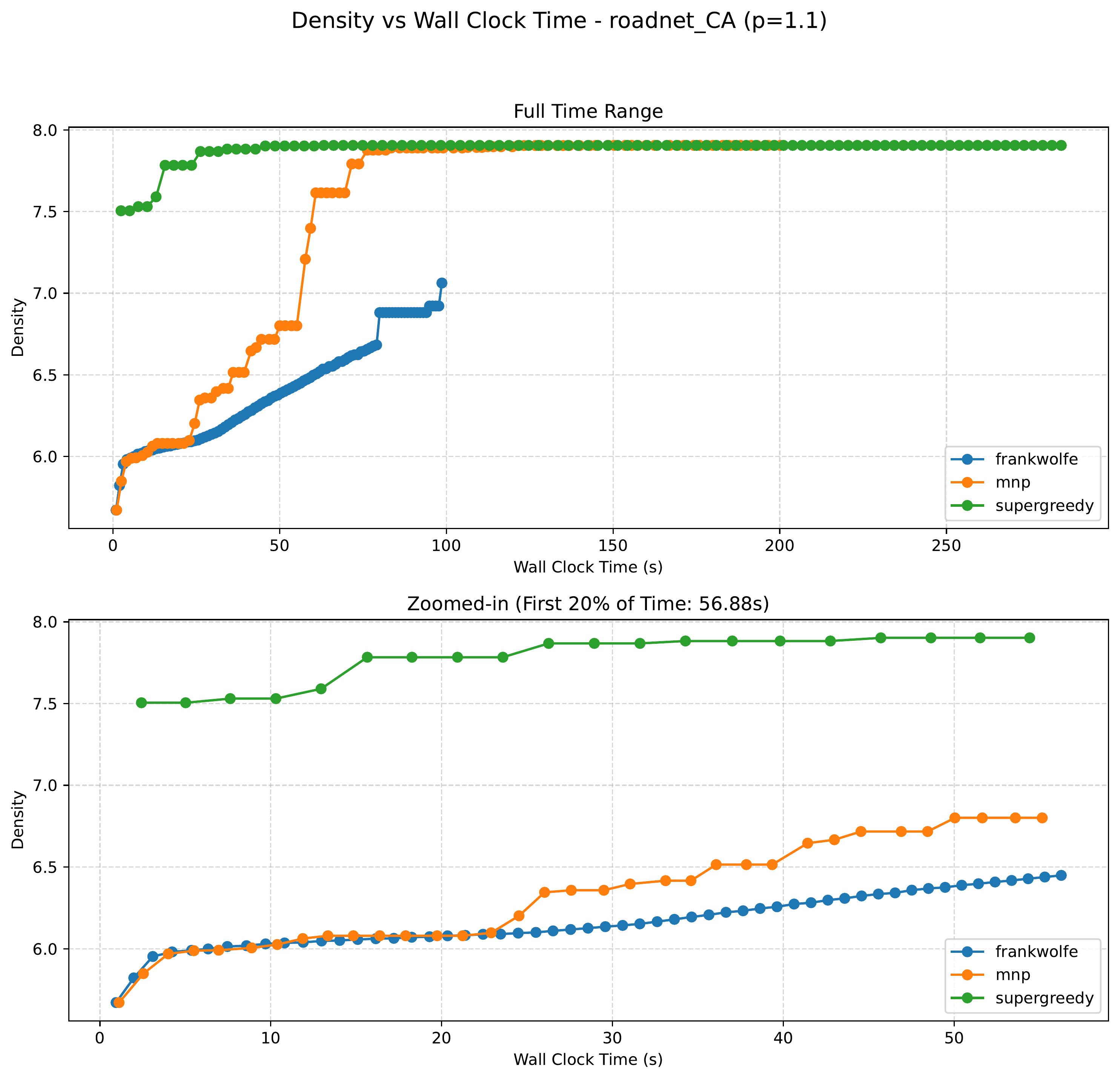}
        \caption{$(p{=}1.1)$}
        \label{fig:dss-roadnet-ca-1.1}
    \end{subfigure}
    \hfill
    \begin{subfigure}[t]{0.24\textwidth}
        \centering
        \includegraphics[width=\textwidth]{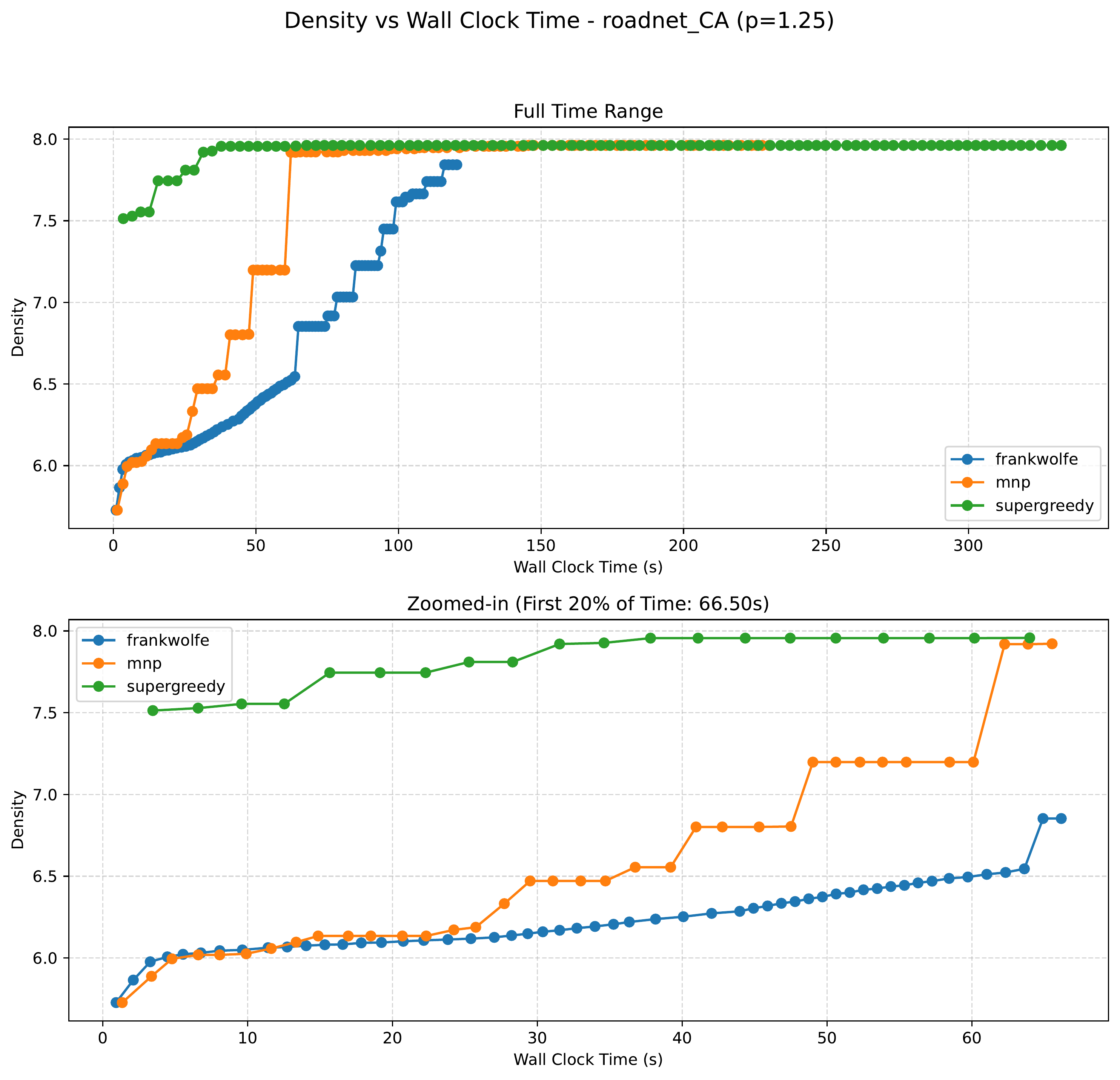}
        \caption{$(p{=}1.25)$}
        \label{fig:dss-roadnet-ca-1.25}
    \end{subfigure}
    \hfill
    \begin{subfigure}[t]{0.24\textwidth}
        \centering
        \includegraphics[width=\textwidth]{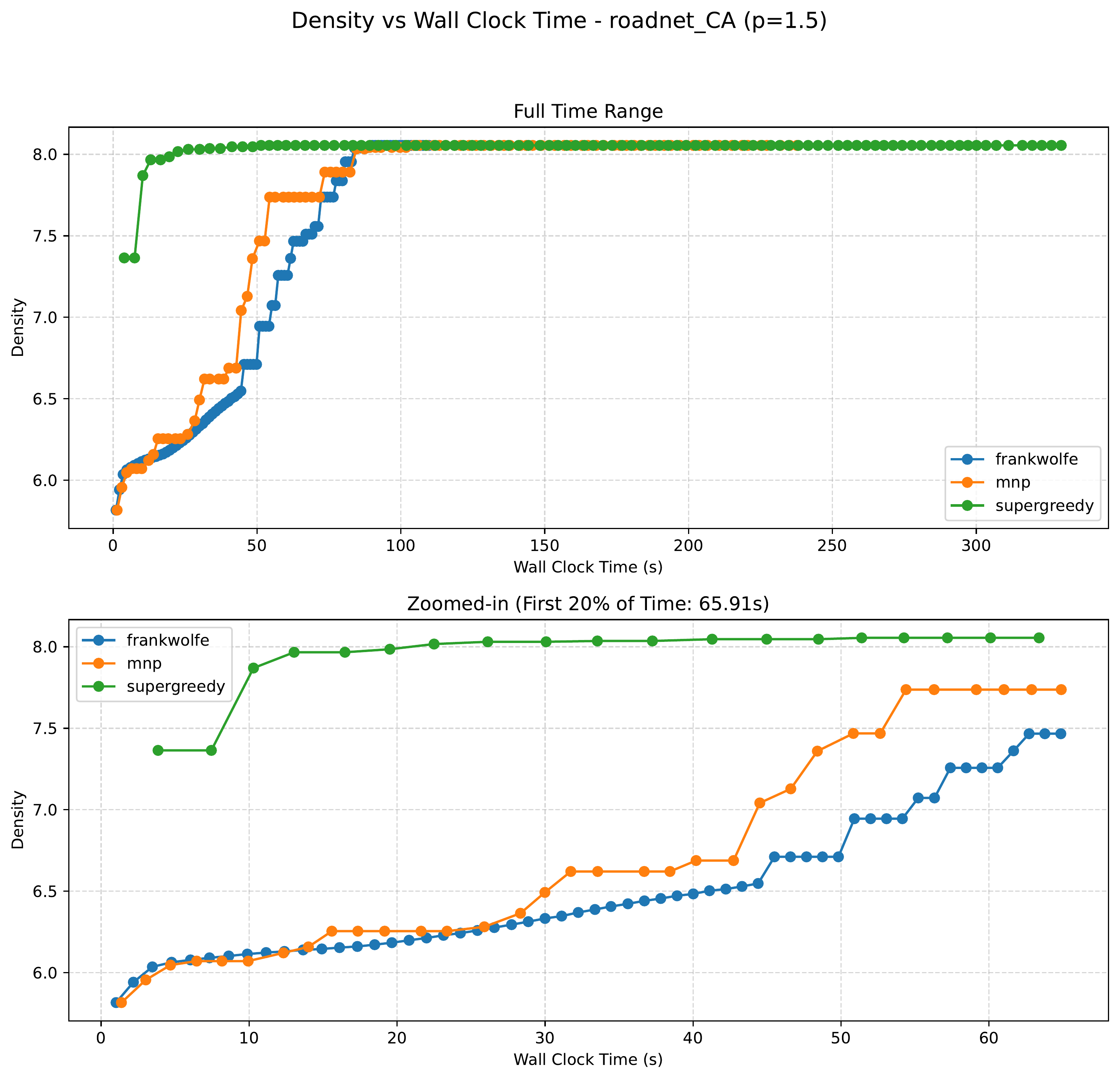}
        \caption{$(p{=}1.5)$}
        \label{fig:dss-roadnet-ca-1.5}
    \end{subfigure}
    \hfill
    \begin{subfigure}[t]{0.24\textwidth}
        \centering
        \includegraphics[width=\textwidth]{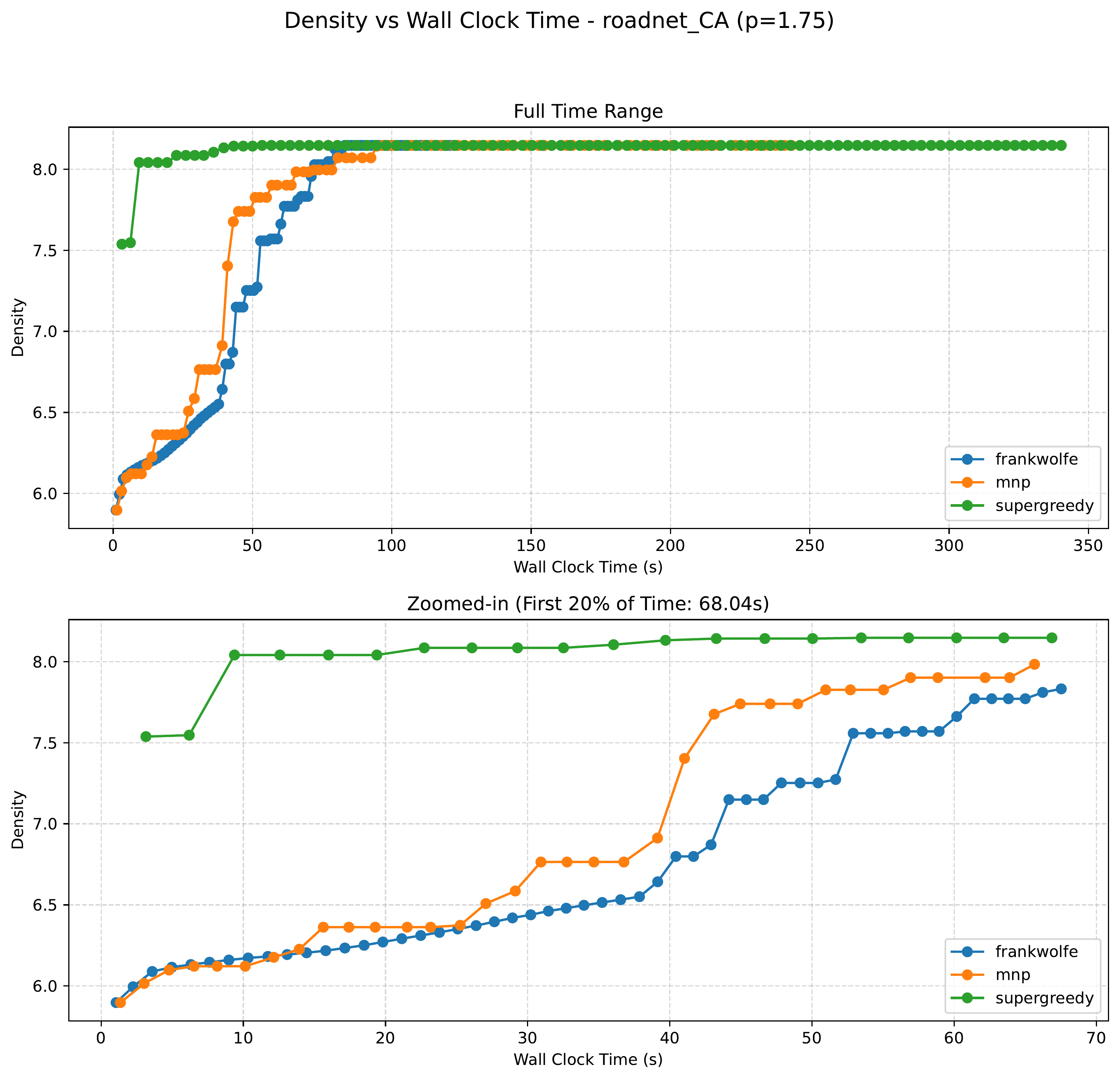}
        \caption{$(p{=}1.75)$}
        \label{fig:dss-roadnet-ca-1.75}
    \end{subfigure}
    \caption{DSS density over time - \texttt{roadnet\_ca}}
    \label{dss-roadnet-ca}
\end{figure}

\clearpage
\section{Anchored Densest Subgraph}
\label{anchoredexperiments}

\textbf{Problem Definition.} Given a graph $G=(V, E)$ and a set $R\subseteq V$, find a vertex set $\emptyset \neq S\subseteq V$ maximizing $\left( 2|E(S)| - \sum_{v\in S\cap \overline{R}}\text{deg}_G(v)\right)/|S|$.

\paragraph{Algorithms.} We evaluate four algorithms—Frank-Wolfe (\textsc{frankwolfe}), Fujishige-Wolfe Minimum Norm Point (\textsc{mnp}), and \textsc{SuperGreedy++} (\textsc{supergreedy}), and the flow-based algorithm by Huang \etal \cite{veldt}. The authors' code was implemented in Julia, so we re-implement all algorithms in \texttt{C++}.

\textbf{Marginals.} Similar to the generalized $p$-mean DSG, we recognize that maximizing the objective can be achieved by acting greedy with respect to the numerator where $|E(S)|$ is supermodular, and the degree sum is modular as it's with respect to $G$. Hence, the objective is supermodular. The marginal cost of peeling $v$ is then 
\[
    f(v|S - v) = 2\deg_S(v) - \mathbb{I}_{v \not\in R}[\deg_G(v)]
\]

\paragraph{Datasets.} We reused the base graphs from our generalized $p$-mean Densest Subgraph Experiments, summarized in Table \ref{tab:dss_datasets}. These are the initial graphs $G$ given, for which anchor sets $R$ are created via random walks.

\paragraph{Generating Random $R$.} As in \cite{veldt}, we generate $R$ with the following process: first, we sample $10$ seed nodes from the set of vertices uniformly at random. Next, using their $2$-neighborhood, we sample more nodes using a random walk until the total number of nodes in $R$ are $201$, and mark that set as $R$. 

\textbf{Resources}: In total, for all algorithms and all datasets, we request 4 CPUs per node and 40G of memory per experiment. All algorithms were given a 10-minute time limit per dataset (all algorithms finished within this time, and no time limit was exceeded except on one dataset for the flow-based algorithm). 

\paragraph{Discussion of Anchored Densest Subgraph Results.} Figure \ref{fig:anchored} summarizes the results for anchored densest subgraph. The top plot shows results over the entire runtime, while the bottom plot zooms in on the first 20\% to highlight early iterations. For each dataset, we plot the density of the best-anchored subgraph found against wall-clock time. 

Across all experiments, \textsc{SuperGreedy++} consistently delivered the best performance, achieving the highest densities in the shortest time. It was followed by \textsc{FW} and \textsc{MNP}, which ranked second and third overall, though their relative performance varied across datasets—\textsc{FW} often had the edge over \textsc{MNP}, but not uniformly. The flow-based method, \textsc{flow}, performed significantly worse on most datasets (with the notable exception of \texttt{com\_amazon}). This underperformance may be attributed to high variance in edge capacities within the flow network, which can cause the algorithm to stall when attempting to push flow efficiently. In future work, it is worth studying the performance of flow algorithms in the weighted case, where such capacity imbalances are more pronounced or may offer new opportunities for optimization.

\clearpage

\begin{figure}[p]
    \centering
    \begin{subfigure}[t]{0.48\textwidth}
        \centering
        \includegraphics[width=\textwidth]{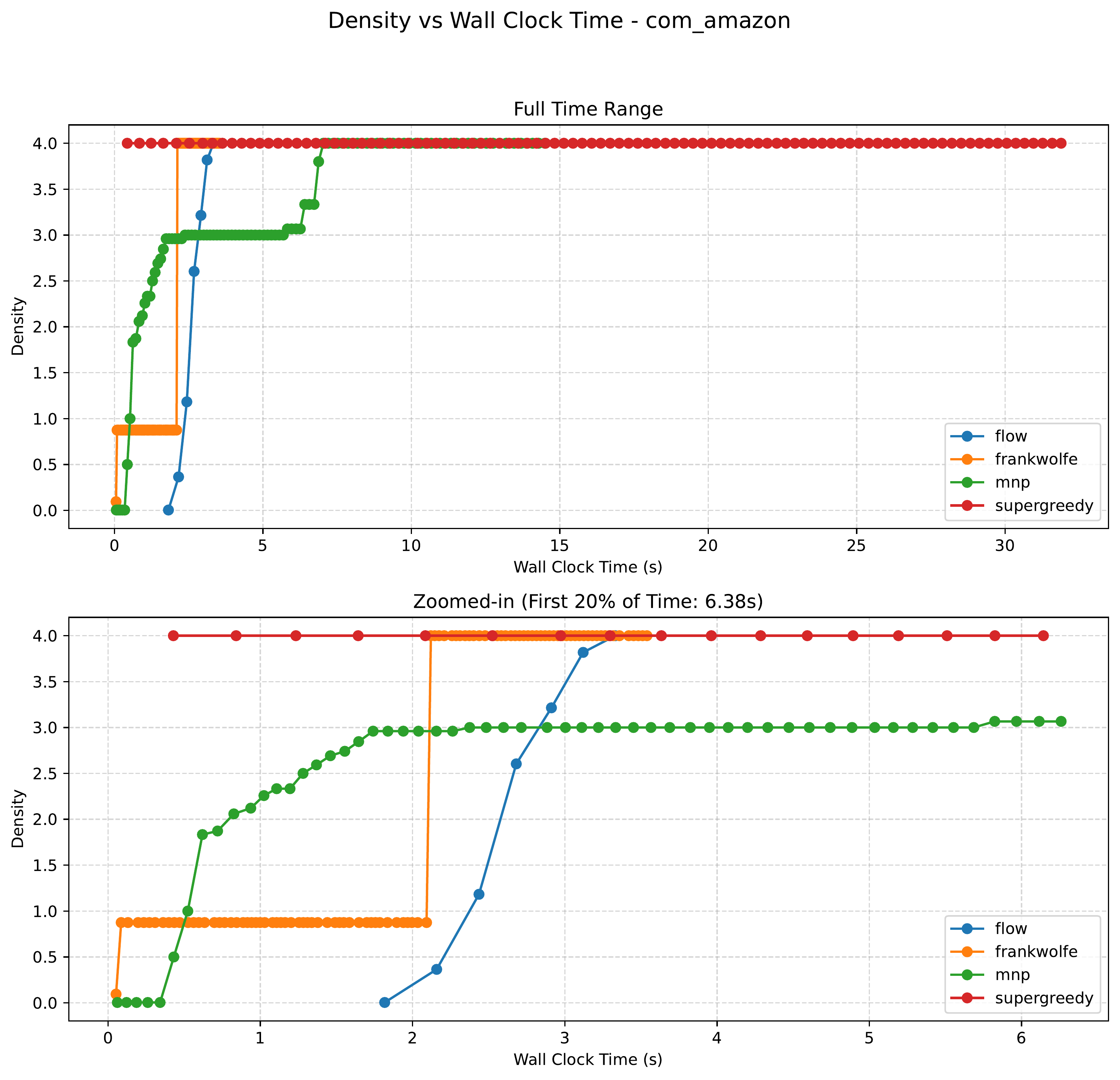}
        \caption{\texttt{com\_amazon}}
        \label{fig:anchored-com-amazon}
    \end{subfigure}
    \hfill
    \begin{subfigure}[t]{0.48\textwidth}
        \centering
        \includegraphics[width=\textwidth]{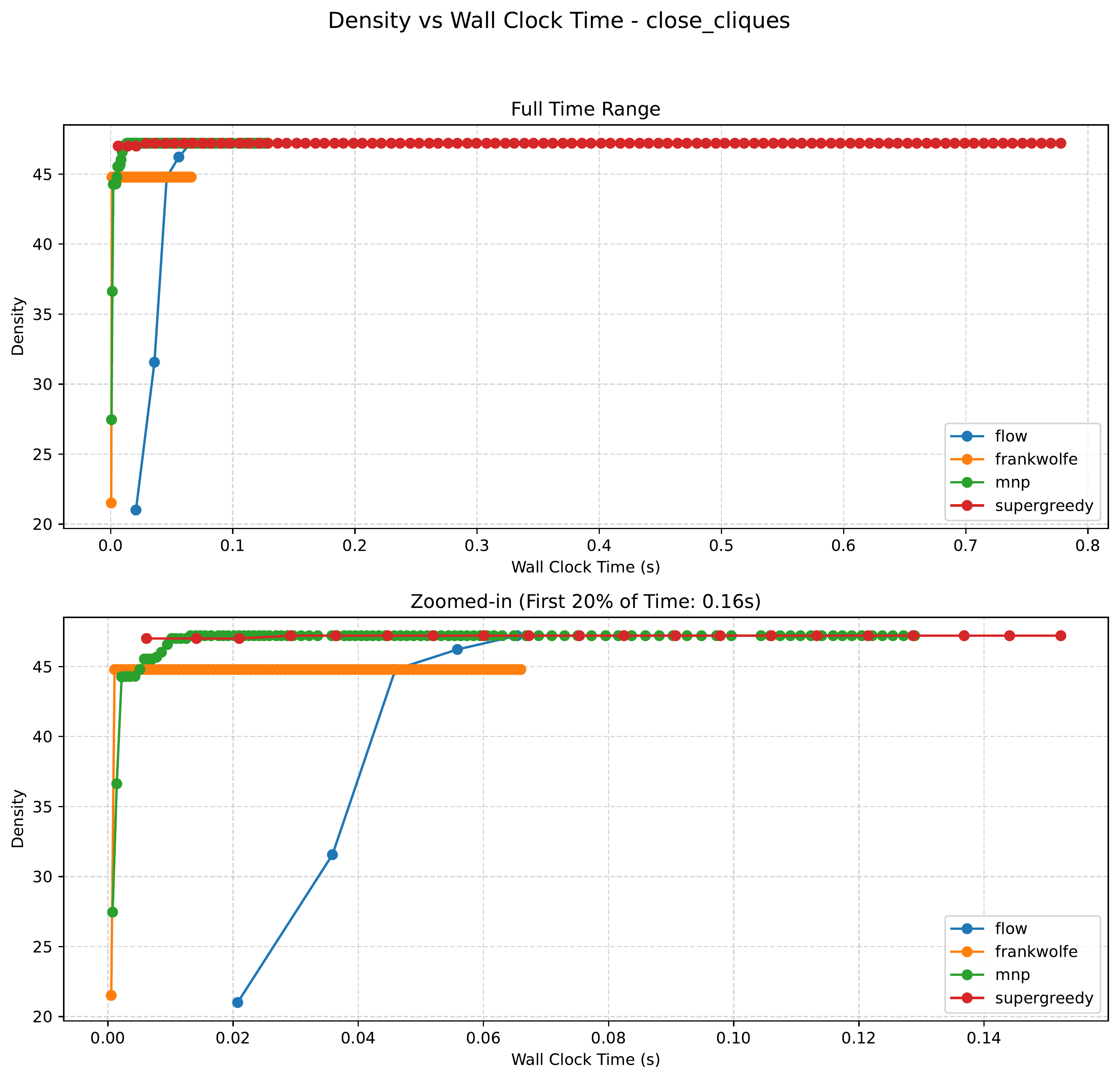}
        \caption{\texttt{close\_cliques}}
        \label{fig:anchored-close-cliques}
    \end{subfigure}

    \begin{subfigure}[t]{0.48\textwidth}
        \centering
        \includegraphics[width=\textwidth]{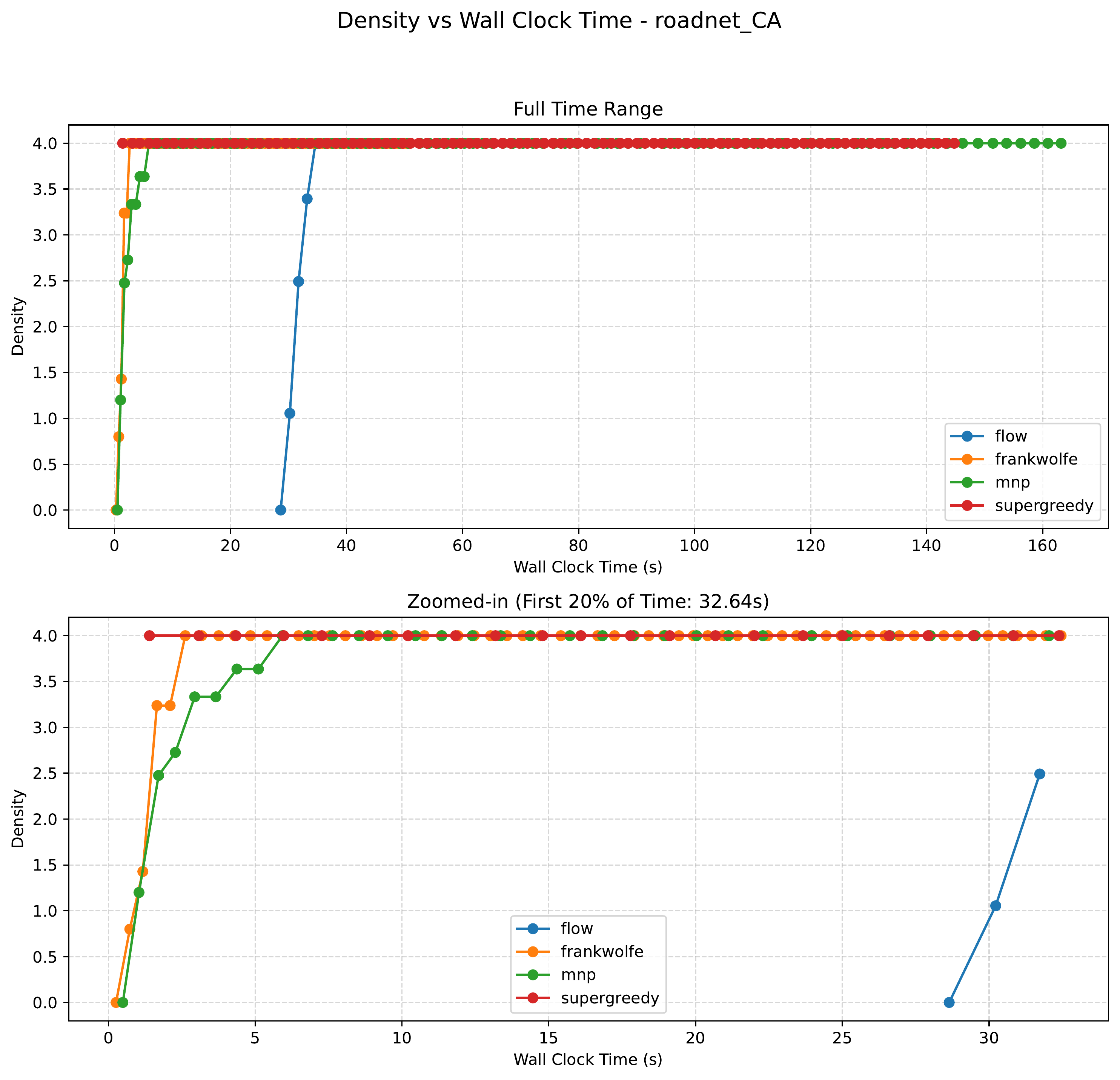}
        \caption{\texttt{roadnet\_CA}}
        \label{fig:anchored-roadnet-ca}
    \end{subfigure}
    \hfill
    \begin{subfigure}[t]{0.48\textwidth}
        \centering
        \includegraphics[width=\textwidth]{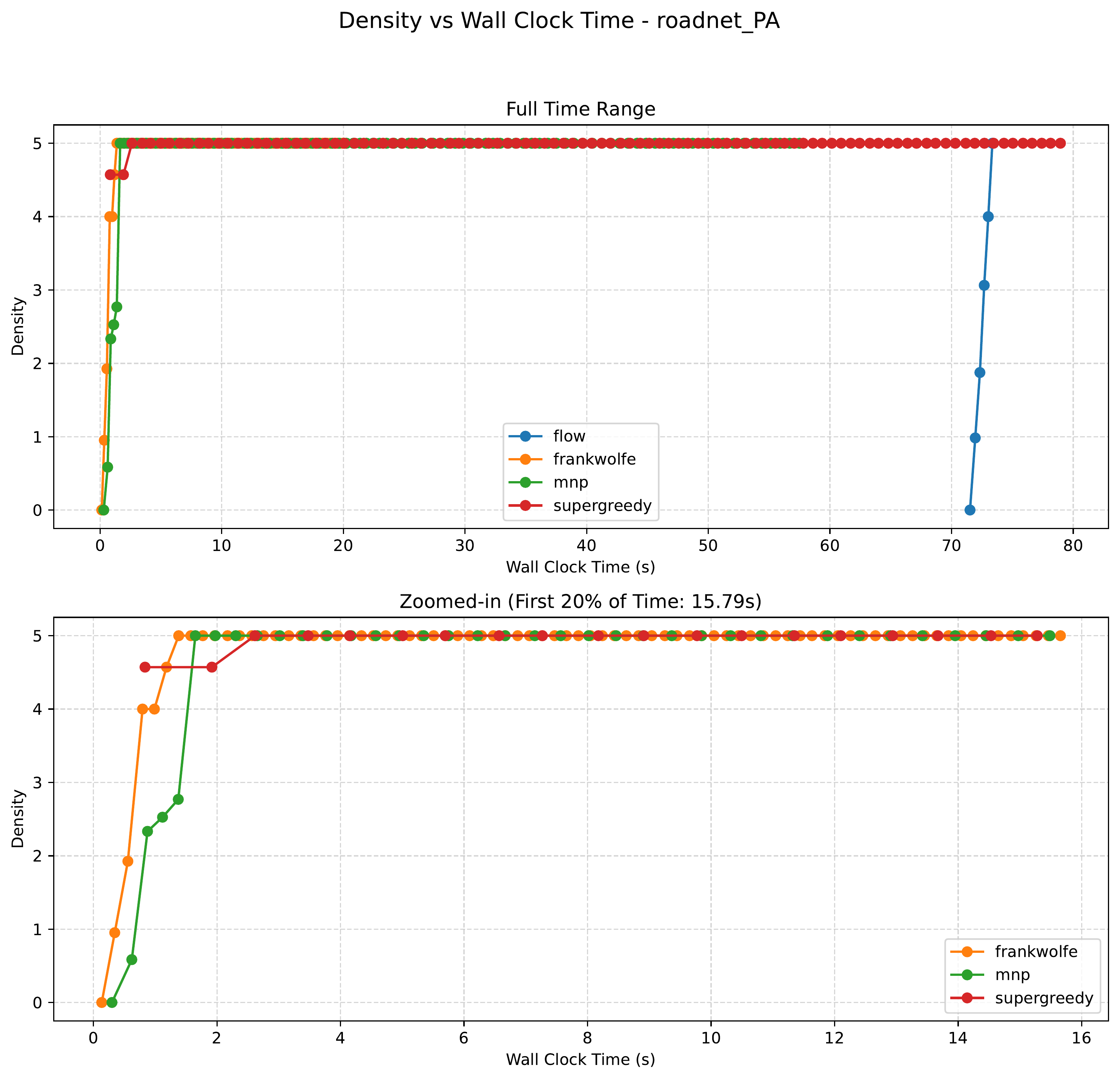}
        \caption{\texttt{roadnet\_PA}}
        \label{fig:anchored-roadnet-pa}
    \end{subfigure}

    \begin{subfigure}[t]{0.6\textwidth}
        \centering
        \includegraphics[width=\textwidth]{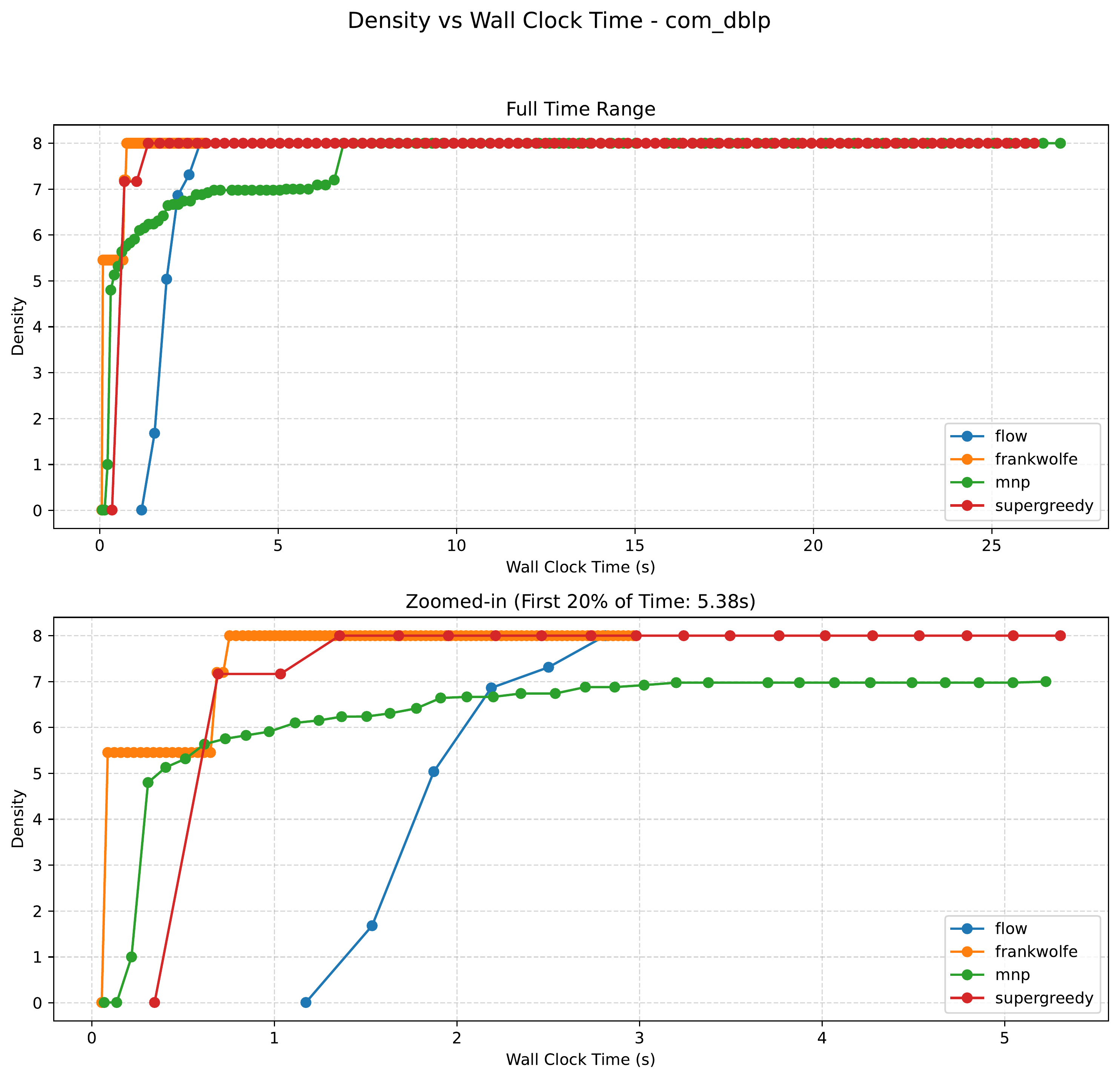}
        \caption{\texttt{com\_dblp}}
        \label{fig:anchored-com-dblp}
    \end{subfigure}
    \caption{Anchored density over time}
    \label{fig:anchored}
\end{figure}

\clearpage
\section{Contrapolymatroid Membership Experiments}
\label{contraexperiments}

\textbf{Problem Definition (CM).} Given a supermodular function $f: 2^V \to \mathbb{R}$ and arbitrary vector $y \in \mathbb{R}^{|V|}$, determine if $y \in B(f)$.

\textbf{Approach.} The problem can be reduced to an \textsc{SFM} instance by recognizing that $-f(S)$ is submodular, hence $x(S) - f(S)$ is also submodular. We minimize $y(S) - f(S)$ to answer the membership question. It is more convenient to interpret this as maximization of $f(S) - y(S)$, whose value we plot; $y$ is then a NO instance \textit{iff} $f(S) - y(S) > 0$.

For our experiments, we consider undirected graphs $G = (V, E)$ with $f(S) = |E(S)|, ~S \subseteq V$.

\textbf{Generating Query Vectors.} Harb et al.~\cite{farouk-neurips} (Theorem~4.4) showed that a vector \( b \in B(f) \) if and only if there exists a vector \( x \in \mathbb{R}^{2|E|} \) such that the pair \( (x, b) \) satisfies the dual of Charikar’s densest subgraph LP~\cite{c-00}. Such a vector \( b \) corresponds to a \textsc{YES} instance of the Contrapolymatroid Membership (CM) problem. Small perturbations to \( b \) can then be used to generate arbitrarily hard \textsc{NO} instances.

Given a graph \( G = (V, E) \), we construct a feasible vector \( b \in B(f) \) by orienting edges. We begin by computing the exact densest subgraph using \textsc{PushRelabel}. For each vertex \( v \in S^* \)—the vertex set of the densest subgraph—we set \( b_v = \lambda^* \), where \( \lambda^* \) is the maximum density. Harb et al.~\cite{farouk-neurips} (Theorem~4.1) show that such an assignment is always realizable via some orientation of the edges in \( S^* \). For edges not entirely contained in \( S^* \), we assign a value of 0.5 to each endpoint if both lie outside \( S^* \) and a value of 1 to the endpoint outside \( S^* \) otherwise.

To generate non-trivial \textsc{NO} instances (i.e., where \( x(V) = f(V) \)), we perturb the vector as follows: select two vertices \( u \in S^* \) and \( w \notin S^* \), then set \( b_u \leftarrow b_u - \varepsilon \) and \( b_w \leftarrow b_w + \varepsilon \), where \( \varepsilon \) is the perturbation magnitude. This results in 
\[
x(S^*) = |S^*| \lambda^* - \varepsilon = |E(S^*)| - \varepsilon < |E(S^*)| = f(S^*),
\]
violating the contrapolymatroid inequality for at least the set \( S^* \).

In our experiments, we reuse the generalized \( p \)-mean DSG setup with \( p = 1 \), using the supermodular objective \( f(S) = \sum_{v \in S} \deg_S(v) \). Note that \( |E(S)| - y(S) > 0 \) if and only if \( \sum_{v \in S} \deg_S(v) - 2y(S) > 0 \). We maximize and report the latter expression, which is always at least \( 2\varepsilon \), with the lower bound attained when \( S = S^* \).

\textbf{Algorithms.} We evaluate three algorithms—Frank-Wolfe (\textsc{FW}), Fujishige-Wolfe Minimum Norm Point (\textsc{mnp}), and \textsc{SuperGreedy++} (\textsc{supergreedy}). All algorithms are implemented in \texttt{C++}.

\textbf{Datasets.} We reuse the datasets summarized in Table \ref{tab:dss_datasets}. We generate four increasingly difficult NO instances for each dataset with perturbation magnitudes $\varepsilon \in \{12, 6, 1, 1\text{e}{-}1\}$.

\textbf{Resources.} In total, for all algorithms and all datasets, we request 4 CPUs per node and 40G of memory per experiment. All algorithms were given a 30-minute time limit per dataset (all algorithms finished within this time, and no time limit was exceeded). 

\textbf{Discussion of Contrapolymatroid Membership Results.} Figures \ref{fig:cm-close-cliques}-\ref{fig:cm-roadnet-ca} summarize the results. An algorithm detects a NO instance when its value \textit{climbs} from $0$ to a non-zero value. The CM results continue the observed trend: \textsc{SuperGreedy++} performs strongly in general but is outperformed by \textsc{FW} and \textsc{MNP}  on \textsc{close\_cliques}, where \textsc{SuperGreedy++} fails to identify NO instances until the perturbation is sufficiently large ($\varepsilon \geq 6$), and even then, does so more slowly. For all other instances, \textsc{SuperGreedy++} is usually the first---and frequently the only---algorithm to identify NO instances, often outpacing \textsc{FW} and \textsc{MNP} by a wide margin. \textsc{SuperGreedy++} also tends to find larger values, though this is irrelevant for CM. As expected, performance improves with larger $\varepsilon$. 

This strong empirical performance of \textsc{SuperGreedy++} is particularly surprising given that Contrapolymatroid Membership was the original motivation for studying submodular function minimization, and \textsc{MNP} has historically been regarded as the algorithm of choice—making it especially noteworthy that a combinatorial method like \textsc{SuperGreedy++} outperforms it so consistently across all but one dataset.

\clearpage 

\begin{figure}[H]
    \centering
    \begin{subfigure}[t]{0.24\textwidth}
        \centering
        \includegraphics[width=\textwidth]{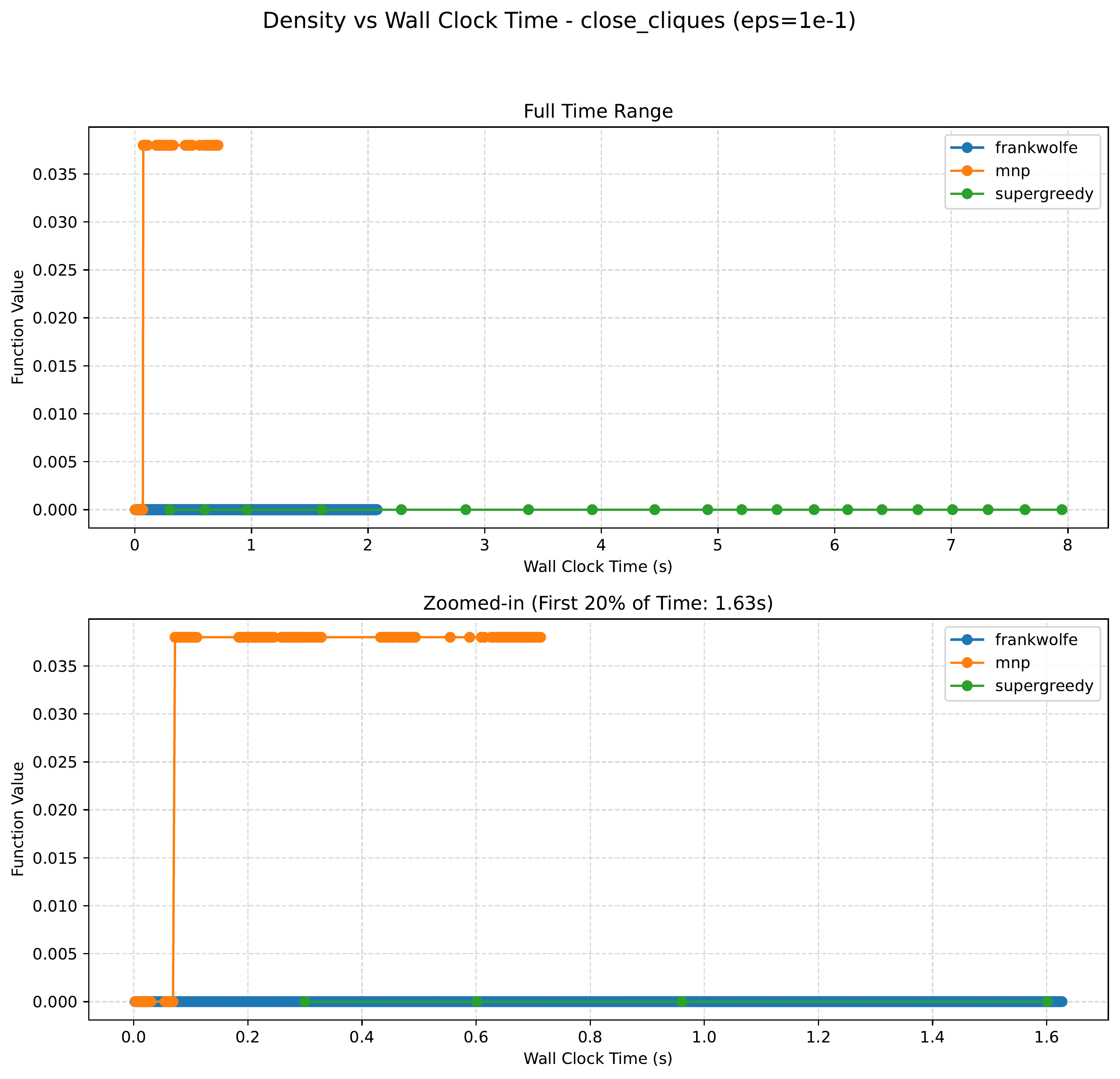}
        \caption{$(\varepsilon{=}1\text{e}{-}1)$}
        \label{fig:cm-close-cliques-1e1}
    \end{subfigure}
    \hfill
    \begin{subfigure}[t]{0.24\textwidth}
        \centering
        \includegraphics[width=\textwidth]{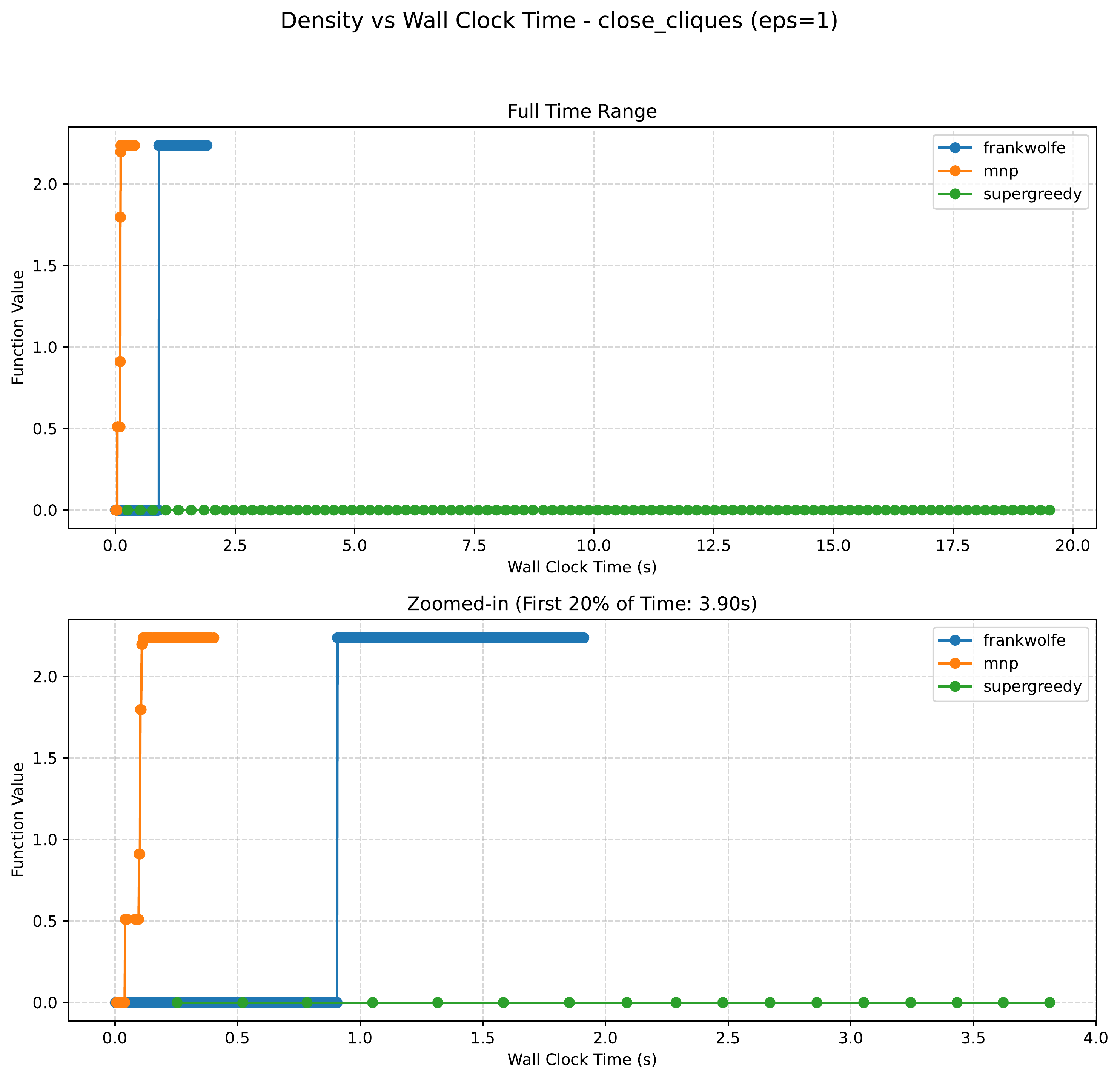}
        \caption{$(\varepsilon{=}1)$}
        \label{fig:cm-close-cliques-1}
    \end{subfigure}
    \hfill
    \begin{subfigure}[t]{0.24\textwidth}
        \centering
        \includegraphics[width=\textwidth]{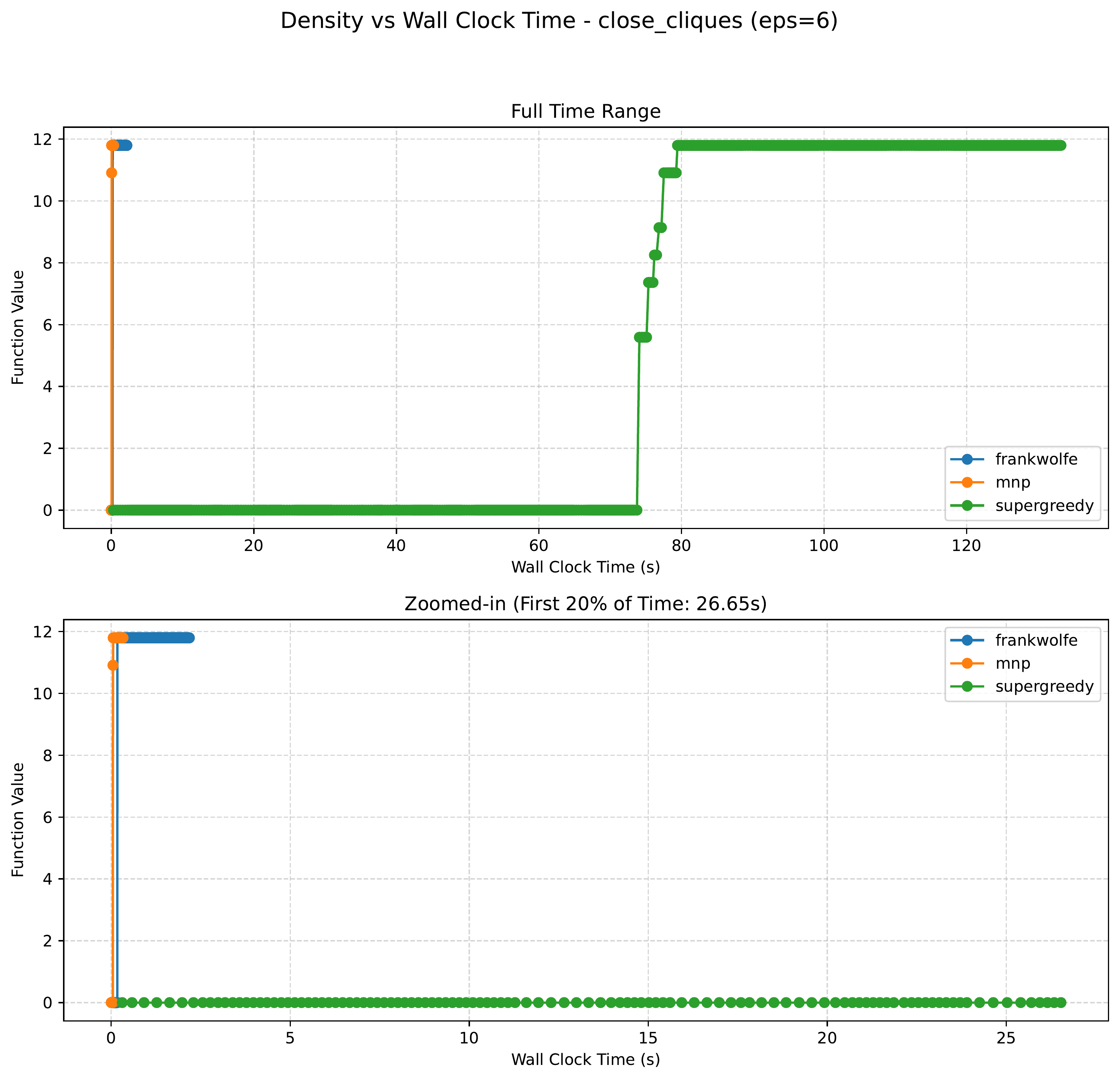}
        \caption{$(\varepsilon{=}6)$}
        \label{fig:cm-close-cliques-6}
    \end{subfigure}
    \hfill
    \begin{subfigure}[t]{0.24\textwidth}
        \centering
        \includegraphics[width=\textwidth]{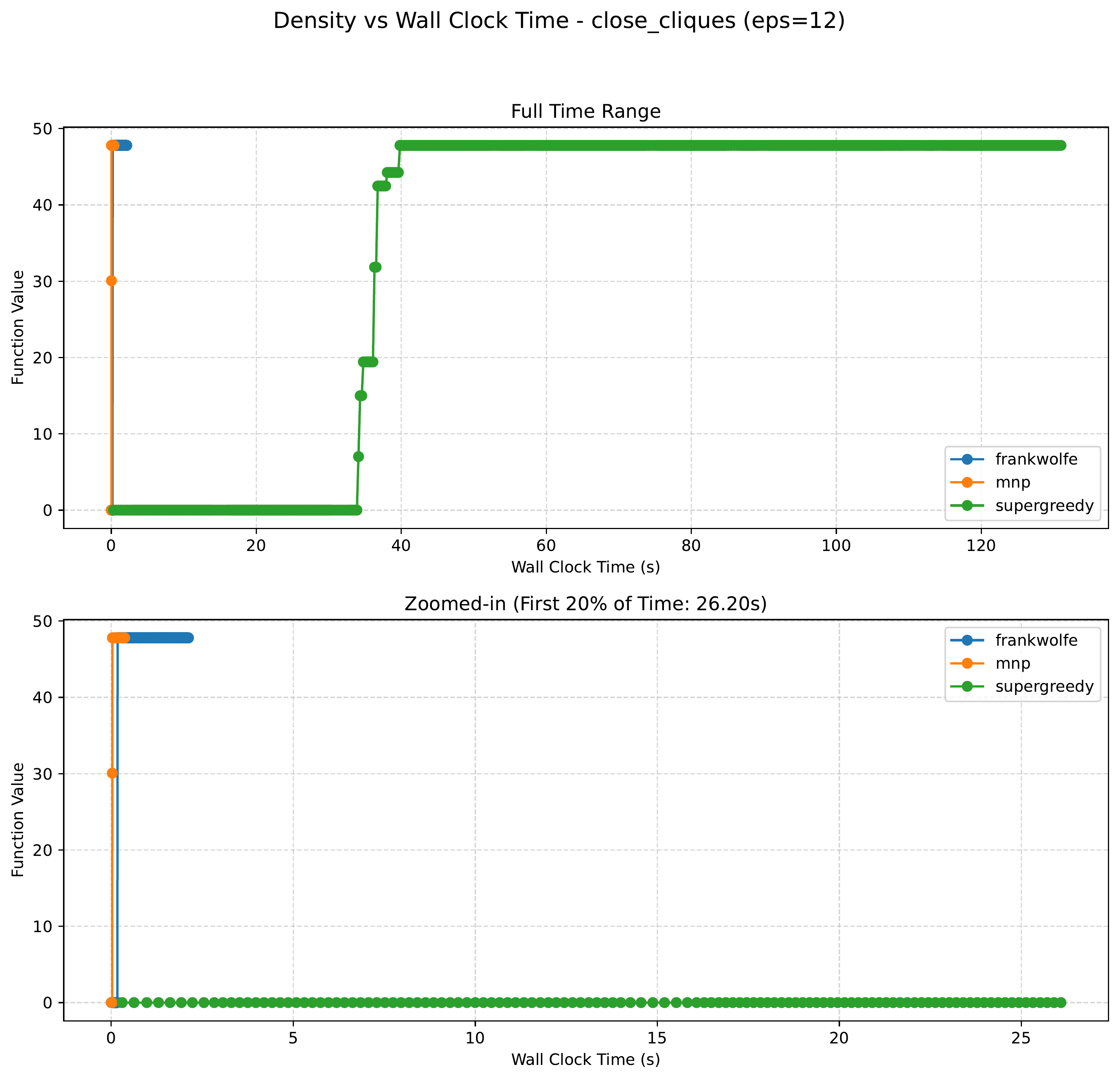}
        \caption{$(\varepsilon{=}12)$}
        \label{fig:cm-close-cliques-12}
    \end{subfigure}
    \caption{Contrapolymatroid Membership Function Value over time - \texttt{close\_cliques}}
    \label{fig:cm-close-cliques}
\end{figure}

\begin{figure}[H]
    \centering
    \begin{subfigure}[t]{0.24\textwidth}
        \centering
        \includegraphics[width=\textwidth]{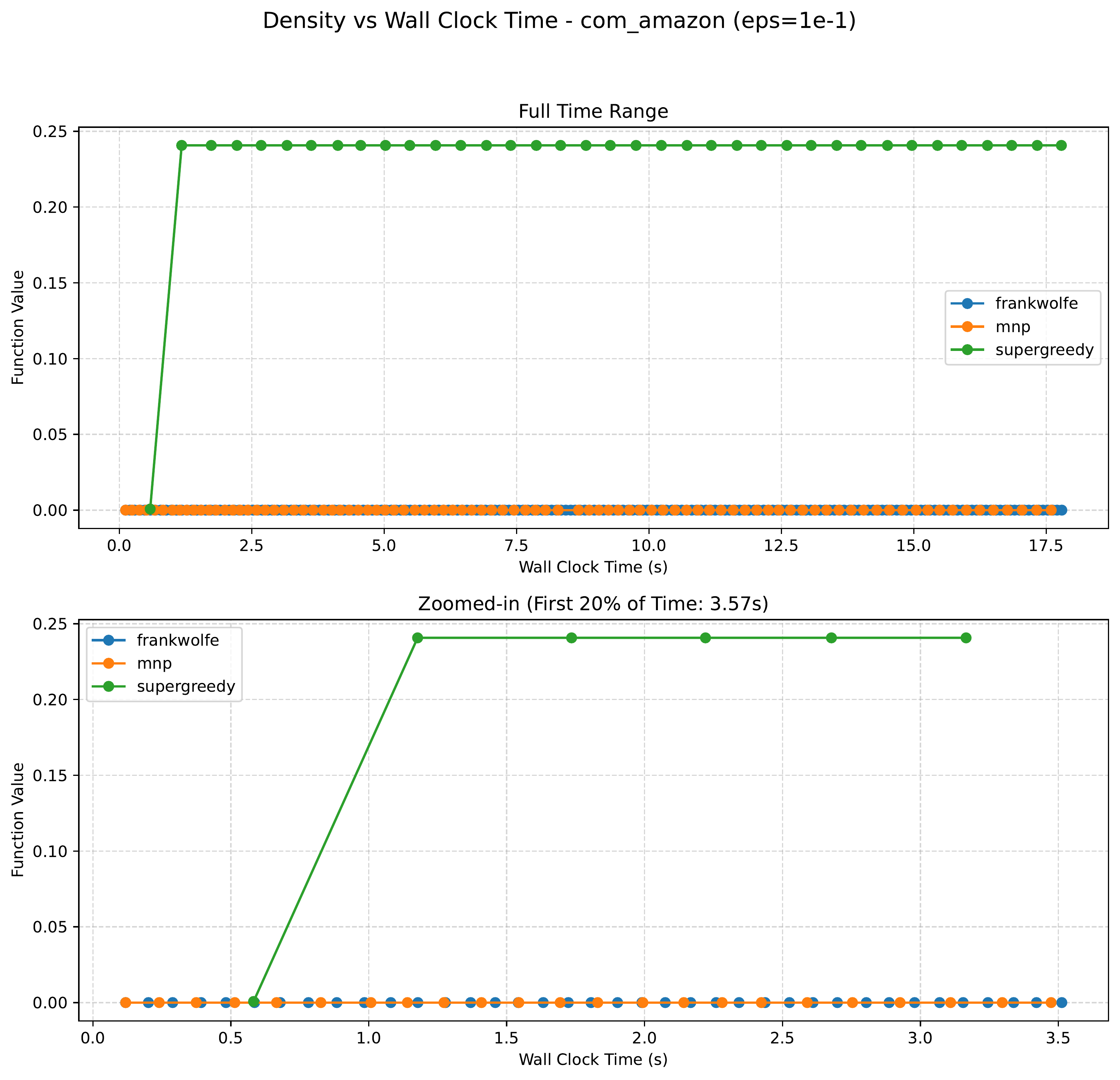}
        \caption{$(\varepsilon{=}1\text{e}{-}1)$}
        \label{fig:cm-com-amazon-1e1}
    \end{subfigure}
    \hfill
    \begin{subfigure}[t]{0.24\textwidth}
        \centering
        \includegraphics[width=\textwidth]{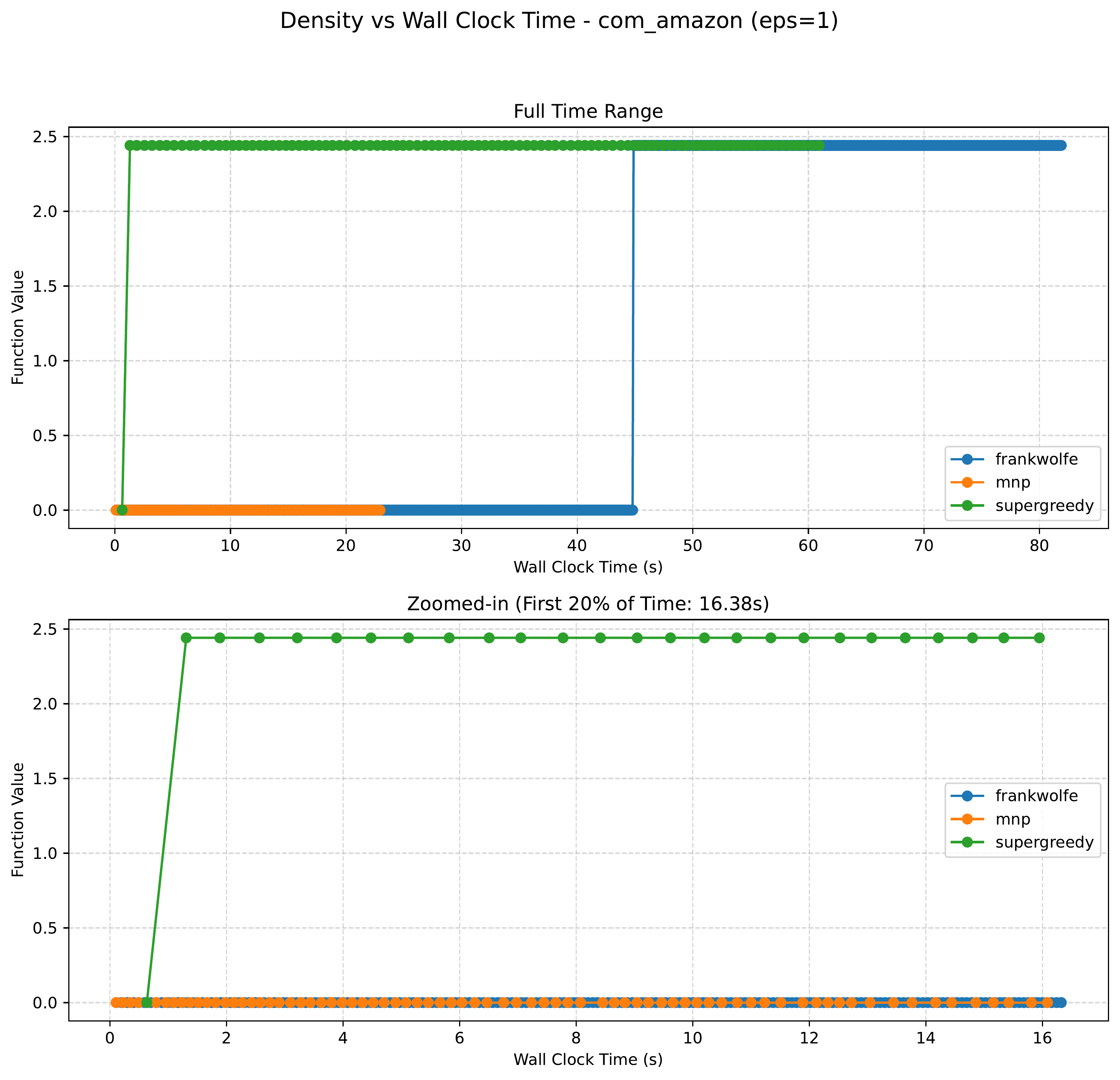}
        \caption{$(\varepsilon{=}1)$}
        \label{fig:cm-com-amazon-1}
    \end{subfigure}
    \hfill
    \begin{subfigure}[t]{0.24\textwidth}
        \centering
        \includegraphics[width=\textwidth]{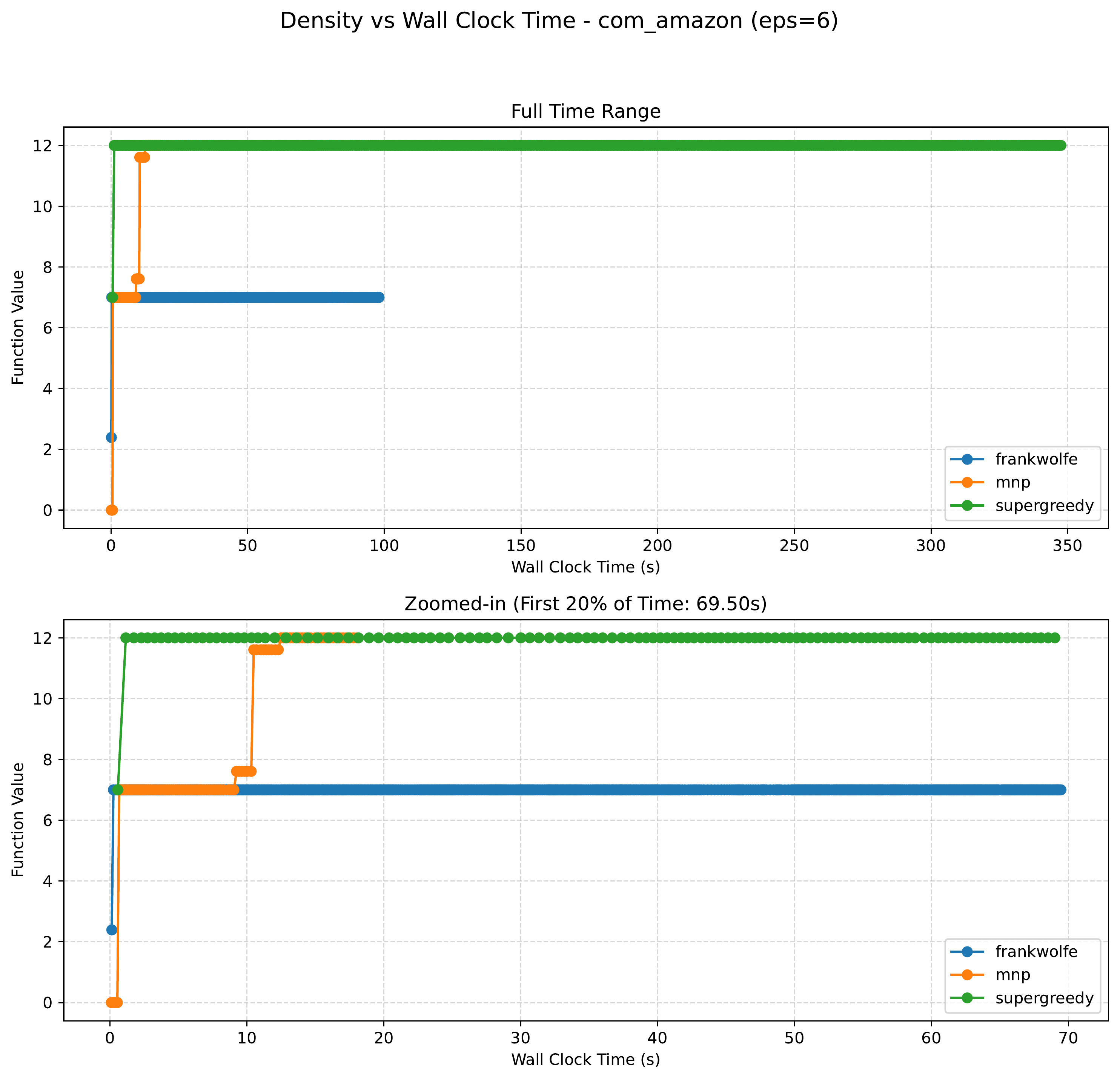}
        \caption{$(\varepsilon{=}6)$}
        \label{fig:cm-com-amazon-6}
    \end{subfigure}
    \hfill
    \begin{subfigure}[t]{0.24\textwidth}
        \centering
        \includegraphics[width=\textwidth]{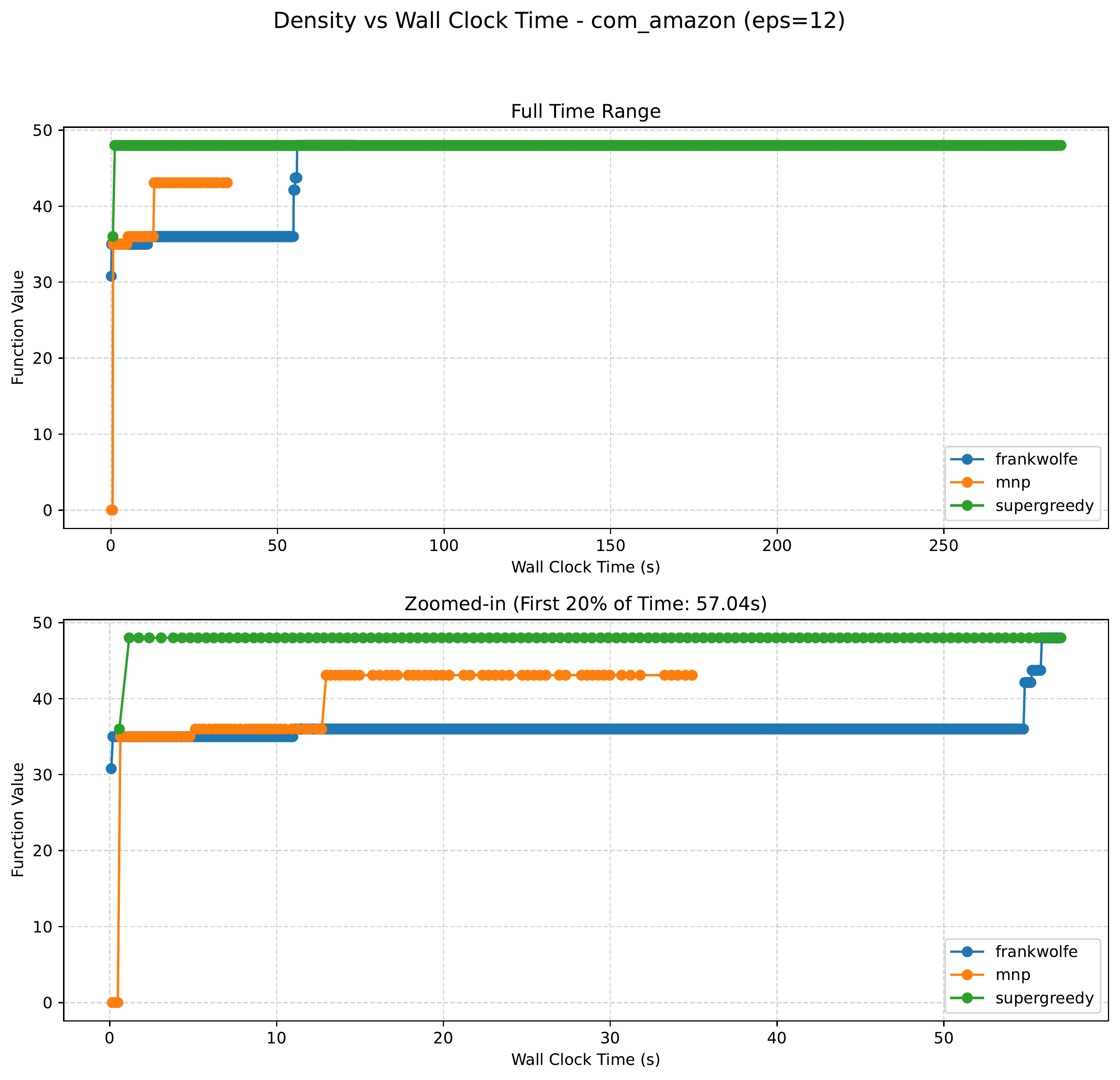}
        \caption{$(\varepsilon{=}12)$}
        \label{fig:cm-com-amazon-12}
    \end{subfigure}
    \caption{Contrapolymatroid Membership Function Value over time - \texttt{com\_amazon}}
    \label{fig:cm-com-amazon}
\end{figure}

\begin{figure}[H]
    \centering
    \begin{subfigure}[t]{0.24\textwidth}
        \centering
        \includegraphics[width=\textwidth]{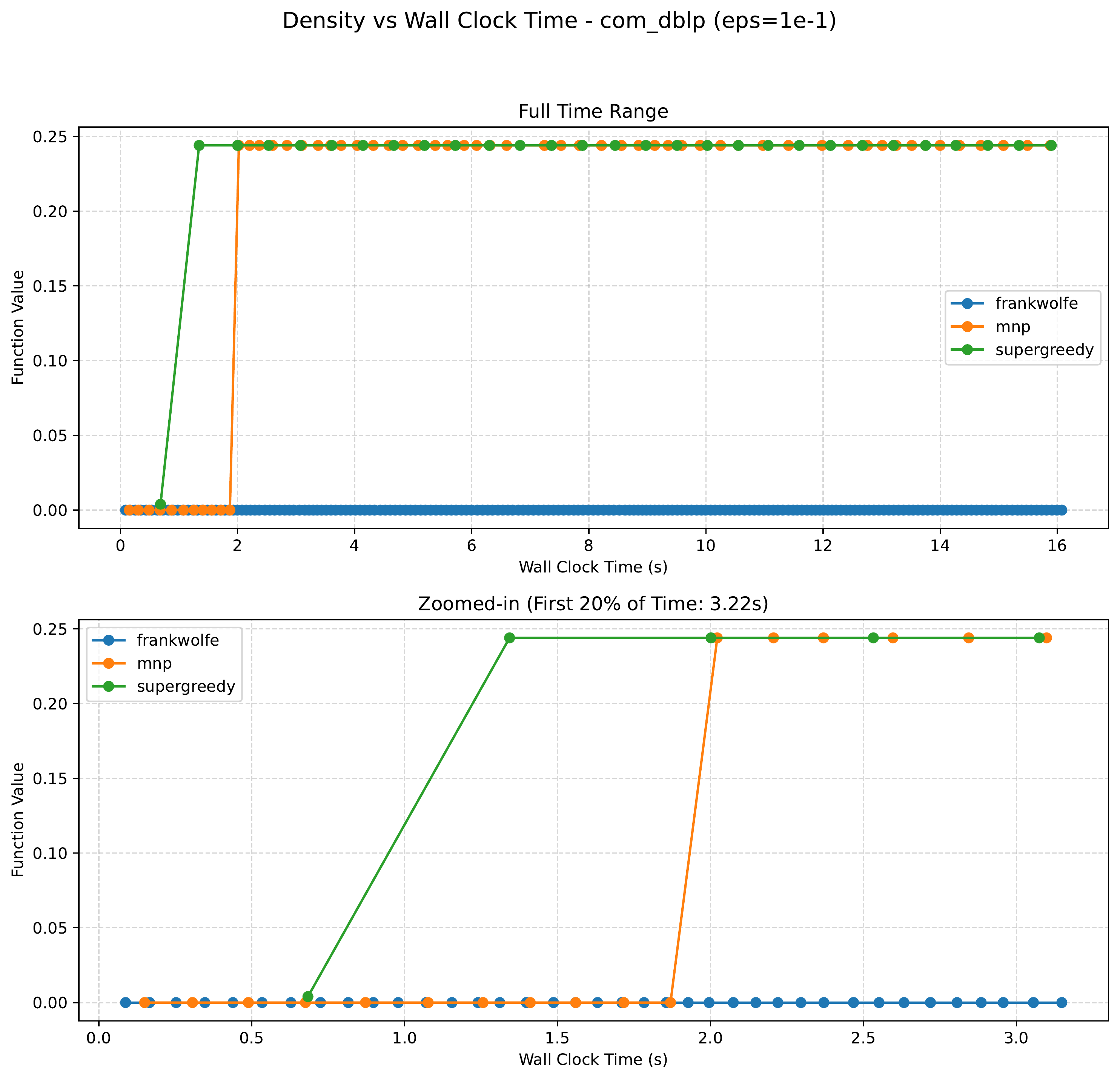}
        \caption{$(\varepsilon{=}1\text{e}{-}1)$}
        \label{fig:cm-com-dblp-1e1}
    \end{subfigure}
    \hfill
    \begin{subfigure}[t]{0.24\textwidth}
        \centering
        \includegraphics[width=\textwidth]{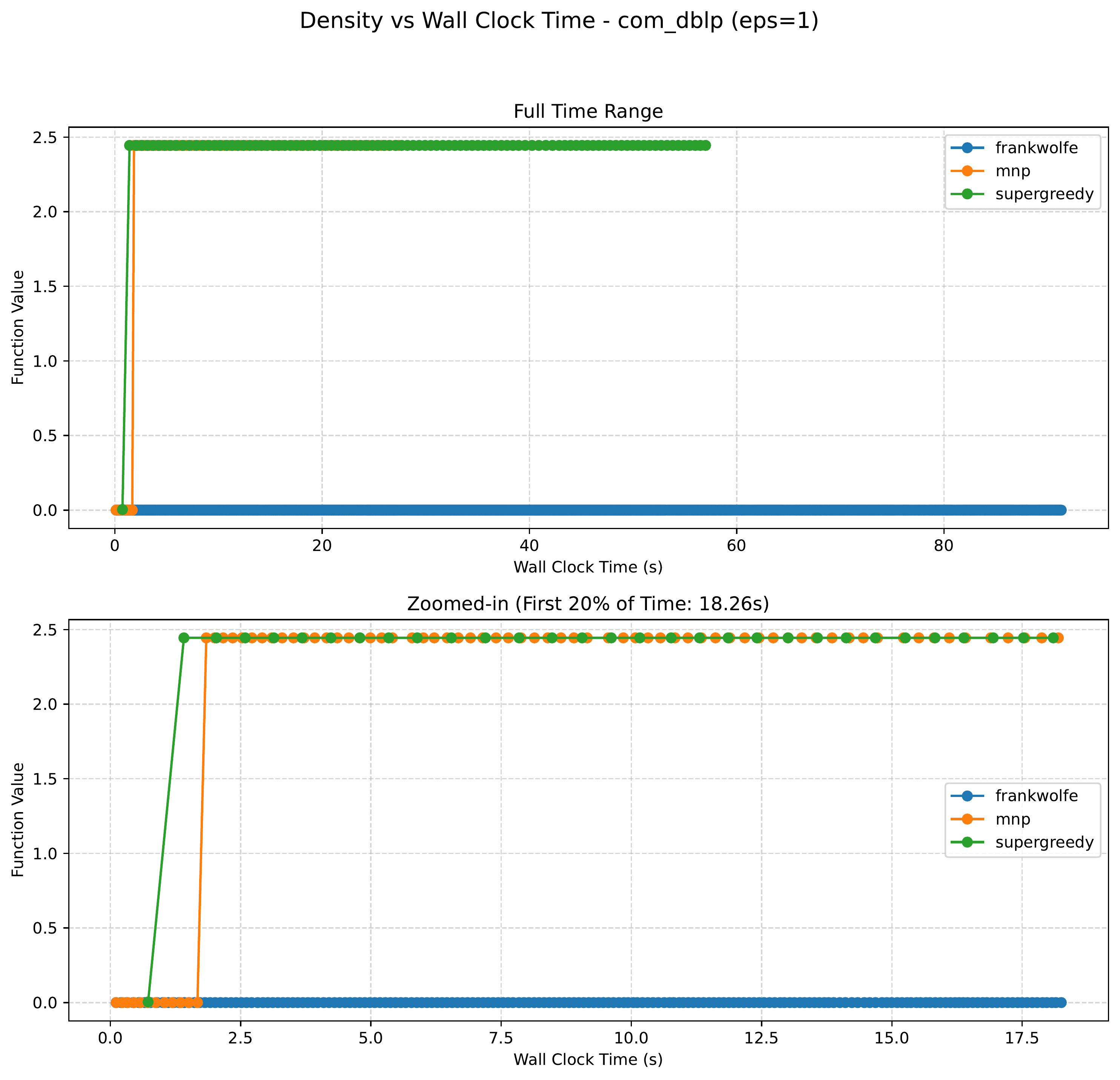}
        \caption{$(\varepsilon{=}1)$}
        \label{fig:cm-com-dblp-1}
    \end{subfigure}
    \hfill
    \begin{subfigure}[t]{0.24\textwidth}
        \centering
        \includegraphics[width=\textwidth]{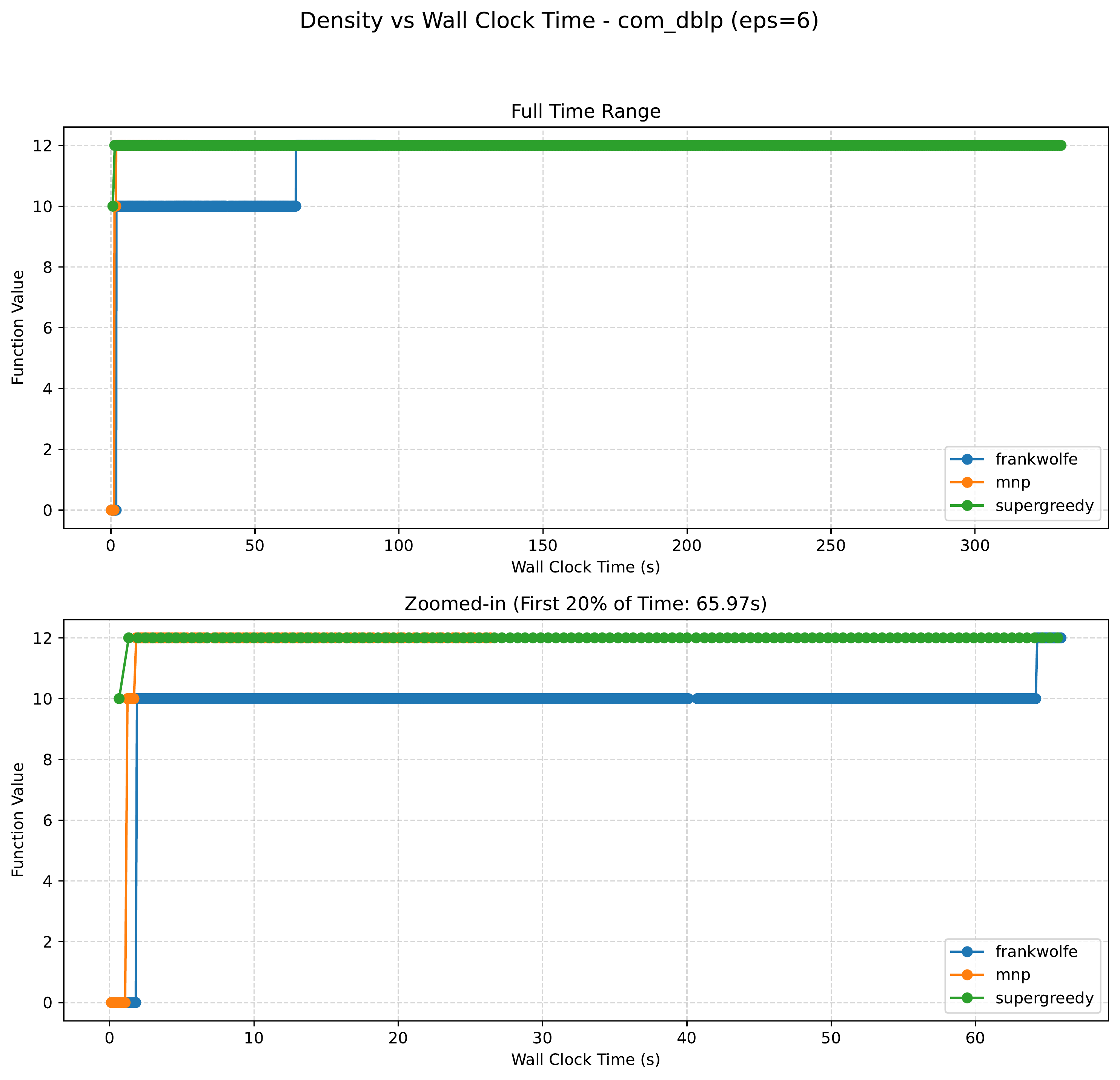}
        \caption{$(\varepsilon{=}6)$}
        \label{fig:cm-com-dblp-6}
    \end{subfigure}
    \hfill
    \begin{subfigure}[t]{0.24\textwidth}
        \centering
        \includegraphics[width=\textwidth]{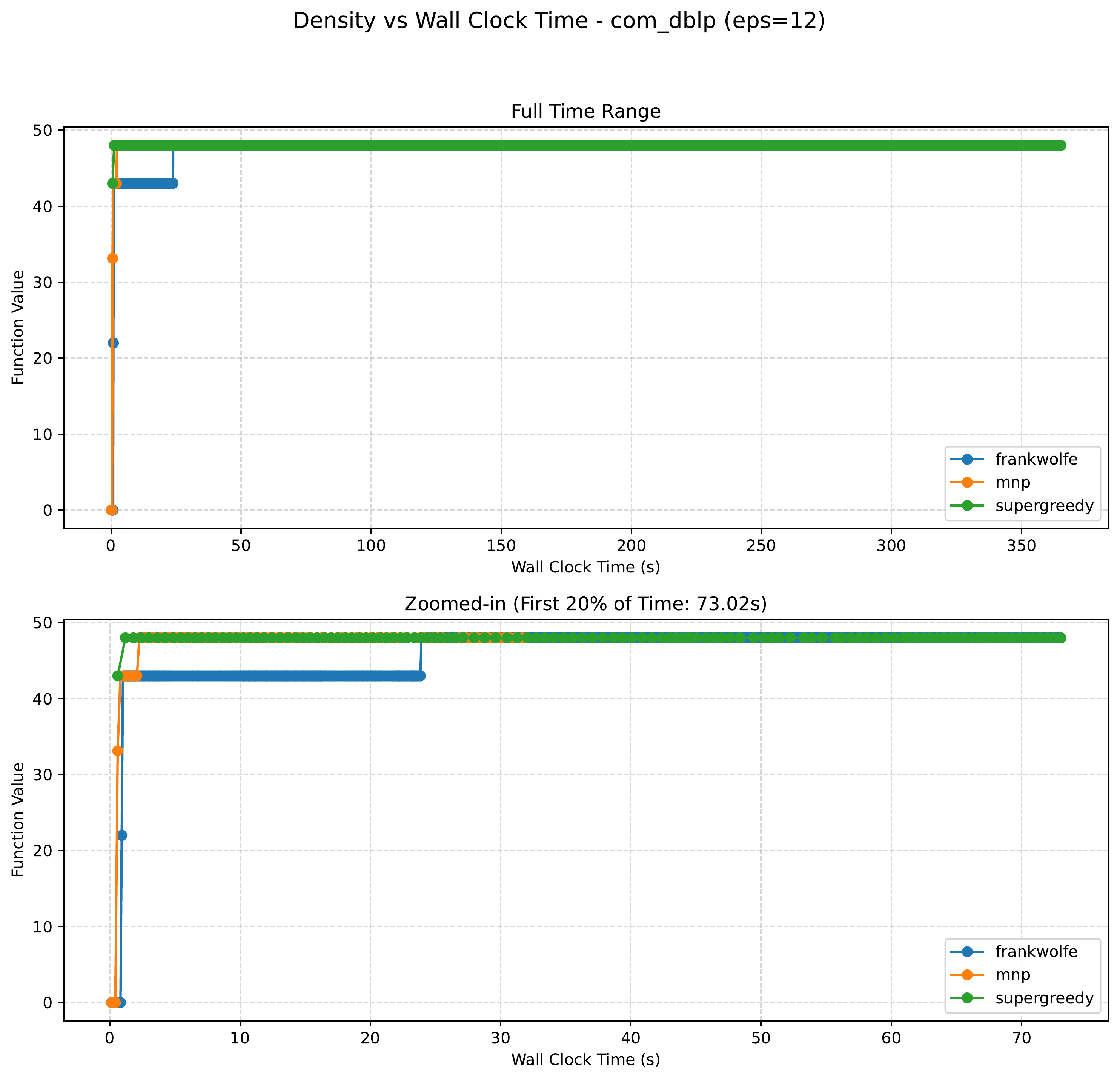}
        \caption{$(\varepsilon{=}12)$}
        \label{fig:cm-com-dblp-12}
    \end{subfigure}
    \caption{Contrapolymatroid Membership Function Value over time - \texttt{com\_dblp}}
    \label{fig:cm-com-dblp}
\end{figure}

\begin{figure}[H]
    \centering
    \begin{subfigure}[t]{0.24\textwidth}
        \centering
        \includegraphics[width=\textwidth]{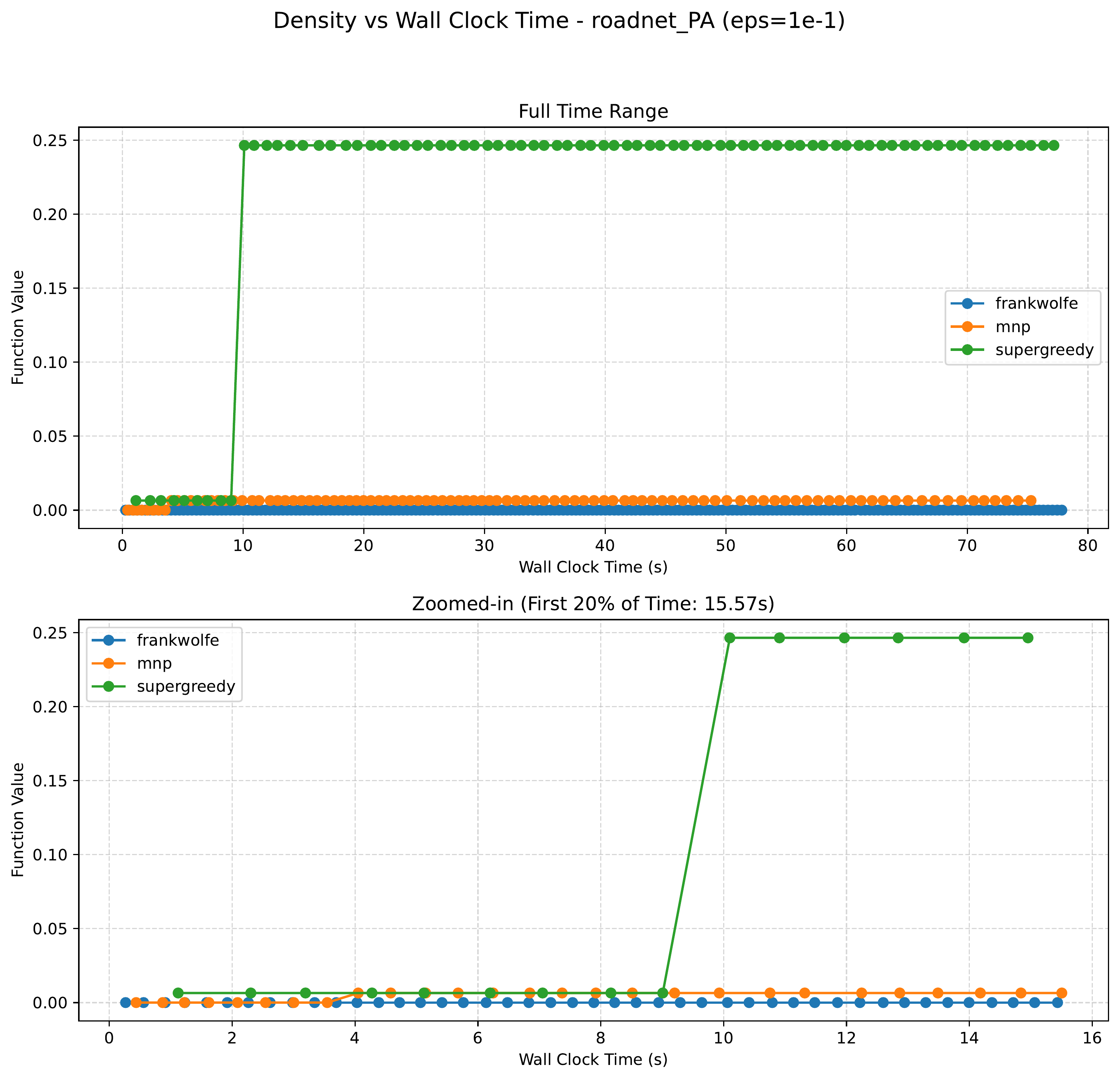}
        \caption{$(\varepsilon{=}1\text{e}{-}1)$}
        \label{fig:cm-roadnet-pa-1e1}
    \end{subfigure}
    \hfill
    \begin{subfigure}[t]{0.24\textwidth}
        \centering
        \includegraphics[width=\textwidth]{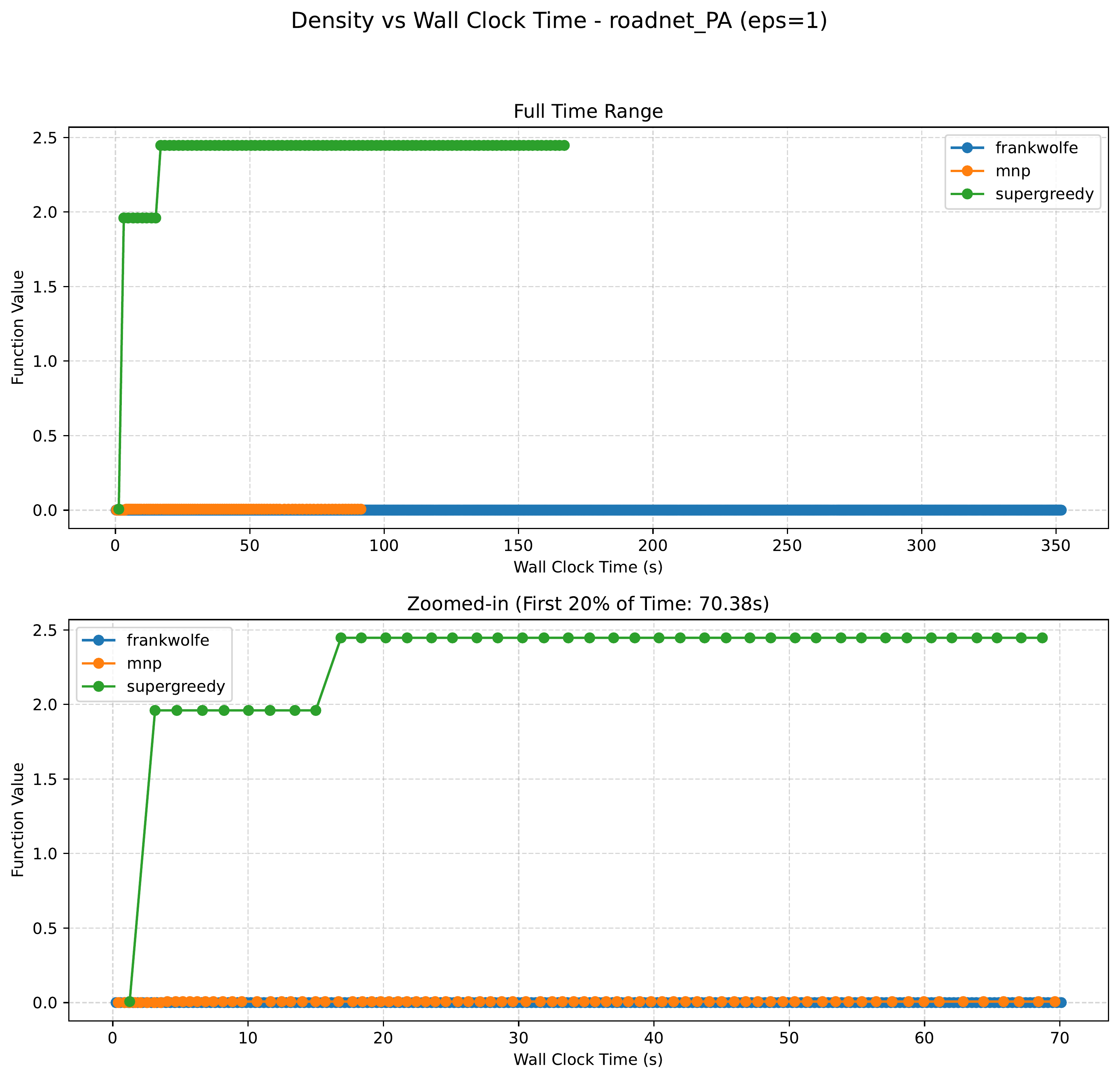}
        \caption{$(\varepsilon{=}1)$}
        \label{fig:cm-roadnet-pa-1}
    \end{subfigure}
    \hfill
    \begin{subfigure}[t]{0.24\textwidth}
        \centering
        \includegraphics[width=\textwidth]{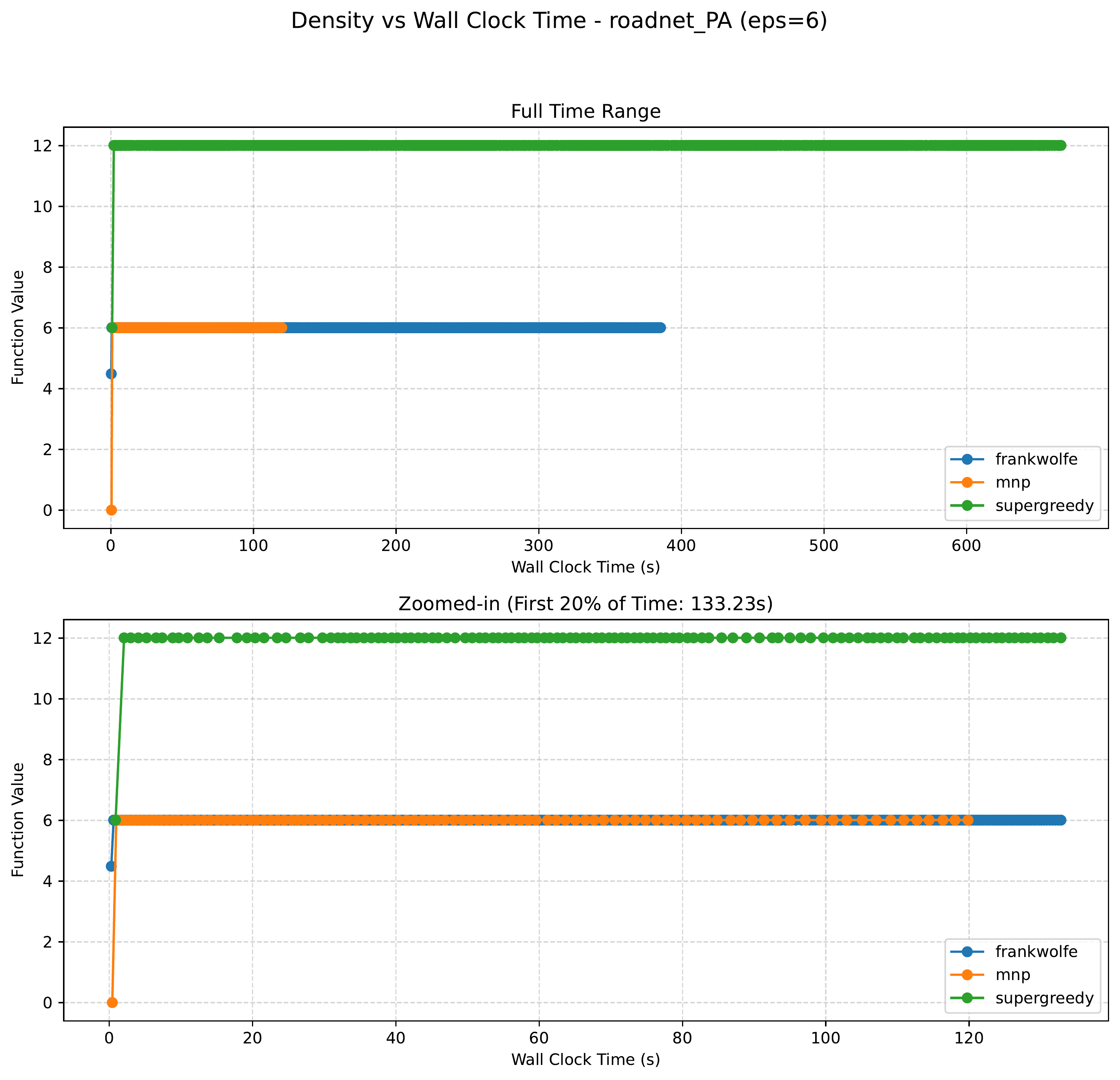}
        \caption{$(\varepsilon{=}6)$}
        \label{fig:cm-roadnet-pa-6}
    \end{subfigure}
    \hfill
    \begin{subfigure}[t]{0.24\textwidth}
        \centering
        \includegraphics[width=\textwidth]{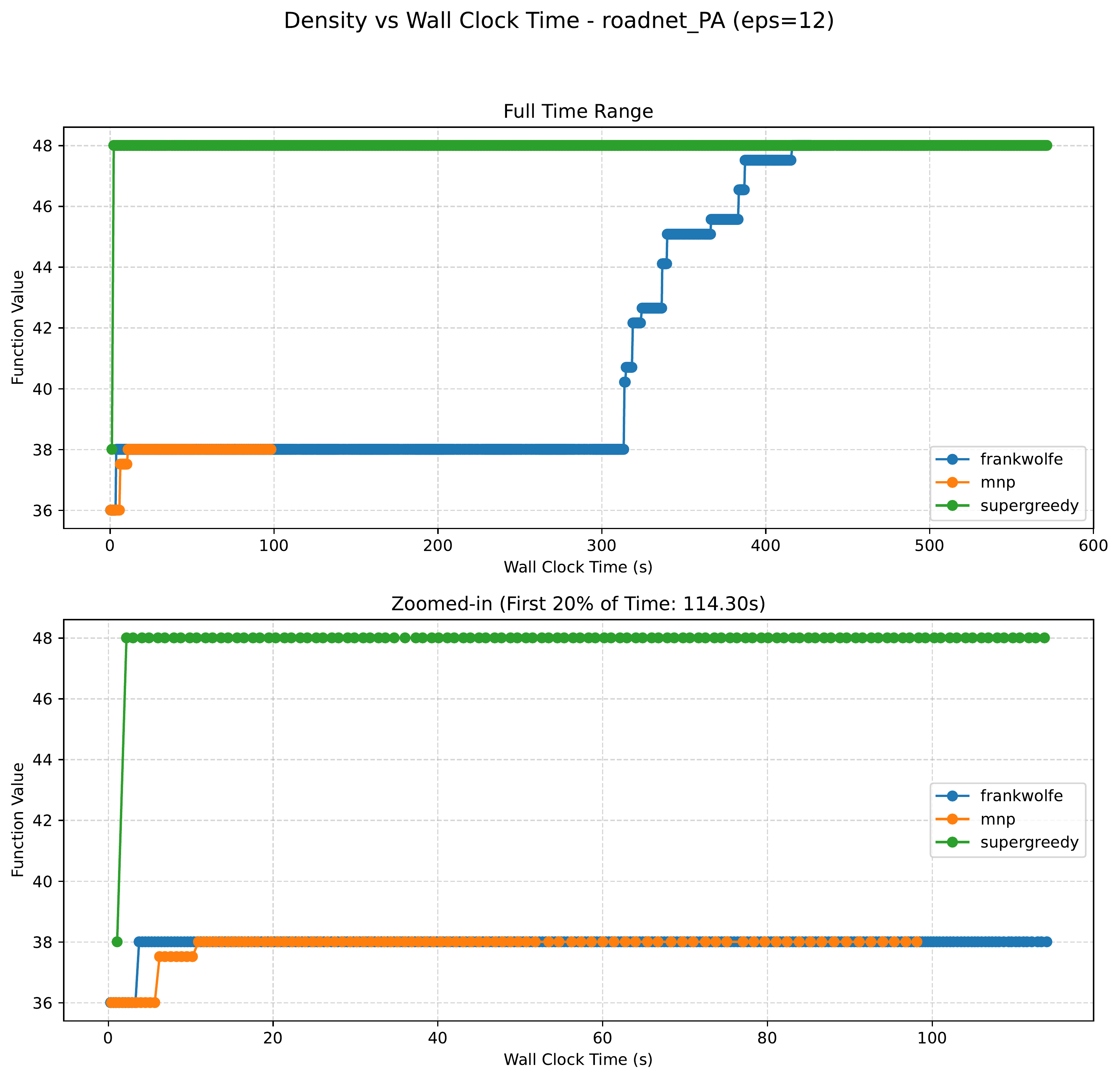}
        \caption{$(\varepsilon{=}12)$}
        \label{fig:cm-roadnet-pa-12}
    \end{subfigure}
    \caption{Contrapolymatroid Membership Function Value over time - \texttt{roadnet\_PA}}
    \label{fig:cm-roadnet-pa}
\end{figure}

\begin{figure}[H]
    \centering
    \begin{subfigure}[t]{0.48\textwidth}
        \centering
        \includegraphics[width=\textwidth]{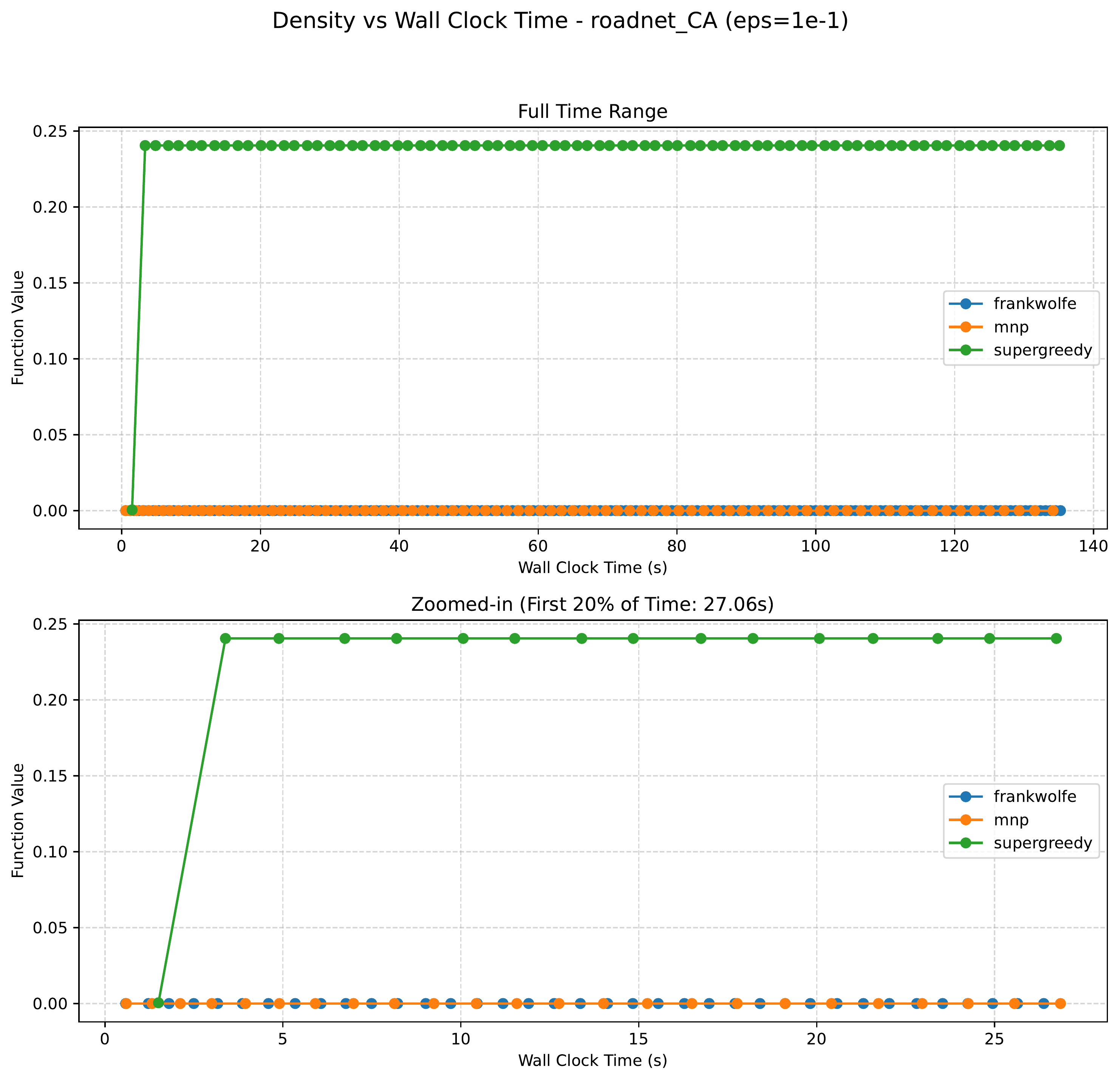}
        \caption{$(\varepsilon{=}1\text{e}{-}1)$}
        \label{fig:cm-roadnet-ca-1e1}
    \end{subfigure}
    \hfill
    \begin{subfigure}[t]{0.48\textwidth}
        \centering
        \includegraphics[width=\textwidth]{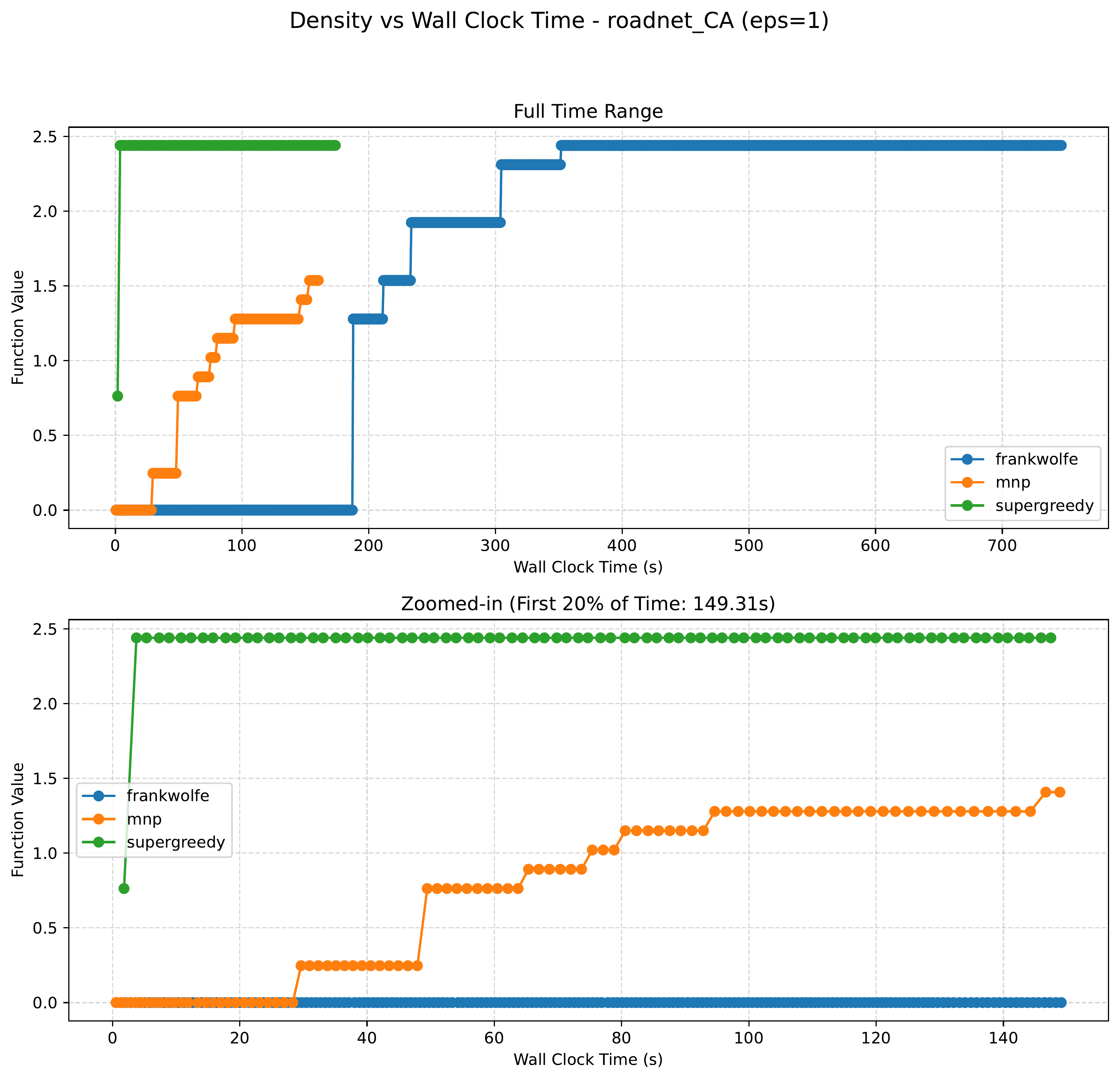}
        \caption{$(\varepsilon{=}1)$}
        \label{fig:cm-roadnet-ca-1}
    \end{subfigure}

    \begin{subfigure}[t]{0.48\textwidth}
        \centering
        \includegraphics[width=\textwidth]{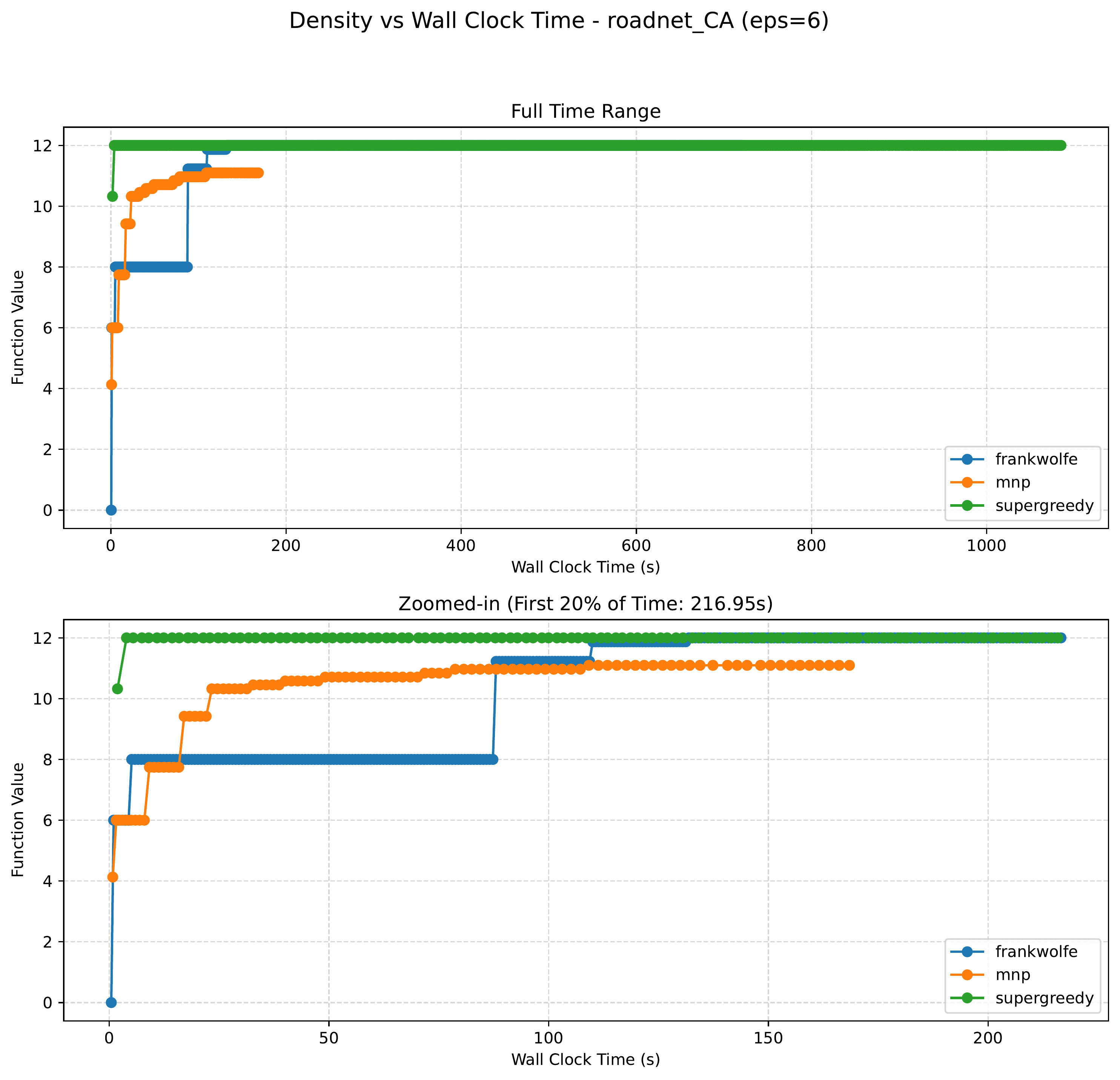}
        \caption{$(\varepsilon{=}6)$}
        \label{fig:cm-roadnet-ca-6}
    \end{subfigure}
    \hfill
    \begin{subfigure}[t]{0.48\textwidth}
        \centering
        \includegraphics[width=\textwidth]{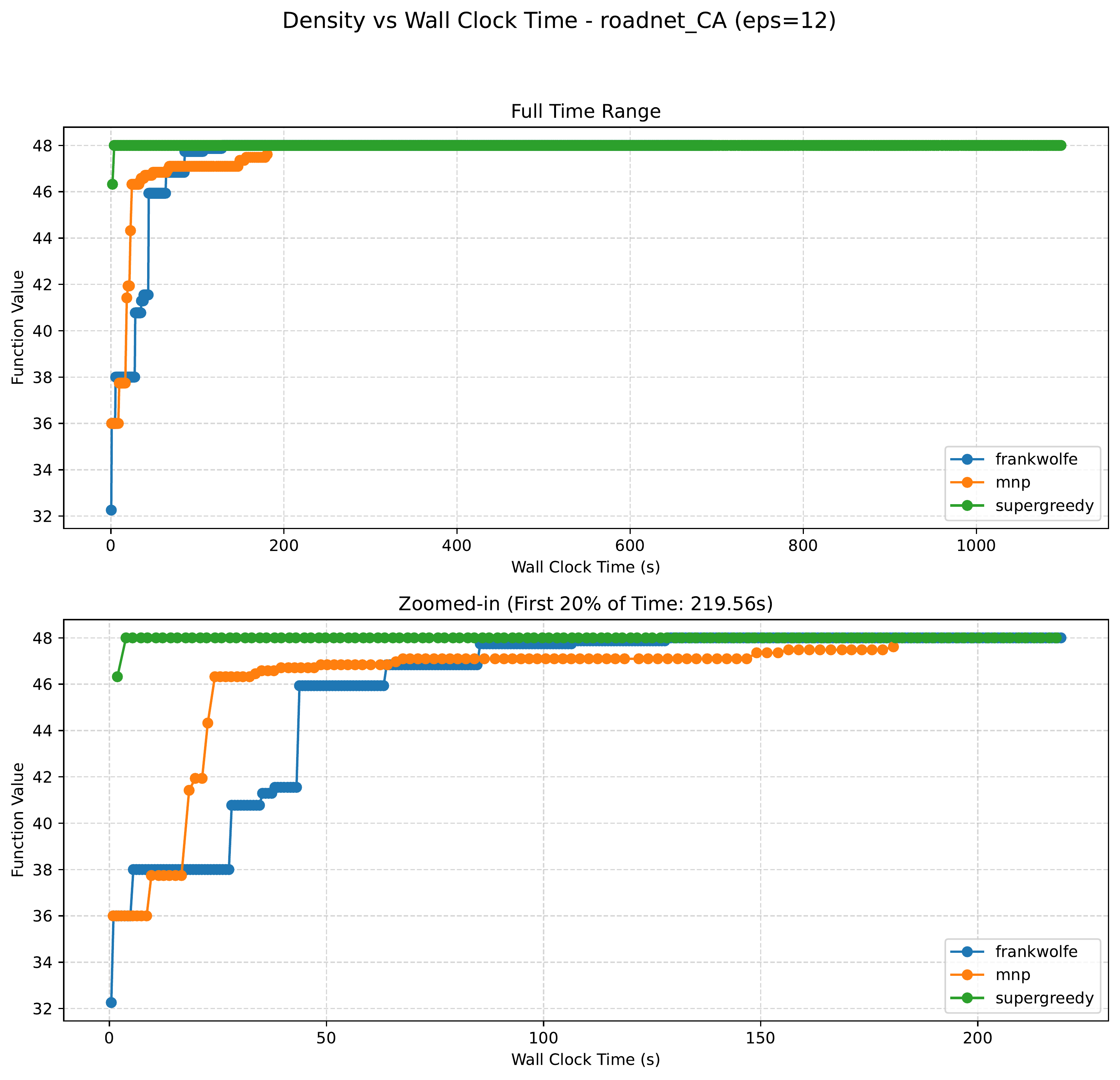}
        \caption{$(\varepsilon{=}12)$}
        \label{fig:cm-roadnet-ca-12}
    \end{subfigure}
    \caption{Contrapolymatroid Membership Function Value over time - \texttt{roadnet\_CA}}
    \label{fig:cm-roadnet-ca}
\end{figure}

\clearpage
\section{Minimum Norm Point Problem Experiments}
\label{minnormpoint}

\textbf{Problem Definition.} Let $f: 2^V \to \mathbb{R}$ be a normalized (super)submodular function, and let $B(f)$ denote its base polytope. Our objective is to find the point in $B(f)$ with the minimum Euclidean norm, i.e., $\arg\min_{x \in B(f)} \|x\|_2^2$.

\textbf{Algorithms.} We evaluate three algorithms: Frank-Wolfe (\textsc{FrankWolfe}), Fujishige-Wolfe Minimum Norm Point (\textsc{MNP}), and \textsc{SuperGreedy++} (\textsc{SuperGreedy}). All algorithms are implemented in \texttt{C++}.

\textbf{Datasets.} We reuse all instances from prior experiments but focus on the load vector norm rather than the original objective.

\textbf{Resources.} Please refer to the previously cited resources for further information on each problem and its resources.

\textbf{Discussion of Minimum Norm Point Results.} Algorithm performance varies significantly across problems and instances. Overall, \textsc{SuperGreedy++} performs best on DSG, consistently achieving the lowest norm point; the other algorithms behave similarly and sub-optimally. For Anchored DSG, Generalized $p$-mean DSG, and HNSN, \textsc{FW} and \textsc{MNP} typically outperform \textsc{SuperGreedy++}. However, all methods generally converge to similar norm values—except in generalized $p$-mean DSG, where \textsc{SuperGreedy++} often settles at suboptimal norms despite excelling in the classical DSG case ($p = 1$). A comparable pattern appears in the \texttt{close\_cliques} instance for Contrapolymatroid Membership, while in other CM instances, \textsc{SuperGreedy++} converges faster and achieves lower norms. Lastly, for Min $s$-$t$ Cut, \textsc{SuperGreedy++} and \textsc{MNP} perform comparably, with \textsc{SuperGreedy++} slightly outperforming, and both approaches consistently exceeding \textsc{FW}.

\subsection{DSG}
\begin{figure}[H]
    \centering
    \begin{subfigure}[t]{0.24\textwidth}
        \centering
        \includegraphics[width=\textwidth]{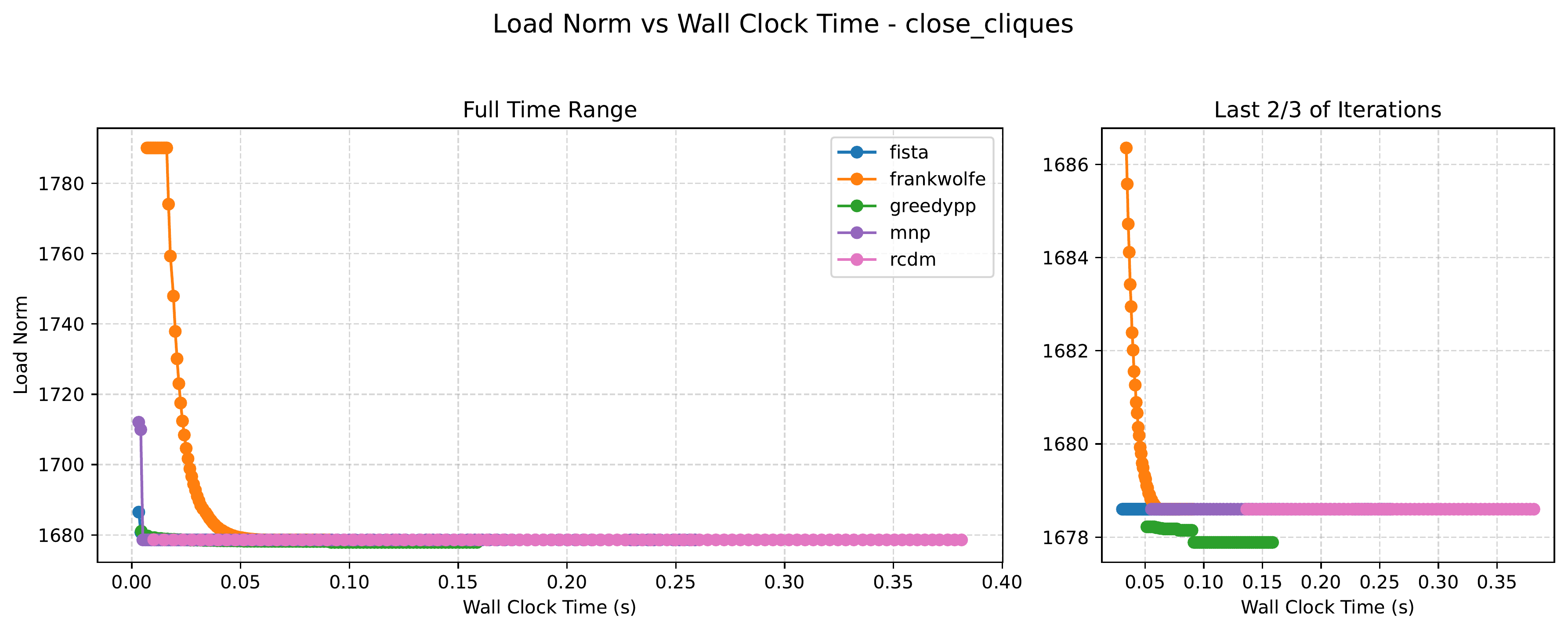}
        \caption{\texttt{close\_cliques}}
        \label{fig:dsg-close-cliques-mnp}
    \end{subfigure}
    \hfill
    \begin{subfigure}[t]{0.24\textwidth}
        \centering
        \includegraphics[width=\textwidth]{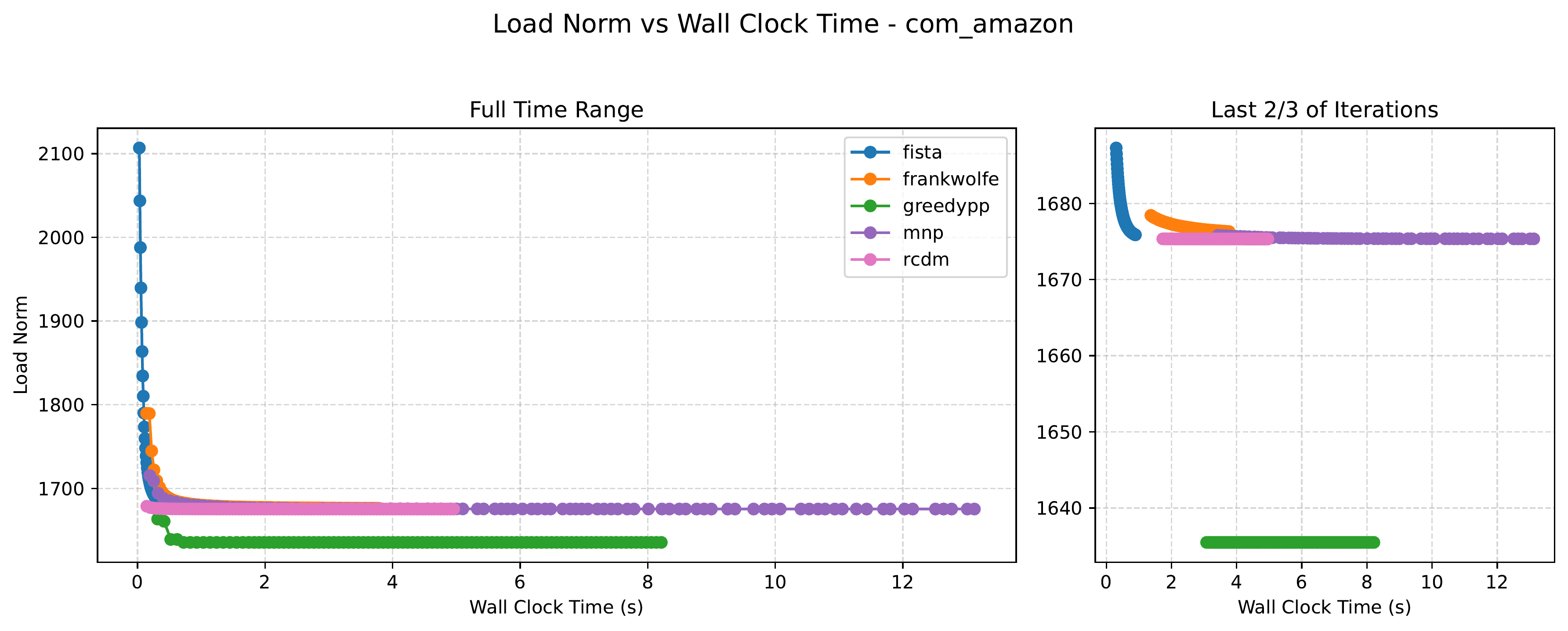}
        \caption{\texttt{com\_amazon}}
        \label{fig:dsg-com-amazon-mnp}
    \end{subfigure}
    \hfill
    \begin{subfigure}[t]{0.24\textwidth}
        \centering
        \includegraphics[width=\textwidth]{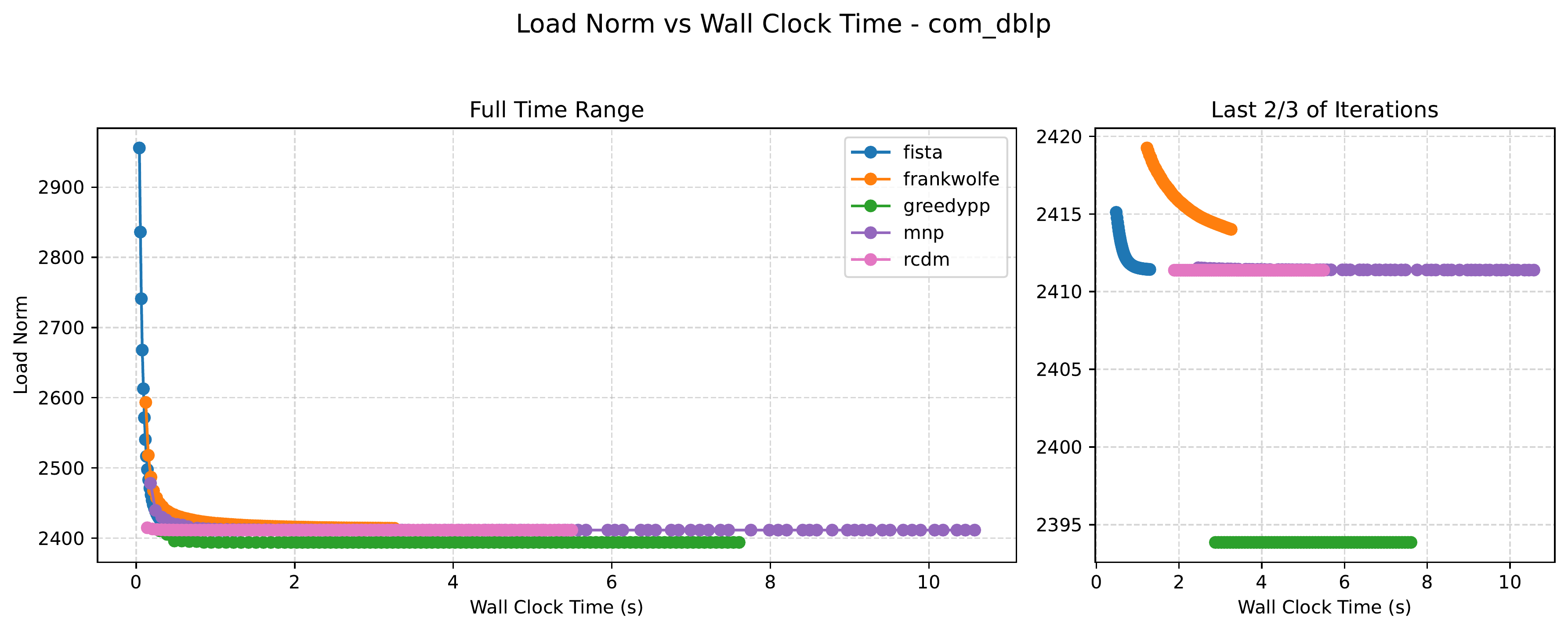}
        \caption{\texttt{com\_dblp}}
        \label{fig:dsg-com-dblp-mnp}
    \end{subfigure}
    \hfill
    \begin{subfigure}[t]{0.24\textwidth}
        \centering
        \includegraphics[width=\textwidth]{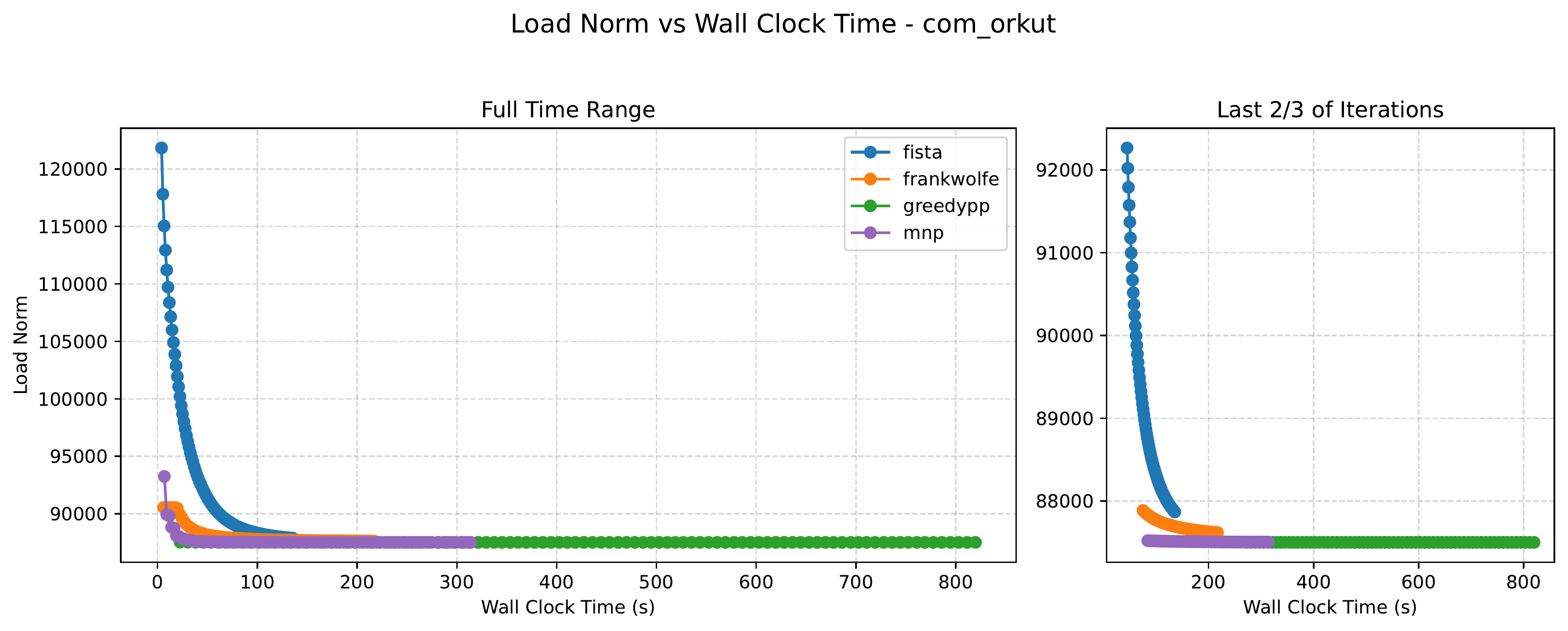}
        \caption{\texttt{com\_orkut}}
        \label{fig:dsg-com-orkut-mnp}
    \end{subfigure}

    \begin{subfigure}[t]{0.32\textwidth}
        \centering
        \includegraphics[width=\textwidth]{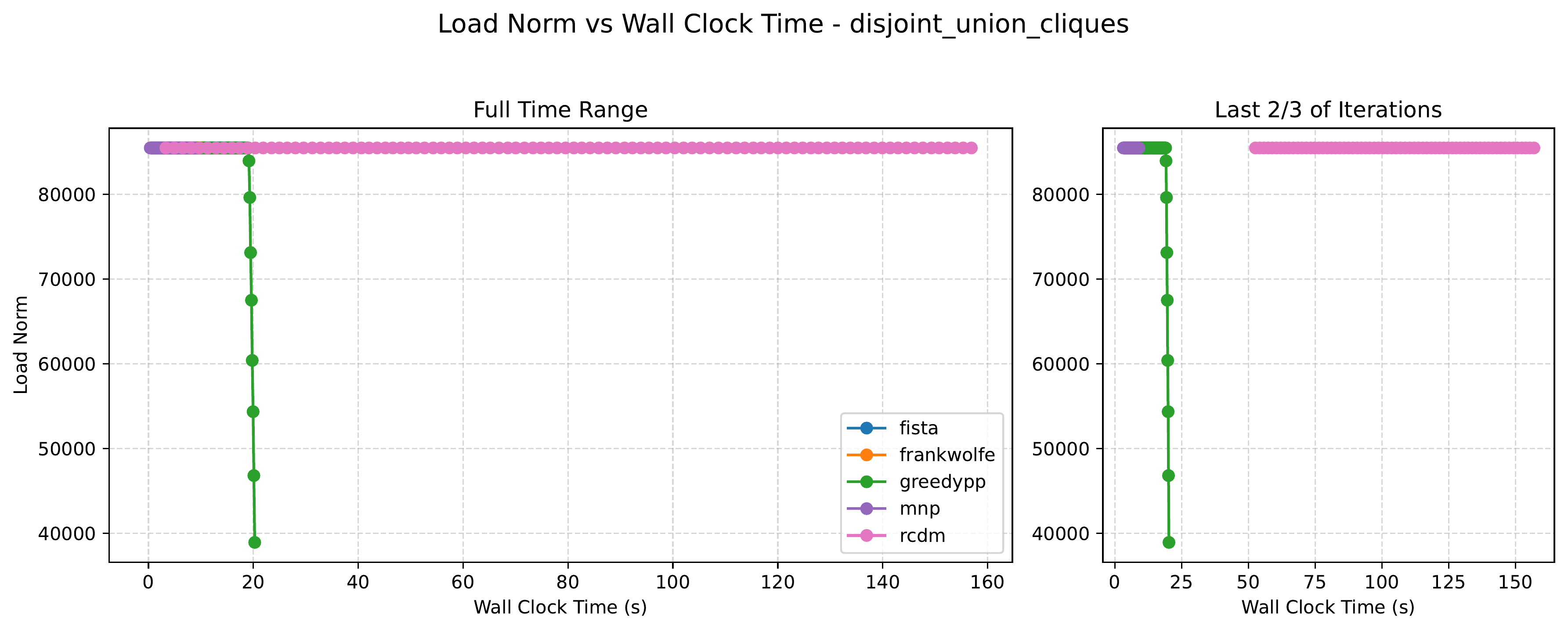}
        \caption{\texttt{disjoint\_union\_cliques}}
        \label{fig:dsg-duc-mnp}
    \end{subfigure}
    \hfill
    \begin{subfigure}[t]{0.32\textwidth}
        \centering
        \includegraphics[width=\textwidth]{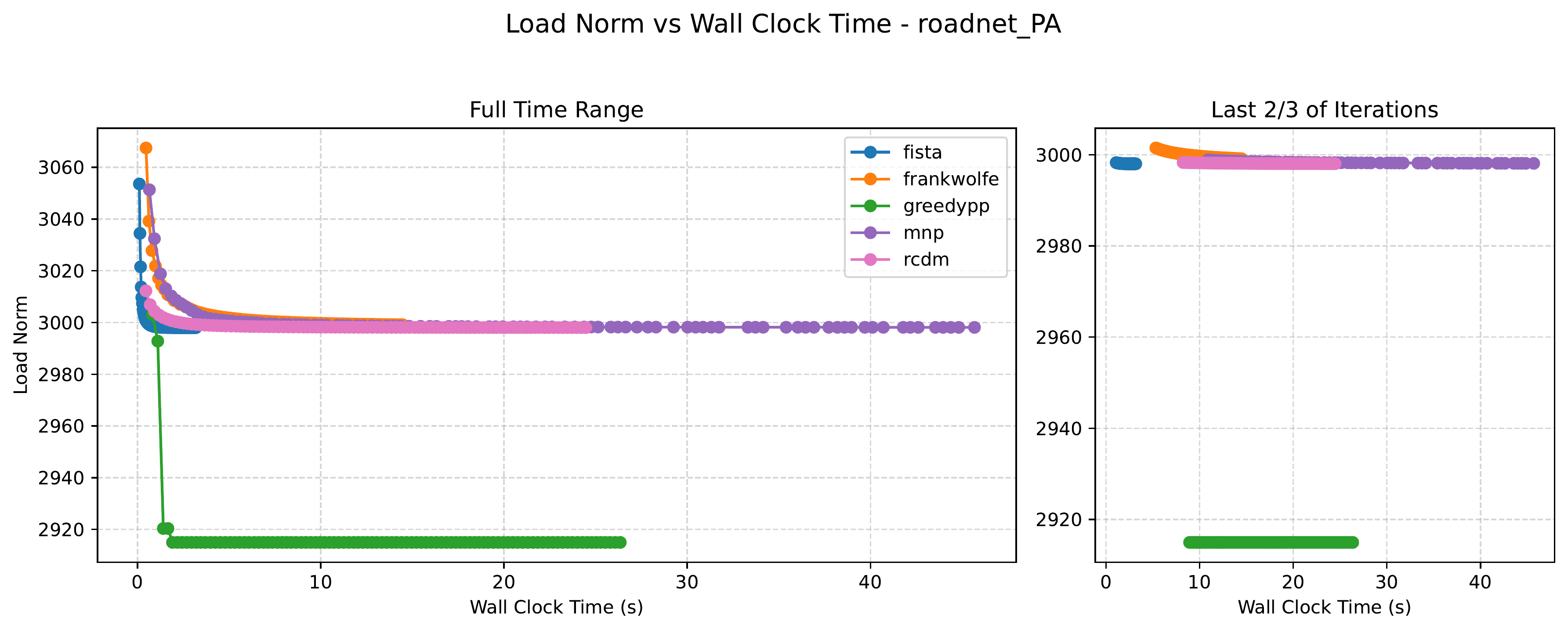}
        \caption{\texttt{roadnet\_PA}}
        \label{fig:dsg-roadnet-pa-mnp}
    \end{subfigure}
    \hfill
    \begin{subfigure}[t]{0.32\textwidth}
        \centering
        \includegraphics[width=\textwidth]{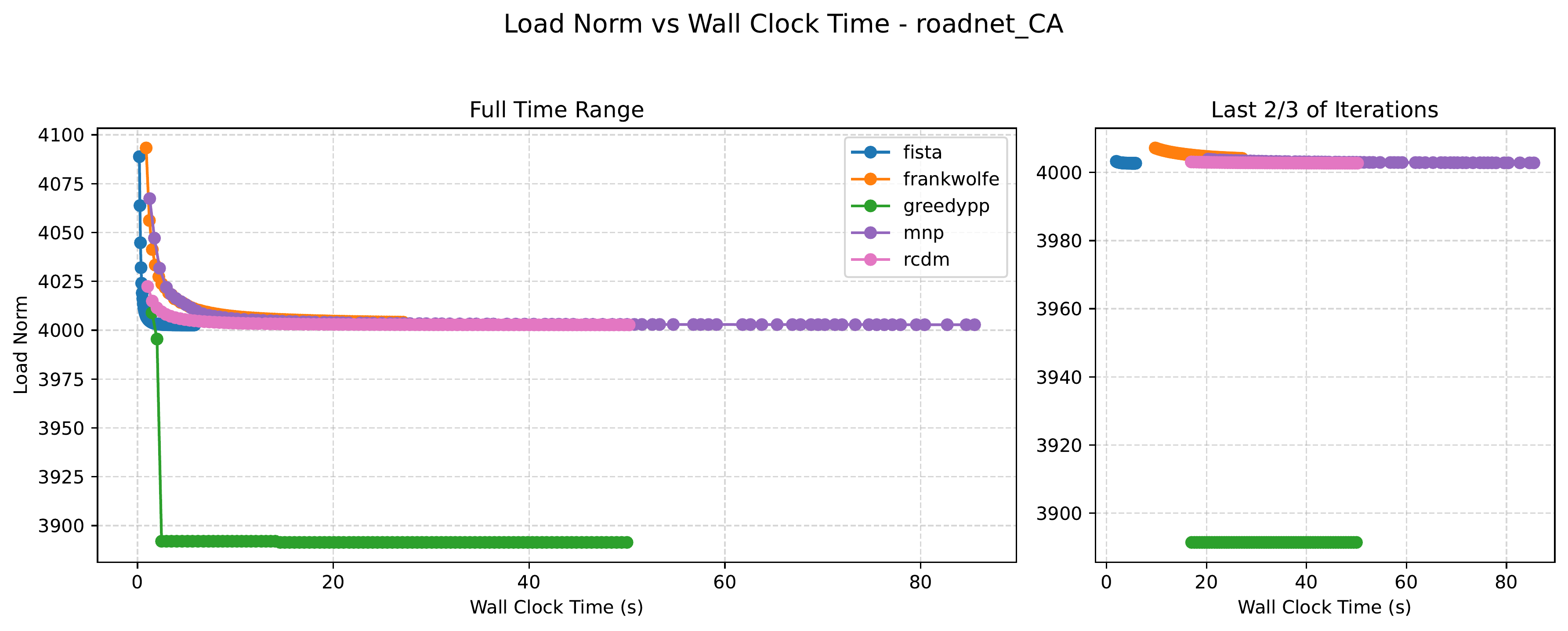}
        \caption{\texttt{roadnet\_CA}}
        \label{fig:dsg-roadnet-ca-mnp}
    \end{subfigure}
    
    \caption{DSG load norm over time }
    \label{fig:dsg-mnp}
\end{figure}

\subsection{Anchored DSG}
\begin{figure}[H]
    \centering
    \begin{subfigure}[t]{0.32\textwidth}
        \centering
        \includegraphics[width=\textwidth]{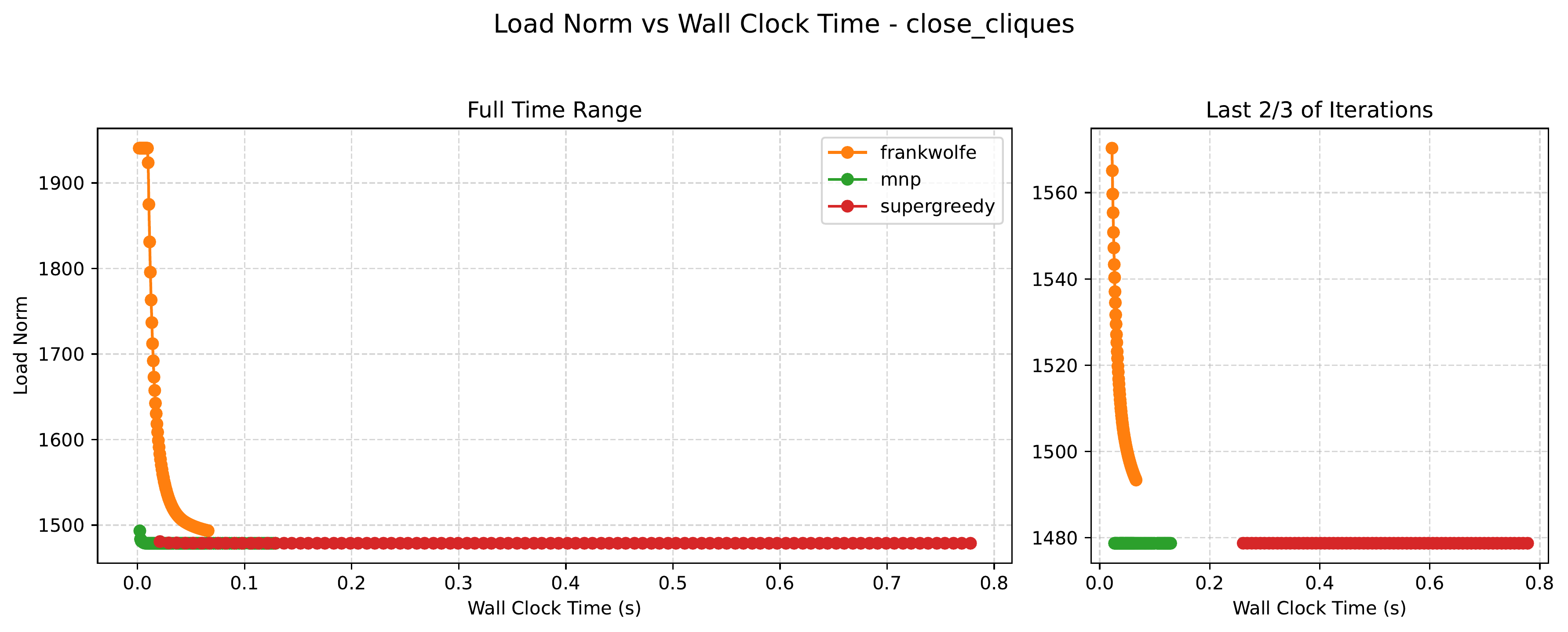}
        \caption{\texttt{close\_cliques}}
        \label{fig:anchored-close-cliques-mnp}
    \end{subfigure}
    \hfill
    \begin{subfigure}[t]{0.32\textwidth}
        \centering
        \includegraphics[width=\textwidth]{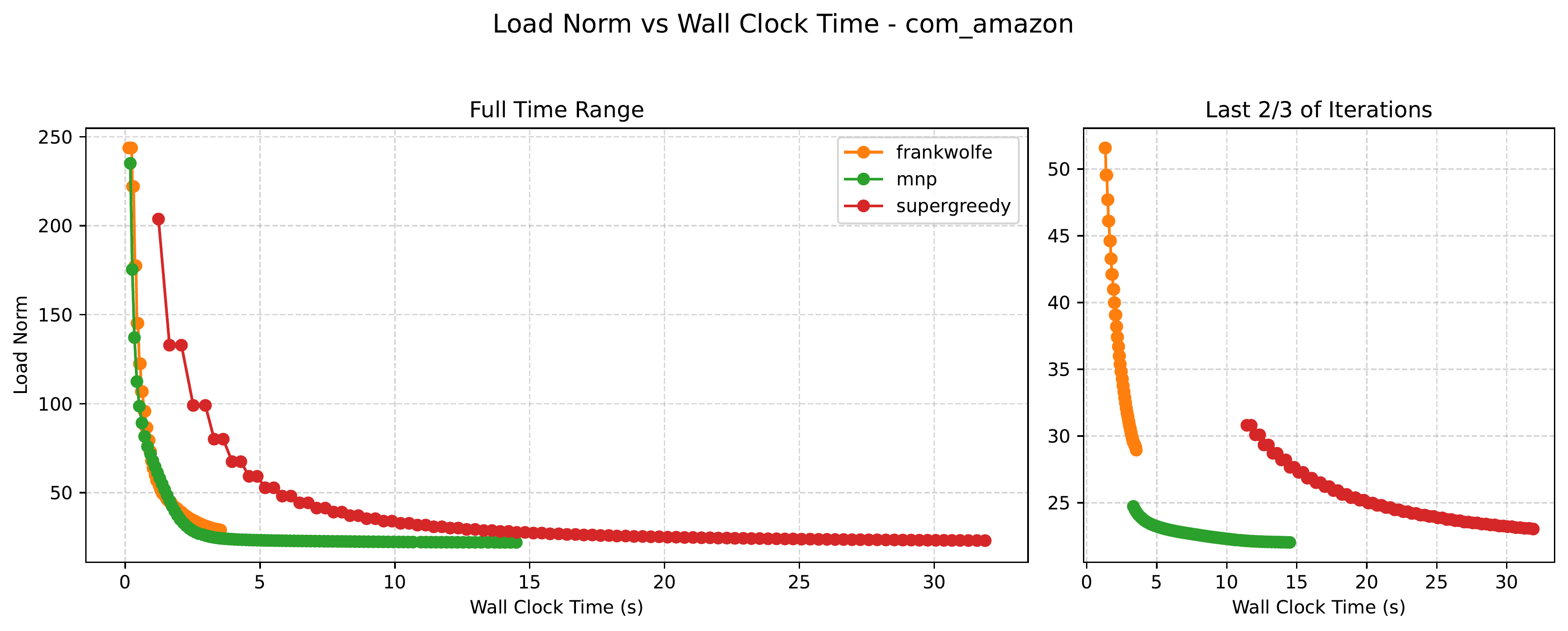}
        \caption{\texttt{com\_amazon}}
        \label{fig:anchored-com-amazon-mnp}
    \end{subfigure}
    \hfill
    \begin{subfigure}[t]{0.32\textwidth}
        \centering
        \includegraphics[width=\textwidth]{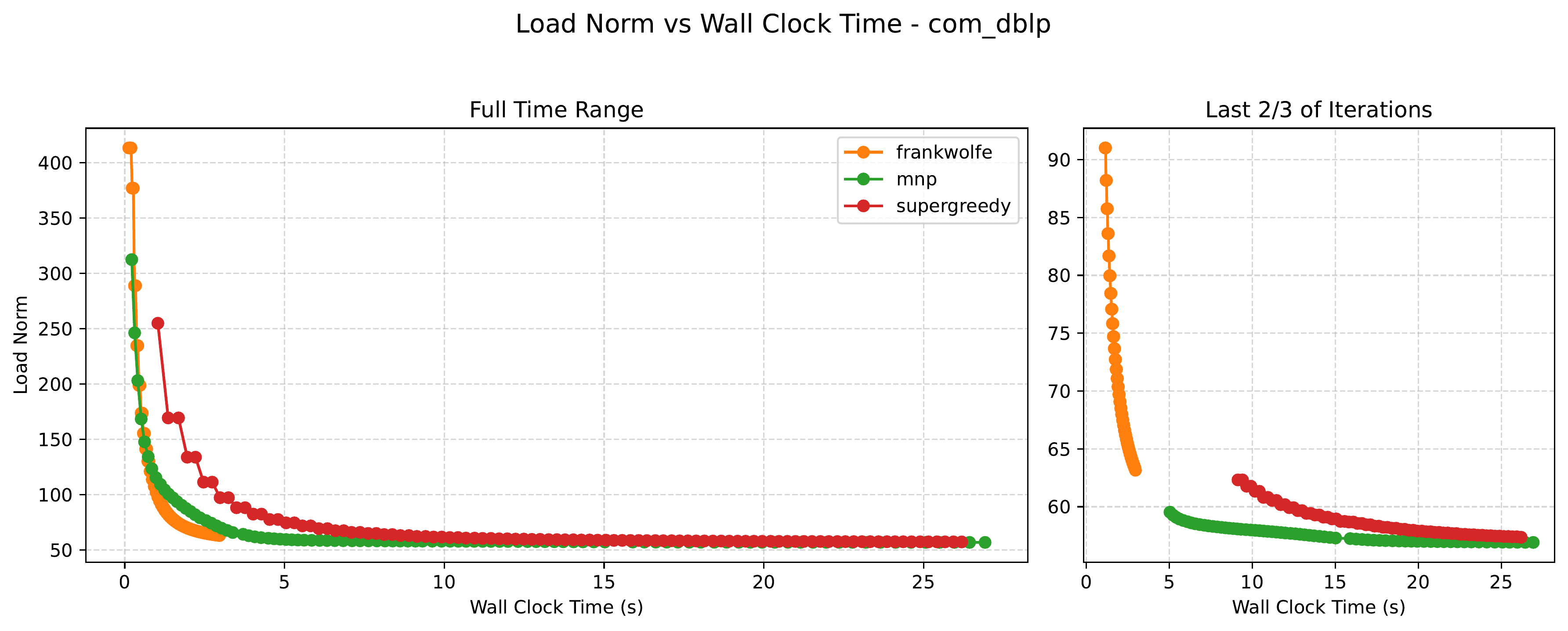}
        \caption{\texttt{com\_dblp}}
        \label{fig:anchored-com-dblp-mnp}
    \end{subfigure}

    \begin{subfigure}[t]{0.48\textwidth}
        \centering
        \includegraphics[width=\textwidth]{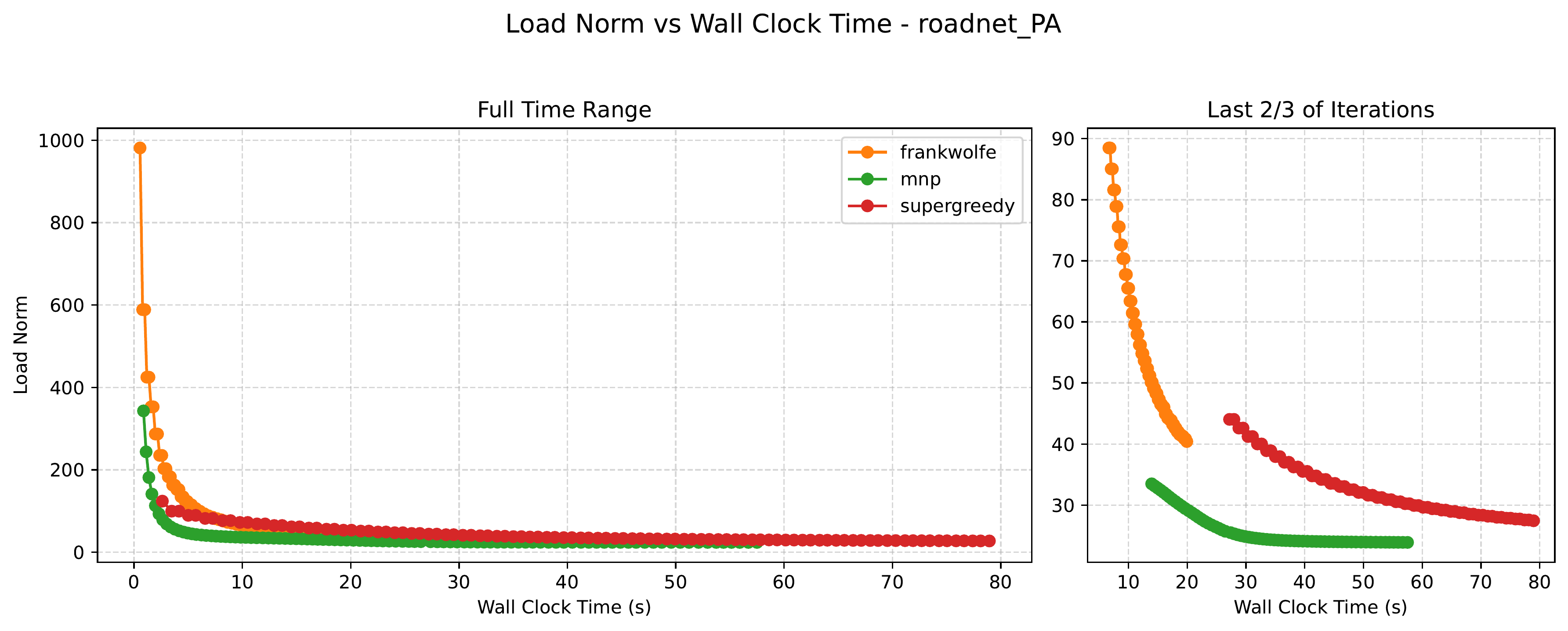}
        \caption{\texttt{roadnet\_PA}}
        \label{fig:dsg-roadnet-pa-mnp}
    \end{subfigure}
    \hfill
    \begin{subfigure}[t]{0.48\textwidth}
        \centering
        \includegraphics[width=\textwidth]{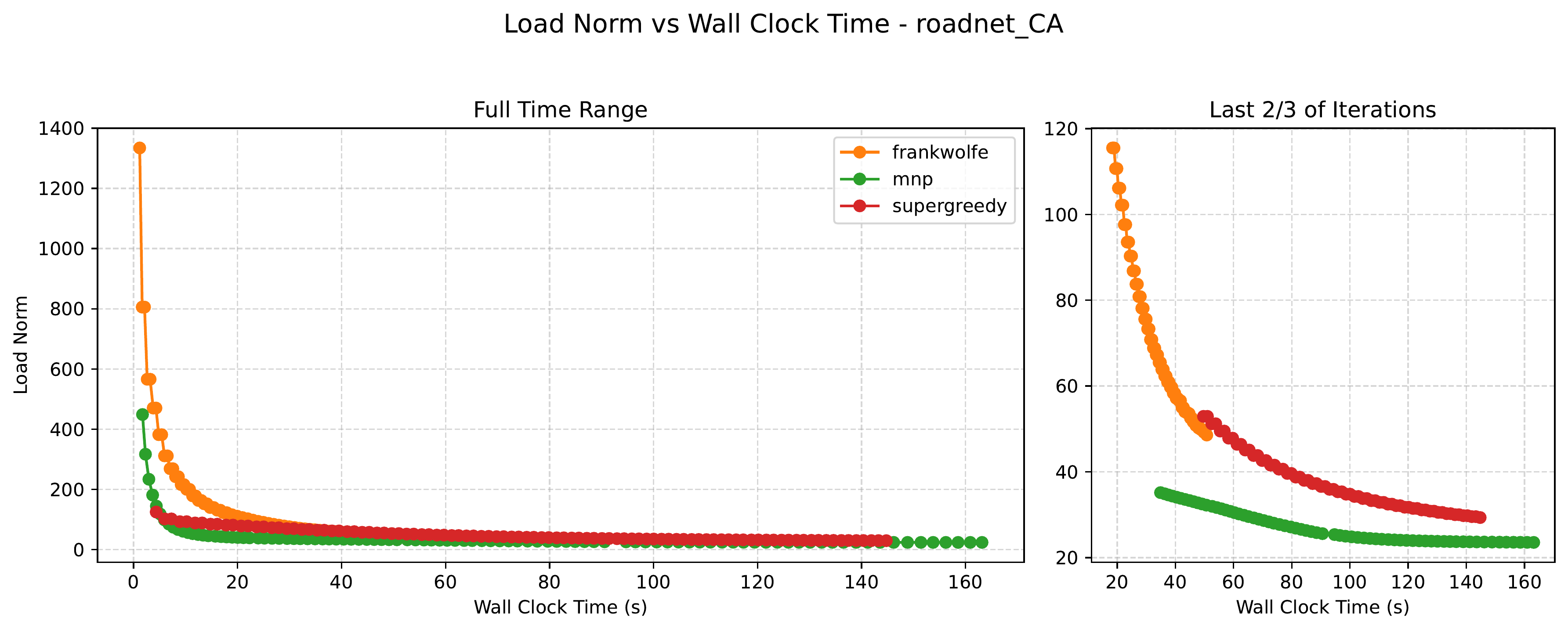}
        \caption{\texttt{roadnet\_CA}}
        \label{fig:dsg-roadnet-ca-mnp}
    \end{subfigure}
    
    \caption{Anchored DSG load norm over time }
    \label{fig:anchored-mnp}
\end{figure}

\subsection{HNSN}
\begin{figure}[H]
    \centering
    \begin{subfigure}[t]{0.32\textwidth}
        \centering
        \includegraphics[width=\textwidth]{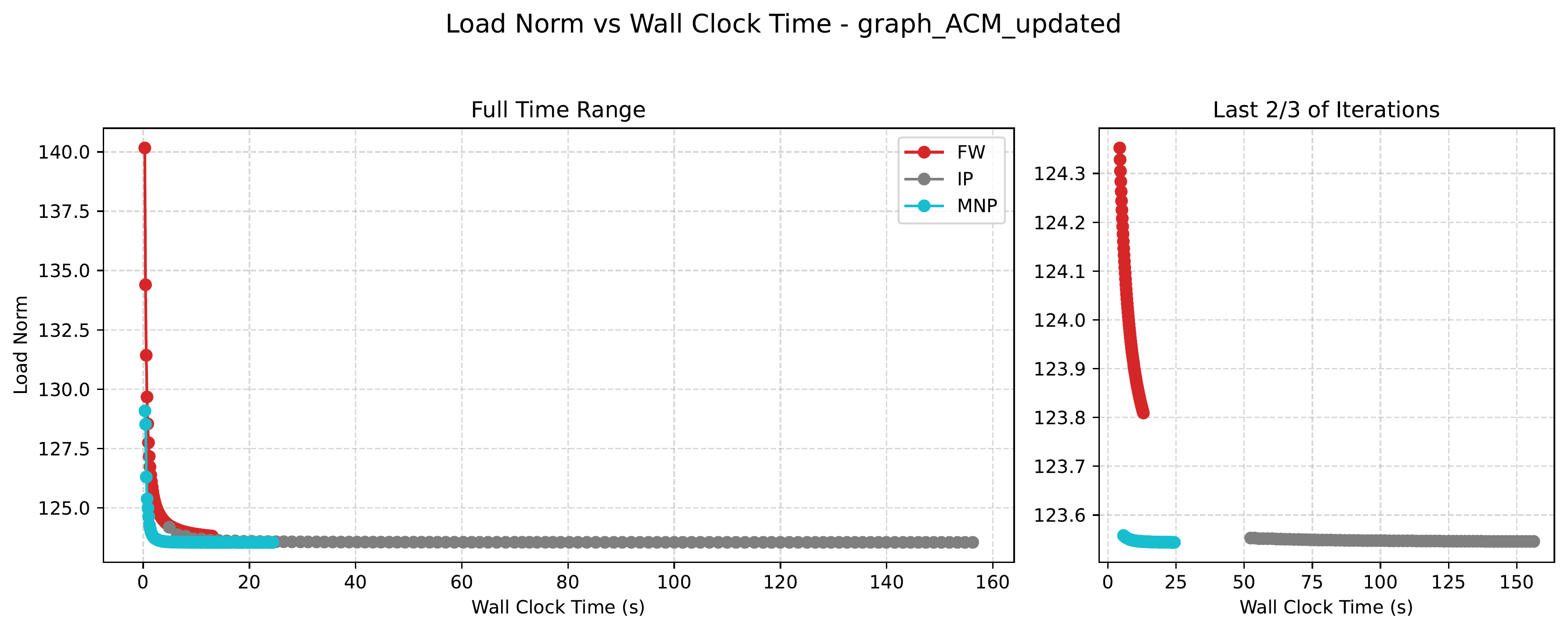}
        \caption{\texttt{acm}}
        \label{fig:hnsn-acm-mnp}
    \end{subfigure}
    \hfill
    \begin{subfigure}[t]{0.32\textwidth}
        \centering
        \includegraphics[width=\textwidth]{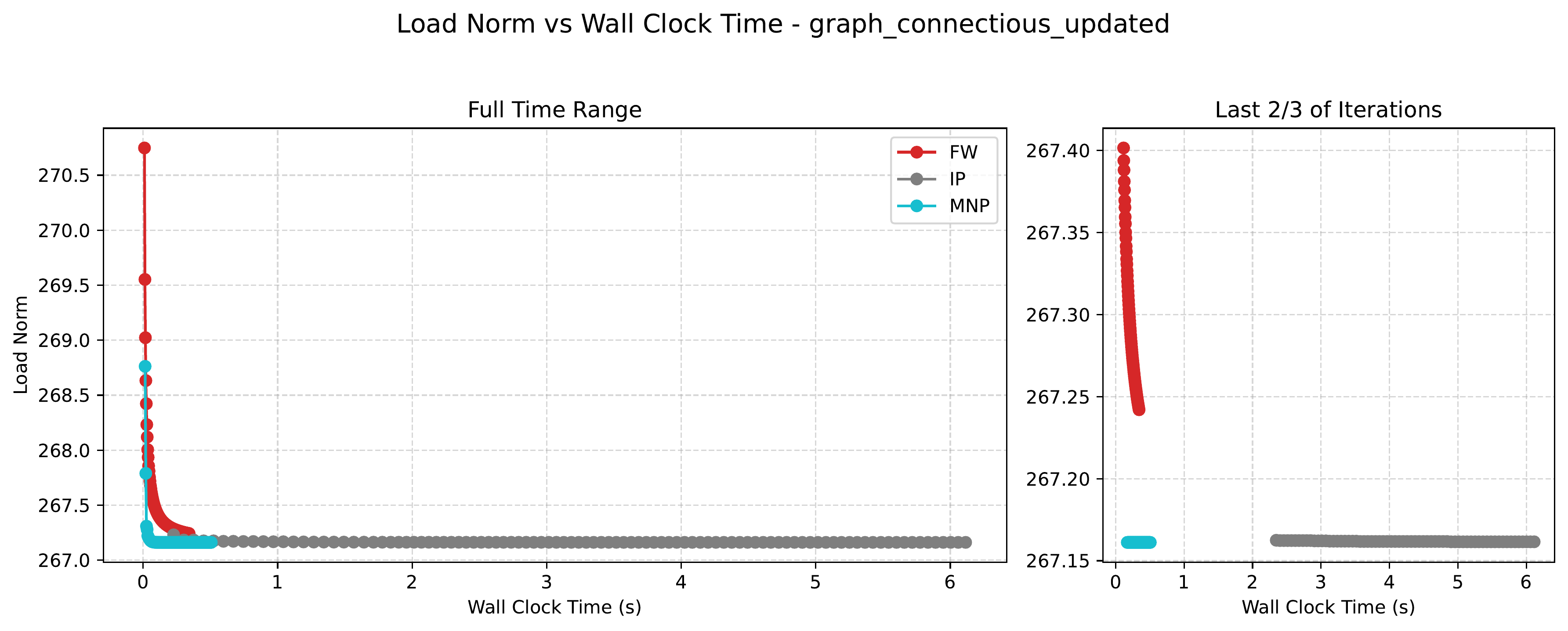}
        \caption{\texttt{connectious}}
        \label{fig:hnsn-connectious-mnp}
    \end{subfigure}
    \hfill
    \begin{subfigure}[t]{0.32\textwidth}
        \centering
        \includegraphics[width=\textwidth]{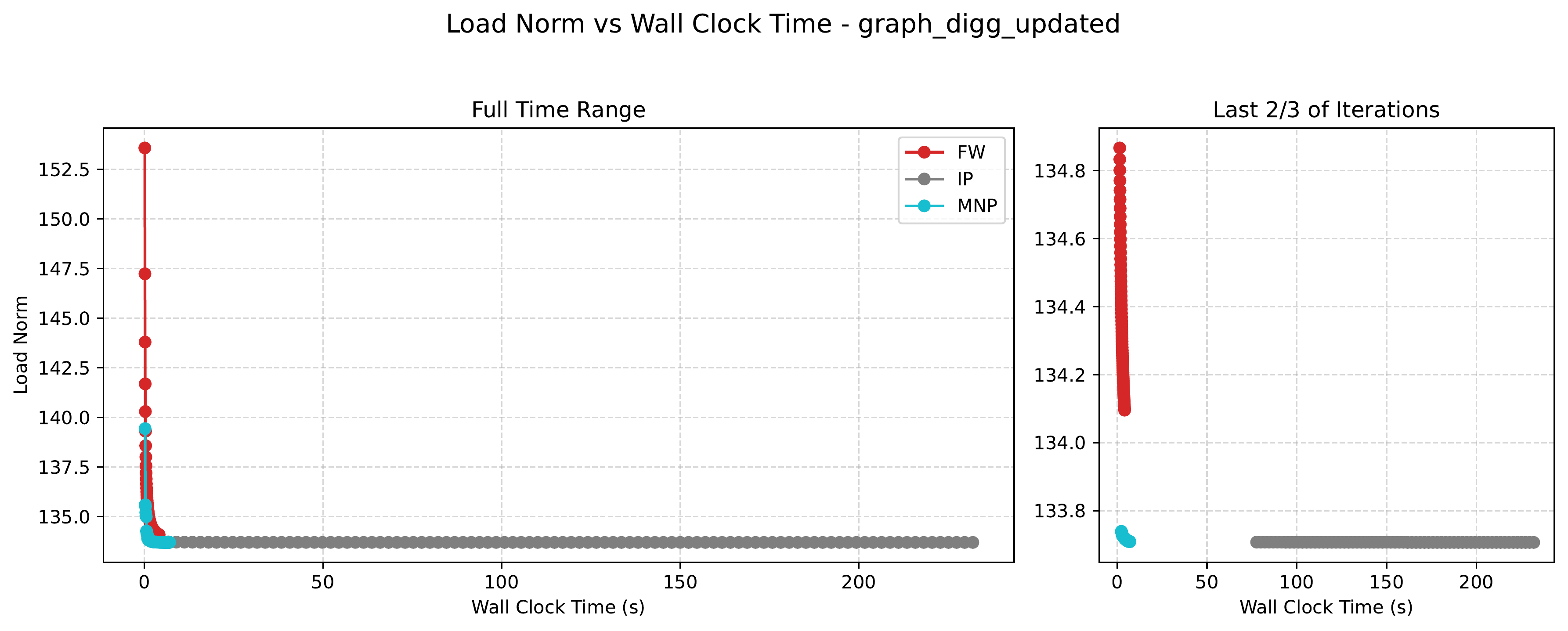}
        \caption{\texttt{digg}}
        \label{fig:hnsn-digg-mnp}
    \end{subfigure}
    
    \begin{subfigure}[t]{0.32\textwidth}
        \centering
        \includegraphics[width=\textwidth]{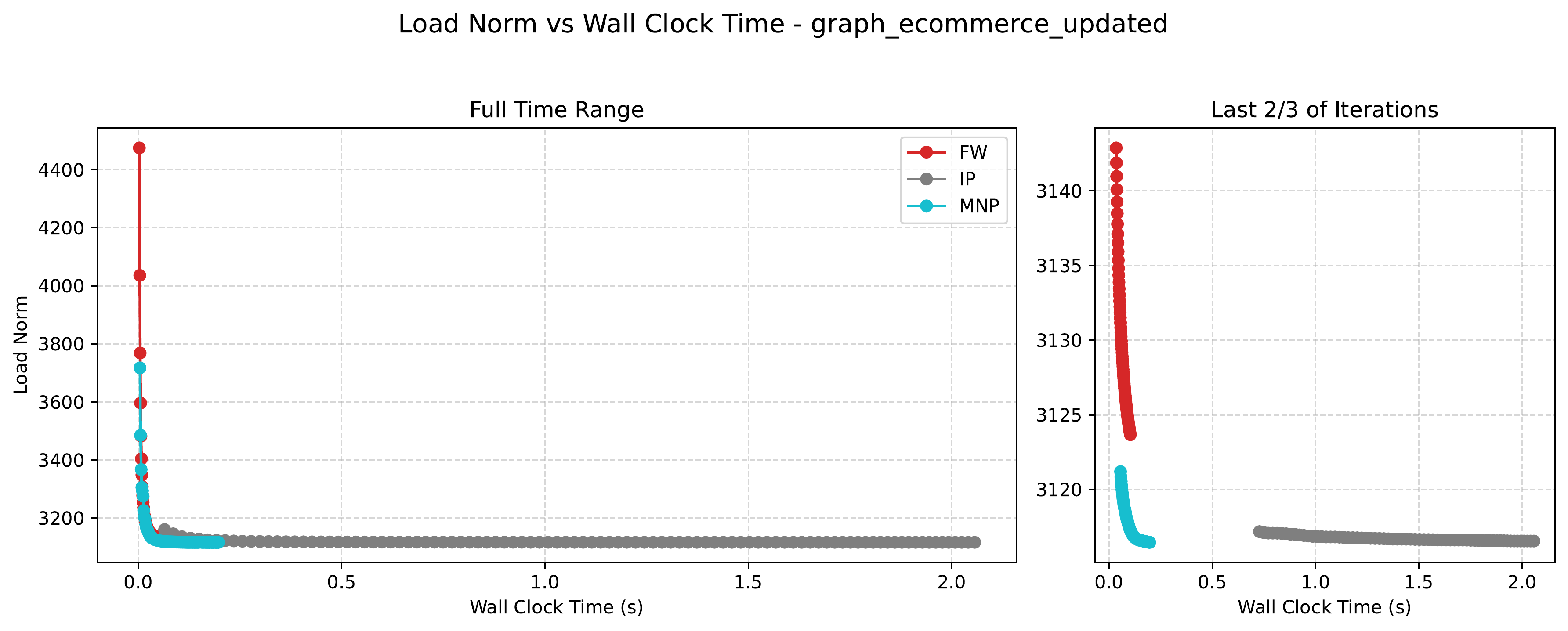}
        \caption{\texttt{ecommerce}}
        \label{fig:hnsn-ecommerce-mnp}
    \end{subfigure}
    \hfill
    \begin{subfigure}[t]{0.32\textwidth}
        \centering
        \includegraphics[width=\textwidth]{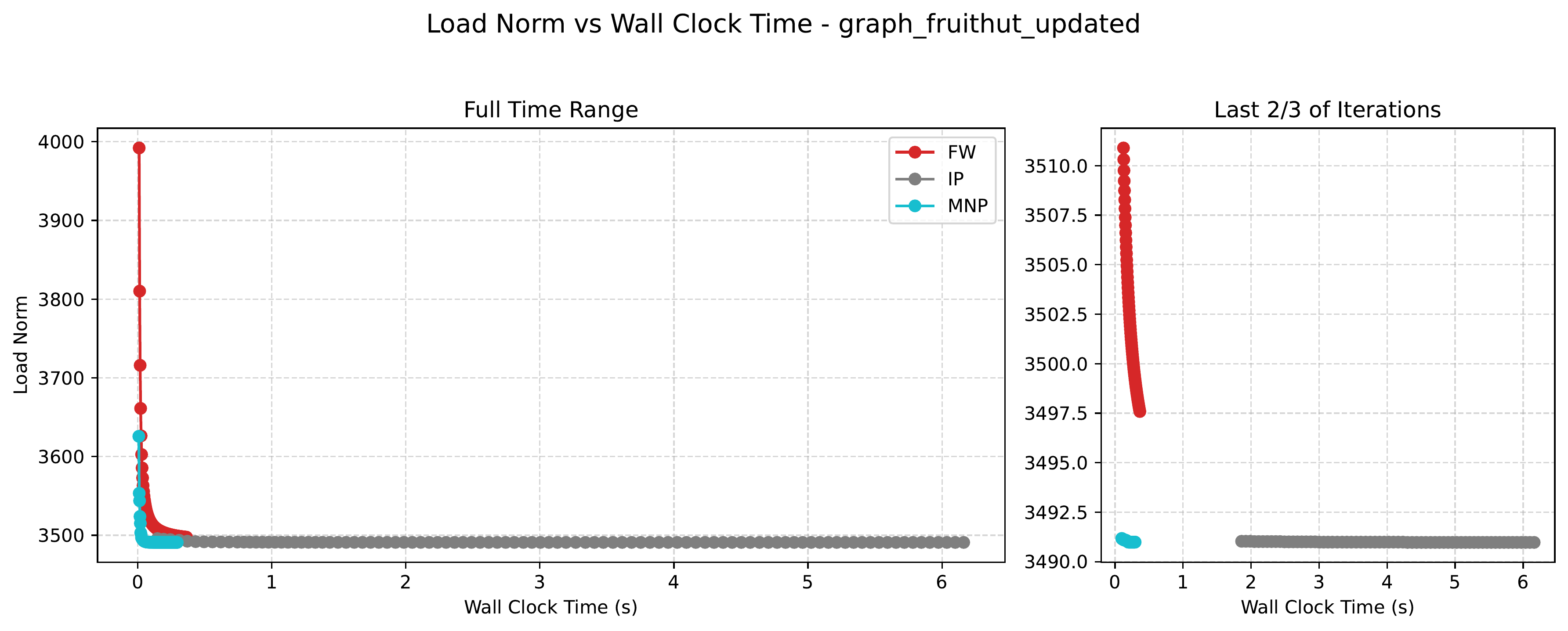}
        \caption{\texttt{fruithut}}
        \label{fig:hnsn-fruithut-mnp}
    \end{subfigure}
    \hfill
    \begin{subfigure}[t]{0.32\textwidth}
        \centering
        \includegraphics[width=\textwidth]{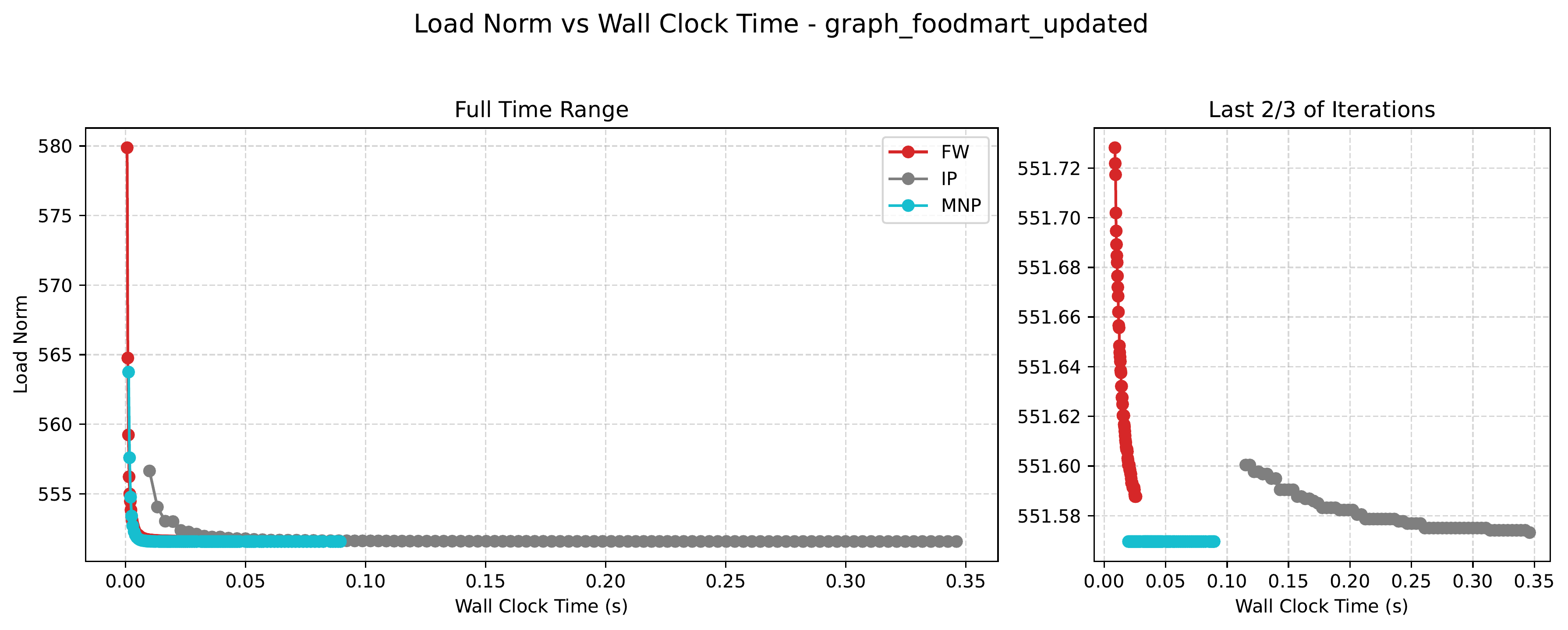}
        \caption{\texttt{foodmart}}
        \label{fig:hnsn-foodmart-mnp}
    \end{subfigure}

    \begin{subfigure}[t]{0.32\textwidth}
        \centering
        \includegraphics[width=\textwidth]{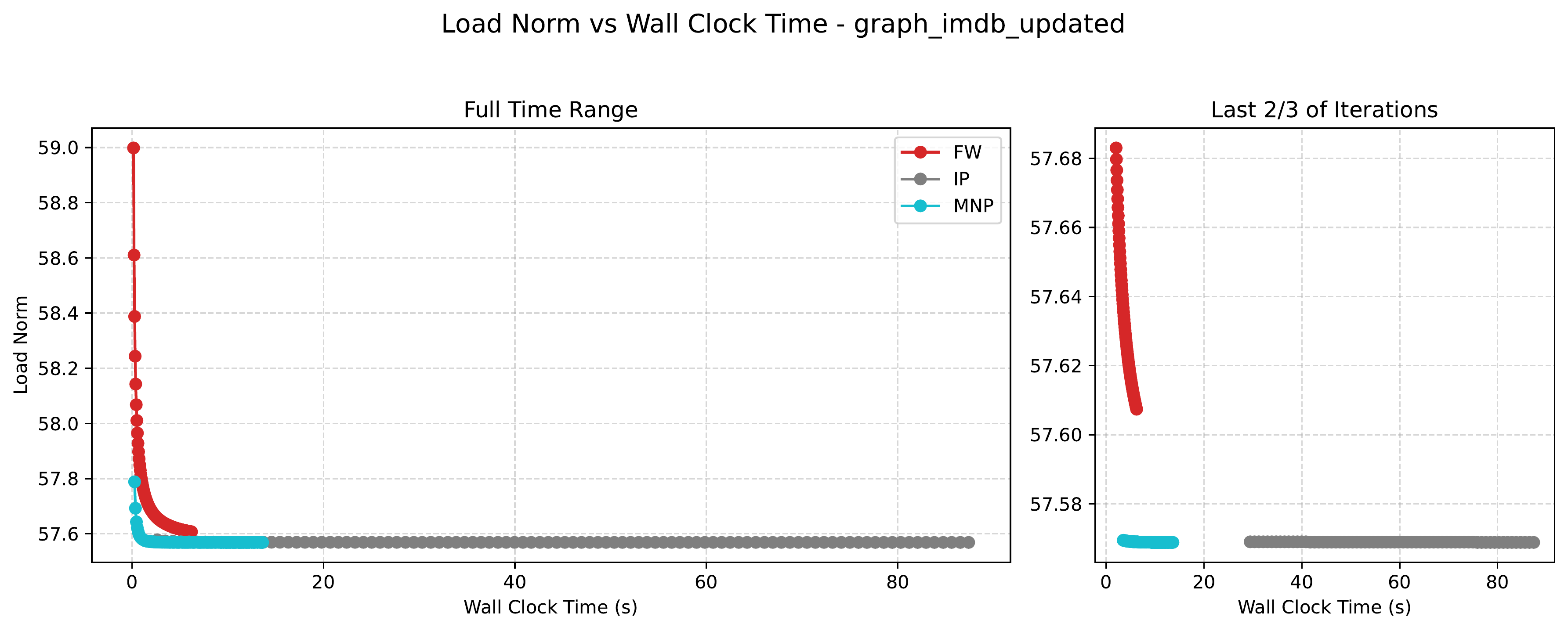}
        \caption{\texttt{imdb}}
        \label{fig:hnsn-imdb-mnp}
    \end{subfigure}
    \hfill
    \begin{subfigure}[t]{0.32\textwidth}
        \centering
        \includegraphics[width=\textwidth]{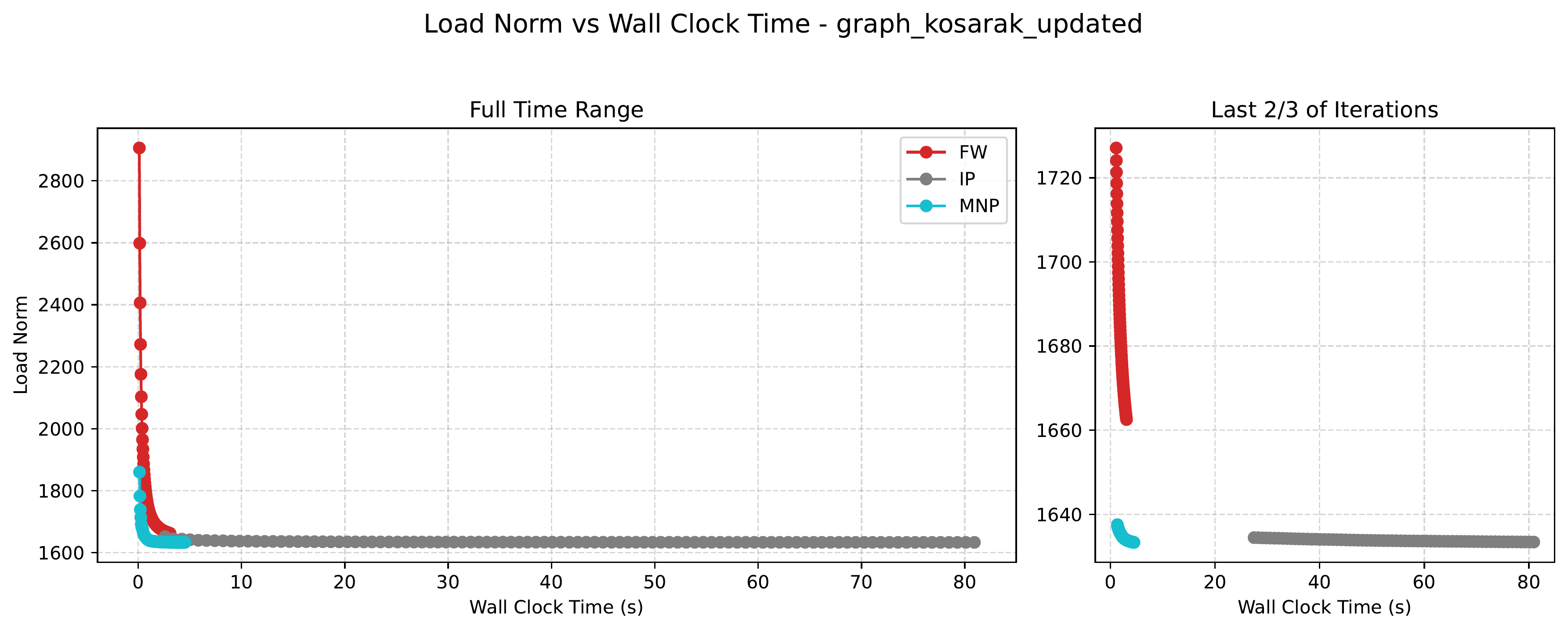}
        \caption{\texttt{kosarak}}
        \label{fig:hnsn-kosarak-mnp}
    \end{subfigure}
    \hfill
    \begin{subfigure}[t]{0.32\textwidth}
        \centering
        \includegraphics[width=\textwidth]{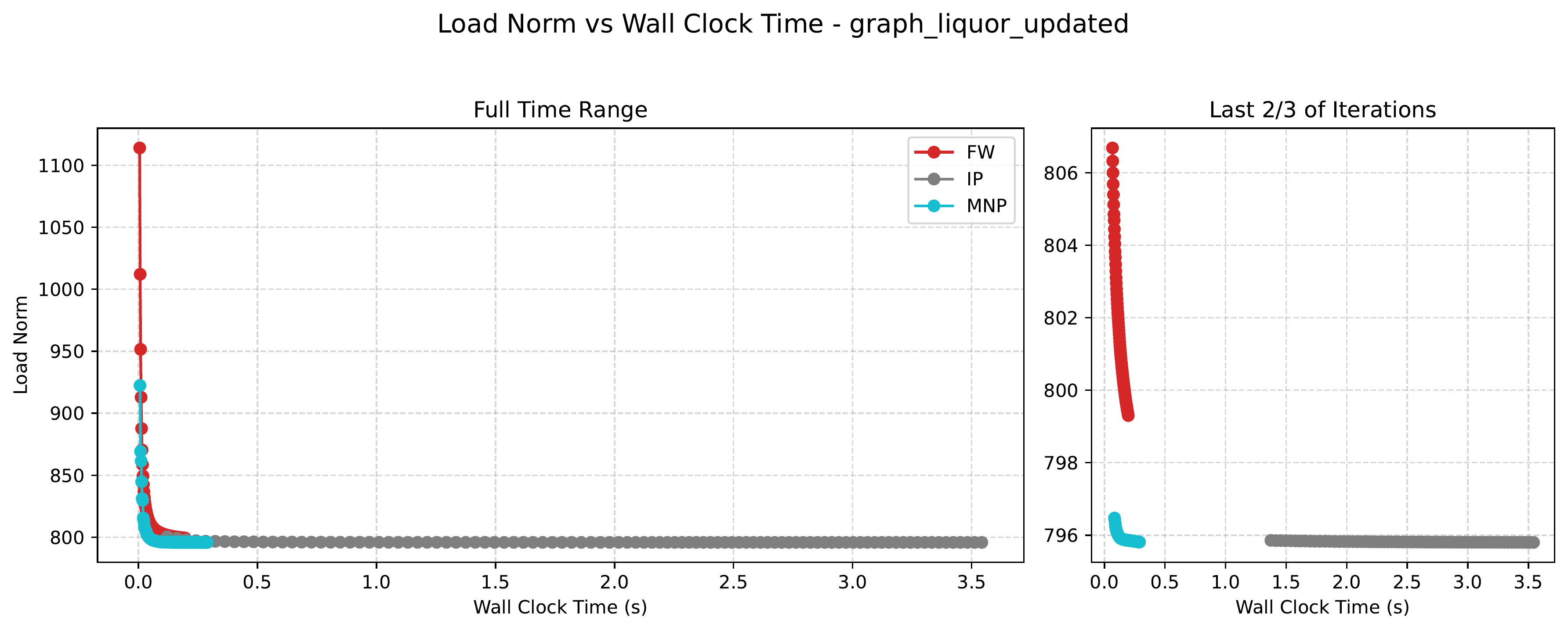}
        \caption{\texttt{liquor}}
        \label{fig:hnsn-liquor-mnp}
    \end{subfigure}
    
    \begin{subfigure}[t]{0.32\textwidth}
        \centering
        \includegraphics[width=\textwidth]{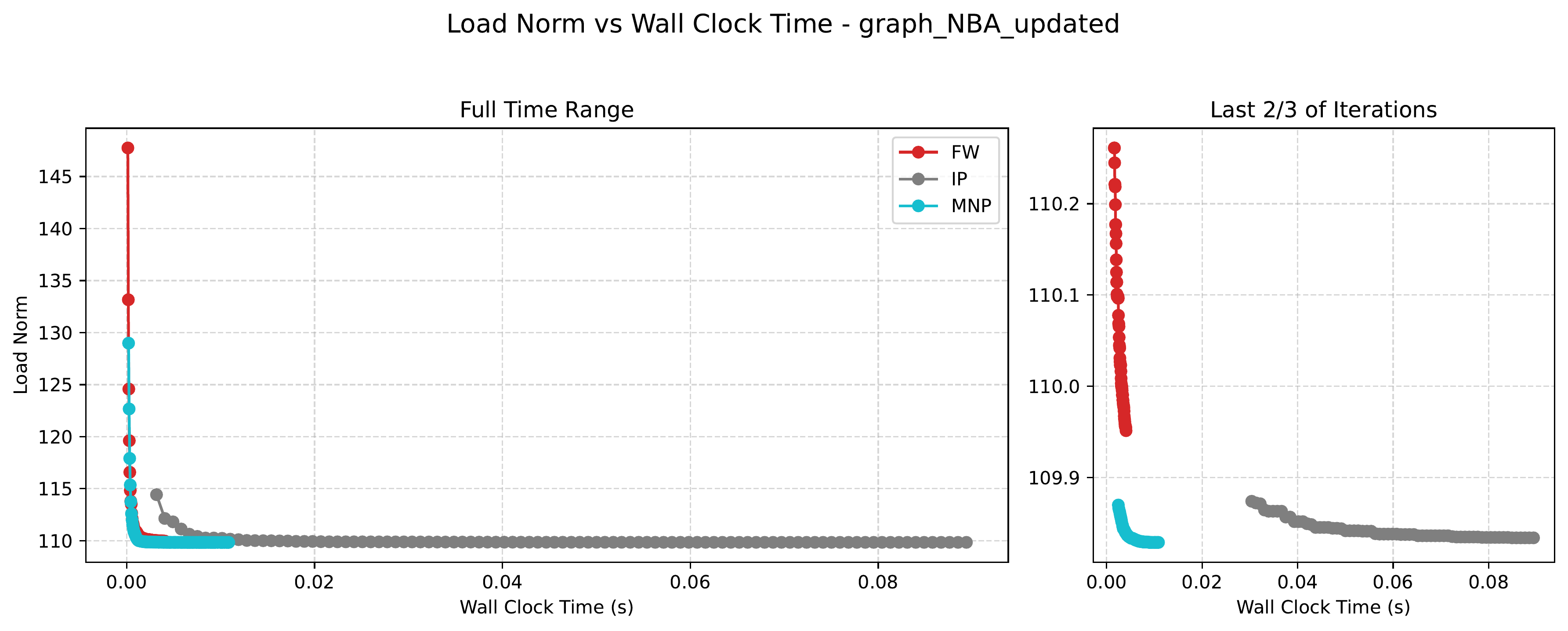}
        \caption{\texttt{NBA}}
        \label{fig:hnsn-NBA-mnp}
    \end{subfigure}
    \hfill
    \begin{subfigure}[t]{0.32\textwidth}
        \centering
        \includegraphics[width=\textwidth]{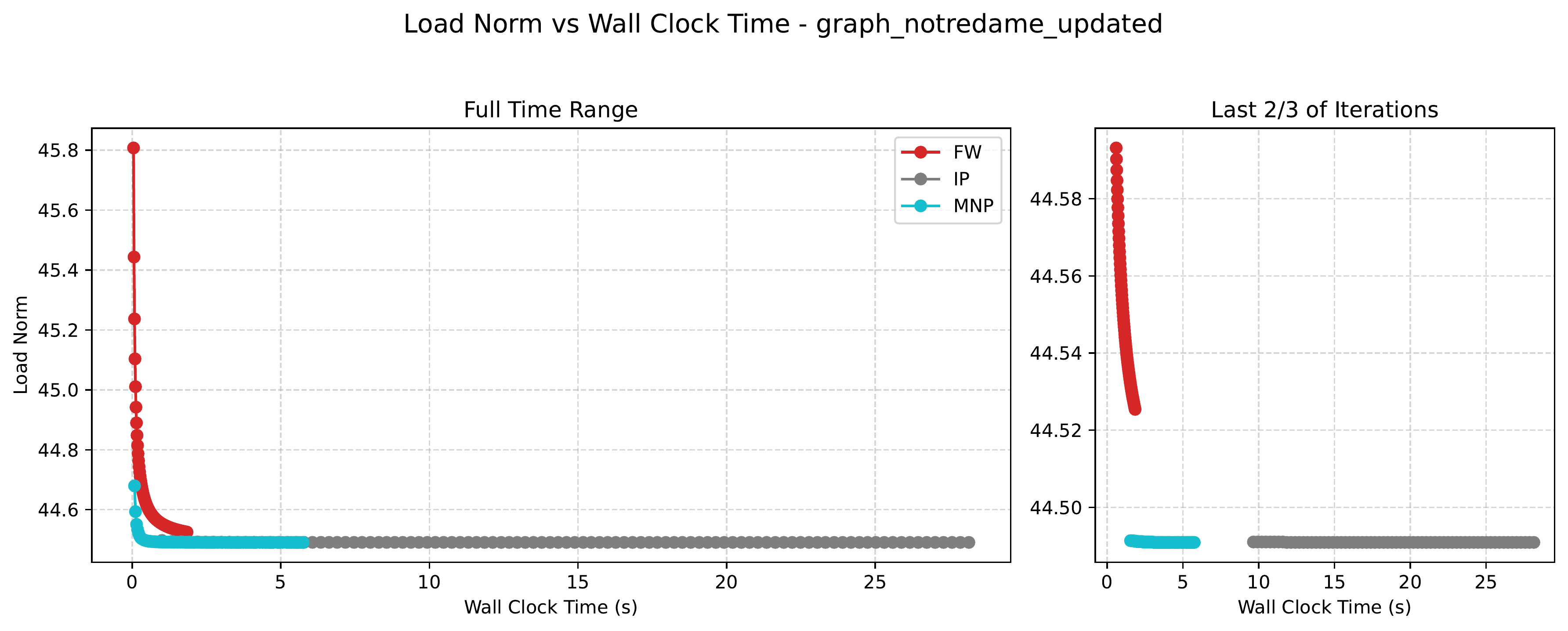}
        \caption{\texttt{notredame}}
    \end{subfigure}
    \hfill
    \begin{subfigure}[t]{0.32\textwidth}
        \centering
        \includegraphics[width=\textwidth]{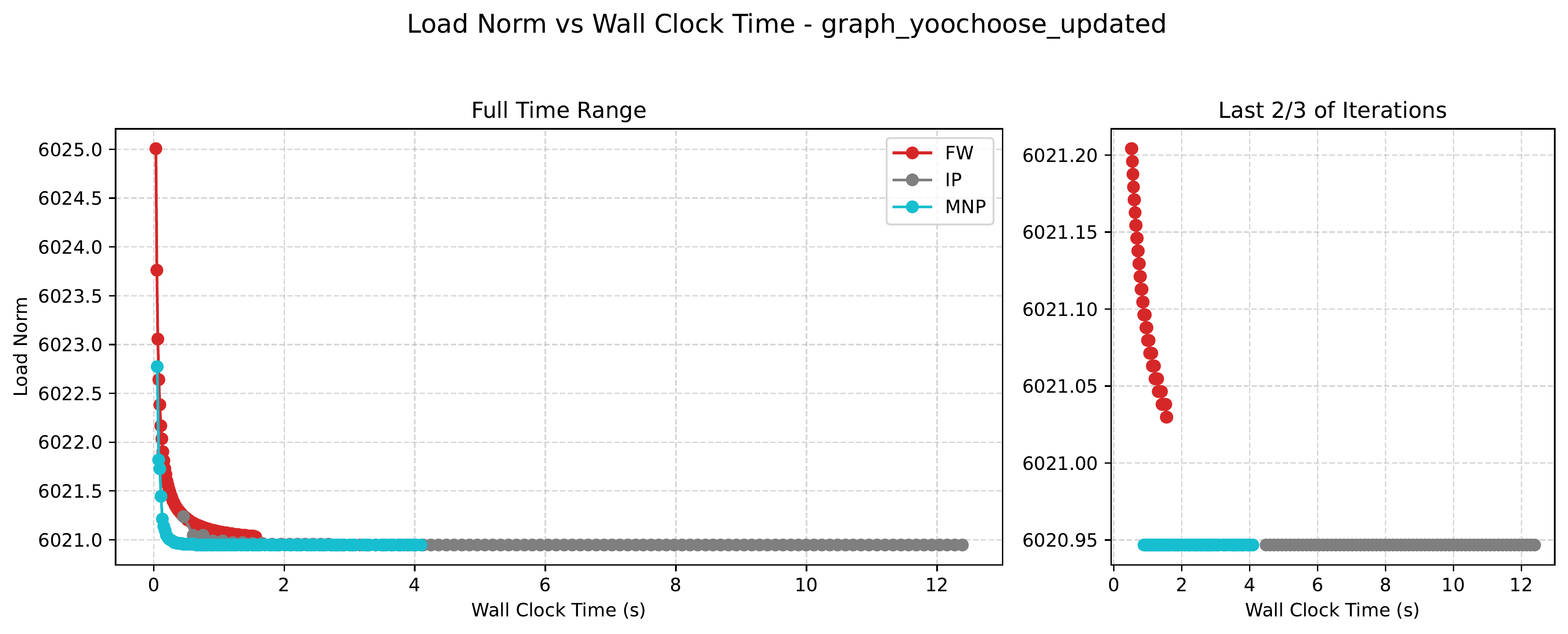}
        \caption{\texttt{yoochoose}}
        \label{fig:hnsn-yoochoose-mnp}
    \end{subfigure}
    \caption{HNSN load norm over time }
    \label{fig:hnsn-mnp}
\end{figure}

\subsection{Min $s$-$t$ Cut}
\begin{figure}[H]
    \centering
    \begin{subfigure}[t]{0.24\textwidth}
        \centering
        \includegraphics[width=\textwidth]{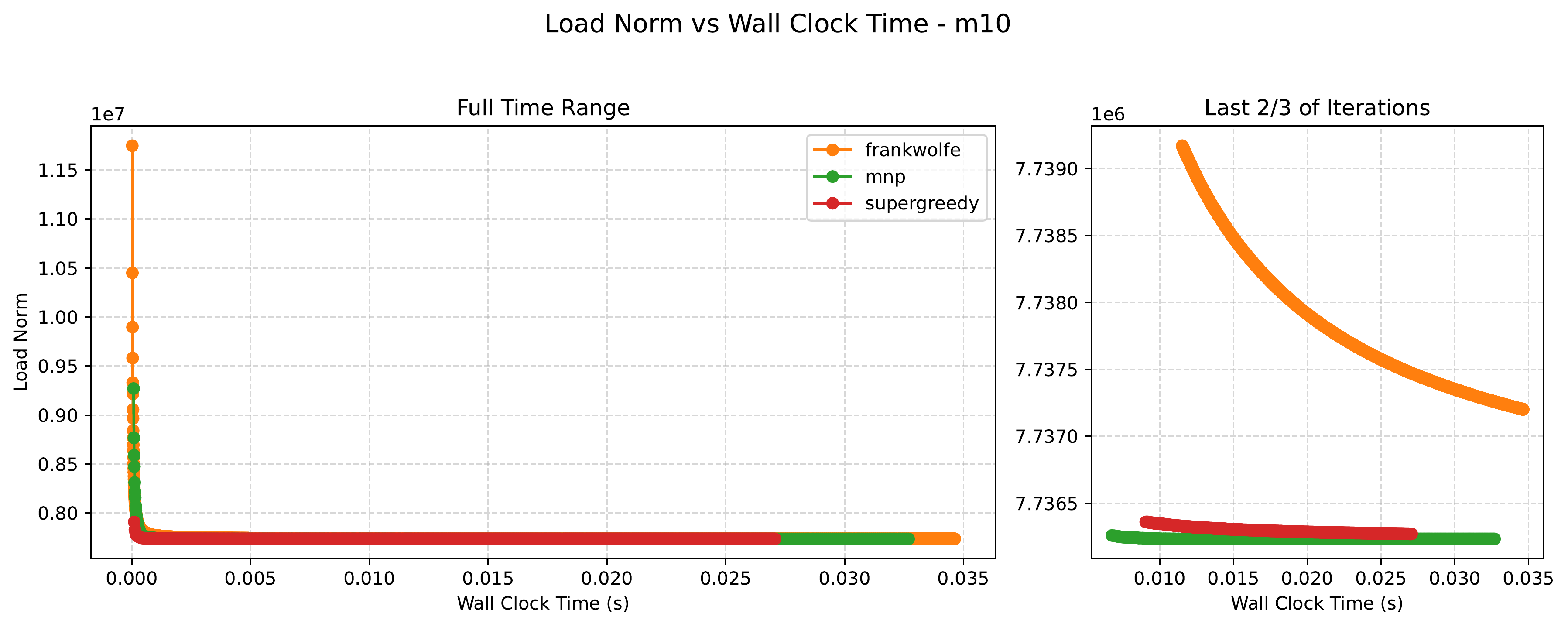}
        \caption{\texttt{m10}}
        \label{fig:hnsn-acm-mnp}
    \end{subfigure}
    \hfill
    \begin{subfigure}[t]{0.24\textwidth}
        \centering
        \includegraphics[width=\textwidth]{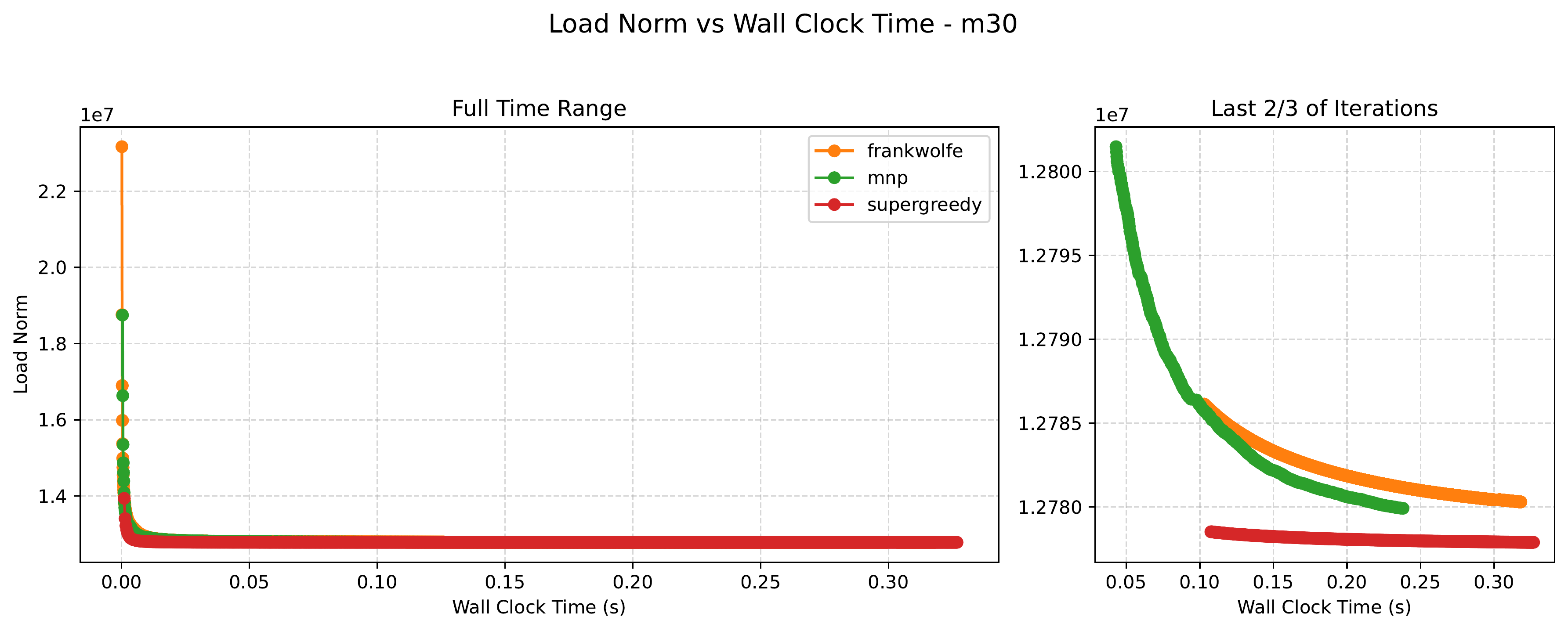}
        \caption{\texttt{m30}}
        \label{fig:hnsn-connectious-mnp}
    \end{subfigure}
    \hfill
    \begin{subfigure}[t]{0.24\textwidth}
        \centering
        \includegraphics[width=\textwidth]{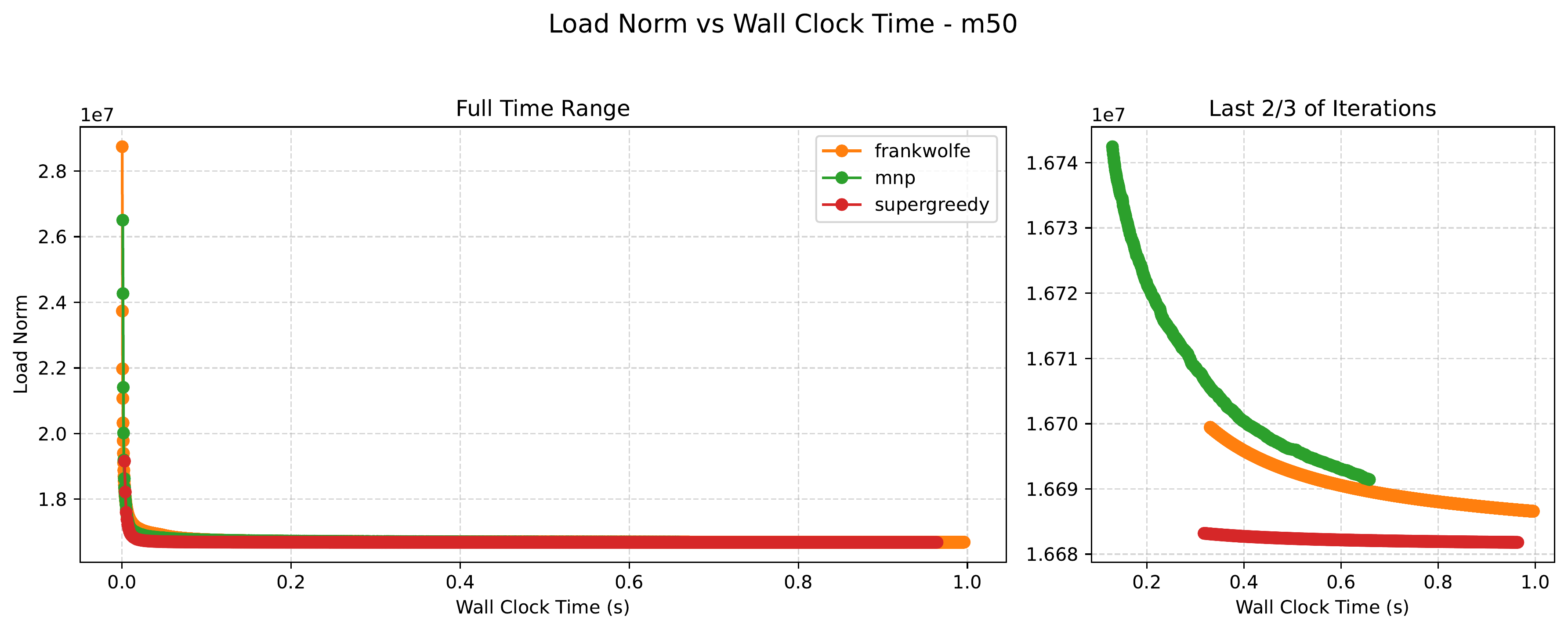}
        \caption{\texttt{m50}}
        \label{fig:hnsn-digg-mnp}
    \end{subfigure}
    \hfill
    \begin{subfigure}[t]{0.24\textwidth}
        \centering
        \includegraphics[width=\textwidth]{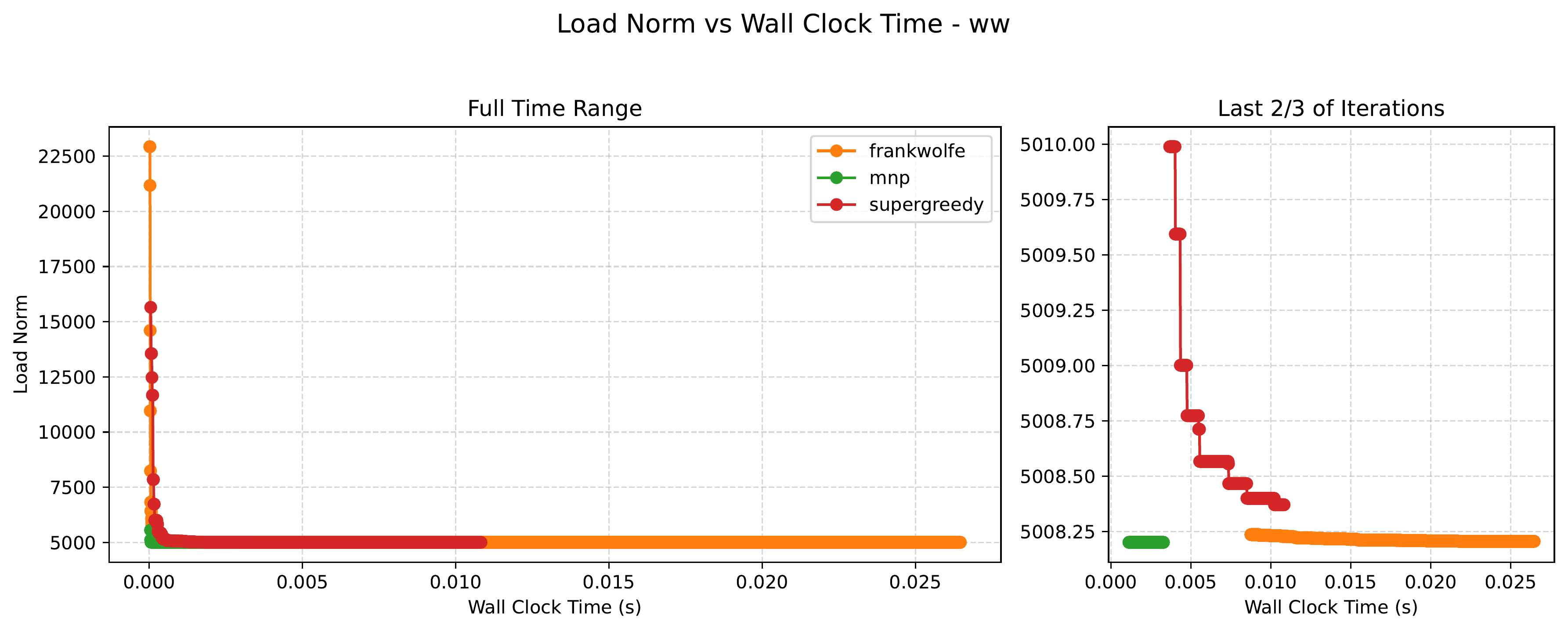}
        \caption{\texttt{ww}}
        \label{fig:hnsn-digg-mnp}
    \end{subfigure}

    \begin{subfigure}[t]{0.24\textwidth}
        \centering
        \includegraphics[width=\textwidth]{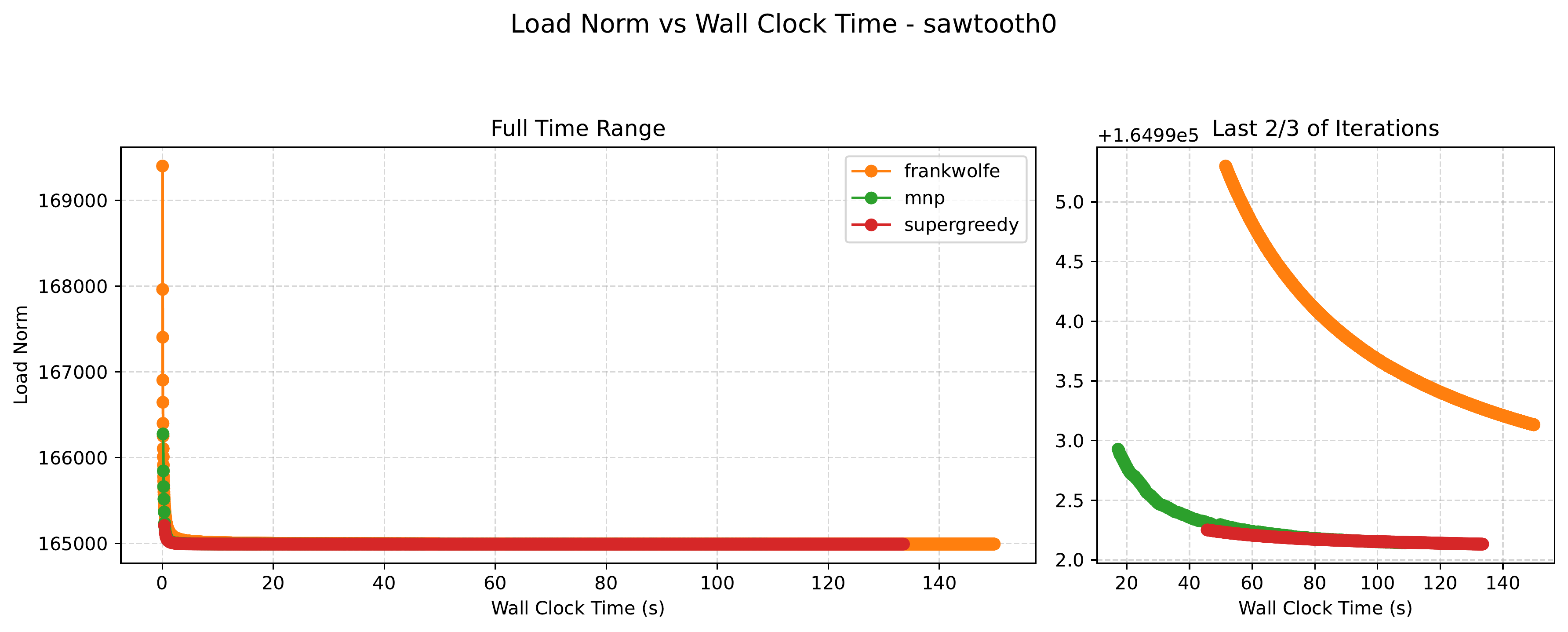}
        \caption{\texttt{sawtooth0}}
        \label{fig:sfm-s0-mnp}
    \end{subfigure}
    \hfill
    \begin{subfigure}[t]{0.24\textwidth}
        \centering
        \includegraphics[width=\textwidth]{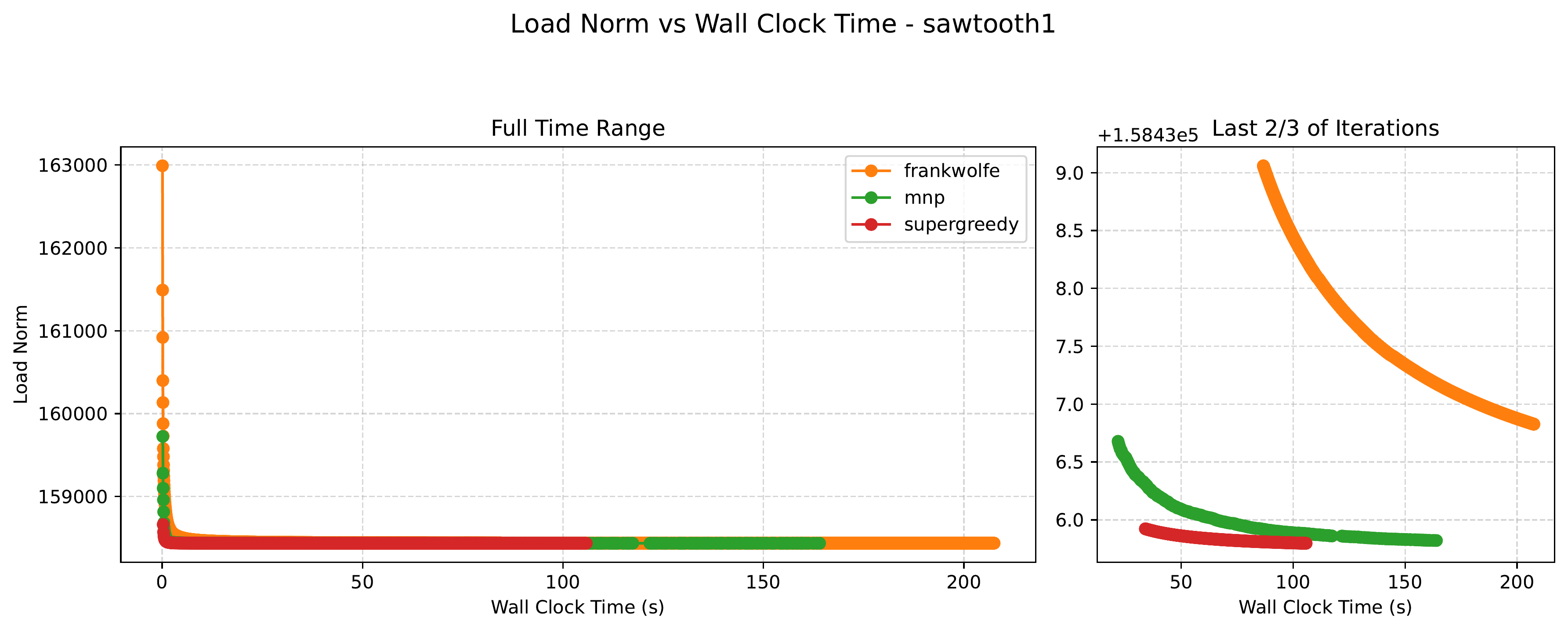}
        \caption{\texttt{sawtooth1}}
        \label{fig:sfm-s1-mnp}
    \end{subfigure}
    \hfill
    \begin{subfigure}[t]{0.24\textwidth}
        \centering
        \includegraphics[width=\textwidth]{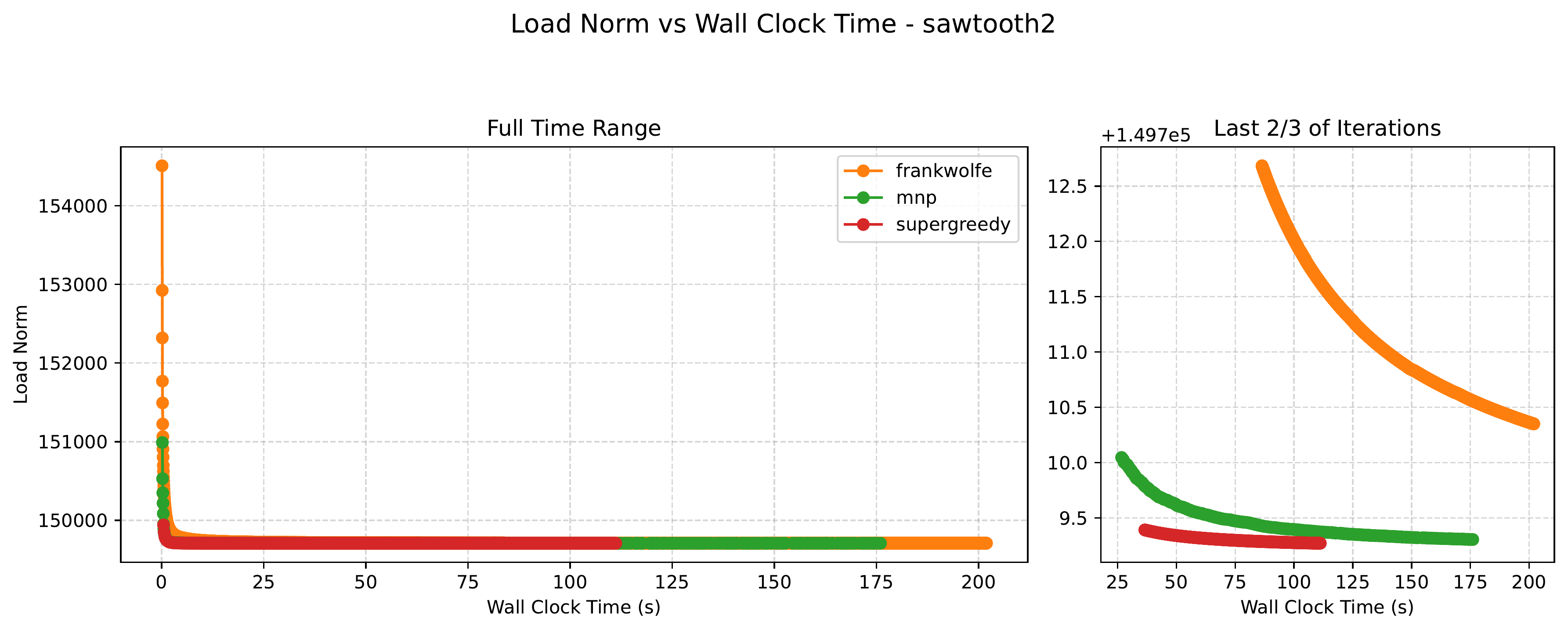}
        \caption{\texttt{sawtooth2}}
        \label{fig:sfm-s2-mnp}
    \end{subfigure}
    \hfill
    \begin{subfigure}[t]{0.24\textwidth}
        \centering
        \includegraphics[width=\textwidth]{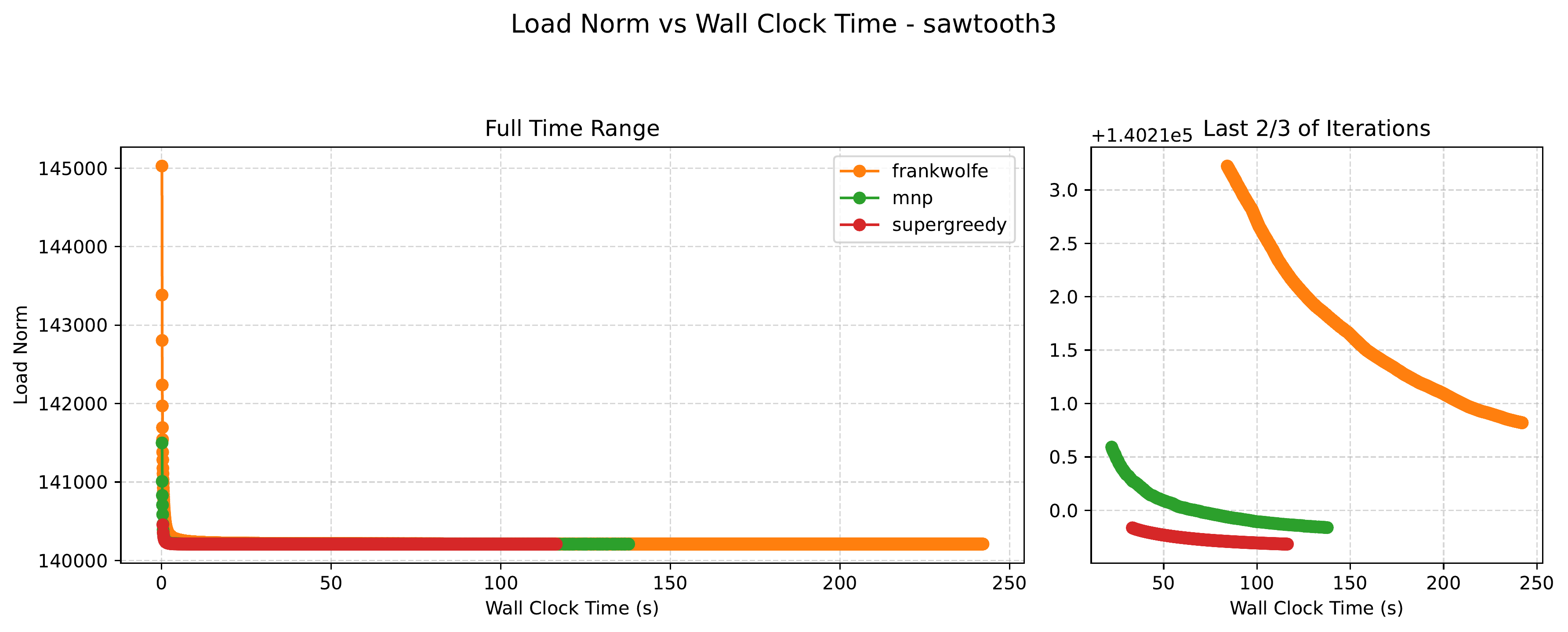}
        \caption{\texttt{sawtooth3}}
        \label{fig:sfm-s3-mnp}
    \end{subfigure}

    \begin{subfigure}[t]{0.24\textwidth}
        \centering
        \includegraphics[width=\textwidth]{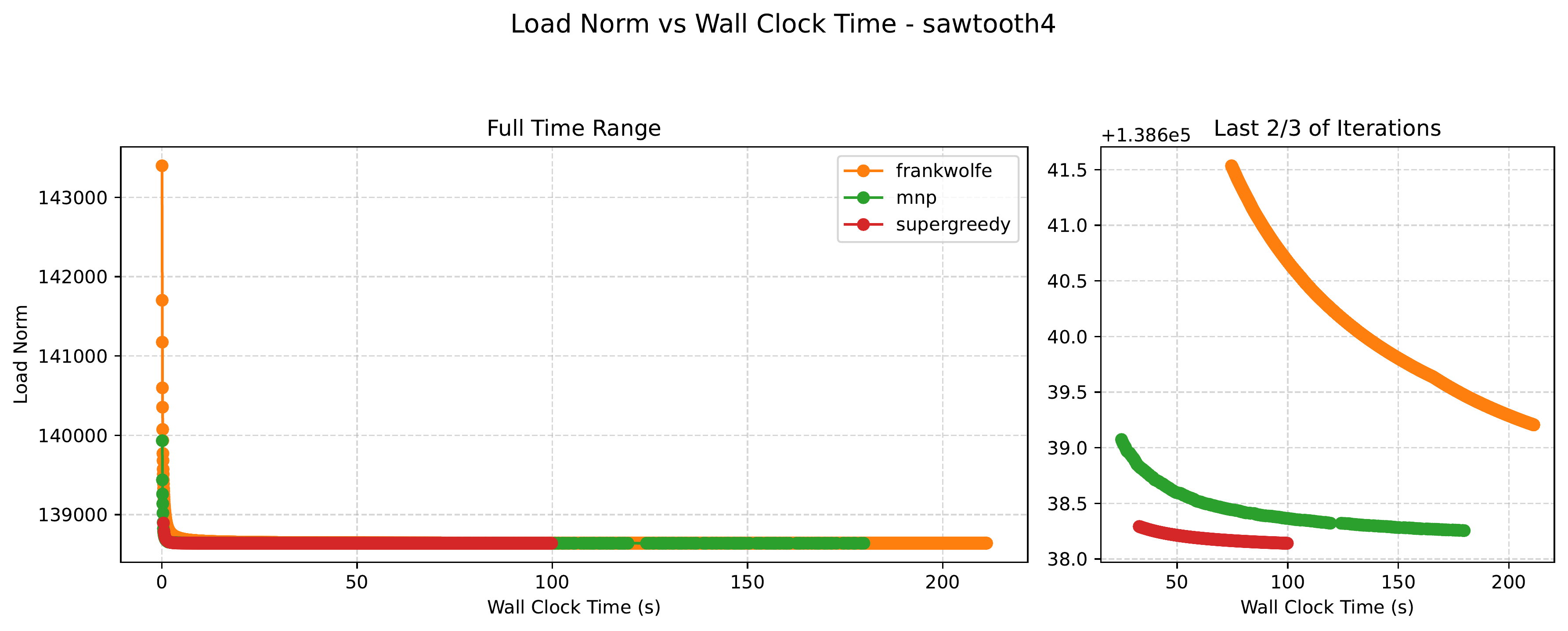}
        \caption{\texttt{sawtooth4}}
        \label{fig:sfm-s4-mnp}
    \end{subfigure}
    \hfill
    \begin{subfigure}[t]{0.24\textwidth}
        \centering
        \includegraphics[width=\textwidth]{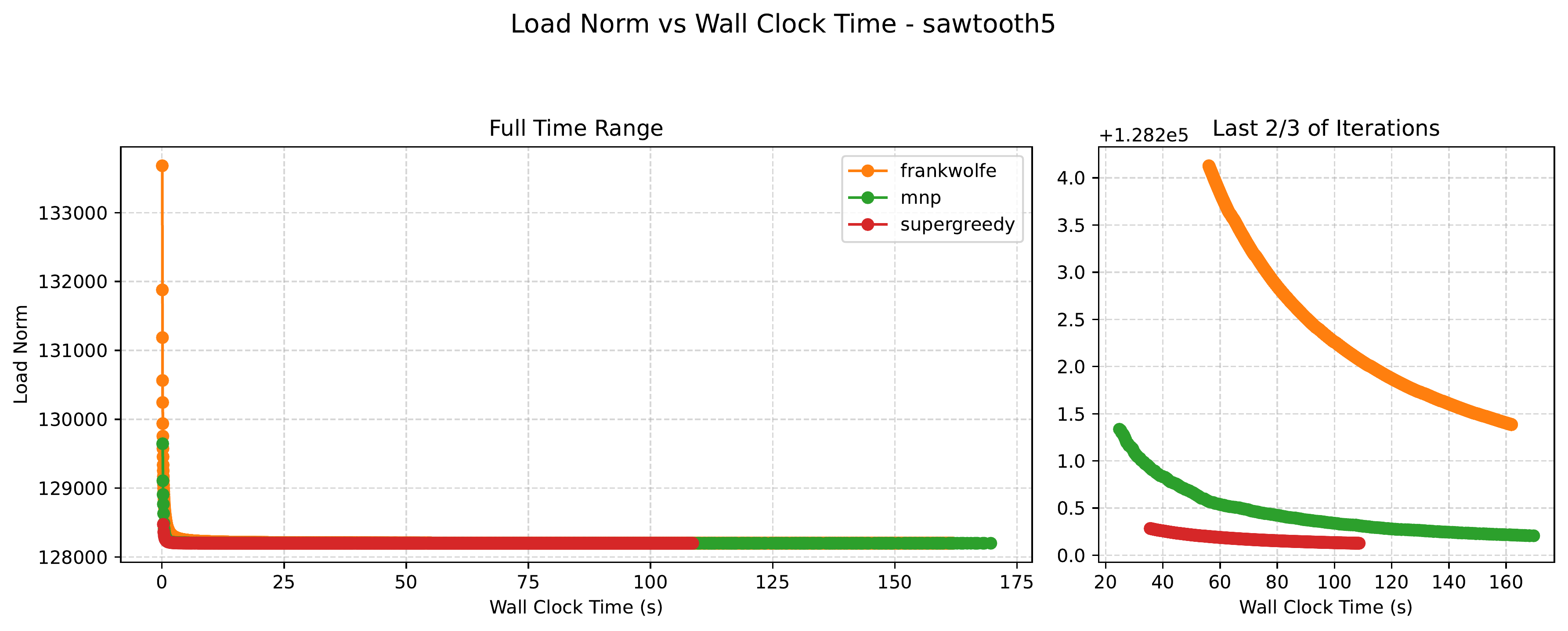}
        \caption{\texttt{sawtooth5}}
        \label{fig:sfm-s5-mnp}
    \end{subfigure}
    \hfill
    \begin{subfigure}[t]{0.24\textwidth}
        \centering
        \includegraphics[width=\textwidth]{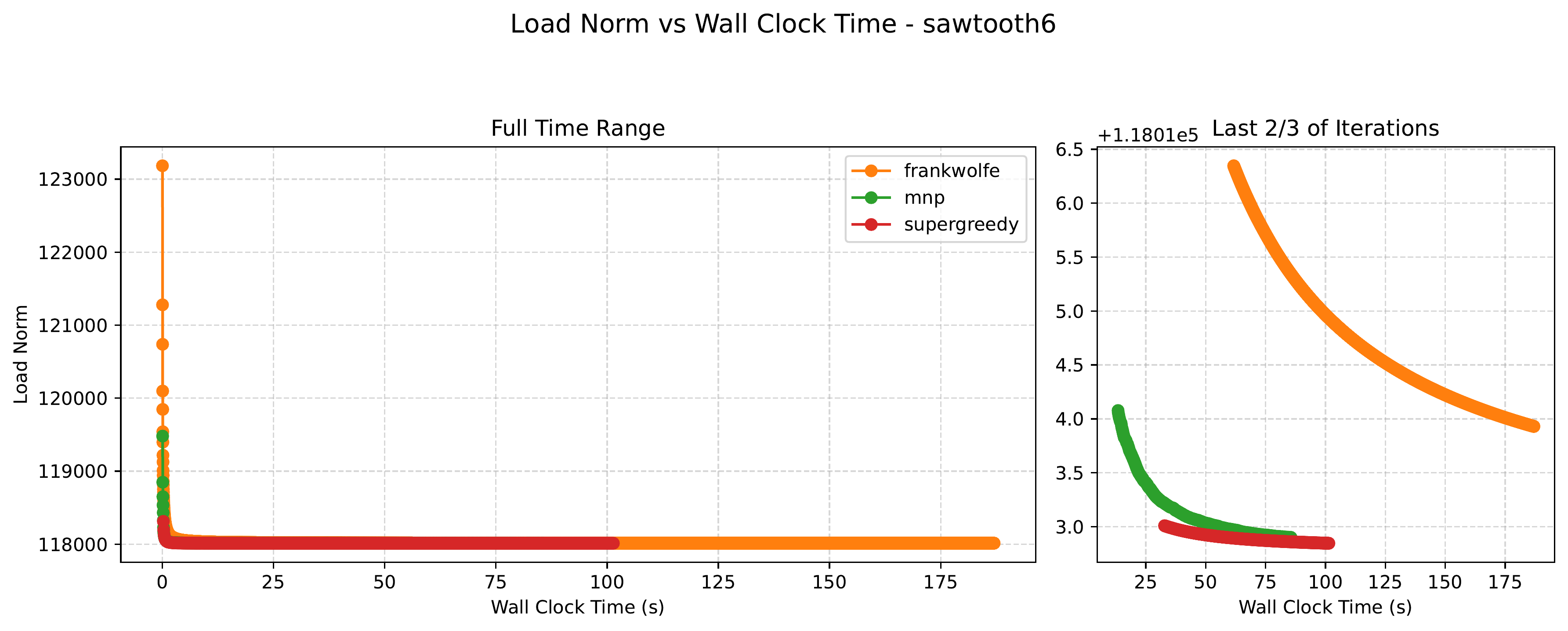}
        \caption{\texttt{sawtooth6}}
        \label{fig:sfm-s6-mnp}
    \end{subfigure}
    \hfill
    \begin{subfigure}[t]{0.24\textwidth}
        \centering
        \includegraphics[width=\textwidth]{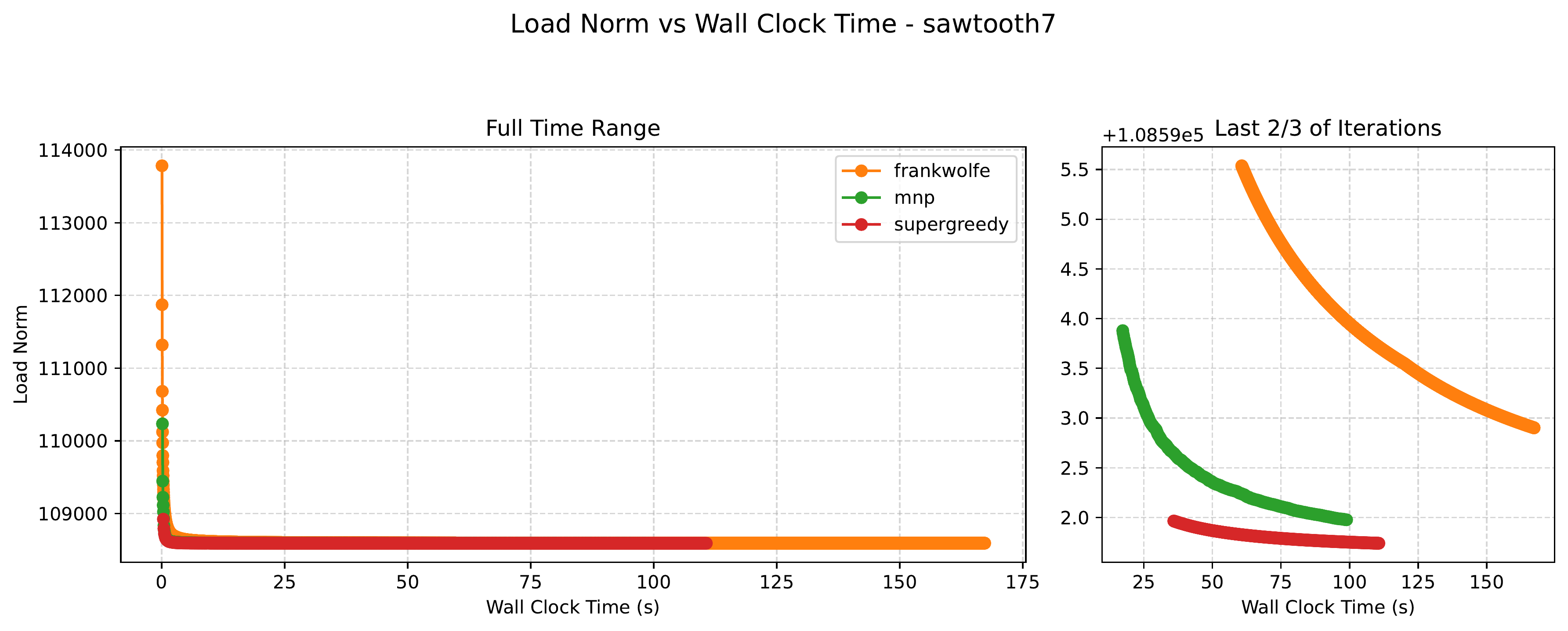}
        \caption{\texttt{sawtooth7}}
        \label{fig:sfm-s7-mnp}
    \end{subfigure}

    \begin{subfigure}[t]{0.24\textwidth}
        \centering
        \includegraphics[width=\textwidth]{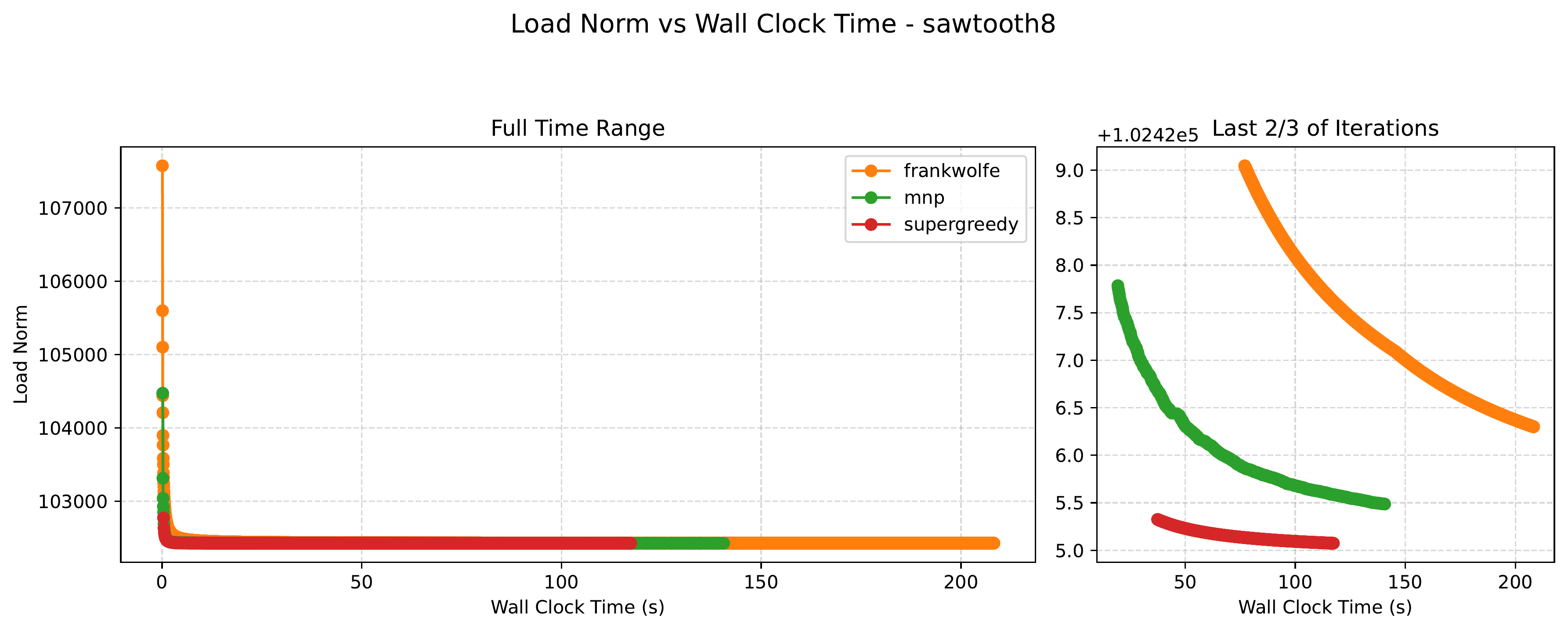}
        \caption{\texttt{sawtooth8}}
        \label{fig:sfm-s4-mnp}
    \end{subfigure}
    \hfill
    \begin{subfigure}[t]{0.24\textwidth}
        \centering
        \includegraphics[width=\textwidth]{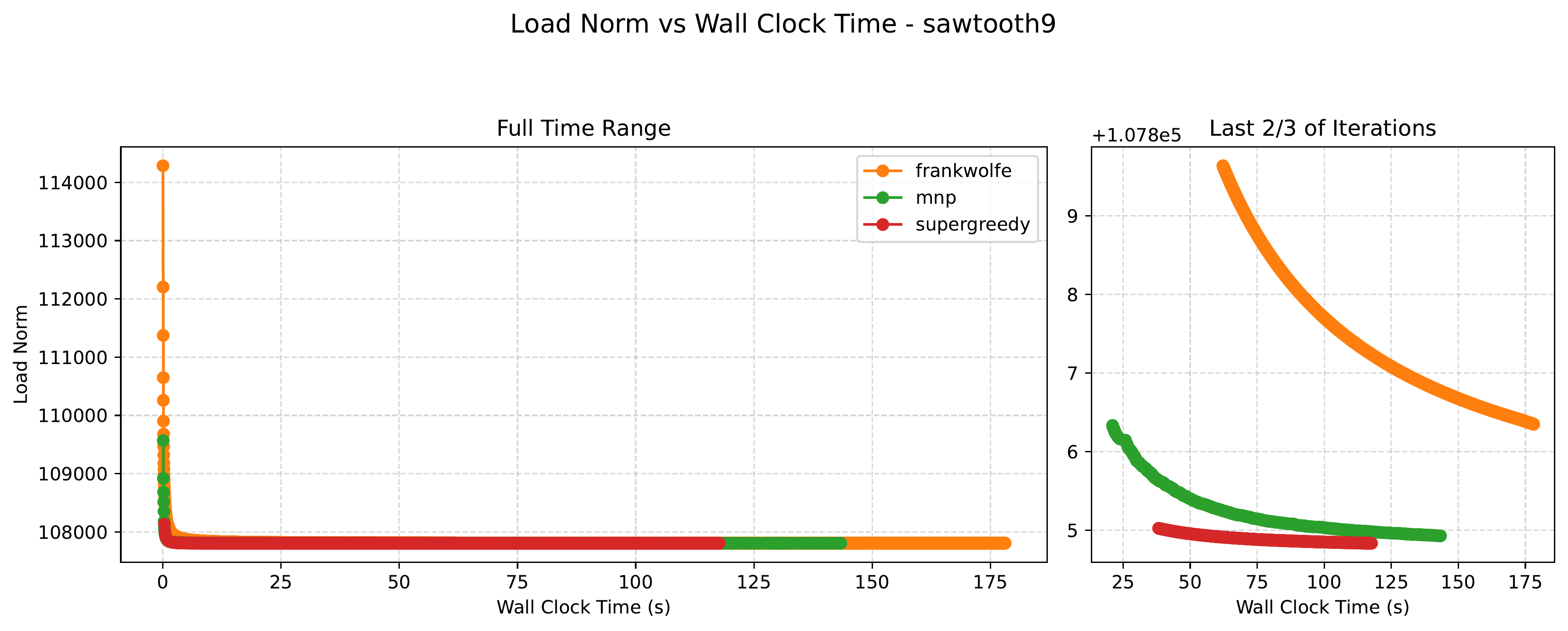}
        \caption{\texttt{sawtooth9}}
        \label{fig:sfm-s5-mnp}
    \end{subfigure}
    \hfill
    \begin{subfigure}[t]{0.24\textwidth}
        \centering
        \includegraphics[width=\textwidth]{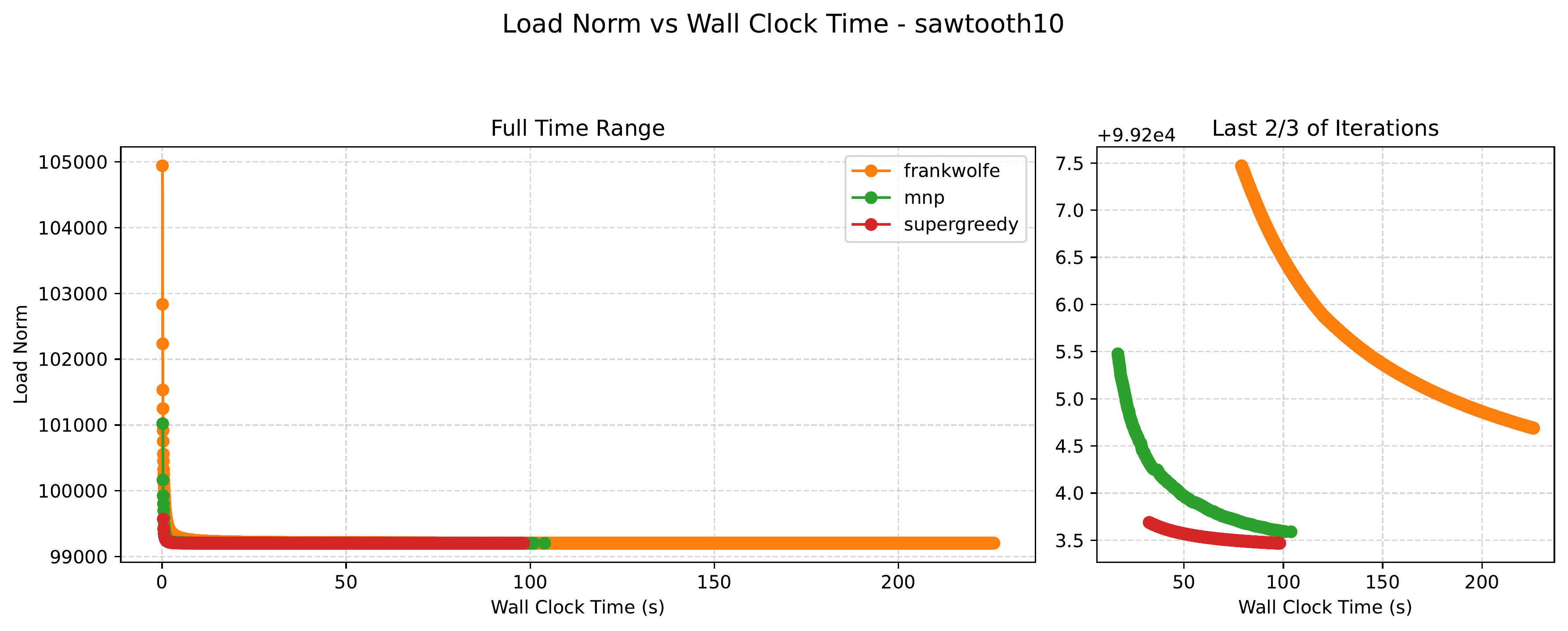}
        \caption{\texttt{sawtooth10}}
        \label{fig:sfm-s6-mnp}
    \end{subfigure}
    \hfill
    \begin{subfigure}[t]{0.24\textwidth}
        \centering
        \includegraphics[width=\textwidth]{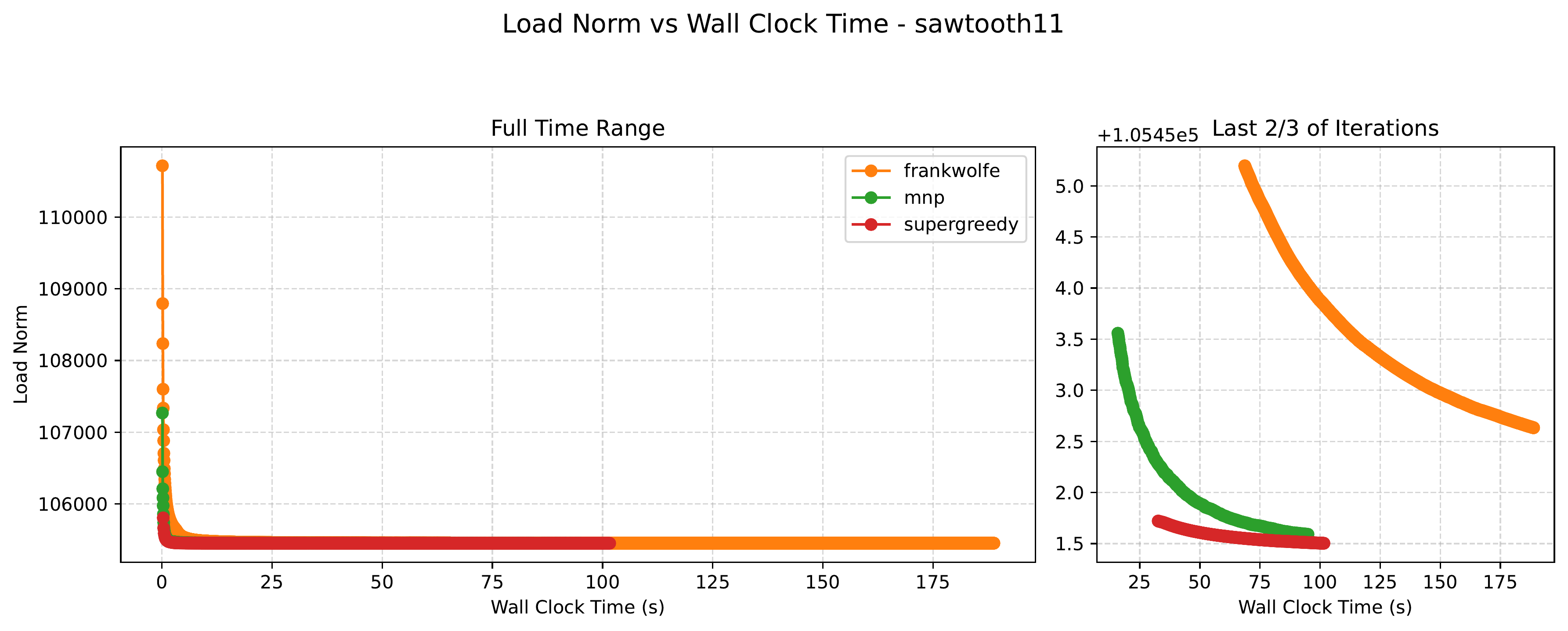}
        \caption{\texttt{sawtooth11}}
        \label{fig:sfm-s7-mnp}
    \end{subfigure}
\end{figure}

\begin{figure}
    \centering
    \begin{subfigure}[t]{0.24\textwidth}
        \centering
        \includegraphics[width=\textwidth]{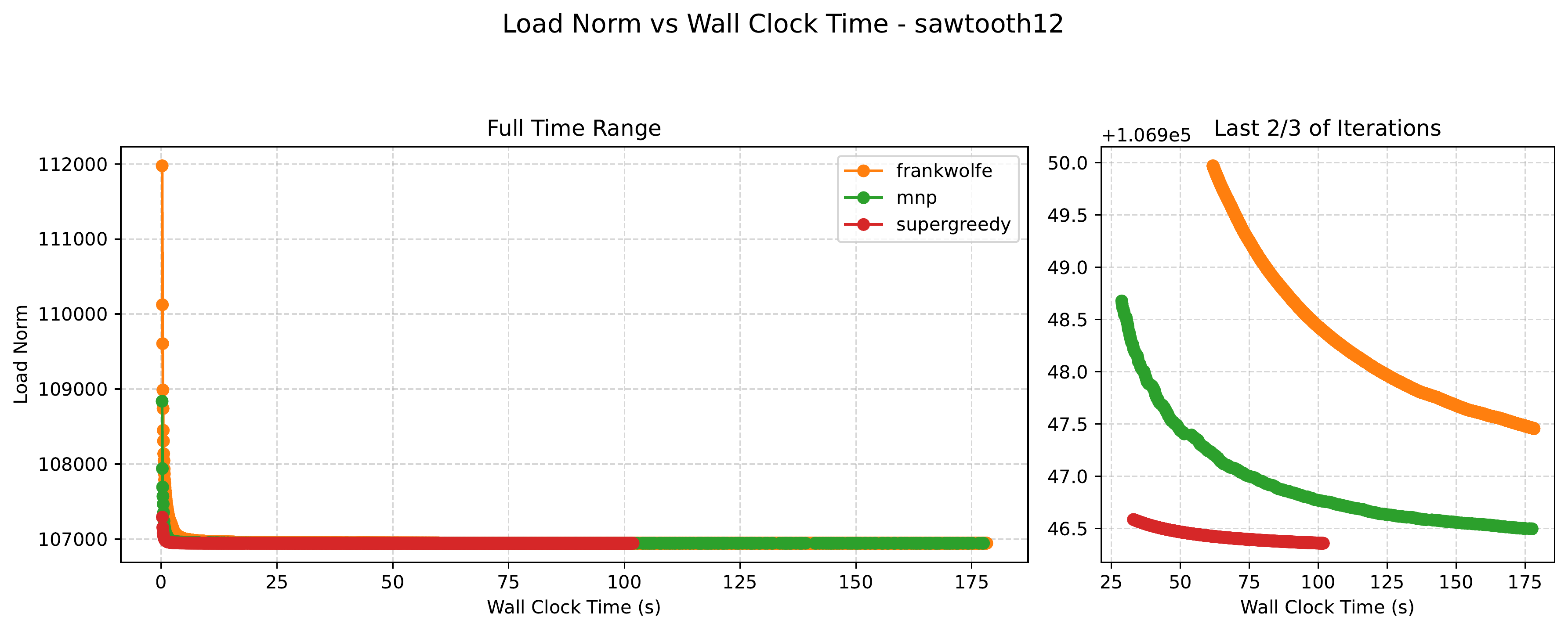}
        \caption{\texttt{sawtooth12}}
        \label{fig:sfm-s12-mnp}
    \end{subfigure}
    \hfill
    \begin{subfigure}[t]{0.24\textwidth}
        \centering
        \includegraphics[width=\textwidth]{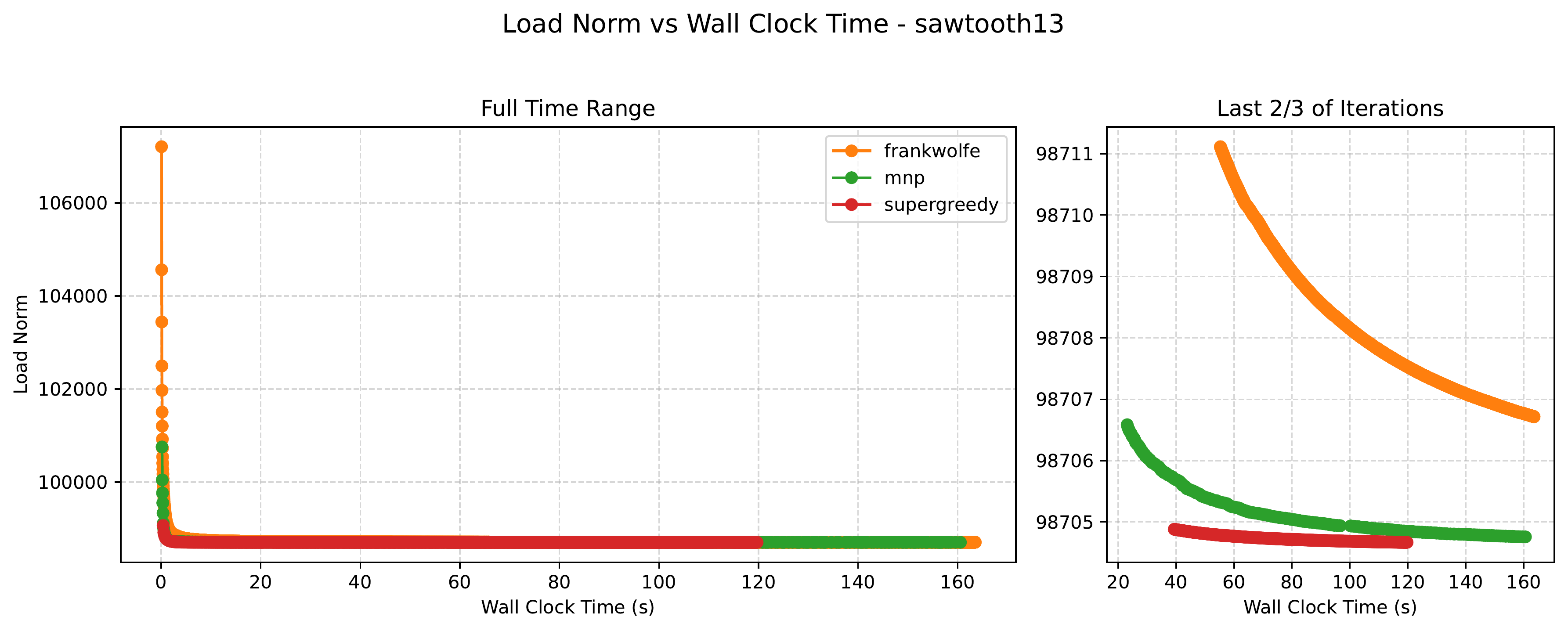}
        \caption{\texttt{sawtooth13}}
        \label{fig:sfm-s13-mnp}
    \end{subfigure}
    \hfill
    \begin{subfigure}[t]{0.24\textwidth}
        \centering
        \includegraphics[width=\textwidth]{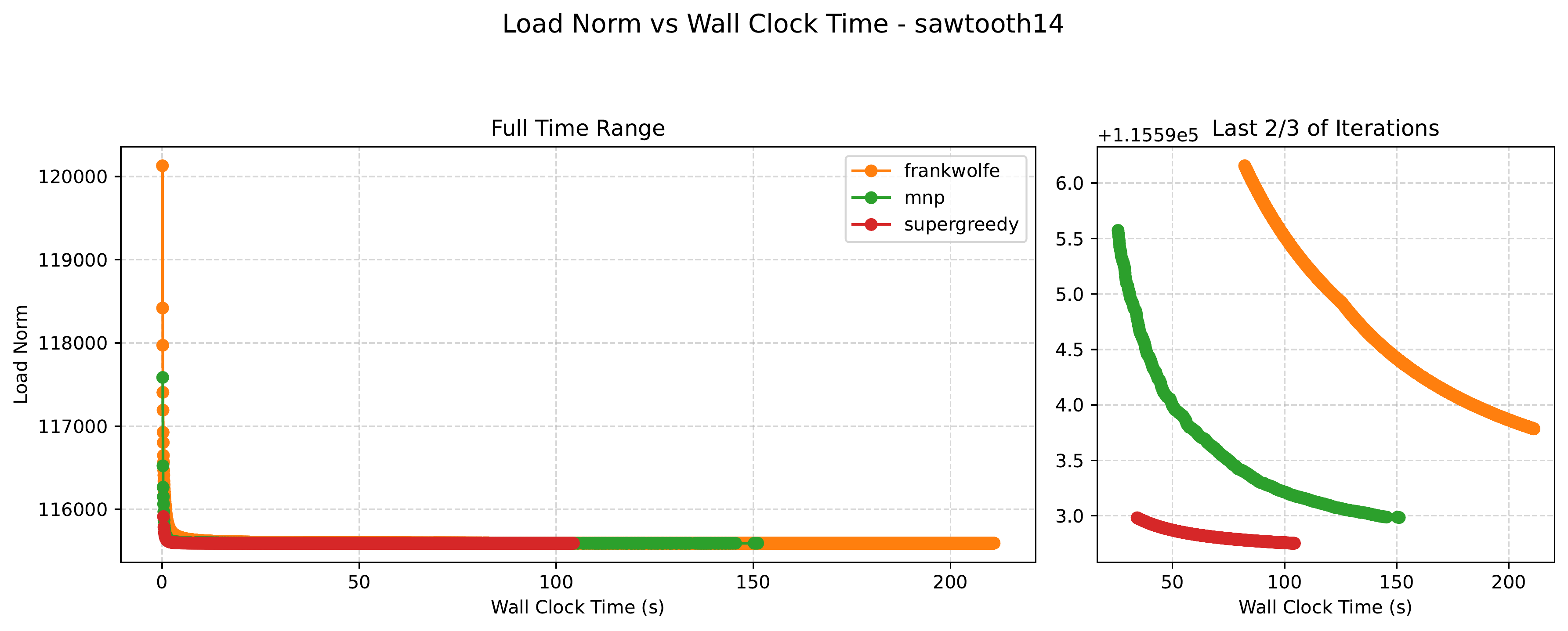}
        \caption{\texttt{sawtooth14}}
        \label{fig:sfm-s14-mnp}
    \end{subfigure}
    \hfill
    \begin{subfigure}[t]{0.24\textwidth}
        \centering
        \includegraphics[width=\textwidth]{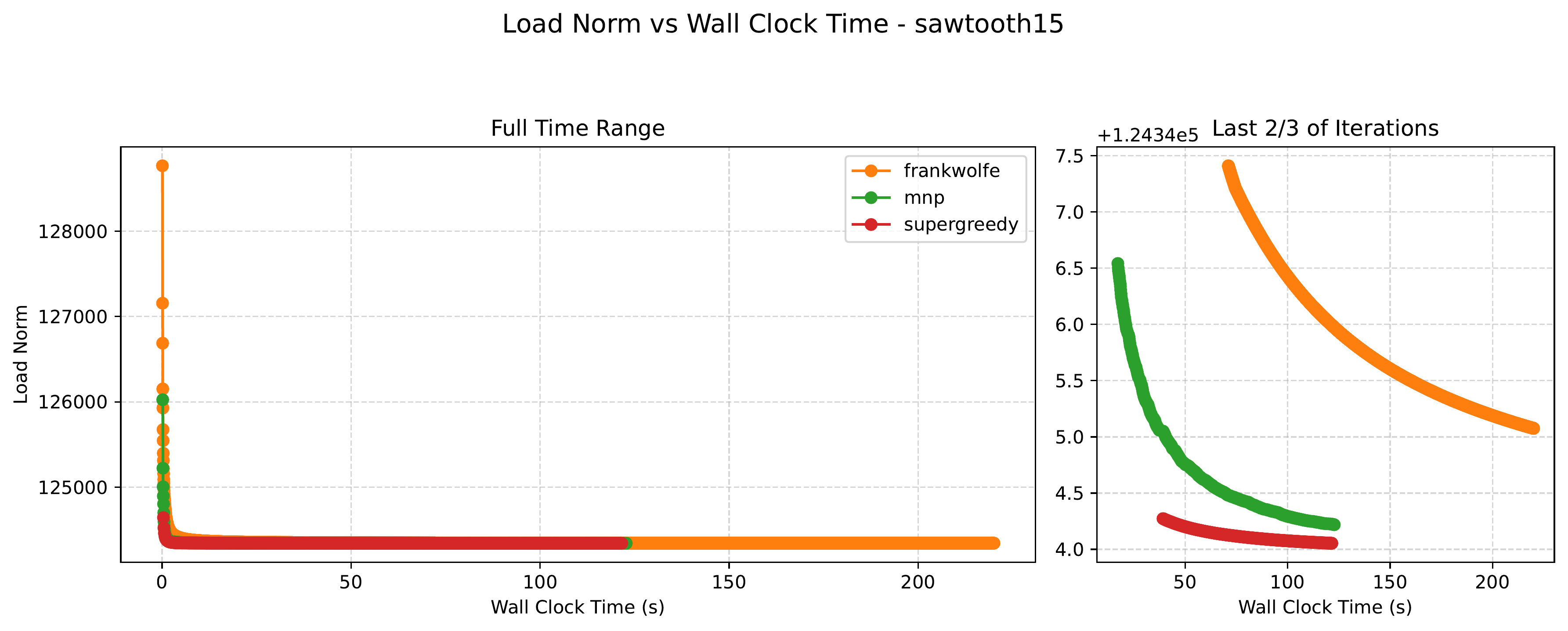}
        \caption{\texttt{sawtooth15}}
        \label{fig:sfm-s15-mnp}
    \end{subfigure}

    \begin{subfigure}[t]{0.24\textwidth}
        \centering
        \includegraphics[width=\textwidth]{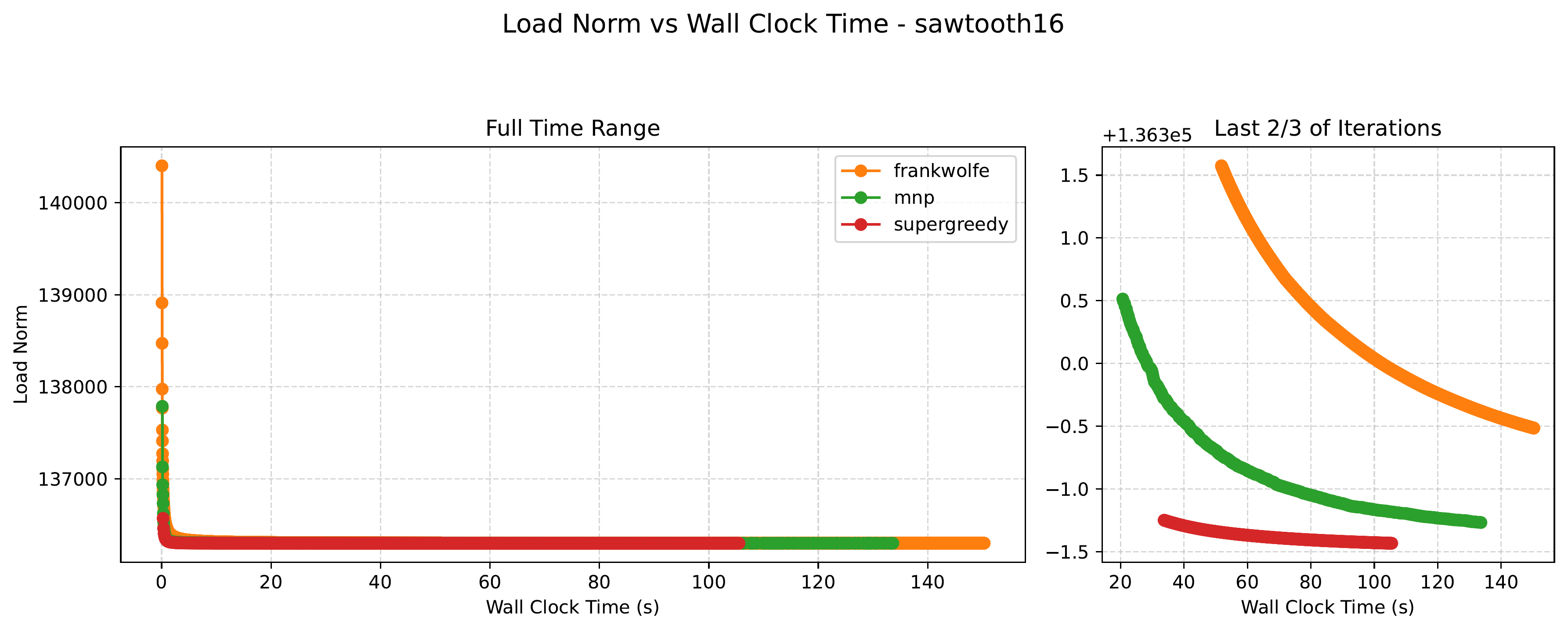}
        \caption{\texttt{sawtooth16}}
        \label{fig:sfm-s16-mnp}
    \end{subfigure}
    \hfill
    \begin{subfigure}[t]{0.24\textwidth}
        \centering
        \includegraphics[width=\textwidth]{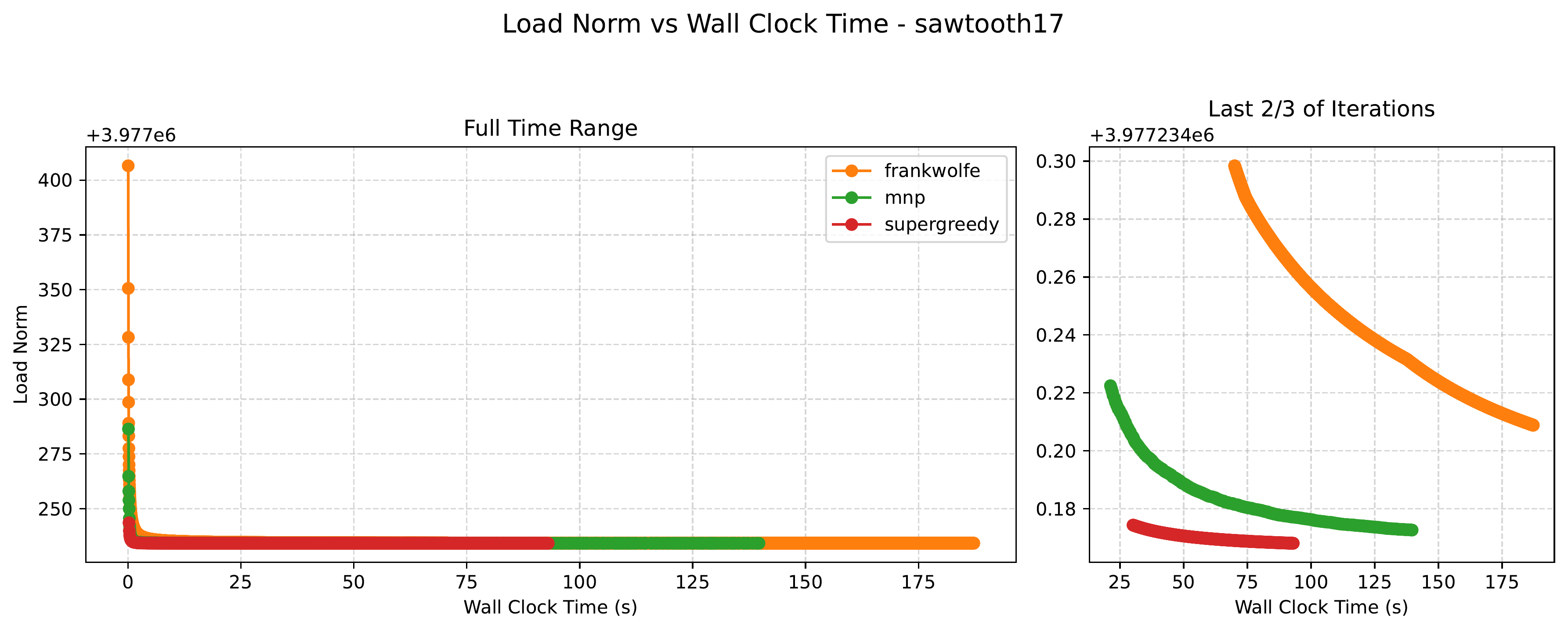}
        \caption{\texttt{sawtooth17}}
        \label{fig:sfm-s17-mnp}
    \end{subfigure}
    \hfill
    \begin{subfigure}[t]{0.24\textwidth}
        \centering
        \includegraphics[width=\textwidth]{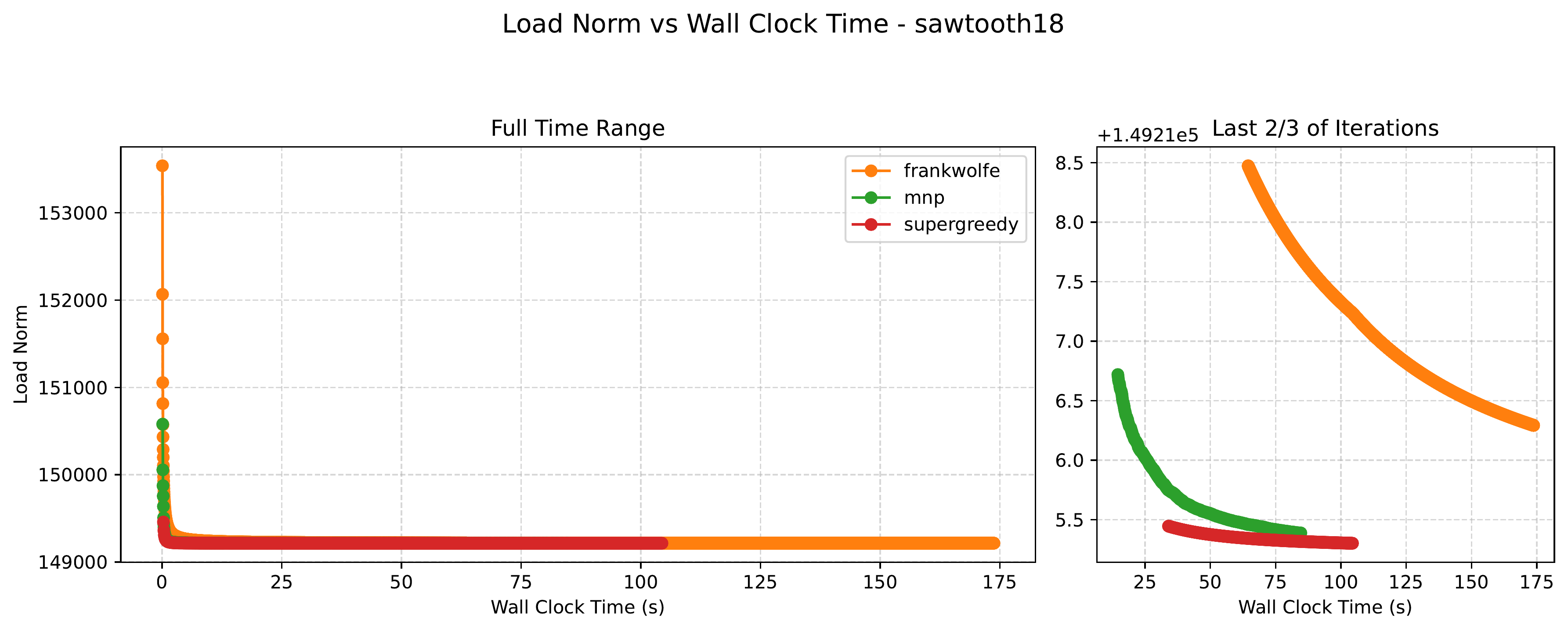}
        \caption{\texttt{sawtooth18}}
        \label{fig:sfm-s18-mnp}
    \end{subfigure}
    \hfill
    \begin{subfigure}[t]{0.24\textwidth}
        \centering
        \includegraphics[width=\textwidth]{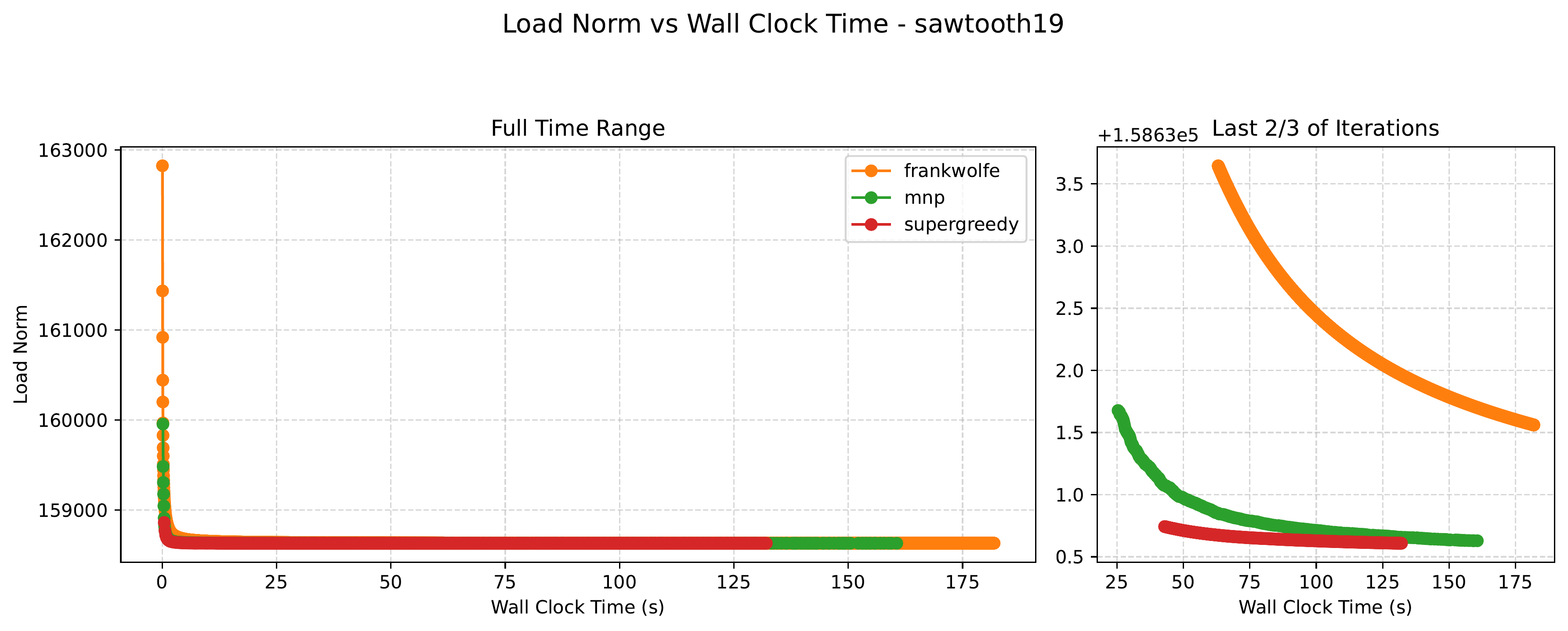}
        \caption{\texttt{sawtooth19}}
        \label{fig:sfm-s19-mnp}
    \end{subfigure}
    \caption{Min $s$-$t$ Cut load norm over time }
    \label{fig:sfm-mnp}
\end{figure}

\subsection{Generalized $p$-mean DSG}
\begin{figure}[H]
    \centering
    \begin{subfigure}[t]{0.24\textwidth}
        \centering
        \includegraphics[width=\textwidth]{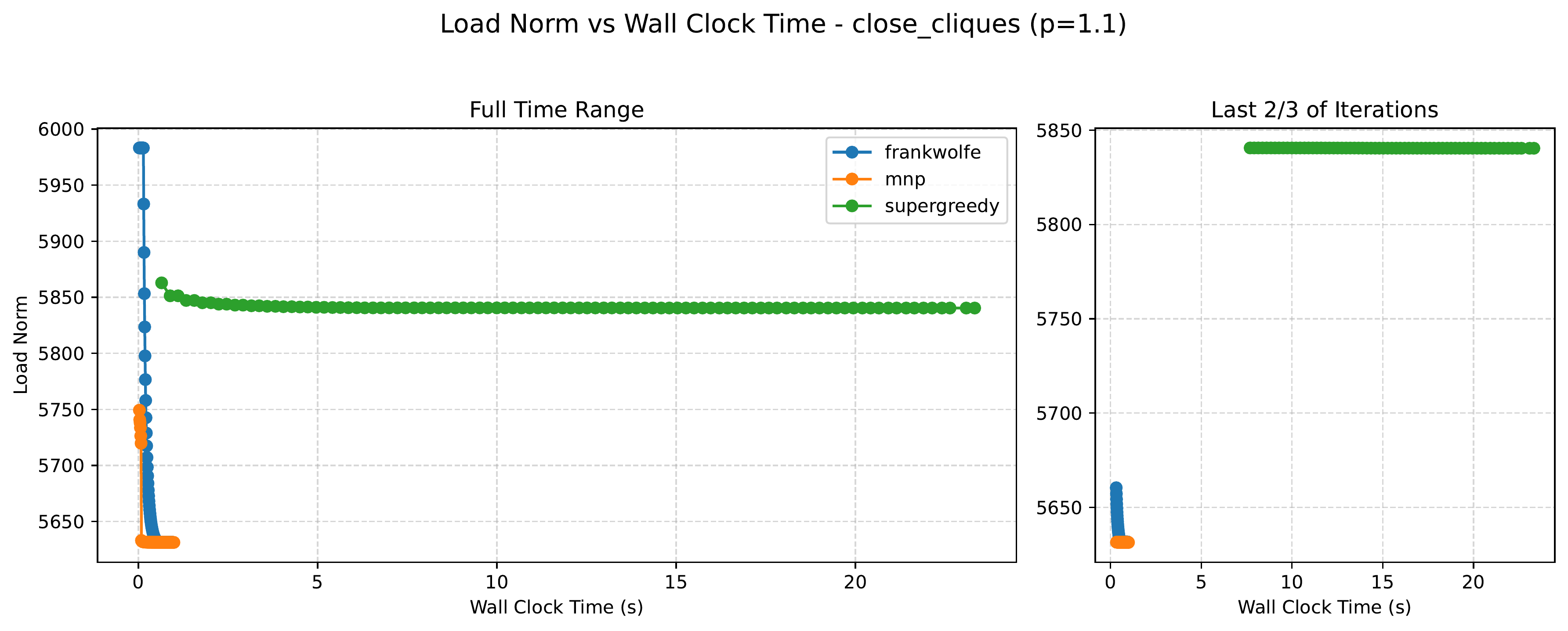}
        \caption{$(p{=}1.1)$}
        \label{fig:dss-cc-1.1-mnp}
    \end{subfigure}
    \hfill
    \begin{subfigure}[t]{0.24\textwidth}
        \centering
        \includegraphics[width=\textwidth]{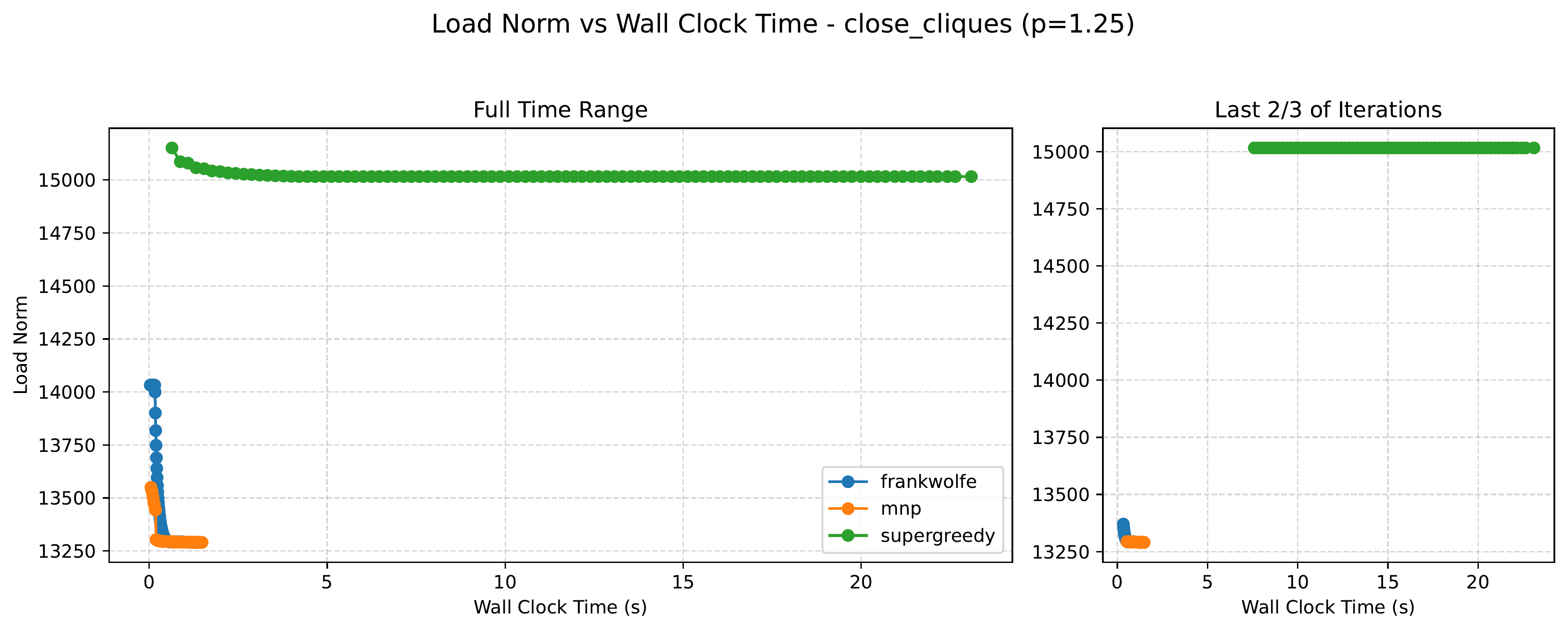}
        \caption{$(p{=}1.25)$}
        \label{fig:dss-cc-1.25-mnp}
    \end{subfigure}
    \hfill
    \begin{subfigure}[t]{0.24\textwidth}
        \centering
        \includegraphics[width=\textwidth]{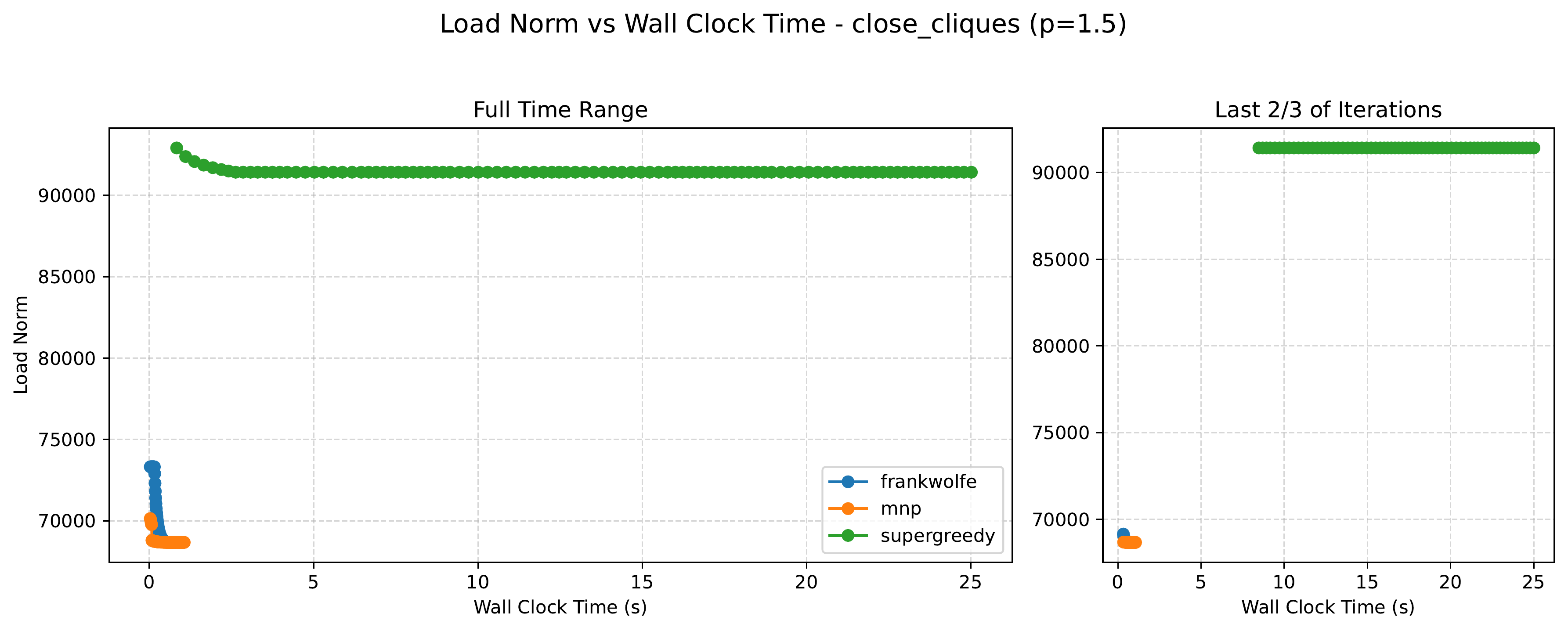}
        \caption{$(p{=}1.5)$}
        \label{dss-cc-1.5-mnp}
    \end{subfigure}
    \hfill
    \begin{subfigure}[t]{0.24\textwidth}
        \centering
        \includegraphics[width=\textwidth]{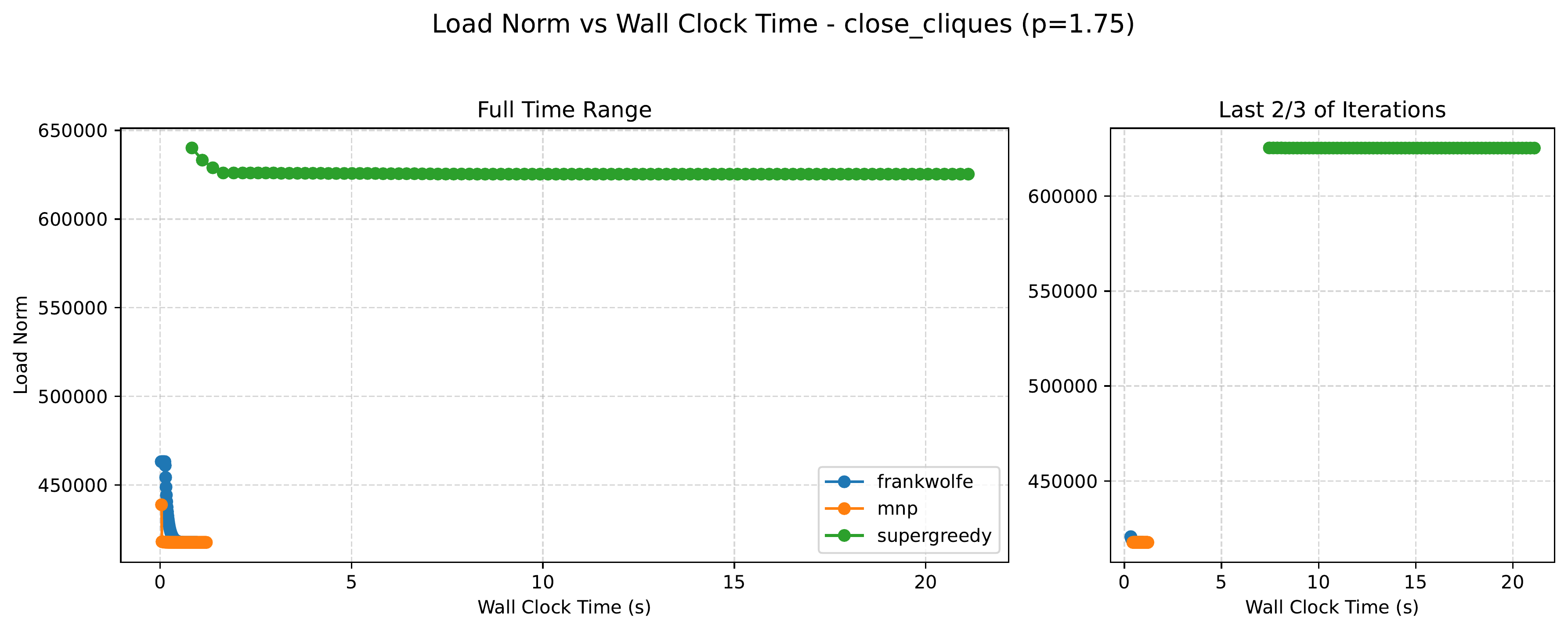}
        \caption{$(p{=}1.75)$}
        \label{dss-cc-1.75-mnp}
    \end{subfigure}
    \caption{DSS load norm over time - \texttt{close\_cliques}}
    \label{dss-cc-mnp}
\end{figure}

\begin{figure}[H]
    \centering
    \begin{subfigure}[t]{0.24\textwidth}
        \centering
        \includegraphics[width=\textwidth]{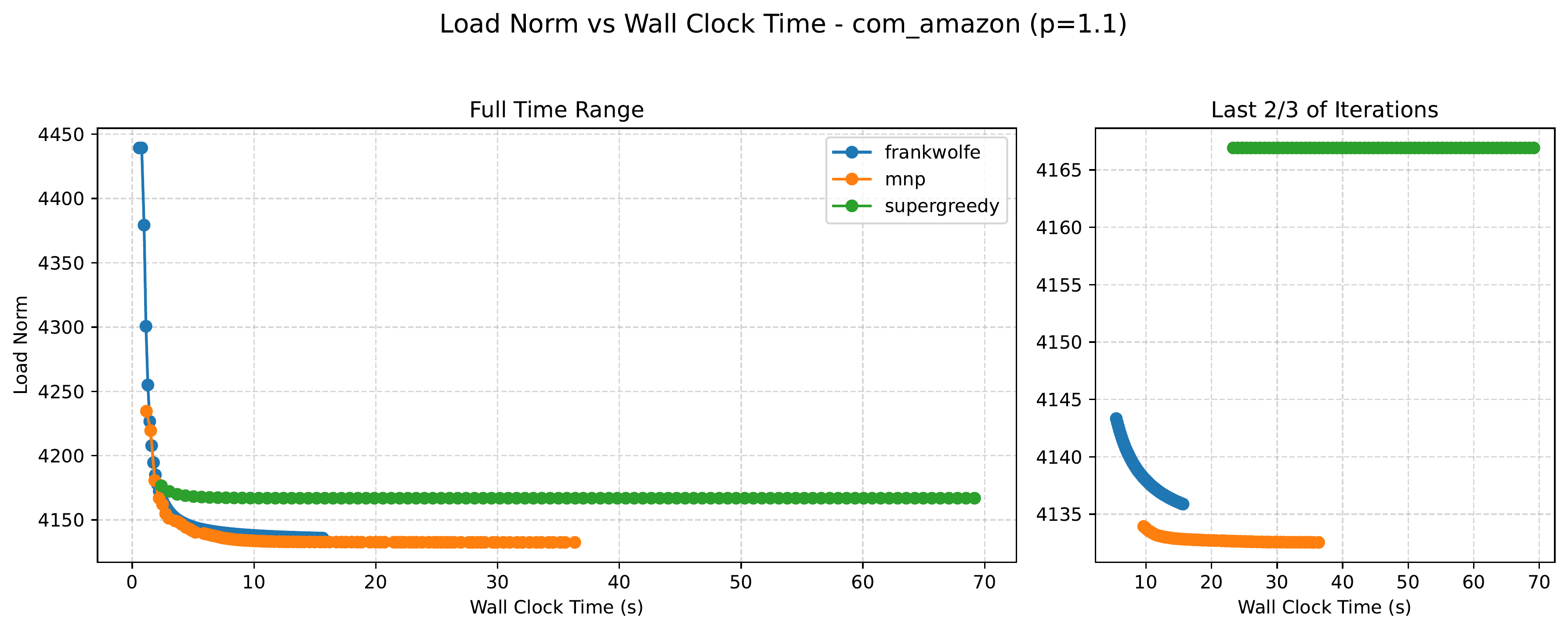}
        \caption{$(p{=}1.1)$}
        \label{fig:dss-ca-1.1-mnp}
    \end{subfigure}
    \hfill
    \begin{subfigure}[t]{0.24\textwidth}
        \centering
        \includegraphics[width=\textwidth]{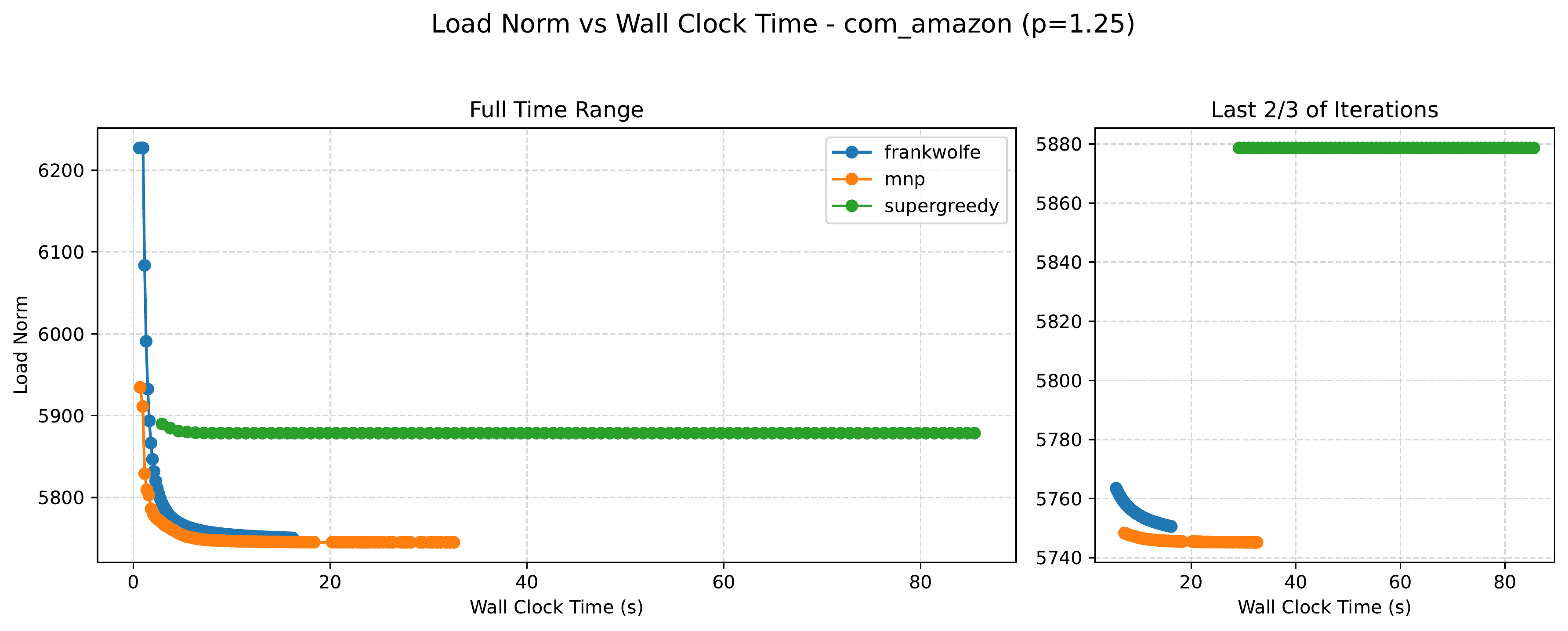}
        \caption{$(p{=}1.25)$}
        \label{fig:dss-ca-1.25-mnp}
    \end{subfigure}
    \hfill
    \begin{subfigure}[t]{0.24\textwidth}
        \centering
        \includegraphics[width=\textwidth]{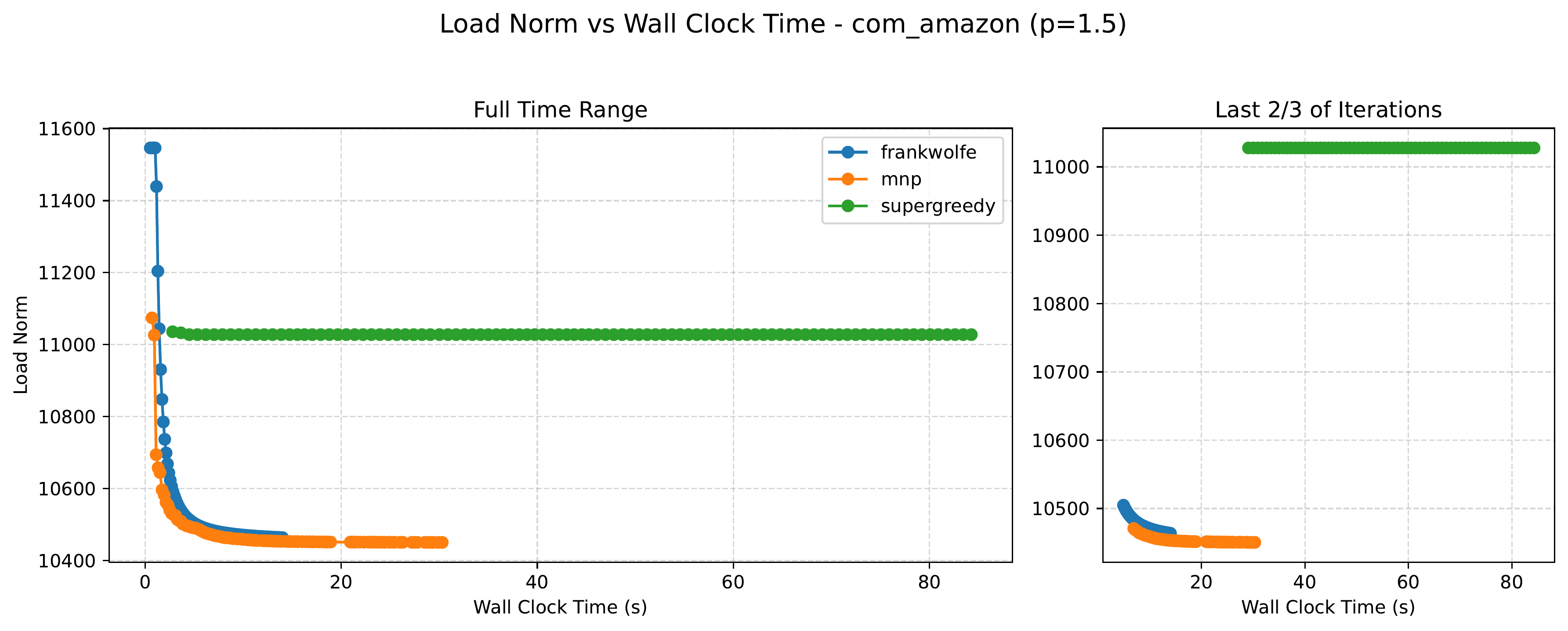}
        \caption{$(p{=}1.5)$}
        \label{dss-ca-1.5-mnp}
    \end{subfigure}
    \hfill
    \begin{subfigure}[t]{0.24\textwidth}
        \centering
        \includegraphics[width=\textwidth]{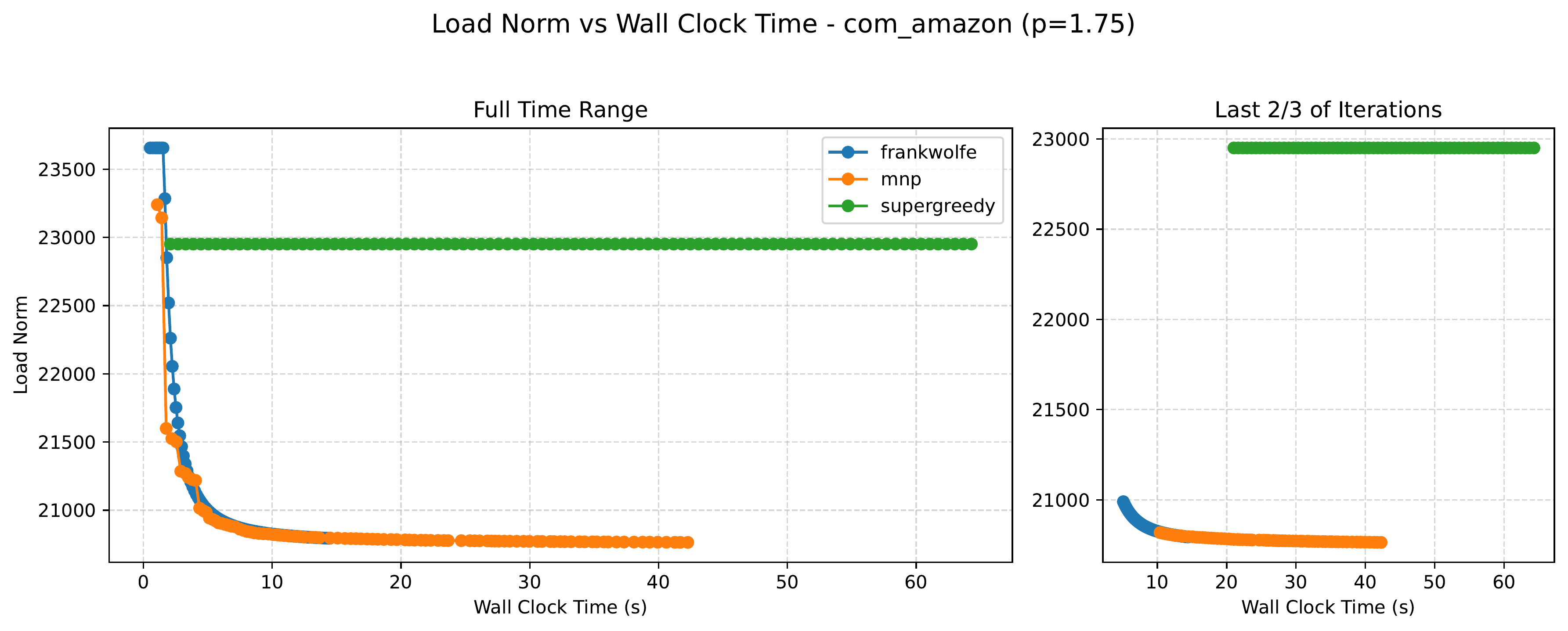}
        \caption{$(p{=}1.75)$}
        \label{dss-ca-1.75-mnp}
    \end{subfigure}
    \caption{DSS load norm over time - \texttt{com\_amazon}}
    \label{dss-ca-mnp}
\end{figure}

\begin{figure}[H]
    \centering
    \begin{subfigure}[t]{0.24\textwidth}
        \centering
        \includegraphics[width=\textwidth]{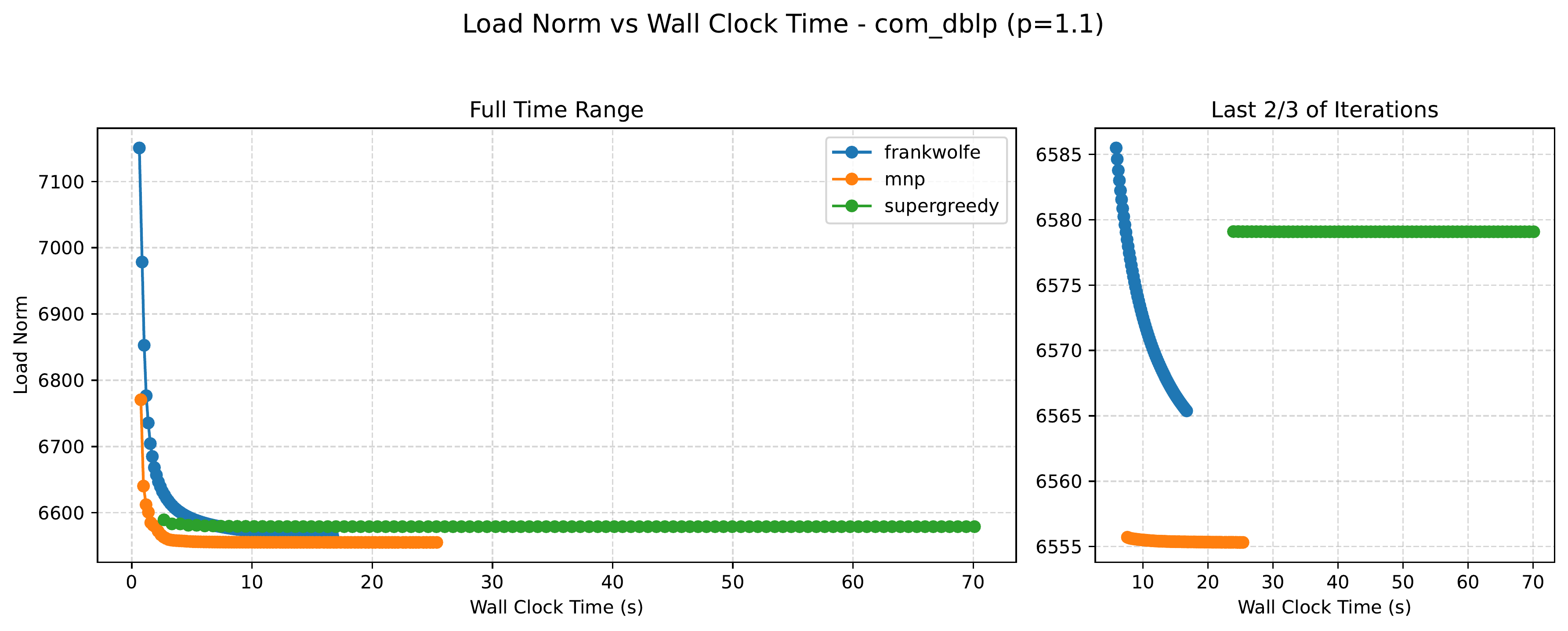}
        \caption{$(p{=}1.1)$}
        \label{fig:dss-cd-1.1-mnp}
    \end{subfigure}
    \hfill
    \begin{subfigure}[t]{0.24\textwidth}
        \centering
        \includegraphics[width=\textwidth]{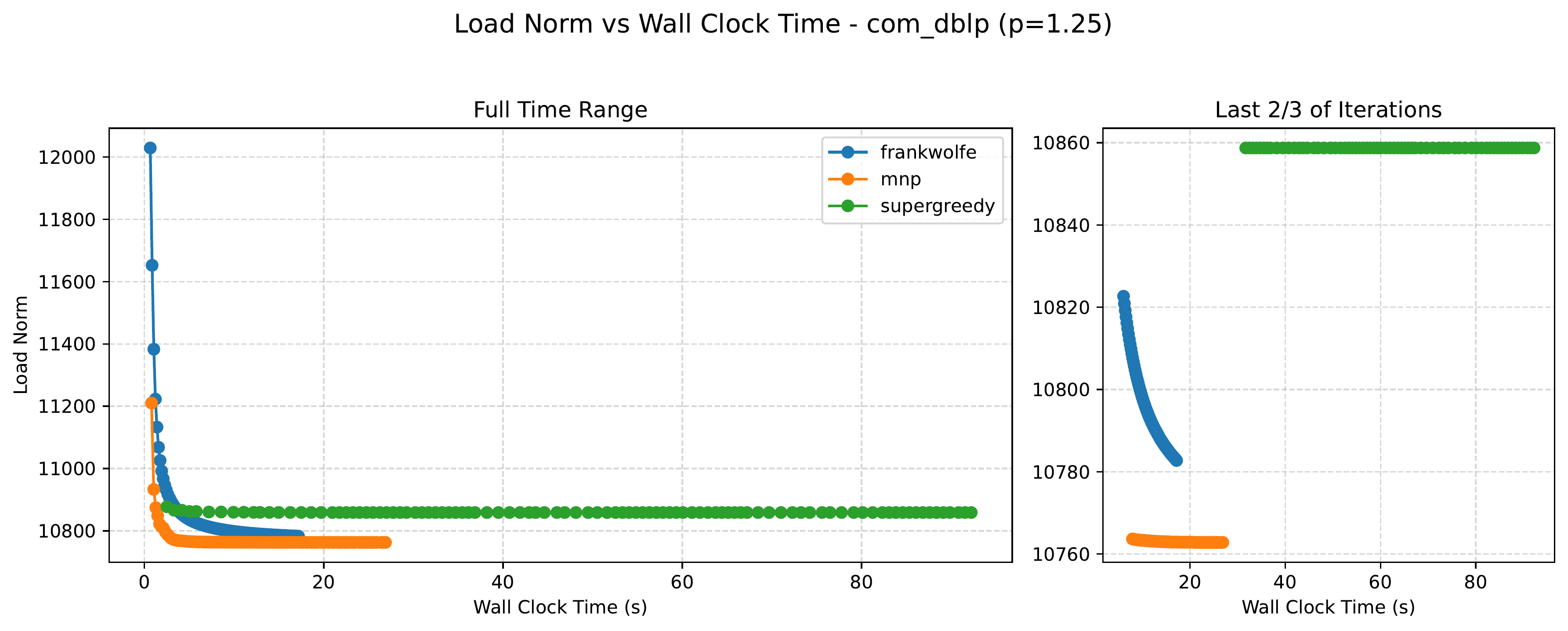}
        \caption{$(p{=}1.25)$}
        \label{fig:dss-cd-1.25-mnp}
    \end{subfigure}
    \hfill
    \begin{subfigure}[t]{0.24\textwidth}
        \centering
        \includegraphics[width=\textwidth]{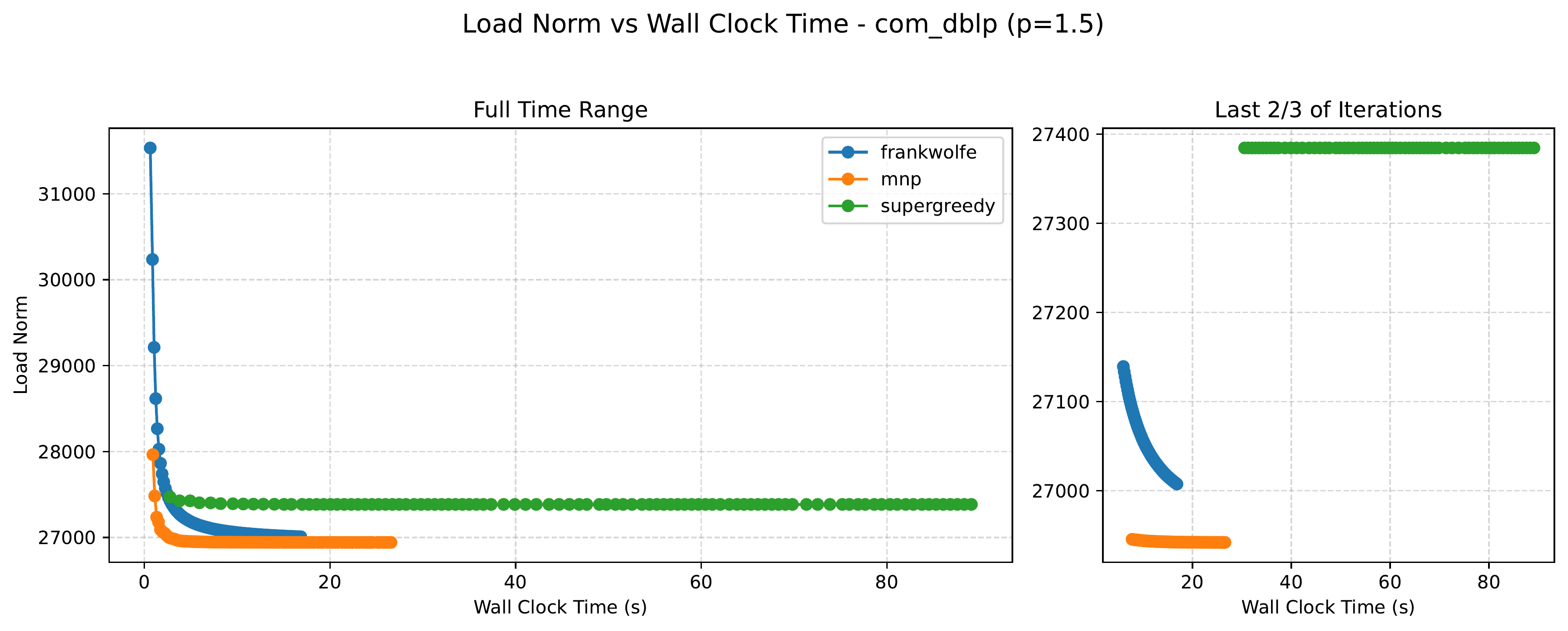}
        \caption{$(p{=}1.5)$}
        \label{dss-cd-1.5-mnp}
    \end{subfigure}
    \hfill
    \begin{subfigure}[t]{0.24\textwidth}
        \centering
        \includegraphics[width=\textwidth]{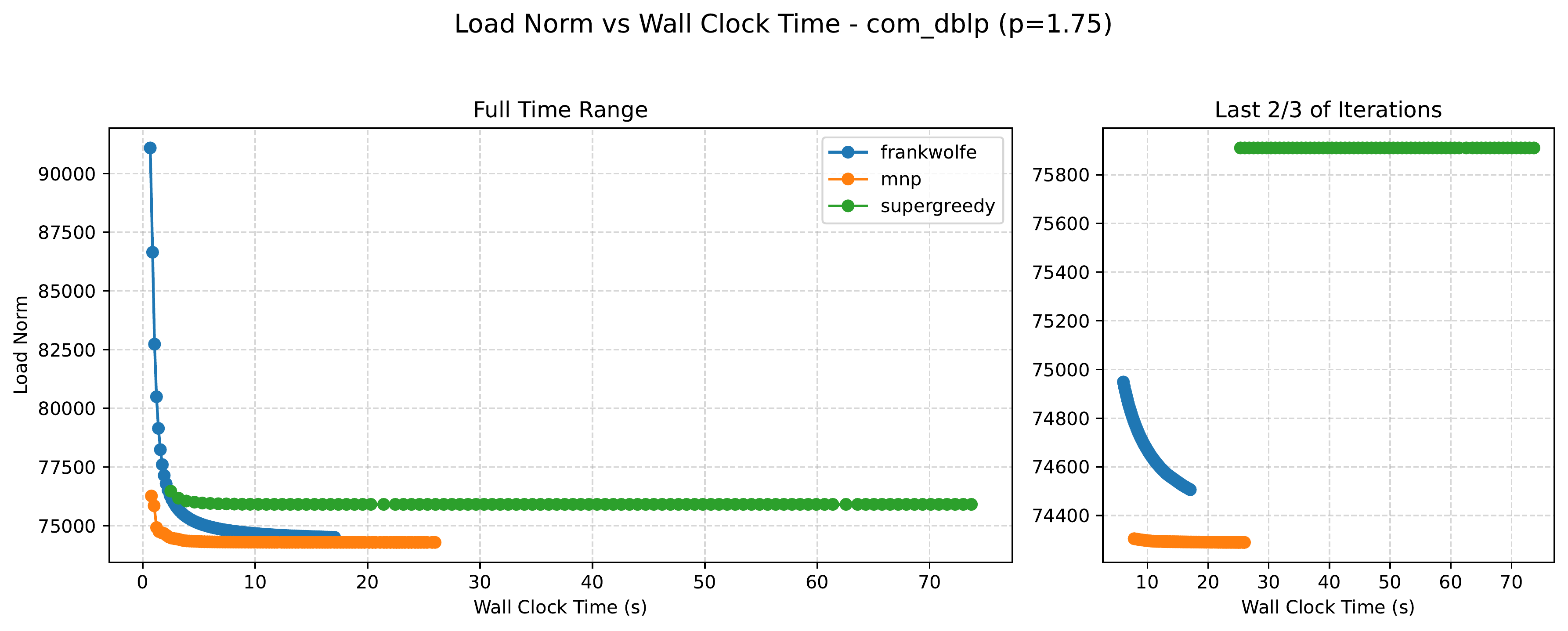}
        \caption{$(p{=}1.75)$}
        \label{dss-cd-1.75-mnp}
    \end{subfigure}
    \caption{DSS load norm over time - \texttt{com\_dblp}}
    \label{dss-cd-mnp}
\end{figure}

\begin{figure}[H]
    \centering
    \begin{subfigure}[t]{0.24\textwidth}
        \centering
        \includegraphics[width=\textwidth]{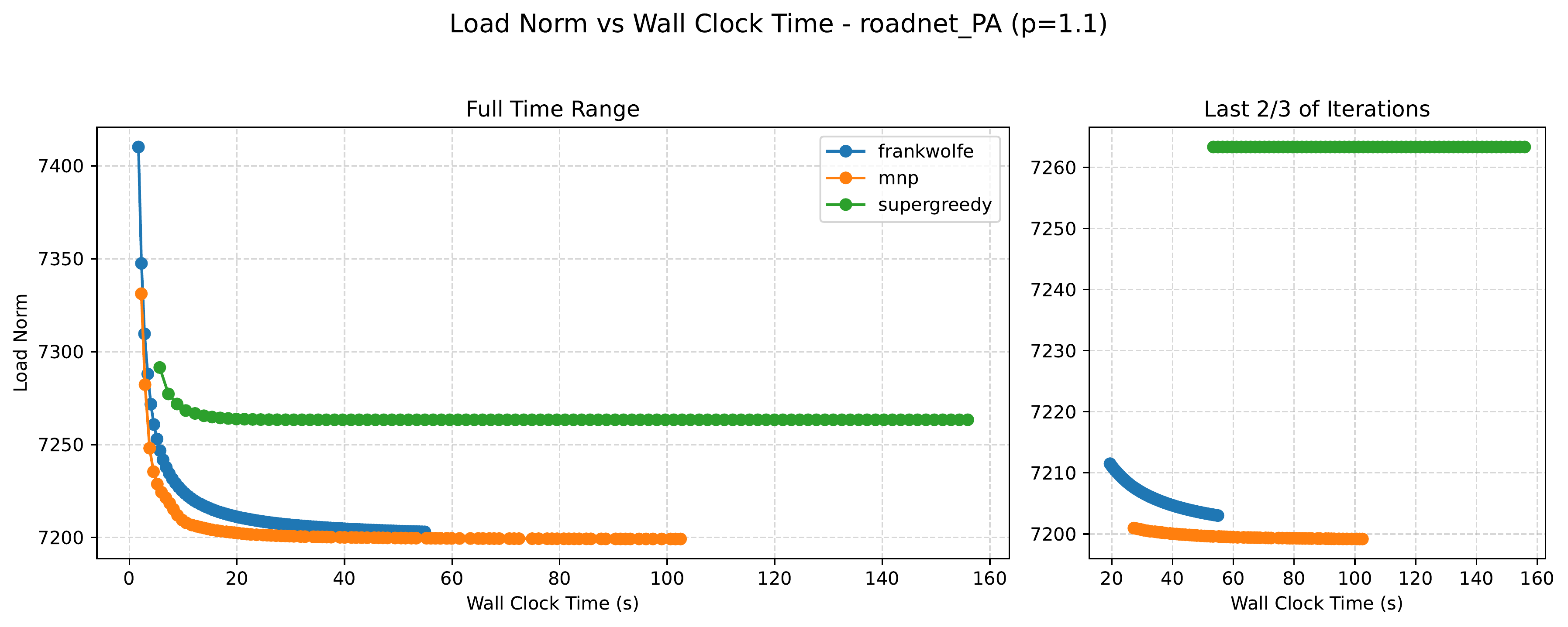}
        \caption{$(p{=}1.1)$}
        \label{fig:dss-rp-1.1-mnp}
    \end{subfigure}
    \hfill
    \begin{subfigure}[t]{0.24\textwidth}
        \centering
        \includegraphics[width=\textwidth]{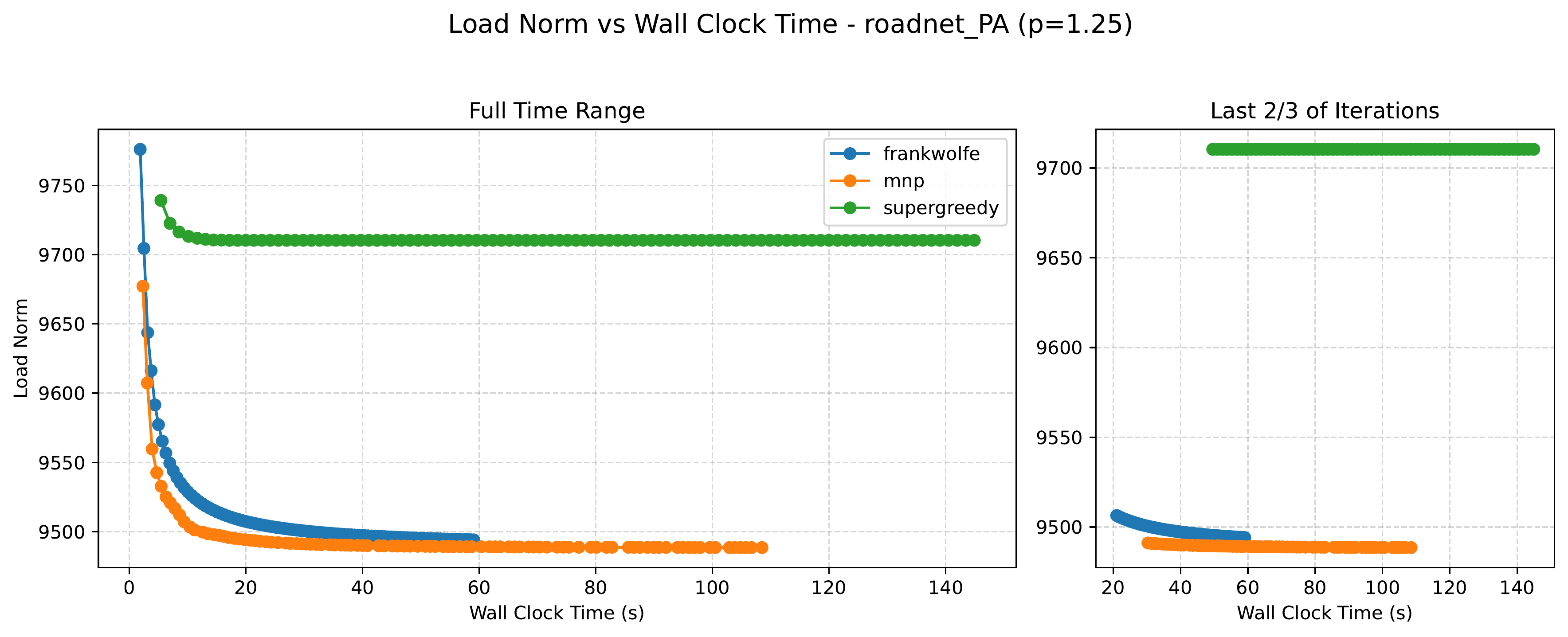}
        \caption{$(p{=}1.25)$}
        \label{fig:dss-rp-1.25-mnp}
    \end{subfigure}
    \hfill
    \begin{subfigure}[t]{0.24\textwidth}
        \centering
        \includegraphics[width=\textwidth]{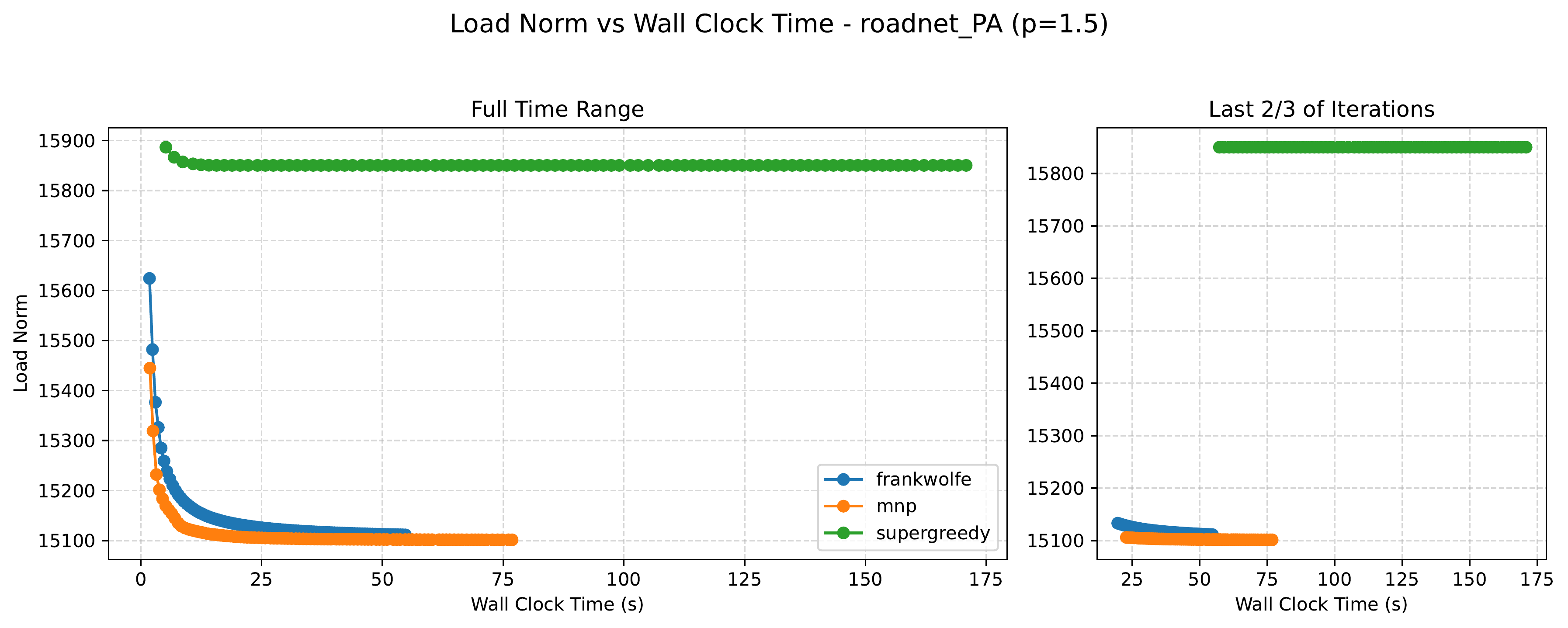}
        \caption{$(p{=}1.5)$}
        \label{dss-rp-1.5-mnp}
    \end{subfigure}
    \hfill
    \begin{subfigure}[t]{0.24\textwidth}
        \centering
        \includegraphics[width=\textwidth]{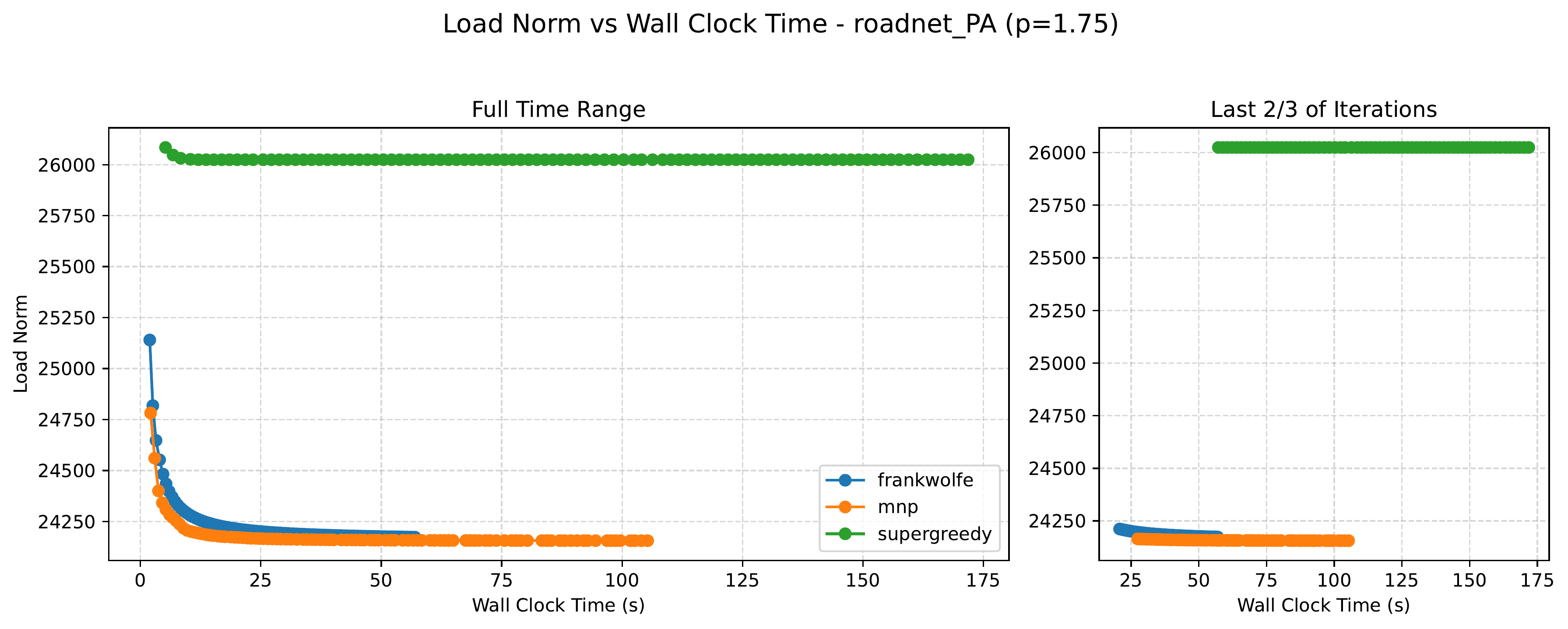}
        \caption{$(p{=}1.75)$}
        \label{dss-rp-1.75-mnp}
    \end{subfigure}
    \caption{DSS load norm over time - \texttt{roadnet\_PA}}
    \label{dss-rp-mnp}
\end{figure}

\begin{figure}[H]
    \centering
    \begin{subfigure}[t]{0.24\textwidth}
        \centering
        \includegraphics[width=\textwidth]{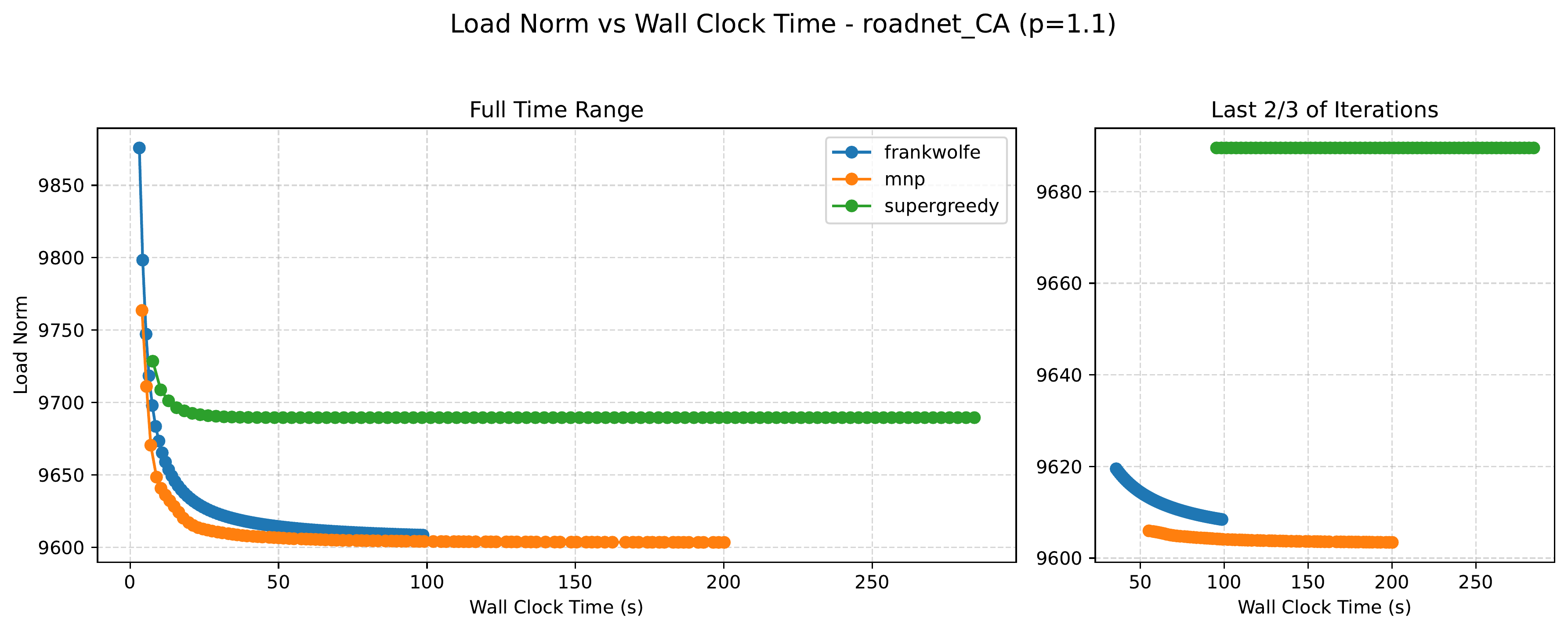}
        \caption{$(p{=}1.1)$}
        \label{fig:dss-rc-1.1-mnp}
    \end{subfigure}
    \hfill
    \begin{subfigure}[t]{0.24\textwidth}
        \centering
        \includegraphics[width=\textwidth]{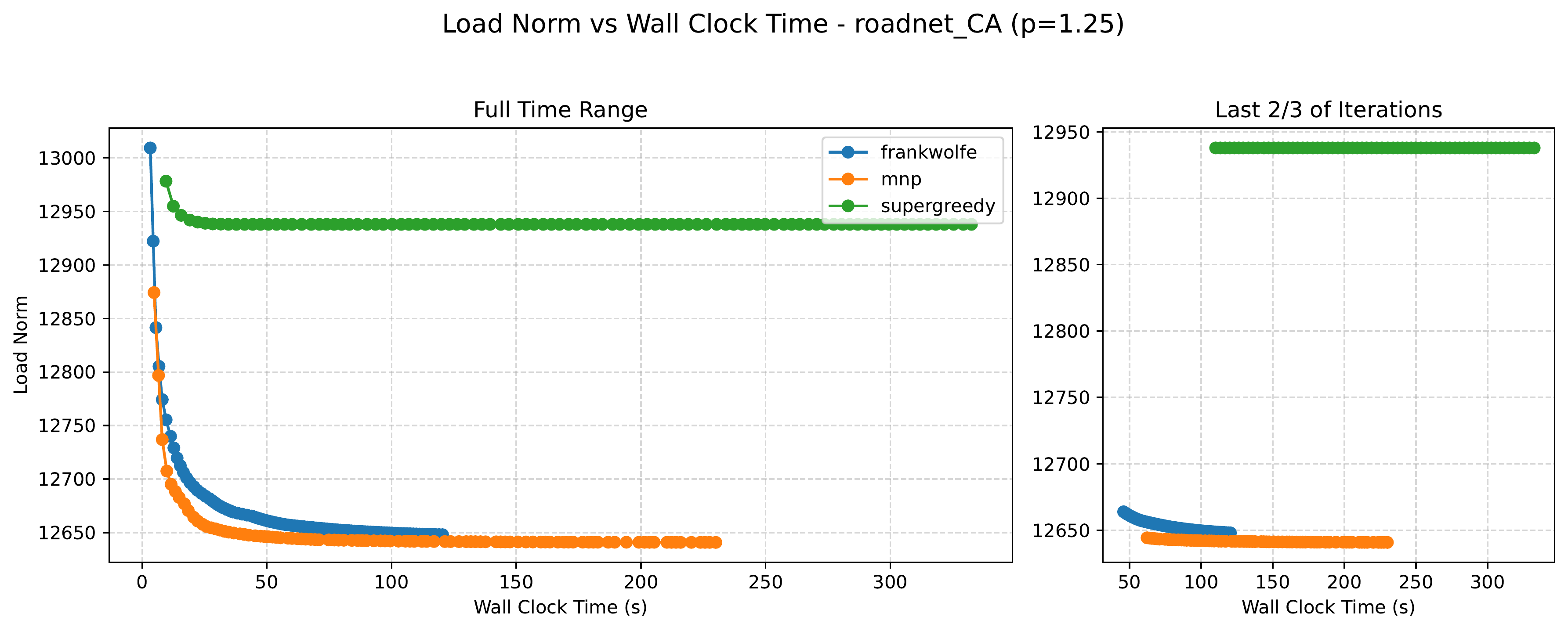}
        \caption{$(p{=}1.25)$}
        \label{fig:dss-rc-1.25-mnp}
    \end{subfigure}
    \hfill
    \begin{subfigure}[t]{0.24\textwidth}
        \centering
        \includegraphics[width=\textwidth]{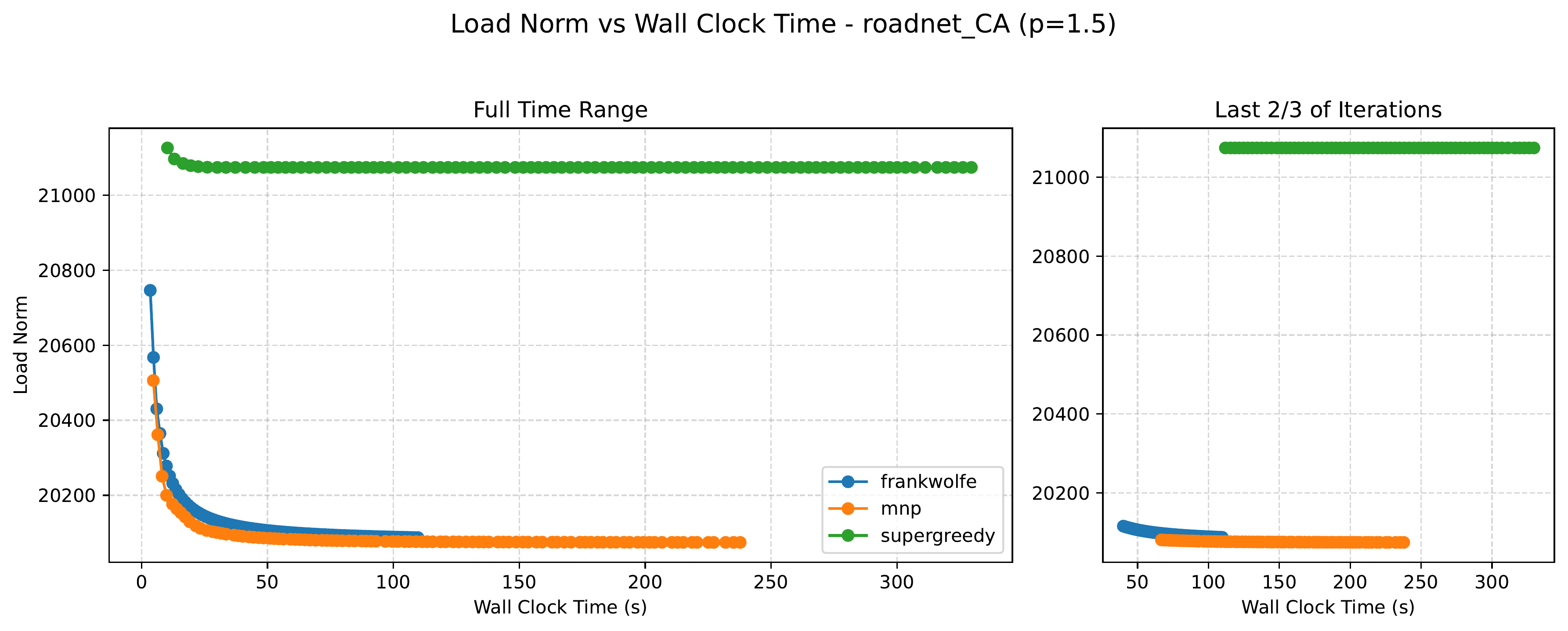}
        \caption{$(p{=}1.5)$}
        \label{dss-rc-1.5-mnp}
    \end{subfigure}
    \hfill
    \begin{subfigure}[t]{0.24\textwidth}
        \centering
        \includegraphics[width=\textwidth]{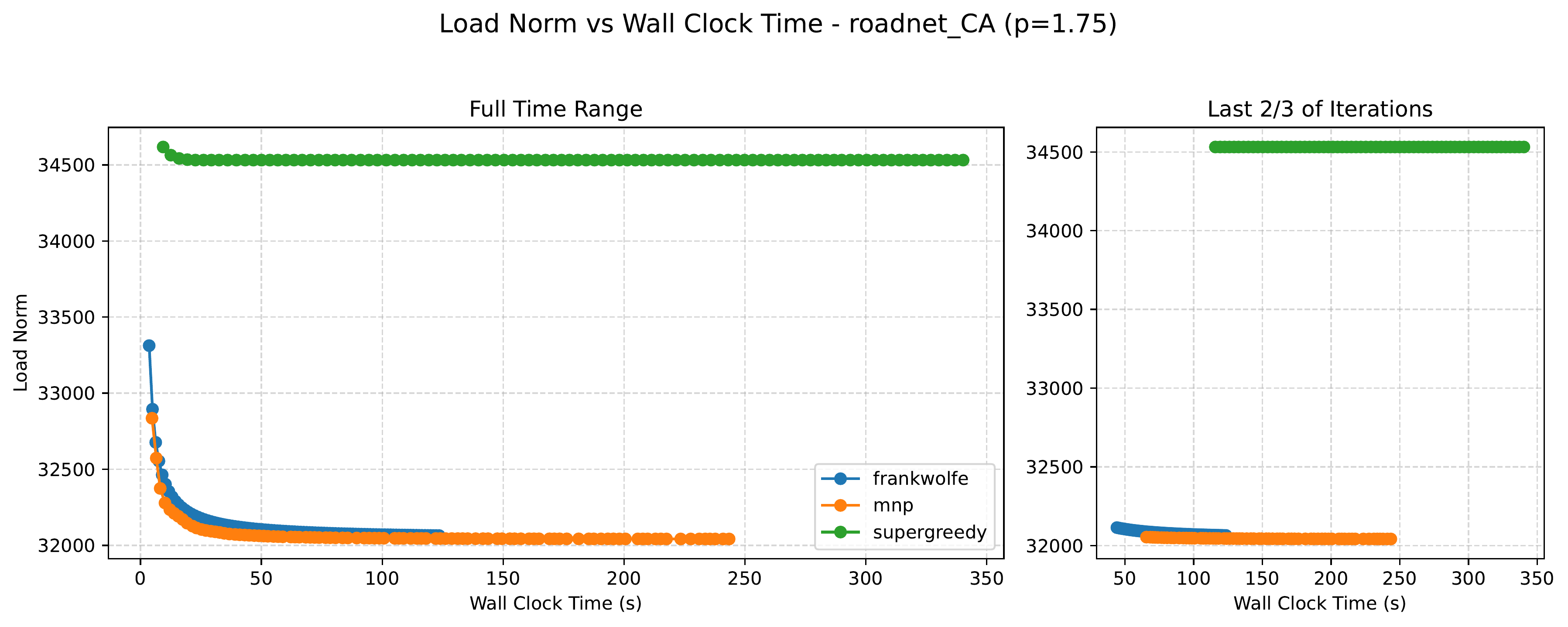}
        \caption{$(p{=}1.75)$}
        \label{dss-rc-1.75-mnp}
    \end{subfigure}
    \caption{DSS load norm over time - \texttt{roadnet\_CA}}
    \label{dss-rc-mnp}
\end{figure}

\subsection{Contrapolymatroid Membership}
\begin{figure}[H]
    \centering
    \begin{subfigure}[t]{0.24\textwidth}
        \centering
        \includegraphics[width=\textwidth]{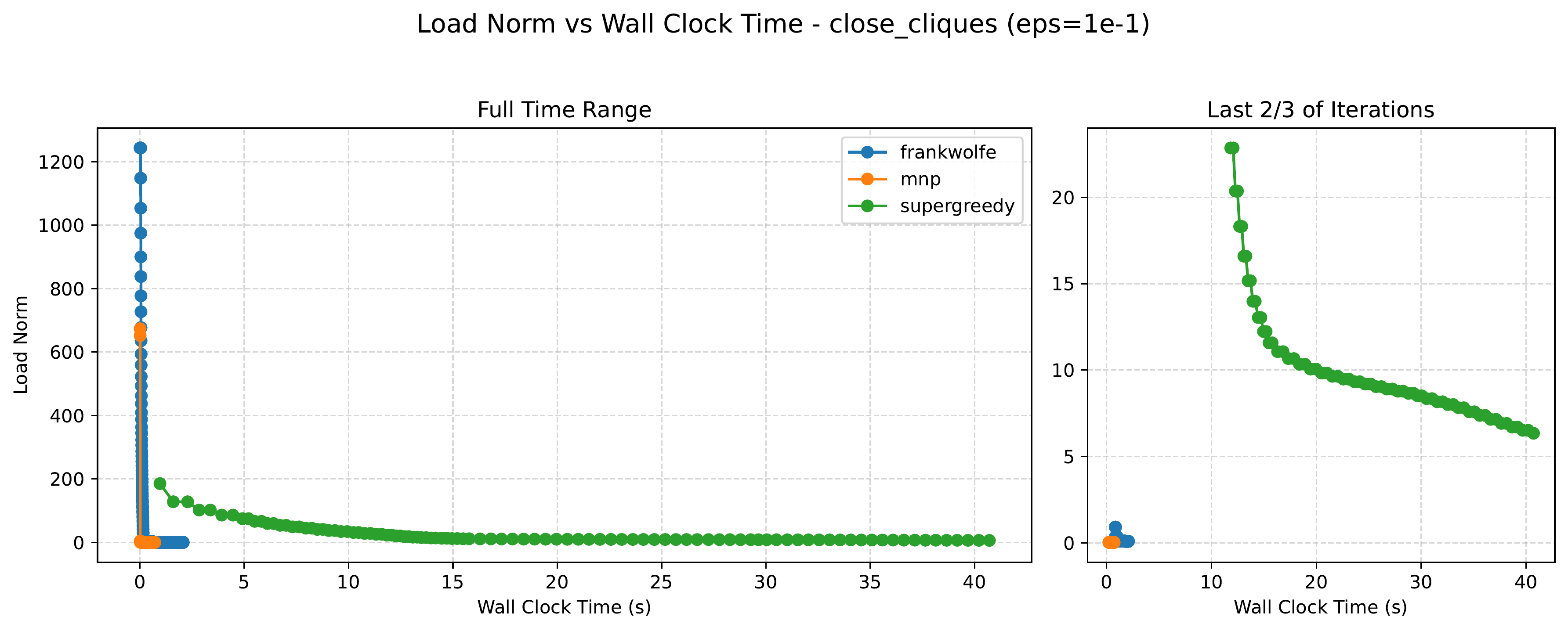}
        \caption{$(\varepsilon{=}1\text{e}{-}1)$}
        \label{fig:cm-cc-1e1-mnp}
    \end{subfigure}
    \hfill
    \begin{subfigure}[t]{0.24\textwidth}
        \centering
        \includegraphics[width=\textwidth]{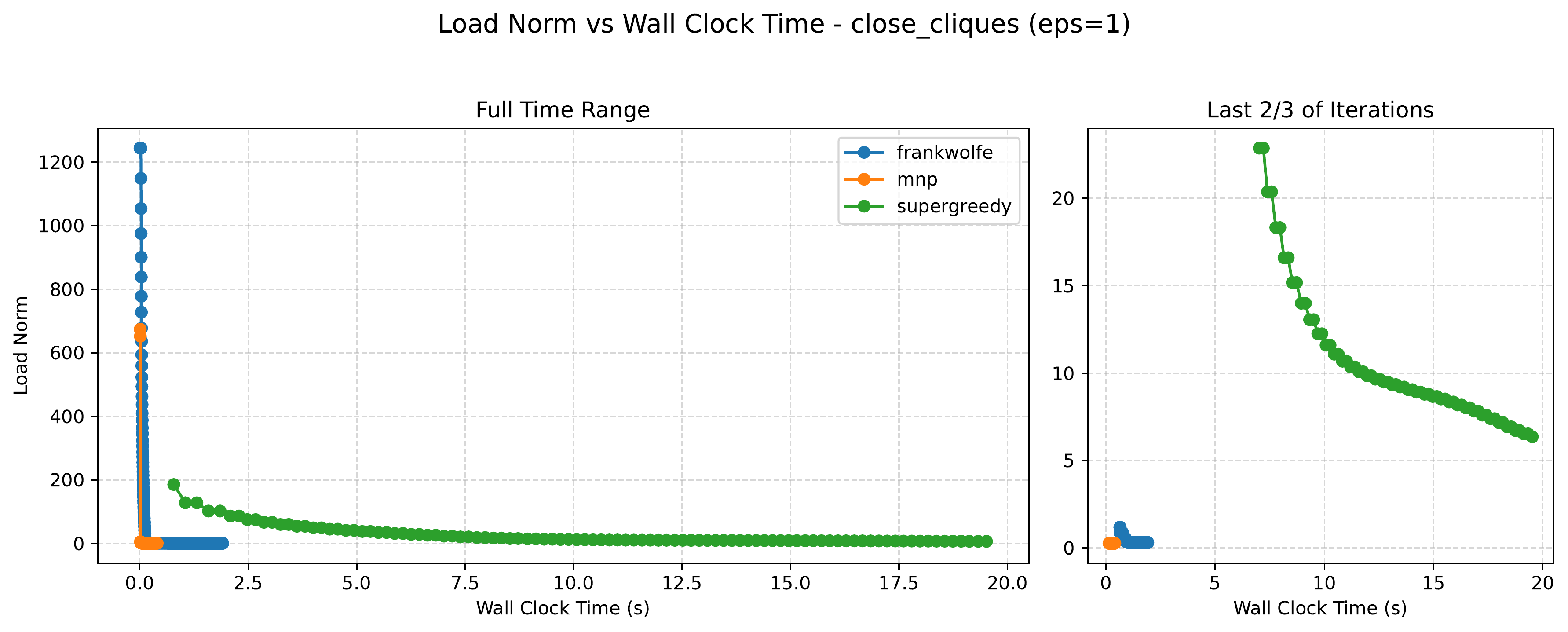}
        \caption{$(\varepsilon{=}1)$}
        \label{fig:cm-cc-1-mnp}
    \end{subfigure}
    \hfill
    \begin{subfigure}[t]{0.24\textwidth}
        \centering
        \includegraphics[width=\textwidth]{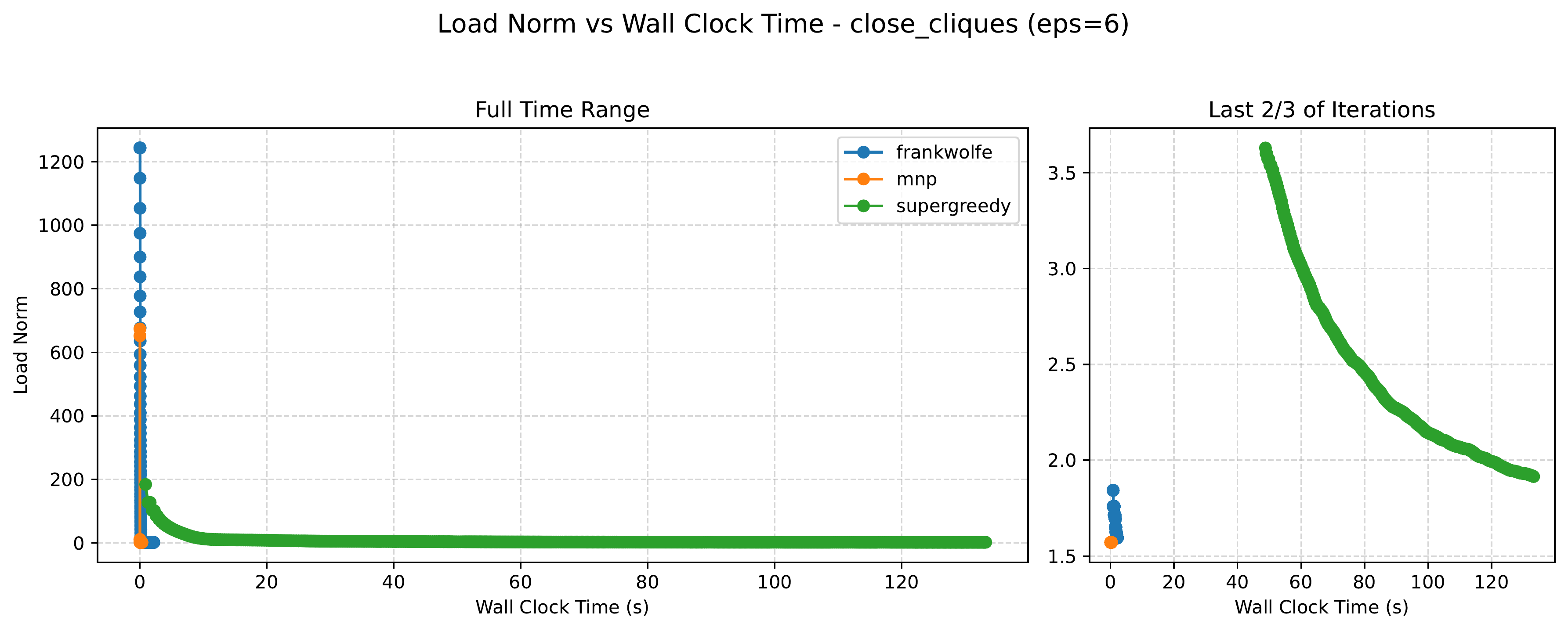}
        \caption{$(\varepsilon{=}6)$}
        \label{fig:cm-cc-6-mnp}
    \end{subfigure}
    \hfill
    \begin{subfigure}[t]{0.24\textwidth}
        \centering
        \includegraphics[width=\textwidth]{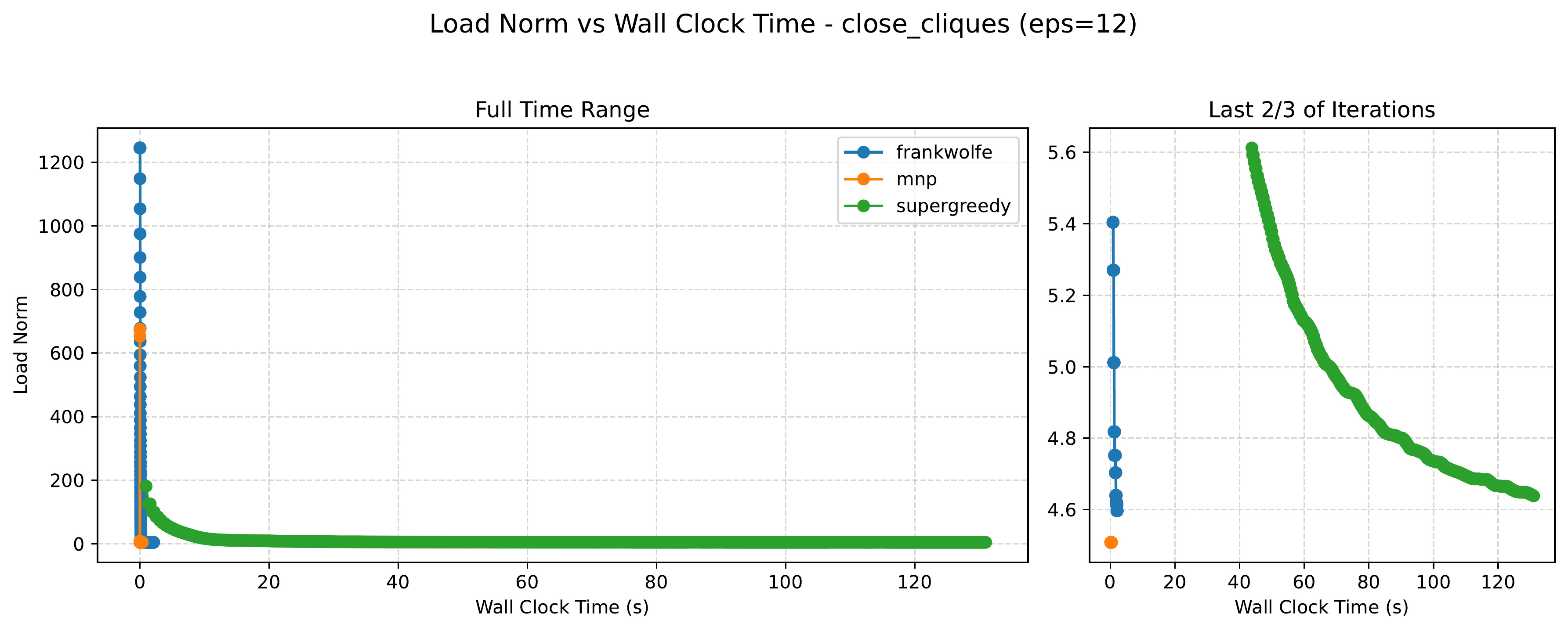}
        \caption{$(\varepsilon{=}12)$}
        \label{fig:cm-cc-12-mnp}
    \end{subfigure}
    \caption{Contrapolymatroid Membership load norm over time - \texttt{close\_cliques}}
    \label{cm-cc-mnp}
\end{figure}

\begin{figure}[H]
    \centering
    \begin{subfigure}[t]{0.24\textwidth}
        \centering
        \includegraphics[width=\textwidth]{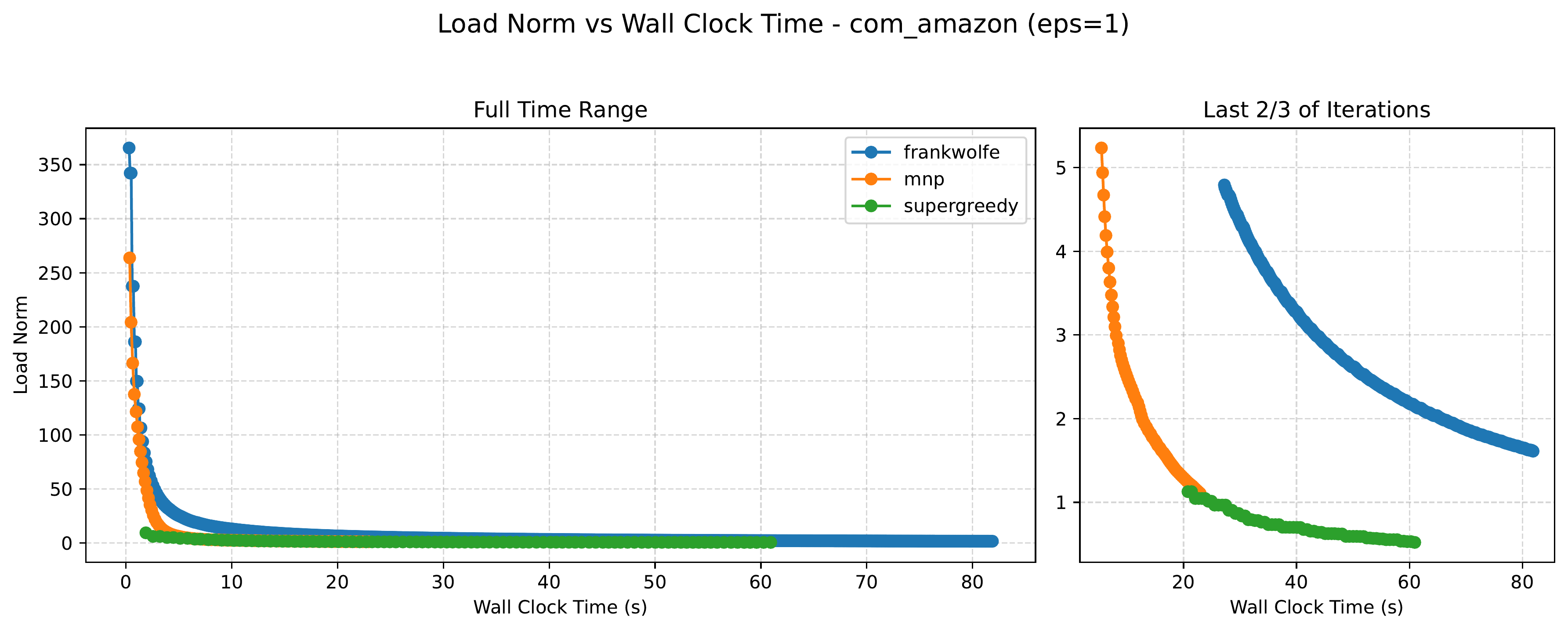}
        \caption{$(\varepsilon{=}1\text{e}{-}1)$}
        \label{fig:cm-ca-1e1-mnp}
    \end{subfigure}
    \hfill
    \begin{subfigure}[t]{0.24\textwidth}
        \centering
        \includegraphics[width=\textwidth]{png_output/figures/mnp/dss_contra_com_amazon_mnp_eps1.png}
        \caption{$(\varepsilon{=}1)$}
        \label{fig:cm-ca-1-mnp}
    \end{subfigure}
    \hfill
    \begin{subfigure}[t]{0.24\textwidth}
        \centering
        \includegraphics[width=\textwidth]{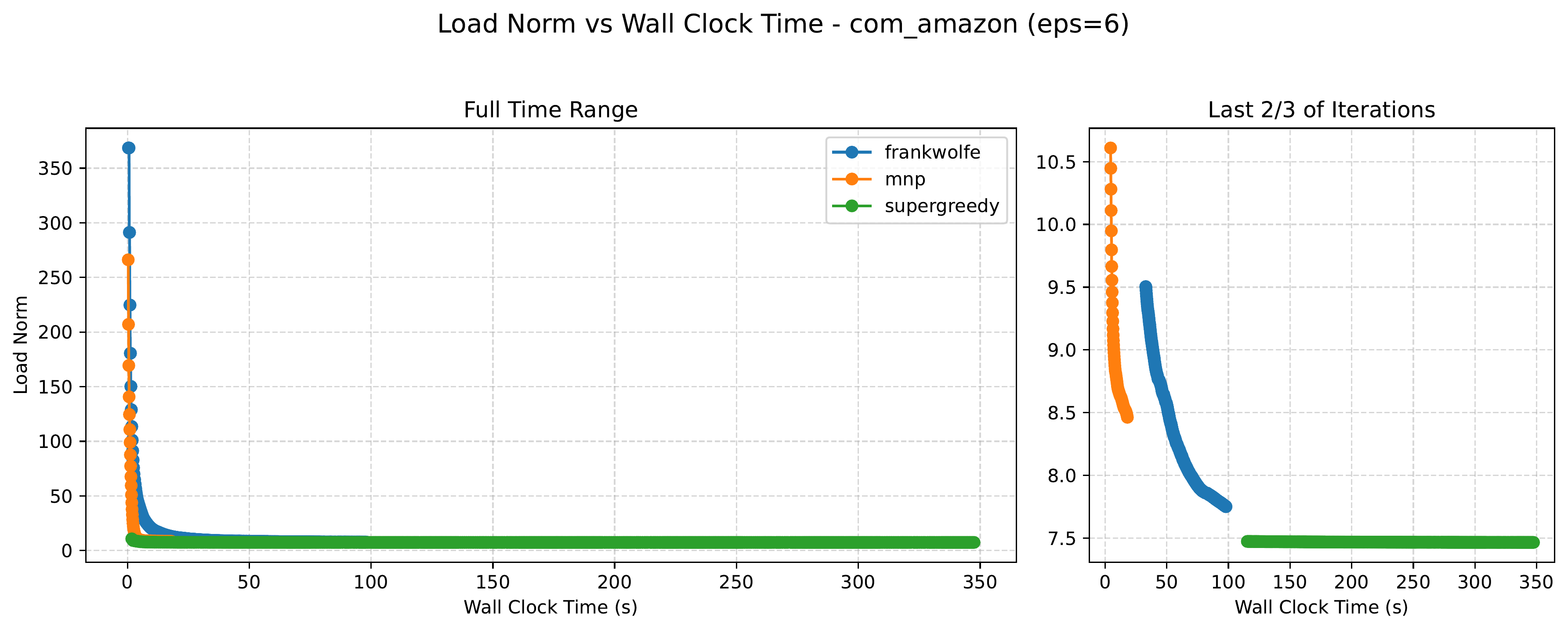}
        \caption{$(\varepsilon{=}6)$}
        \label{fig:cm-ca-6-mnp}
    \end{subfigure}
    \hfill
    \begin{subfigure}[t]{0.24\textwidth}
        \centering
        \includegraphics[width=\textwidth]{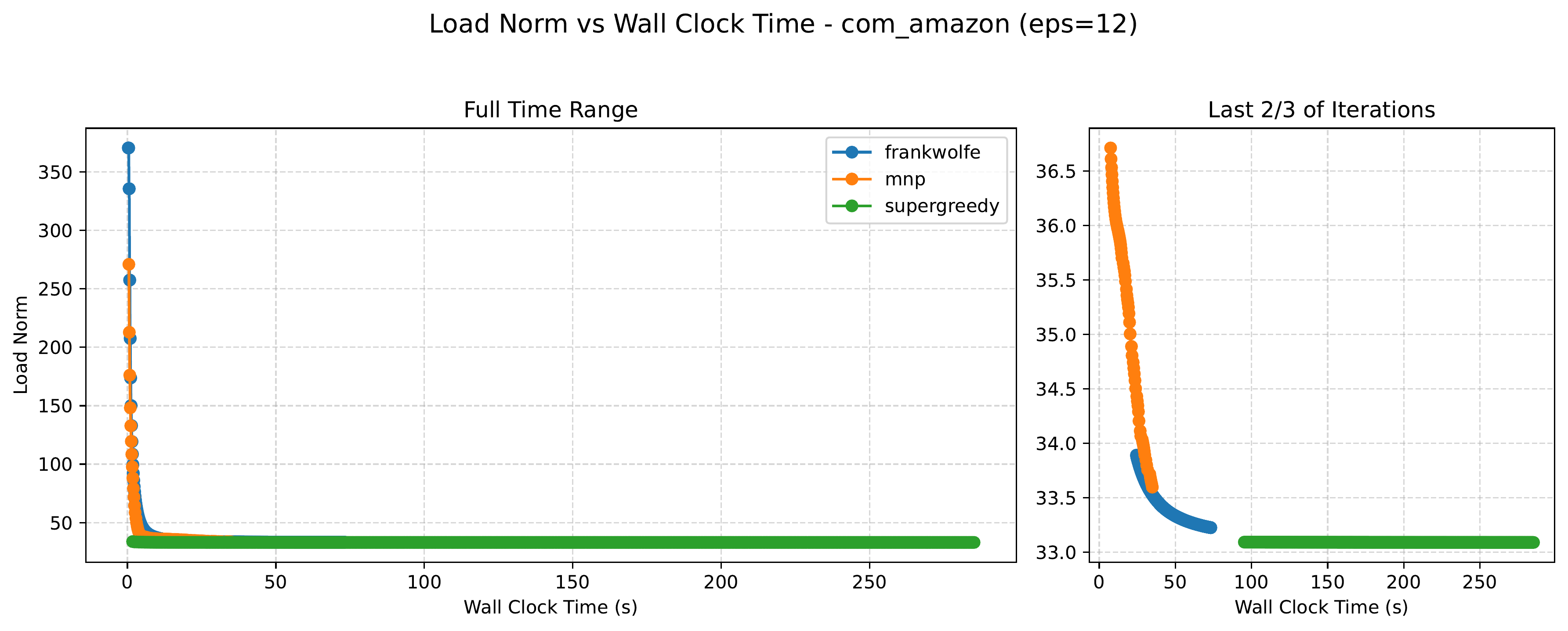}
        \caption{$(\varepsilon{=}12)$}
        \label{fig:cm-ca-12-mnp}
    \end{subfigure}
    \caption{Contrapolymatroid Membership load norm over time - \texttt{com\_amazon}}
    \label{cc-ca-mnp}
\end{figure}

\begin{figure}[H]
    \centering
    \begin{subfigure}[t]{0.24\textwidth}
        \centering
        \includegraphics[width=\textwidth]{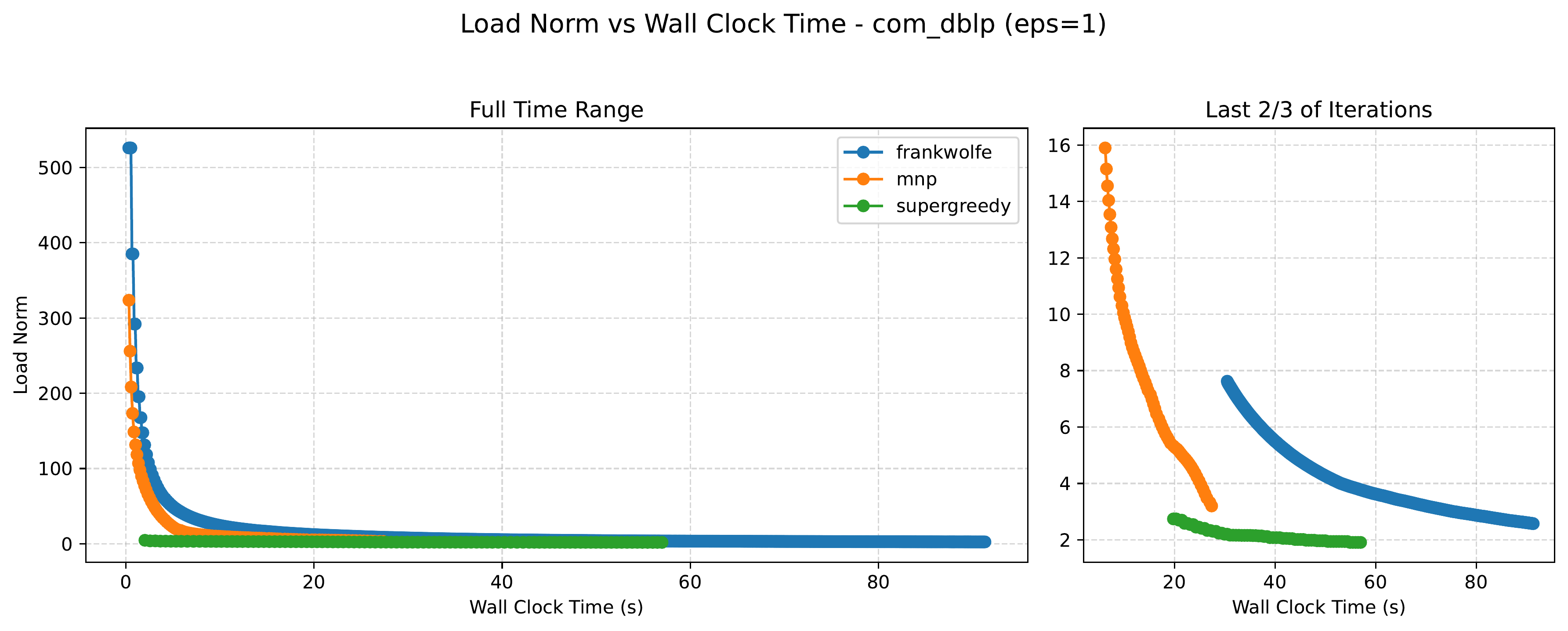}
        \caption{$(\varepsilon{=}1\text{e}{-}1)$}
        \label{fig:cm-cd-1e1-mnp}
    \end{subfigure}
    \hfill
    \begin{subfigure}[t]{0.24\textwidth}
        \centering
        \includegraphics[width=\textwidth]{png_output/figures/mnp/dss_contra_com_dblp_mnp_eps1.png}
        \caption{$(\varepsilon{=}1)$}
        \label{fig:cm-cd-1-mnp}
    \end{subfigure}
    \hfill
    \begin{subfigure}[t]{0.24\textwidth}
        \centering
        \includegraphics[width=\textwidth]{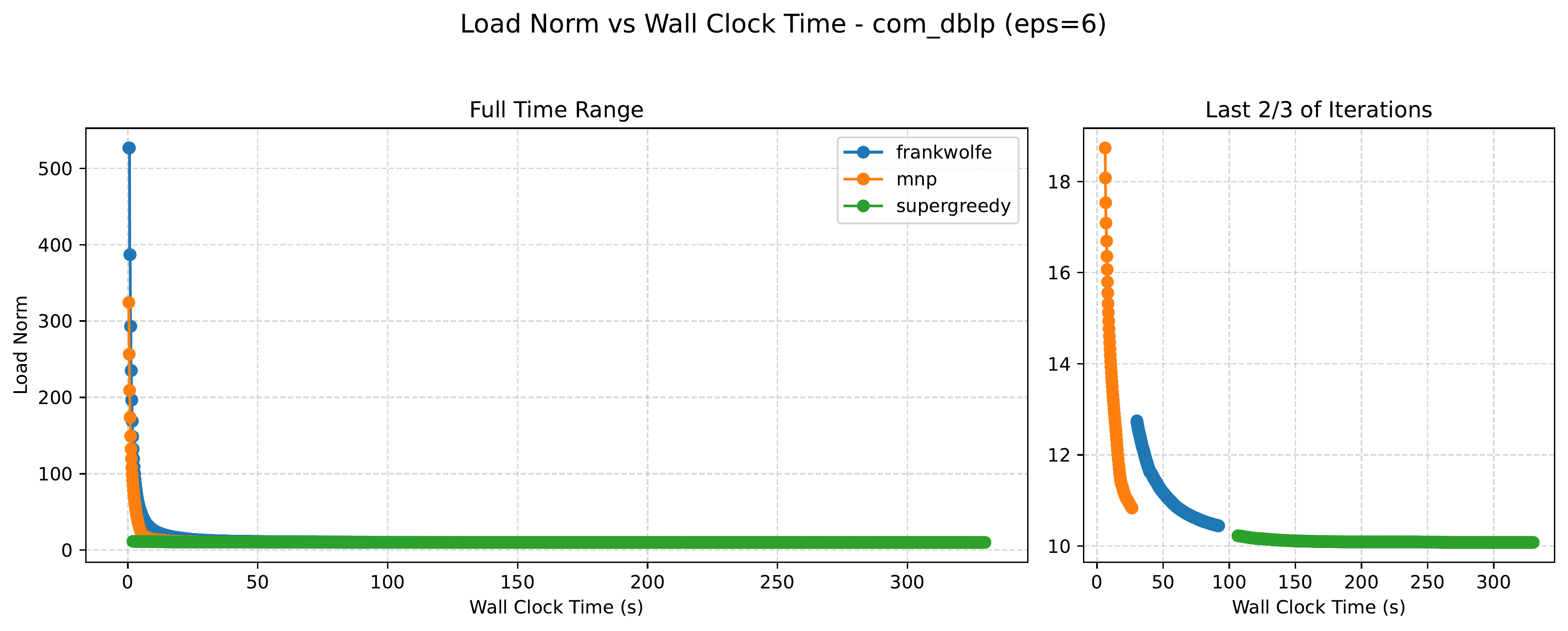}
        \caption{$(\varepsilon{=}6)$}
        \label{fig:cm-cd-6-mnp}
    \end{subfigure}
    \hfill
    \begin{subfigure}[t]{0.24\textwidth}
        \centering
        \includegraphics[width=\textwidth]{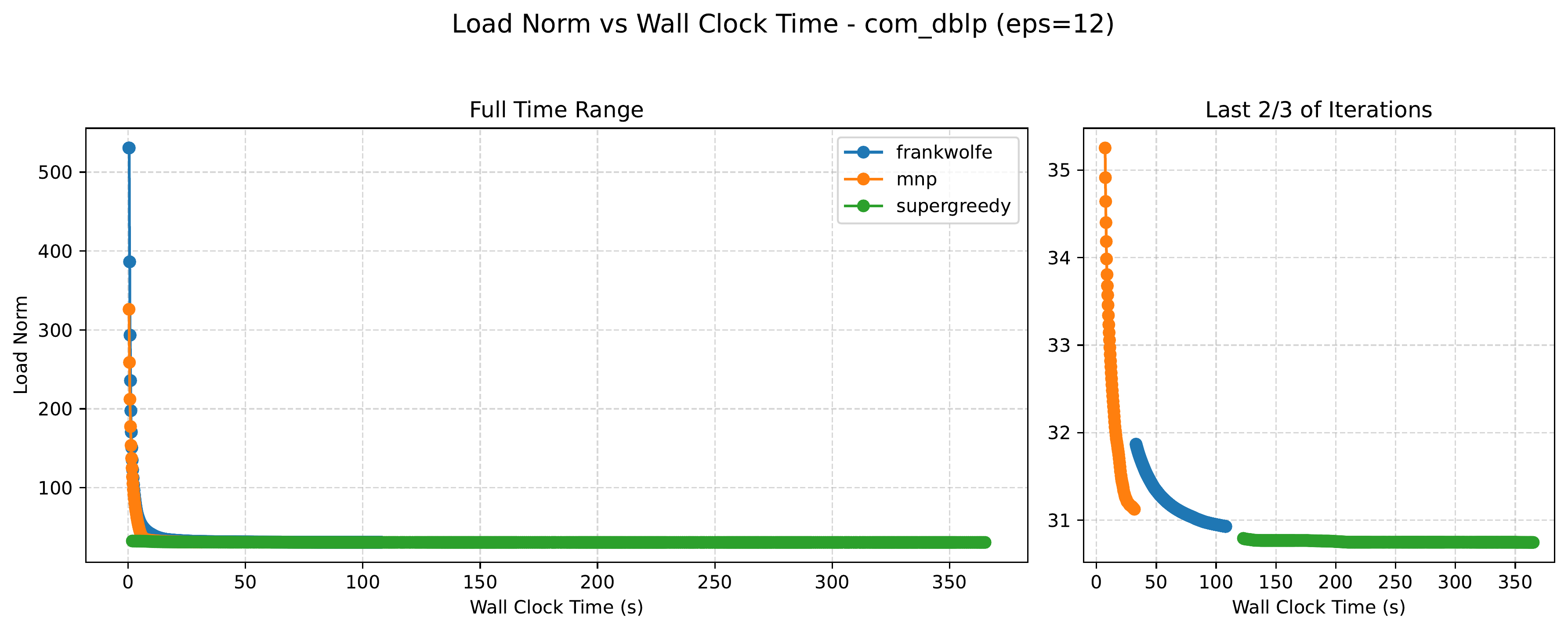}
        \caption{$(\varepsilon{=}12)$}
        \label{fig:cm-cd-12-mnp}
    \end{subfigure}
    \caption{Contrapolymatroid Membership load norm over time - \texttt{com\_dblp}}
    \label{cc-cd-mnp}
\end{figure}

\begin{figure}[H]
    \centering
    \begin{subfigure}[t]{0.24\textwidth}
        \centering
        \includegraphics[width=\textwidth]{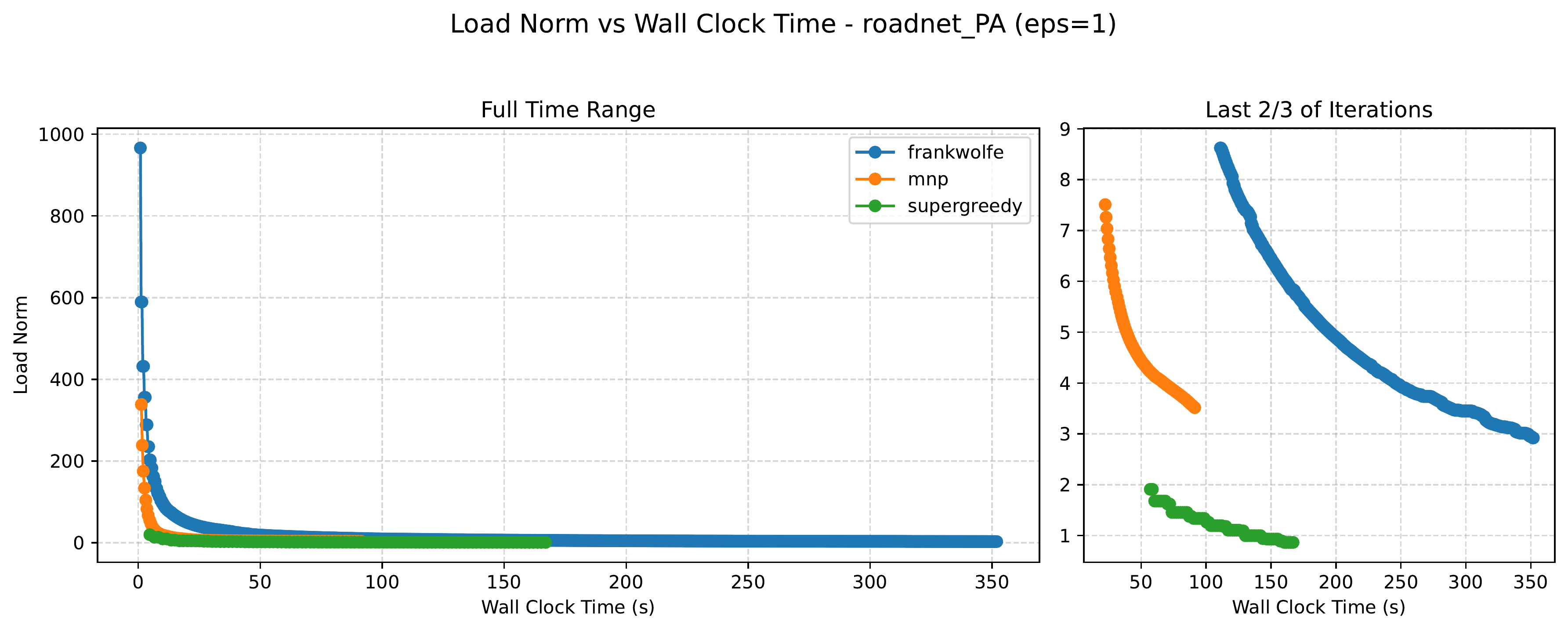}
        \caption{$(\varepsilon{=}1\text{e}{-}1)$}
        \label{fig:cm-rp-1e1-mnp}
    \end{subfigure}
    \hfill
    \begin{subfigure}[t]{0.24\textwidth}
        \centering
        \includegraphics[width=\textwidth]{png_output/figures/mnp/dss_contra_roadnet_PA_mnp_eps1.png}
        \caption{$(\varepsilon{=}1)$}
        \label{fig:cm-rp-1-mnp}
    \end{subfigure}
    \hfill
    \begin{subfigure}[t]{0.24\textwidth}
        \centering
        \includegraphics[width=\textwidth]{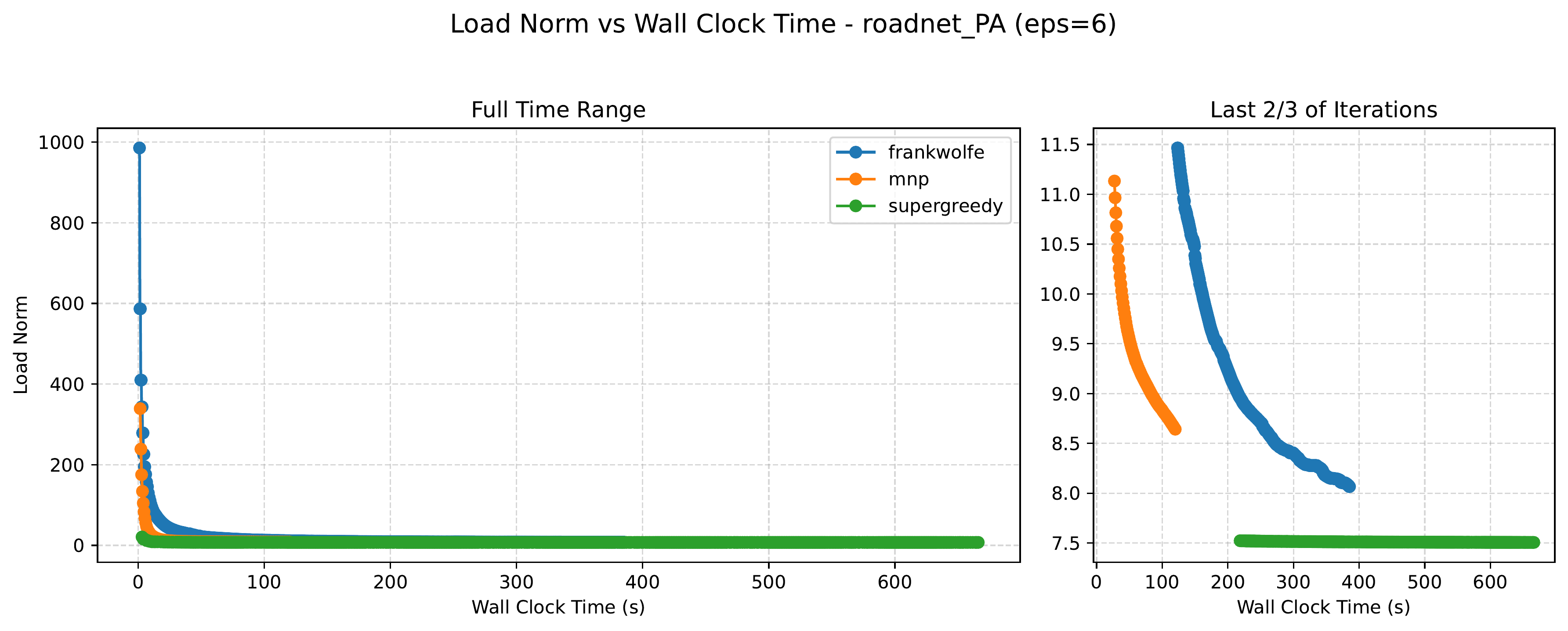}
        \caption{$(\varepsilon{=}6)$}
        \label{fig:cm-rp-6-mnp}
    \end{subfigure}
    \hfill
    \begin{subfigure}[t]{0.24\textwidth}
        \centering
        \includegraphics[width=\textwidth]{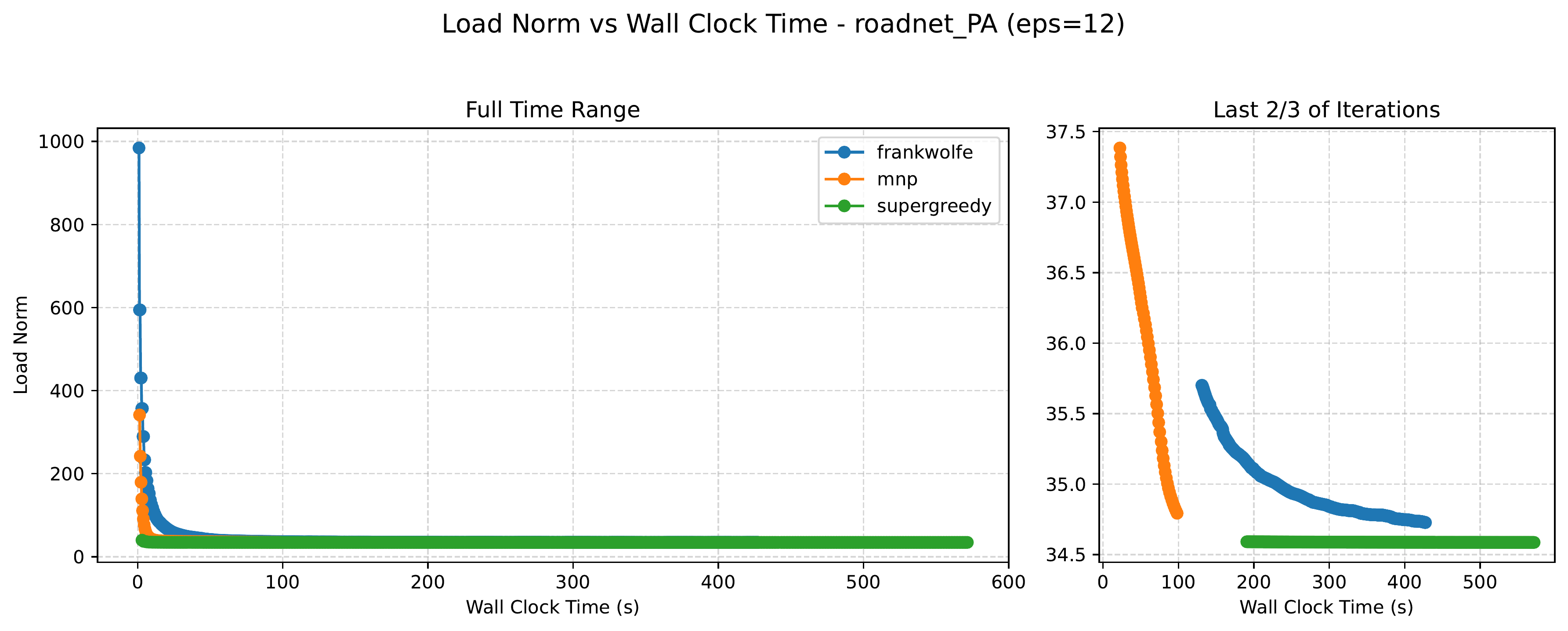}
        \caption{$(\varepsilon{=}12)$}
        \label{fig:cm-rp-12-mnp}
    \end{subfigure}
    \caption{Contrapolymatroid Membership load norm over time - \texttt{roadnet\_PA}}
    \label{cc-rp-mnp}
\end{figure}

\begin{figure}[H]
    \centering
    \begin{subfigure}[t]{0.24\textwidth}
        \centering
        \includegraphics[width=\textwidth]{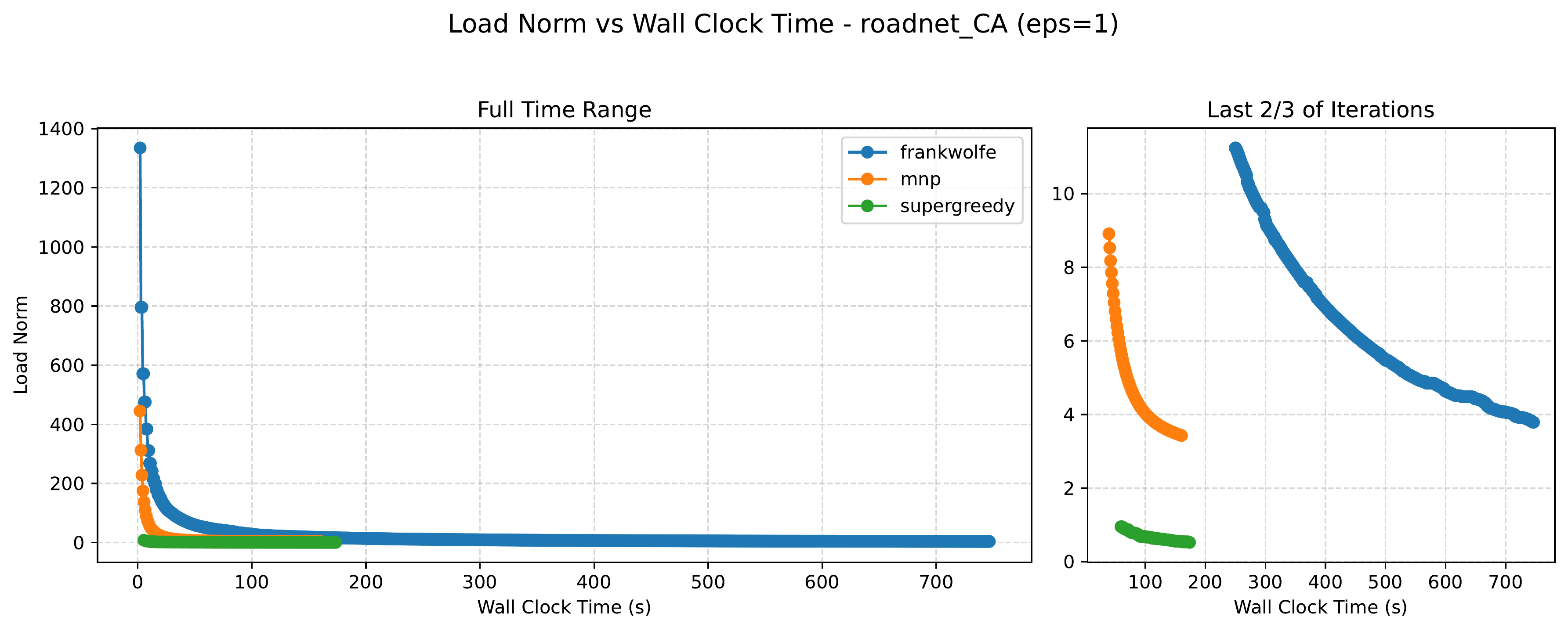}
        \caption{$(\varepsilon{=}1\text{e}{-}1)$}
        \label{fig:cm-rc-1e1-mnp}
    \end{subfigure}
    \hfill
    \begin{subfigure}[t]{0.24\textwidth}
        \centering
        \includegraphics[width=\textwidth]{png_output/figures/mnp/dss_contra_roadnet_CA_mnp_eps1.png}
        \caption{$(\varepsilon{=}1)$}
        \label{fig:cm-rc-1-mnp}
    \end{subfigure}
    \hfill
    \begin{subfigure}[t]{0.24\textwidth}
        \centering
        \includegraphics[width=\textwidth]{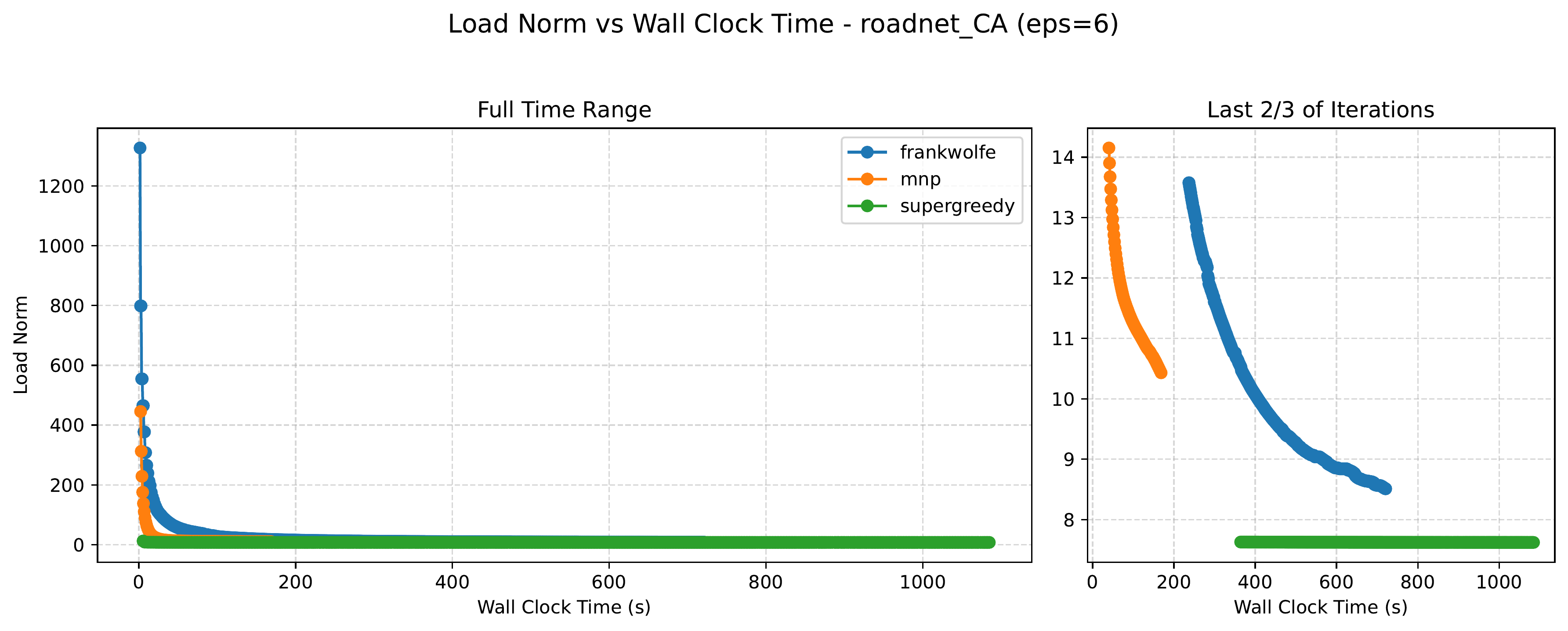}
        \caption{$(\varepsilon{=}6)$}
        \label{fig:cm-rc-6-mnp}
    \end{subfigure}
    \hfill
    \begin{subfigure}[t]{0.24\textwidth}
        \centering
        \includegraphics[width=\textwidth]{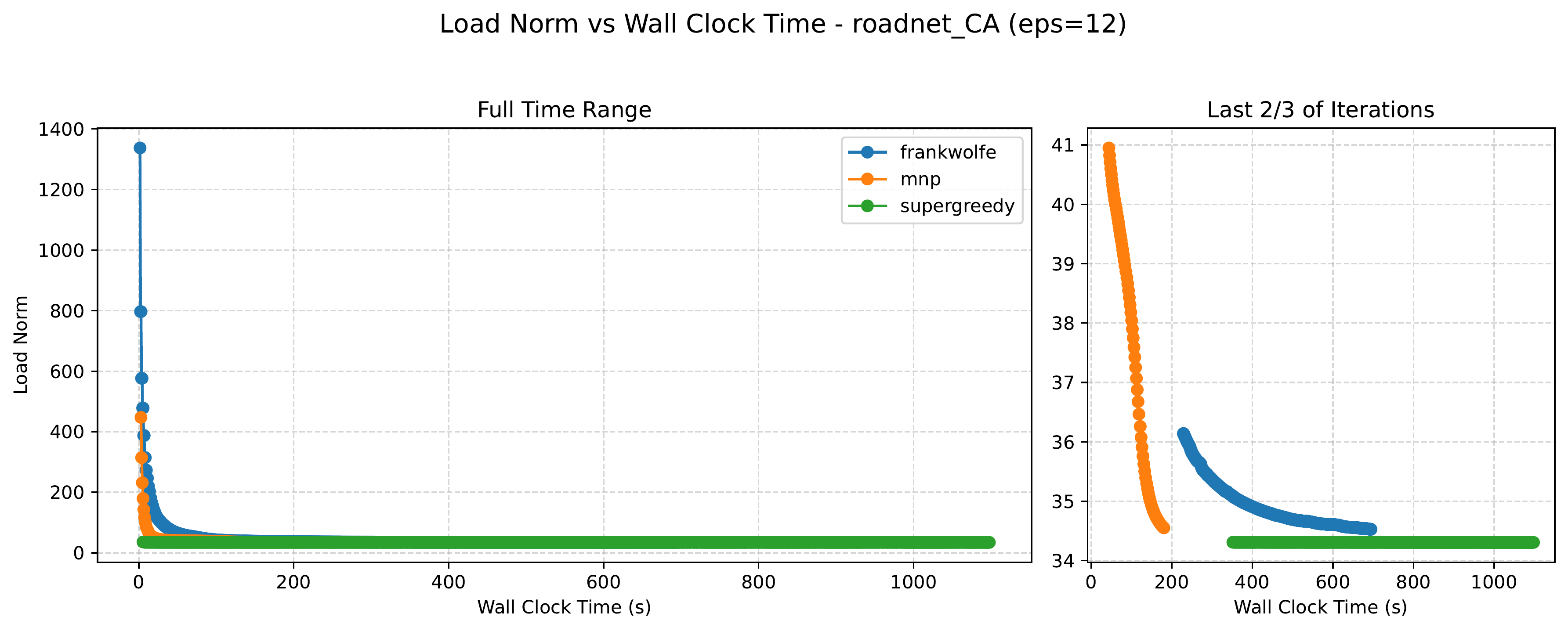}
        \caption{$(\varepsilon{=}12)$}
        \label{fig:cm-rc-12-mnp}
    \end{subfigure}
    \caption{Contrapolymatroid Membership load norm over time - \texttt{roadnet\_CA}}
    \label{cc-rc-mnp}
\end{figure}
\end{document}